\newif\ifstocproceedings
\title{Learning the closest product state}
\author{%
    Ainesh Bakshi${}^\ast$
    \hspace{15pt} John Bostanci${}^\dagger$
    \hspace{15pt} William Kretschmer${}^\flat$
    \hspace{15pt} Zeph Landau${}^\flat$ \\
    Jerry Li${}^\natural$
    \hspace{15pt} Allen Liu${}^\ast$
    \hspace{15pt} Ryan O'Donnell${}^\sharp$
    \hspace{15pt} Ewin Tang${}^\flat$ \\
\vspace{2em}
\textit{${}^\ast$MIT,
${}^\dagger$Columbia,
${}^\flat$UC Berkeley,
${}^\natural$University of Washington, 
${}^\sharp$Carnegie Mellon University}
}
\date{}
\begin{document}

\maketitle

\begin{abstract}
    We study the problem of finding a (pure) product state with optimal fidelity to an unknown $n$-qubit quantum state $\rho$, given copies of $\rho$.
    This is a basic instance of a fundamental question in quantum learning: is it possible to efficiently learn a simple approximation to an arbitrary state?
    We give an algorithm which finds a product state with fidelity $\varepsilon$-close to optimal, using $N = n^{\text{poly}(1/\varepsilon)}$ copies of $\rho$ and $\text{poly}(N)$ classical overhead.
    We further show that estimating the optimal fidelity is NP-hard for error $\varepsilon = 1/\text{poly}(n)$, showing that the error dependence cannot be significantly improved.
    
    For our algorithm, we build a carefully-defined cover over candidate product states, qubit by qubit, and then demonstrate that extending the cover can be reduced to approximate constrained polynomial optimization.
    For our proof of hardness, we give a formal reduction from polynomial optimization to finding the closest product state.
    Together, these results demonstrate a fundamental connection between these two seemingly unrelated questions.
    Building on our general approach, we also develop more efficient algorithms in three simpler settings: when the optimal fidelity exceeds $5/6$; when we restrict ourselves to a discrete class of product states; and when we are allowed to output a matrix product state.

\end{abstract}

\ewin{TODO: update (Zeph added acknowledgements)}
\ewin{TODO: possibly discuss the "crypto hardness" implications or the "algorithmic" perspective on preparing states with good overlap with product states.}


\thispagestyle{empty}
\clearpage
\newpage

\tableofcontents
\thispagestyle{empty}
\clearpage
\setcounter{page}{1}

\section{Introduction}
\label{sec:intro}

When can we obtain a classical description of a complex quantum system?
This problem, at the heart of quantum information theory, is one commonly faced by experimentalists: when we have a large, intricate quantum device, how can we tell what it is doing?
Due to the exponentiality inherent to quantum mechanics, a generic system of $n$ particles is described by a number of parameters scaling exponentially with $n$, so in general, an efficient description simply does not exist.
However, real-world systems are not generic: the physics governing the device will suggest a corresponding model of the system, giving us a hint for how the system can be efficiently described. 

Simultaneously, in real-world applications, the state which one is learning may not---and typically will not---exactly fall within a given model class, due to noise or other forms of imprecision in how our model represents the real world.
In light of this, the natural question is to seek the best approximation to the underlying state within the prescribed model.
Such an approximation can serve as a far more tractable proxy for the true state when it is complex to describe exactly.
In this work, we consider the problem for the class of product states, arguably the most fundamental class of states to consider.
Stated plainly, the question we ask is the following:
\begin{quote}
    \begin{center}
        {\em Can we efficiently learn the best product state approximation to any given state? }
    \end{center}
\end{quote}
\noindent
We formalize this problem as follows:
\begin{problem}[Learning the closest product state]
\label{question:main}
    Consider the set of $n$-qubit product states $\mathcal{P} = \braces{\ket{\pi_1} \otimes \dots \otimes \ket{\pi_n}}$, and let $\epsilon, \delta > 0$ be error parameters.
    Given $N$ copies of an arbitrary $n$-qubit state with density matrix $\rho$, output a classical description of a state $\ket{\pi} \in \mathcal{P}$ such that, with probability $\geq 1-\delta$,
    \[
    \bra{\pi} \rho \ket{\pi} \geq \OPT - \eps\,,
    \qquad \text{where } \OPT = \max_{\ket{\pi} \in \mathcal{P}} \bra{\pi} \rho \ket{\pi}\,.
    \]
\end{problem}
Product states are natural to study in this context for a number of reasons.
Because of the locality inherent in physical systems, we commonly model physical systems with states exhibiting low entanglement.
Chief among them are \emph{mean-field theories}, which model systems as states which exhibit \emph{zero entanglement}, e.g., product states~\cite{bh16}.
The mean-field approximation plays a central role in domains relevant to quantum computing: in particular, in quantum chemistry, mean-field theories like Hartree-Fock theory and density functional theory are the standard algorithmic workhorses for understanding chemical processes~\cite{chan24}.
In light of this, we can rephrase \cref{question:main} as asking for the best (pure) mean-field approximation to an arbitrary quantum state, and for the quality of that approximation.
From this perspective, we believe that obtaining an efficient algorithm for \cref{question:main} will have important implications both for validating the effectiveness of these theories and for understanding their properties in real-world settings.

As an example application, physicists already run computations to solve \cref{question:main} in the setting where the input is not a quantum state, but a description of a condensed matter system.
Collective entanglement of a multipartite state is often measured by $\OPT$, the best fidelity of the state with a product state, also known as the \emph{geometric measure of entanglement}~\cite{wg03}.
Since its introduction in 2003, this entanglement measure has been used to understand a variety of condensed matter systems (see related work).
An algorithm for \cref{question:main} can be used to compute the geometric measure of entanglement for states which are efficiently preparable on a quantum computer, by preparing copies of the state and running the algorithm to estimate $\OPT$, giving an advantage when such states are classically intractable.

Despite the apparent simplicity of the problem, relatively little was known about the computational complexity of~\cref{question:main}.
From a statistical point of view, one can obtain sample-efficient learners via classical shadow estimation~\cite{hkp20} or shadow tomography~\cite{aaronson20}, but these estimators require exponential runtime.
On the other hand, efficient algorithms were known only for highly restricted versions of the problem~\cite{gikl24}.
This lack of efficient algorithms might be surprising, as when when the unknown state $\rho$ is a product state, i.e.\ $\OPT = 1$, this task is easy: many algorithms work, including learning every register separately.
However, these algorithms are brittle, and fail catastrophically when $\OPT < 0.99$.
Even algorithms for the related problem of product state testing, initiated by the important work of Harrow and Montanaro~\cite{hm13}, do not admit estimates of $\OPT$ when $\OPT$ is bounded away from 1.
In contrast, one would hope to obtain efficient algorithms even when $\OPT$ is a small constant (say, $0.1$): product states with constant fidelity are still great approximations, considering that almost all product states will have fidelity exponentially small in $n$.

Beyond specific applications, we hope that understanding this algorithmic task can shed light on a broader program in quantum learning theory.
An emerging line of work has been studying ``learning the closest state in a hypothesis class'', also known as \emph{agnostic tomography}: formally, this problem is \cref{question:main}, except the class of product states $\mathcal{P}$ is replaced with a different hypothesis class $\mathcal{C}$.
Product states appear as a special case of several well-studied classes of quantum states, including states described by low-depth quantum circuits, matrix product states, and Gibbs states and ground states of local Hamiltonians. 
Understanding the computational complexity of agnostic tomography of product states is therefore an important stepping stone to building up to richer approximations. 
As we demonstrate below, it turns out that learning the closest product state is already a surprisingly deep problem.


\subsection{Results}

We answer the aforementioned question in the affirmative and provide the first efficient algorithm for agnostic tomography of product states: 

\begin{theorem}[Learning the closest product state\ifstocproceedings\else, \cref{thm:main}\fi]
\label{thm:main-informal}
    There is an algorithm which, given as input $\eps > 0$ and $N = n^{\poly (1 / \eps )}$ copies of an unknown $n$-qubit\footnote{
        For simplicity, we only consider when the local systems are qubits.
        We believe that the results should generalize to qudits without too much struggle\ifstocproceedings\else: see \cref{rmk:qudit}\fi.
    } state~$\rho$, runs in time $\poly(N)$ and outputs the classical description of a pure product state $\ket{\phi}$ that, with probability at least $0.99$, satisfies
    \begin{equation}
        \label{eq:guarantee}
        \braket{\phi | \rho | \phi} \geq \OPT  - \eps \,.
    \end{equation}
    The algorithm also produces an estimate of $\OPT$ to $\eps$ error.
\end{theorem}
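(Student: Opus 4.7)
My plan is to design an iterative algorithm that builds a list $\mathcal{L}_k$ of candidate product-state prefixes of the form $\ket{\phi_1} \otimes \cdots \otimes \ket{\phi_k}$, adding one qubit at each round. The invariant I would maintain is that $\mathcal{L}_k$ always contains at least one prefix whose first $k$ single-qubit states are close (in trace distance) to those of some near-optimal product state $\ket{\pi^\ast_1} \otimes \cdots \otimes \ket{\pi^\ast_n}$ realizing $\bra{\pi^\ast}\rho\ket{\pi^\ast} \ge \OPT - \eps/2$. After $n$ rounds, I would estimate $\bra{\phi}\rho\ket{\phi}$ for each $\ket{\phi} \in \mathcal{L}_n$ from tomographic data and output the best; by the invariant this output satisfies the promised bound $\bra{\phi}\rho\ket{\phi} \ge \OPT - \eps$, and its estimated fidelity simultaneously serves as the required estimate of $\OPT$.

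\textbf{Extension via polynomial optimization.} In parallel, I would consume the $N = n^{\poly(1/\eps)}$ copies of $\rho$ to estimate a family of sufficient statistics---for example, all Pauli expectations $\tr(\rho P)$ of weight at most $w = \poly(1/\eps)$, or equivalently all reduced density matrices on subsets of at most $w$ qubits---to additive accuracy $\eps/n^{\poly(1/\eps)}$ via classical shadow tomography. To extend $\mathcal{L}_k$ to $\mathcal{L}_{k+1}$, for each prefix $\ket{\phi}\in\mathcal{L}_k$ I would set up an approximate constrained polynomial optimization problem on the Bloch ball of qubit $k{+}1$, whose objective is a carefully chosen surrogate for the best fidelity achievable if $\ket{\phi} \otimes \ket{\phi_{k+1}}$ is completed optimally by a product state on the remaining qubits. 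Because the statistics only couple at most $w$ qubits at a time, this surrogate is a polynomial of degree $w$ in three Bloch coordinates with at most $n^{O(w)}$ monomials, and one can enumerate an $\eps$-cover of its near-maximizers in time $n^{\poly(1/\eps)}$ (for instance by brute-force search over a $\poly(\eps)$-net of the Bloch ball). Each such near-maximizer is appended to $\mathcal{L}_{k+1}$.

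\textbf{Main obstacle.} The central technical challenge is ensuring that $|\mathcal{L}_k|$ stays bounded by $n^{\poly(1/\eps)}$ throughout the run: a naive extension that adds the entire $\eps$-cover of the Bloch ball multiplies $|\mathcal{L}_k|$ by $\poly(1/\eps)$ at every round, producing an untenable $(1/\eps)^{\Omega(n)}$ list size. The key structural lemma I expect to prove is that only $\poly(1/\eps)$ genuinely distinct near-maximizers of the surrogate polynomial ever contribute to completions matching the invariant, so that combined with an aggressive pruning rule---e.g., discarding any prefix whose surrogate value drops appreciably below the running best estimate of $\OPT$---the list remains polynomial throughout. The complementary NP-hardness reduction proved later in the paper shows that no approach can solve the underlying polynomial optimization to $1/\poly(n)$ accuracy, so the argument must fundamentally exploit the $\eps$-slack in the guarantee; I would manage this via a telescoping error budget of $O(\eps/n)$ per extension plus $O(\eps)$ at the final selection step, summing to the claimed $\OPT - \eps$ bound. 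Ironing out these two pieces---the right surrogate polynomial and the right pruning rule---is where I expect essentially all the technical difficulty to lie.
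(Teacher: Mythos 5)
Your high-level plan---sweep over the qubits, maintain a bounded list of candidate prefixes, extend via polynomial optimization---matches the paper's architecture in outline. However, several of the concrete steps in your proposal are either unspecified in a way that hides the real difficulty, or point in a direction that would fail; the paper makes different choices precisely to avoid these obstacles.

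\textbf{The ``surrogate'' from $w$-local statistics does not obviously exist.}
You propose to precompute all Pauli expectations of weight $\le w = \poly(1/\eps)$ (equivalently, all $w$-local marginals) non-adaptively, and then build a surrogate for ``the best fidelity achievable if this prefix is completed optimally.'' Two problems. First, the paper's introductory counterexample (the states $\ket{\psi}$ and $\ket{\psi'}$ with identical one-qubit marginals but incompatible optimal product states) already shows one-local data is insufficient; the authors then avoid fixed-weight marginals entirely. What they estimate instead is $\Pi_{<d}\,\rho_{[k]}\,\Pi_{<d}$, the block of $\rho$ on computational-basis strings of Hamming weight $< d$ \emph{in a basis adapted to the current root product state}. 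That object is not determined by any fixed family of low-weight Paulis, and estimating it requires measurements whose basis depends on the prefix under consideration---i.e.\ genuinely adaptive measurements. The paper explicitly observes that attempting to do all measurements up front via classical shadows forces exponential classical processing; restricting to low-weight Paulis makes the shadows efficient but it is not established that those statistics determine the answer. Second, and more basically: even if the right statistics were available, ``the best achievable fidelity by extending $\ket{\phi}\otimes\ket{\phi_{k+1}}$'' is the quantity you are trying to compute, so using it as an objective is circular. The paper sidesteps this by using $\bra{\pi}\rho_{[k]}\ket{\pi}$ directly, which is a valid \emph{upper bound} on every completion's fidelity (monotone under tracing out), so a prefix can be safely discarded the moment this upper bound drops below threshold. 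No clairvoyant surrogate is needed.

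\textbf{Optimizing one new qubit at a time is too rigid.}
You freeze the prefix $\ket{\phi}\in\mathcal{L}_k$ and optimize only over the Bloch ball of qubit $k{+}1$. But your cover elements are only \emph{approximately} aligned with the near-optimal product state---in the paper's language, within tangent distance $\Theta(1/\eta)$---so the near-optimal state's first $k$ coordinates will generically not equal your frozen prefix. The paper's extension step therefore re-optimizes over the full $\vec{z}\in\C^{k}$, constrained to stay within a tangent-distance ball of the root; this is a degree-$2d$ polynomial optimization in $k$ variables (not three Bloch coordinates), and it is solved not by brute-force net enumeration over $\C^k$ (which would cost $(1/\eps)^{\Omega(k)}$) but by the observation that the polynomial is $\eps$-determined by its behavior on a $\poly(1/\eps)$-dimensional subspace of $\C^k$, found from an SVD of the tensor's unfoldings. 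This dimension-reduction lemma is the workhorse; a net argument on a single qubit will not substitute.

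\textbf{The list-size bound is the crux, and your pruning rule does not obviously give it.}
You acknowledge this is the obstacle and say you expect to show ``only $\poly(1/\eps)$ genuinely distinct near-maximizers contribute,'' with pruning by comparing surrogate values to the running best. But there can be exponentially many prefixes whose fidelity upper bound is within $\eps$ of $\OPT$, so value-based pruning alone does not cap the list. The paper's actual bound comes from a \emph{separation} requirement: cover elements are required to be pairwise far in tangent distance (hence nearly orthogonal), so if $M$ is the matrix whose columns are the cover states then $\|M^\dagger M\|_{\mathrm{op}}\le 1 + |\mathcal C|/b$, while $\|MM^\dagger\|_{\mathrm{op}} \ge \tr(MM^\dagger\rho) \ge |\mathcal C|(\eta-\eps)$, forcing $|\mathcal C| = O(1/\eta)$. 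This Gram-matrix argument---not a pruning-by-value rule---is what makes the whole iterative scheme viable, and it in turn dictates that the extension subroutine must respect ``farness from existing cover elements'' constraints, which is exactly where the paper's $\ell_2/\ell_\infty$-constrained tensor optimization comes from. Without the well-separatedness idea I do not see how your list stays small.
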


We pause to make several comments about this result.
First, the regime we are primarily interested in is when $\eps$ is a constant (though possibly small). In this regime, our algorithm runs in polynomial time.
This resolves an open question posed in \cite{gikl24}. 

Secondly, our result holds for all values of $\OPT$, and not just $\OPT$ close to $1$.
The setting where $\OPT$ is a small constant (say, $0.1$) is particularly challenging: in this regime, there may not be a unique closest product state.
In this setting, our algorithm in fact actually outputs a net (albeit in a relatively weak sense) over \emph{all} product states which are close to the unknown state $\rho$\ifstocproceedings\else; see Section~\ref{sec:reduction-to-poly-opt} for more detailed discussion\fi. 
Moreover, our algorithm does not need to know the value of $\OPT$, nor does it need even a lower bound on $\OPT$ (though if $\OPT$ is large the algorithm's complexity improves\ifstocproceedings\else---see \cref{rmk:applications_of_thm}\fi).
Note, however, that the guarantee on the fidelity of $\ket{\phi}$ with $\rho$ is only nontrivial when $\OPT > \eps$.

Finally, prior to this work, the only algorithms for this task were \emph{sample}-efficient, but not time-efficient.
For example, a polynomial number of random Clifford measurements suffices to estimate every fidelity with a product state $\braket{\pi | \rho | \pi}$ to $\eps$ error~\cite{hkp20}.
However, there are an exponential number of these product states, and computing even one fidelity from these randomized measurements requires exponential time~\cite{jvdn14}.

\paragraph{Improved product state testing.} Agnostic tomography of product states is closely related to the well-studied problem of product state testing~\cite{hm13}, where the goal is to determine whether or not a state $\ket{\psi}$ is a product state, or has fidelity at most $1 - \eps$ with any product state.
In the former case, the test should always accept, and in the latter, the test should reject with probability at least $p$, for some $p = \Theta (\eps)$.

Our results shed new light on this problem: the celebrated tester of Harrow and Montanaro~\cite{hm13} exhibits a strange behavior, wherein their rejection probability satisfies $p \leq 1/2 + o(1)$, even when $\eps \to 1$.
That is, for some reason, the tester cannot distinguish the case where $\ket{\psi}$ has overlap roughly $1/2$ with some product state, versus the case where the state has overlap $\ll 1/2$ with any product state.
Since our algorithm also produces an estimate of $\OPT$ to error $\eps$, it improves upon the best-known guarantees for product state testing~\cite{sw22} in this ``tolerant''~\cite{canonne20} regime.

\paragraph{Computational lower bounds.}

It is natural to ask whether or not one can hope for a running time for this problem which is polynomial in both $n$ and $1/\eps$.
We complement our upper bound with a lower bound, demonstrating that our runtime is, in a qualitative sense, close to optimal:

\begin{theorem}[Hardness of product state approximation\ifstocproceedings\else, \cref{thm:hardness}\fi]
\label{thm:lb-informal}
    Suppose there is an efficient quantum algorithm for solving the following problem: given $\poly(n)$ copies of an unknown, $n$-qubit mixed state $\rho$, with probability $\geq 0.01$, output $\ket{\psi}$ satisfying
    \[
    \braket{\psi | \rho | \psi} \geq \max_{\ket{\pi} \in \mathcal{P}} \braket{\pi | \rho | \pi} - \frac{1}{\poly (n)} \,.
    \]
    Then $\mathsf{BQP} \supseteq \mathsf{NP}$.
\end{theorem}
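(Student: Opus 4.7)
The plan is to reduce a standard $\mathsf{NP}$-hard polynomial optimization problem to the fidelity-guarantee task. From the algorithm's output $\ket{\psi}$ we can empirically estimate $\braket{\psi|\rho|\psi}$ to $1/\poly(n)$ additive precision using $\poly(n)$ additional copies of $\rho$; combined with the guarantee $\braket{\psi|\rho|\psi} \geq \OPT - 1/\poly(n)$, this yields a $1/\poly(n)$-accurate lower bound on $\OPT$. Provided the construction also forces $\|\rho\|_{\op} \leq \OPT + 1/\poly(n)$, the same measurement gives a two-sided $1/\poly(n)$ estimate of $\OPT$. I would reduce from gapped MAX-$3$-LIN: approximating $\max_{z \in \{-1,+1\}^n} p(z)$ for $p(z) = \sum_{j=1}^m c_j z_{i_1^j} z_{i_2^j} z_{i_3^j}$ with $c_j \in \{\pm 1\}$, which is $\mathsf{NP}$-hard to approximate to constant additive error by H{\aa}stad's PCP theorem. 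The bridge is the identity $\braket{\pi|Z_{i_1}\cdots Z_{i_k}|\pi} = z_{i_1}\cdots z_{i_k}$ on product states with Bloch-$Z$ components $z_i \in [-1,1]$, so $\braket{\pi|H|\pi} = p(z)$ for $H = \sum_j c_j Z_{i_1^j}Z_{i_2^j}Z_{i_3^j}$; by multilinearity of $p$, the maximum over product states equals $\max_{z \in \{-1,+1\}^n} p(z)$.

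The core technical step is turning $H$ into an efficiently preparable $\rho$ realizing $\OPT = a + b \cdot \max_z p(z)$ with $b \geq 1/\poly(n)$ and $\|\rho\|_{\op} \leq \OPT + 1/\poly(n)$. The naive attempt $\rho = (I + \epsilon H)/2^n$ gives $b = \epsilon/2^n$, exponentially too small to resolve any $\mathsf{NP}$-hardness gap; the opposite extreme, a rank-one $\rho = \ketbra{\phi_p}{\phi_p}$, can attain $b = \Theta(1/m)$ but forces $\|\rho\|_{\op} = 1$, violating the auxiliary operator-norm constraint and allowing the algorithm's output to leak no information about $\OPT$. The resolution is a flat-spectrum mixture $\rho = \frac{1}{m}\sum_{j=1}^m \ketbra{\psi_j}{\psi_j}$ with pairwise orthogonal $\ket{\psi_j}$, which makes $\|\rho\|_{\op} = 1/m$ automatically while still allowing product-state fidelities with $\rho$ to aggregate contributions across clauses. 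A concrete template introduces $\log m$ auxiliary pointer qubits and sets $\ket{\psi_j} = \ket{\alpha_j}_{\mathrm{aux}} \otimes \ket{\phi_j}_{\mathrm{var}}$, where $\{\ket{\alpha_j}\}$ is an orthonormal family on the pointer register and each $\ket{\phi_j}$ is a small gadget on the clause qubits $C_j = (i_1^j, i_2^j, i_3^j)$ tuned so that $|\braket{\pi_{\mathrm{var}}|\phi_j}|^2$ is an affine function of $c_j z_{i_1^j}z_{i_2^j}z_{i_3^j}$.

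The main obstacle is the joint calibration of $\{\ket{\alpha_j}\}$ and $\{\ket{\phi_j}\}$ so that they are orthogonal (enforcing spectral flatness) yet the optimal product state on the pointer register induces a weighted sum $\sum_j w_j |\braket{\pi_{\mathrm{var}}|\phi_j}|^2$ over all clauses, rather than collapsing into a single-clause maximum. This forces the $\ket{\alpha_j}$ to be non-computational-basis states---a Fourier- or Hadamard-style encoding is natural, e.g., taking the $\ket{\alpha_j}$ to be columns of a Hadamard-type unitary on $\log m$ qubits---and requires a careful argument that the optimum over the pointer drives the variable register to maximize the intended weighted sum, which in turn coincides with $p(z)$ up to known affine rescaling. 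Once this calibration is verified, the mixture $\rho$ is preparable in $\poly(n)$ time by sampling $j$ uniformly and preparing $\ket{\psi_j}$, and the hypothesized algorithm yields a $\poly(n)$-time $\mathsf{BQP}$ estimator of $\OPT$ to $1/\poly(n)$ additive error, hence of $\max_z p(z)$ to constant additive error, resolving the MAX-$3$-LIN $\mathsf{NP}$-hardness gap and proving $\mathsf{BQP} \supseteq \mathsf{NP}$.
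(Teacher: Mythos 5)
Your route — reducing from MAX\texttt{-}$3$\texttt{-}LIN — is genuinely different from the paper's, which reduces from the $\mathsf{NP}$-hardness of approximating the spectral norm of a $4$-tensor to additive error $\|T\|_F/\poly(n)$ (itself coming from clique number via Fu--Lim). The paper's construction is far simpler than you anticipate: it uses the \emph{pure} state $\rho = \ketbra{\psi_T}{\psi_T}$ with $\ket{\psi_T} = \sum_{i,j,k,l} T'_{ijkl}\ket{e_i e_j e_k e_l}$ on four blocks of $n$ qubits, so that the product-state overlap $\braket{\psi_T|\pi_{\vec x,\vec y,\vec u,\vec v}}$ is literally the normalized tensor form $\xi\langle T,\vec x\otimes\vec y\otimes\vec u\otimes\vec v\rangle$. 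The only technical work is controlling the normalization $\xi$ (which pins $\OPT$ to $e^{-2}\OPT_T$ up to $1/\poly(n)$) and a random embedding $U^{\otimes 4}T$ to flatten the entries. Your worry about $\|\rho\|_{\op}$ is a red herring: the learner's output must be a product state (otherwise the pure-state case trivializes the theorem), so $\braket{\psi|\rho|\psi}\le\OPT$ holds automatically and the rank-one construction gives a two-sided estimate of $\OPT$ without any operator-norm constraint. You invented and then solved a problem that the actual reduction never faces.

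The concrete gap in your proposal is the pointer-register calibration, which you flag as an obstacle but do not resolve, and which I believe fails as described. With $\rho=\tfrac1m\sum_j\ketbra{\alpha_j}{\alpha_j}\otimes\ketbra{\phi_j}{\phi_j}$, the fidelity with a product state $\ket{\pi_{\mathrm{aux}}}\otimes\ket{\pi_{\mathrm{var}}}$ is $\tfrac1m\sum_j w_j\,|\braket{\pi_{\mathrm{var}}|\phi_j}|^2$ with $w_j=|\braket{\pi_{\mathrm{aux}}|\alpha_j}|^2$ and $\sum_j w_j\le 1$. If the $\ket{\alpha_j}$ are computational-basis or Hadamard-basis vectors — the ``Fourier/Hadamard-style encoding'' you suggest — every $\ket{\alpha_j}$ is itself a product state, so the optimizer sets $\ket{\pi_{\mathrm{aux}}}=\ket{\alpha_{j^\ast}}$ and concentrates all weight on the single best clause $j^\ast$, giving $\tfrac1m\max_j|\braket{\pi_{\mathrm{var}}|\phi_j}|^2$ rather than an aggregate over clauses. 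This does not encode MAX\texttt{-}$3$\texttt{-}LIN; it encodes single-clause satisfiability. You would need the $\ket{\alpha_j}$ to be far from every product state (e.g.\ Haar-random) so that $w_j$ is forced to be near-uniform, but then you must separately verify that (i) some product $\ket{\pi_{\mathrm{aux}}}$ actually achieves near-uniform weights, (ii) the $\ket{\phi_j}$, which must live on all $n$ variable qubits for the fidelities $|\braket{\pi_{\mathrm{var}}|\phi_j}|^2$ to be simultaneously well-behaved, combine to something that tracks $\sum_j c_j z_{i_1^j}z_{i_2^j}z_{i_3^j}$ rather than, say, a product of per-clause fidelities that decays exponentially in the number of clauses touching a given variable, and (iii) the resulting gap in the fidelity is $1/\poly(n)$. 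None of this is established, and each point looks like a genuine obstruction rather than bookkeeping. The paper's unary encoding sidesteps all of it by putting the tensor directly into the amplitudes of a pure state on disjoint blocks.
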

In particular, this rules out algorithms with strongly polynomial dependencies on all parameters.
We prove this hardness via a straightforward, polynomial-time reduction to an $\mathsf{NP}$-complete problem.
Consequently, this also rules out any algorithms that have sub-exponential dependence on $1 / \eps$, assuming the quantum analog of the exponential time hypothesis. 
We interpret this as saying that it is likely challenging to obtain substantial qualitative improvements to the runtime in Theorem~\ref{thm:main-informal}.

We also remark that this hardness result demonstrates an interesting computational-statistical gap for the problem of finding the closest product state.
Namely, classical shadow estimation~\cite{hkp20} demonstrates that this regime can be solved sample efficiently, but on the other hand, our lower bound demonstrates that this rate cannot be matched by any efficient algorithm.

\paragraph{Approximate tensor optimization.}
The upper and lower bound are based on a new connection to the classical problem of \emph{approximate tensor optimization}.
Here, one is given a $d$-tensor $T \in (\C^{n})^{\otimes d}$, and the goal is to find a unit vector $\vec{x} \in \C^n$ satisfying
\ewin{Add arrows to vectors}
\[
T(\vec{x}, \ldots, \vec{x}) \geq \max_{\twonorm{u} = 1} T(\vec{u}, \ldots, \vec{u}) - \eps \norm{T}_F \, .
\]
Our lower bound proceeds by direct reduction to this problem for $d = 4$, which is known to be $\mathsf{NP}$-hard when $\eps = 1 / \poly (n)$~\cite{fl17}, and our upper bound works by reducing the problem to many different instances of constrained versions of this problem.
This problem itself bears great resemblance to the problem of solving dense CSPs, and indeed, we believe the techniques we develop for constrained tensor optimization here may have applications to that setting as well.

\paragraph{Faster agnostic tomography of product states.}
In light of our lower bound, we ask whether there are simpler algorithms for agnostic tomography of product states, perhaps under additional assumptions.
We show that this is true for three natural settings: (1) when the best product state approximation is quite good; (2) when the number of choices for each qubit is discrete; and (3) when the output is allowed to be a matrix product state.

First, we obtain a linear copy and nearly-quadratic time algorithm for agnostic tomography of product states as long as the fidelity of the optimal solution exceeds a fixed constant (namely,~$5/6$):

\begin{theorem}[High-fidelity learning\ifstocproceedings\else, \Cref{thm:high_fidelity_alg}\fi]
\label{thm:high-fidelity-informal}
    There is an algorithm that takes as input a parameter $\eps > 0$ as well as $N = O(n / \eps)$ copies of an $n$-qubit state $\rho$, and has the following guarantees:
    Provided $\OPT  > 5/6 + \eps$, it runs in $O(N n \log n)$ time and outputs a pure product state~$\ket{\psi}$ that satisfies
    \[
    \braket{\psi | \rho | \psi} \geq \OPT   - \eps \,,
    \]
    (except with probability at most~$.01$).
\end{theorem}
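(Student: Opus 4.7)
The plan is to run classical shadow tomography with $N = O(n/\eps)$ independent, uniformly random single-qubit Clifford (equivalently, Pauli-basis) measurements on the given copies of $\rho$. From these, one estimates each single-qubit reduced density matrix $\rho_i = \tr_{-i}\rho$ to operator-norm accuracy $O(\sqrt{\log n/N}) = O(\sqrt{\eps \log n/n})$ simultaneously for all $i$, via Hoeffding and a union bound over the $n$ qubits and three Pauli axes. Letting $\ket{\hat\psi_i}$ be the top eigenvector of the estimate $\hat\rho_i$, the algorithm outputs $\ket{\hat\psi} = \bigotimes_i \ket{\hat\psi_i}$. The runtime $O(Nn\log n)$ comes from the $N$ rounds of $n$ single-qubit measurements with simple bookkeeping (histograms per qubit per basis), plus $O(n)$ time to diagonalize $n$ two-by-two matrices.

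The heart of the analysis is the following structural lemma: if $\OPT > 5/6 + \eps$ and $\ket{\psi_i}$ is the top eigenvector of the \emph{exact} marginal $\rho_i$, then $\braket{\psi|\rho|\psi} \geq \OPT - O(\eps)$. To prove it, let $\ket{\pi}$ achieve $\OPT$, and decompose $\rho_i = M_i + E_i$, where $M_i \deq \bra{\pi_{-i}}\rho\ket{\pi_{-i}}$ is a $2\times 2$ PSD operator whose top eigenvector is \emph{exactly} $\ket{\pi_i}$ (by first-order optimality of $\ket{\pi}$) with top eigenvalue $\OPT$, and $E_i \succeq 0$ with $\tr(E_i) \leq 1 - \OPT < 1/6$. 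The threshold $5/6$ is chosen precisely so that the spectral gap of $M_i$ is at least $2\OPT - 1 > 2/3$; Davis--Kahan's $\sin\theta$ theorem then controls the angle between the top eigenvectors of $M_i$ and $\rho_i = M_i + E_i$ by $\|E_i\|_\op/\text{gap}(M_i)$.

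The main obstacle is converting this per-qubit alignment into a global fidelity bound without paying an exponential factor in $n$: the naive product $|\braket{\psi|\pi}|^2 = \prod_i |\braket{\psi_i|\pi_i}|^2$ can decay to $0$ even when each factor is close to $1$. I plan to resolve this via a hybrid argument, interpolating from $\ket{\pi}$ to $\ket{\psi}$ by swapping in one qubit's eigenvector at a time and bounding the change in $\braket{\cdot|\rho|\cdot}$ at each step by expressing it as $\braket{\psi_k|A_k|\psi_k} - \braket{\pi_k|A_k|\pi_k}$ for the conditional marginal $A_k$ on qubit $k$; the constant gap of the conditional marginals (inherited from $\OPT > 5/6$) allows each step's drop to be charged against a single Davis--Kahan angle, and the telescoping sum over qubits produces $O(\eps)$ total loss rather than an exponential product. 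Finally, the tomography error in $\hat\rho_i$ contributes an additional Davis--Kahan term of $O(\sqrt{\eps \log n/n})$ per qubit, which is absorbed into the same hybrid accounting; rescaling $\eps$ by a constant then yields the stated $\braket{\hat\psi|\rho|\hat\psi} \geq \OPT - \eps$ guarantee.
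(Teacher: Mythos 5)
Your algorithm---output the tensor product of the top eigenvectors of the single-qubit marginals $\rho_i$---is exactly the ``obvious'' approach that the paper's technical overview (\cref{sec:tech}, ``Why naive approaches fail'') warns does \emph{not} work, even when $\OPT$ is very close to $1$. The counterexample given there is $\ket{\psi} \propto \sqrt{1-\eps}\ket{0^n} + \sqrt{\eps}\ket{+^n}$: here $\OPT \geq 1-\eps$ is achieved by $\ket{0^n}$, but for large $n$ every marginal $\rho_i$ converges to $(1-\eps)\proj{0} + \eps\proj{+}$, whose top eigenvector $\ket{v_1}$ satisfies $\abs{\braket{v_1|0}}^2 = 1 - \Theta(\eps^2)$. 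Your output $\ket{v_1}^{\otimes n}$ then has $\abs{\braket{v_1^{\otimes n}|\psi}}^2 \approx (1-\eps)\abs{\braket{v_1|0}}^{2n} = e^{-\Theta(n\eps^2)}$, which is exponentially small as soon as $n \gg 1/\eps^2$; with $\eps$ a small constant this is squarely within the regime the theorem asserts. The Davis--Kahan bound you invoke is in fact \emph{tight} in this example (each angle really is $\Theta(\eps)$, not $o(\eps)$), so the hybrid/telescoping step does not rescue the analysis: the drop at step $k$ is roughly $(1-\eps)\,e^{-\Theta(k\eps^2)}\Theta(\eps^2)$, and the telescoping sum over $k$ accumulates to $\approx 1-\eps$ total loss, not $O(\eps)$. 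Nothing forces $\sum_i \sin^2\theta_i$ to be $O(\eps)$; it is typically $\Theta(n\eps^2)$.

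The deeper issue is that the $1$-qubit marginals do not contain the information needed, as the paper's example of the state $\ket{\psi'}$ (same marginals, different optimal product state) makes explicit. The paper's actual algorithm (\cref{alg:high_fidelity_alg}, \cref{alg:local_opt_no_logs}) instead uses the vector $\vec{z}$ with $z_i = \braket{e_i|\rho|0^n}$, the \emph{coherent} overlap between the current candidate and its Hamming-weight-$1$ neighbors---a function of the joint state, not the marginals. In your counterexample $\|\vec z\|_2$ is exponentially small, so the local-optimization loop correctly halts at $\ket{0^n}$ rather than drifting to $\ket{v_1}^{\otimes n}$. The structural workhorses are \cref{thm:local_opt_works_better} and \cref{cor:mixed_local_opt_works} (bounding $\OPT$ in terms of $\braket{0^n|\rho|0^n}$ and $\|\vec z\|_2$ once fidelity exceeds $2/3$), together with \cref{thm:local_opt_improvement_with_error} (a step in the $\vec z$ direction improves fidelity by $\Omega(\|\vec z\|_2^2)$); the $5/6$ threshold arises because the divide-and-conquer bootstrap (\cref{lem:fidelity_tensor}) turns two halves at fidelity $5/6$ into a combined state at fidelity $\geq 2/3$. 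Note also the paper's scheme is necessarily adaptive---the tomography target $\vec z$ changes as the candidate moves---unlike your non-adaptive classical-shadows plan.
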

In other words, so long as the quality of the product approximation $\OPT$ exceeds $5/6$, there is a strongly polynomial time algorithm for agnostic product state tomography.
This stands in stark contrast to the state of affairs for general $\OPT$, where the hardness result demonstrates such an algorithm is impossible.
The threshold $5/6$ naturally arises from our analysis, but it is an interesting open question to what extent it can be pushed.

We remark that the runtime dependence of the algorithm is linear in $1/\eps$, even though it is easily seen that \textit{estimating} $\OPT$ to $\pm \eps$ requires $\Omega(1/\eps^2)$ samples.
For example, this lower bound holds even in the special case when $\rho = \OPT\ketbra{0}{0} + (1-\OPT)\ketbra{1}{1}$ is a biased coin, and we want to distinguish whether (say) $\OPT = 0.9+\eps$ or $\OPT = 0.9-\eps$.
Our algorithm demonstrates that the task of \textit{finding} a state whose fidelity is within $\eps$ of the optimum may be easier.


Second, we give an efficient algorithm for agnostic tomography, when the class of states is the set of product states where each qubit is drawn from a finite set of possible states:
\begin{theorem}[Learning of a finite class of product states\ifstocproceedings\else, \Cref{thm:discrete}\fi]
\label{thm:discrete-informal}
    For $k = 1, \dots, n$, let $\mathcal{A}_k$ denote a set of single qudit pure states satisfying $|\mathcal{A}_k| \leq s$ and $\abs{\braket{\phi | \phi'}} \leq 1-\delta$ for all distinct $\ket{\phi}, \ket{\phi'} \in \mathcal{A}_k$.
    Let $\mathcal{A} = \mathcal{A}_1 \otimes \cdots \otimes \mathcal{A}_n$, and for any $n$-qudit quantum state $\rho$, let $\OPT_{\mathcal{A}} = \OPT_{\mathcal{A}}(\rho) = \max_{\ket{\pi} \in \mathcal{A}} \braket{\pi | \rho | \pi}$.
    Then there is an algorithm which, given as input $\eps > 0$ and $N = \poly((ns)^{\log(1/\eps)/\delta})$ copies of an $n$-qudit state~$\rho$, runs in $\poly(N)$ time and outputs the classical description of some $\ket{\psi} \in \calA$ satisfying
    \[
        \braket{\psi | \rho | \psi} \geq \OPT_{\mathcal{A}} - \eps \,,
    \]
    (except with probability at most $.01$).
\end{theorem}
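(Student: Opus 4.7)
The plan is to adapt the cover-based, qudit-by-qudit strategy underlying \cref{thm:main-informal}. In the discrete setting, the ``cover'' at each site is already given by the finite set $\mathcal{A}_k$, so the polynomial-optimization subroutine of the general algorithm reduces to simple enumeration over $\mathcal{A}_k$.

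The algorithm maintains at each stage $k \in \{0, 1, \ldots, n\}$ a list $L_k \subseteq \mathcal{A}_1 \otimes \cdots \otimes \mathcal{A}_k$ of at most $T := (ns)^{O(\log(1/\eps)/\delta)}$ candidate prefixes, each annotated with an estimate of its \emph{partial fidelity}
\[
    f(\pi_{1:k}) \;:=\; \bra{\pi_{1:k}} \rho_{1:k} \ket{\pi_{1:k}}\,,
\]
where $\rho_{1:k}$ is the reduced state of $\rho$ on the first $k$ qudits. To extend $L_k$ to $L_{k+1}$, form all $\leq |L_k|\cdot s$ continuations $\ket{\pi_{1:k}} \otimes \ket{\phi}$ with $\ket{\phi} \in \mathcal{A}_{k+1}$, estimate each partial fidelity to accuracy $\eps/(4n)$ by measuring qudits $1{:}k{+}1$ in an orthonormal basis containing $\ket{\pi_{1:k+1}}$ and taking the empirical frequency of that outcome (using $\wt O(n^2/\eps^2)$ fresh copies of $\rho$ per estimate), and retain the top $T$ by estimated value. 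At the end, estimate the full fidelity of each $\ket\pi \in L_n$ and output the best.

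Correctness relies on the monotonicity $f(\pi_{1:k}) \geq f(\pi_{1:n})$, which follows from the PSD ordering $\proj{\pi_{1:k}} \otimes I \succeq \proj{\pi_{1:n}}$. In particular, the prefix $\pi^*_{1:k}$ of the optimal $\ket{\pi^*}$ satisfies $f(\pi^*_{1:k}) \geq \OPT$. Combined with Hoeffding for the frequency estimates, this guarantees that the true prefix $\pi^*_{1:k}$ is preserved under top-$T$ pruning provided the key structural lemma
\[
    \Bigl|\bigl\{\pi_{1:k} \in \mathcal{A}_1 \otimes \cdots \otimes \mathcal{A}_k \co f(\pi_{1:k}) \geq \OPT - \eps/2\bigr\}\Bigr| \;\leq\; T
\]
holds at every stage. (If $\OPT \leq \eps$, any product state satisfies the guarantee, so we may assume $\OPT > \eps$.)

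This combinatorial lemma is the main obstacle. The heuristic justification is that if $\pi_{1:k}$ disagrees with $\pi^*_{1:k}$ on a set $D$ of $d$ coordinates, then $|\braket{\pi_{1:k} | \pi^*_{1:k}}| \leq (1-\delta)^d \leq e^{-\delta d}$. Combining this overlap decay with the Cauchy--Schwarz bound $|\bra{\pi_{1:k}}\rho_{1:k}\ket{\pi^*_{1:k}}|^2 \leq f(\pi_{1:k}) f(\pi^*_{1:k})$ and the PSD decomposition of $\rho_{1:k}$ along the direction $\ket{\pi^*_{1:k}}$ should force $f(\pi_{1:k})$ to decay once $d \gg \log(1/\eps)/\delta$. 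Enumerating over prefixes differing on a set of size $d \leq r := O(\log(1/\eps)/\delta)$ gives at most $\sum_{d \leq r}\binom{k}{d} s^d \leq T$ candidates. Formalizing this multiplicative decay uniformly in $\OPT$---in particular, avoiding the regime where a naive Cauchy--Schwarz yields only an additive bound that is too weak---is the crux of the technical argument; once established, the total sample and time complexity is $\poly(T \cdot s \cdot n / \eps) = \poly((ns)^{\log(1/\eps)/\delta})$, as claimed.
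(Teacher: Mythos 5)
Your algorithmic skeleton matches the paper's (Algorithm A.3 / Theorem A.1): sweep over qudits, extend each surviving prefix by all $\leq s$ choices at the next site, estimate partial fidelities, and prune. Your top-$T$ pruning is a harmless variant of the paper's threshold pruning (keep estimated fidelity $\geq \eta - \eps/2$), and your monotonicity observation $f(\pi_{1:k}) \geq f(\pi_{1:n})$ is the same one the paper uses. You also correctly identify the load-bearing structural lemma---that the number of prefixes with partial fidelity $\geq \OPT - \eps/2$ is at most $(ns)^{O(\log(1/\eps)/\delta)}$---which is exactly the paper's Claim A.2 ($|\mathcal{P}_\eta| \leq (10ns)^{\log(2/\eta)/\log(1/\gamma)}$).

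However, the route you sketch for that lemma is a dead end, not just a technicality to be nailed down. You try to argue that any good $\pi_{1:k}$ must disagree with the single optimizer $\pi^*_{1:k}$ on few coordinates, by combining the overlap decay $|\langle \pi_{1:k}|\pi^*_{1:k}\rangle| \leq (1-\delta)^d$ with Cauchy--Schwarz against $\rho_{1:k}$. But when $\OPT$ is a small constant, this is simply false: $\rho$ can be a uniform mixture of $\lfloor 1/\OPT \rfloor$ mutually orthogonal elements of $\mathcal{A}$, all achieving fidelity $\approx \OPT$ and all at Hamming distance $n$ from one another. No amount of Cauchy--Schwarz tightening will salvage a claim that good prefixes cluster around a single $\pi^*$; the set of near-optimal states genuinely fragments into many far-apart pieces. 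The paper's proof gets around this by not anchoring to $\pi^*$: it greedily builds a net $\mathcal{N} \subseteq \mathcal{P}_\eta$ whose elements pairwise differ in $> \ell$ coordinates (so pairwise overlaps are $\leq \gamma^{\ell+1} \leq \eta/2$), stacks these states as columns of a matrix $M$, and squeezes $|\mathcal{N}|\eta \leq \tr(MM^\dagger\rho) \leq \opnorm{M^\dagger M} \leq 1 + |\mathcal{N}|\gamma^{\ell+1}$ via Gershgorin to get $|\mathcal{N}| \leq 2/\eta$. Every element of $\mathcal{P}_\eta$ then lies within Hamming distance $\ell$ of some net point, and there are at most $|\mathcal{N}| \cdot \sum_{i\leq\ell}\binom{n}{i}s^i$ such states. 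The number of clusters scaling as $1/\eta$, not $1$, is precisely what makes the bound hold uniformly in $\OPT$; that is the ingredient your argument is missing.
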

Stated plainly, so long as there are a finite set of possible states, and these states are all pairwise separated, then there is an efficient algorithm for agnostic tomography for this class of product states.
We note that, similar to Theorem~\ref{thm:main-informal}, our algorithm actually outputs all good solutions.
This result also directly generalizes prior work of Grewal, Iyer, Kretschmer, and Liang~\cite{gikl24}, which studied the special case where each $\mathcal{A}_k$ is the set of $1$-qubit stabilizer states.
A very similar result was also obtained independently in \cite{cgyz24}, albeit with quite different techniques.

Third, we give an algorithm for learning a good matrix-product state approximation of a given state $\rho$.
Matrix product states with small bond dimension can be used to efficiently describe systems of multiple particles where particles share a small (but non-zero) amount of entanglement, and are ubiquitous in quantum many-body physics~\cite{pvwc07,Schollwock11}.
We give an algorithm for \emph{agnostic} tomography of matrix product states.

\begin{theorem}[Agnostic (improper) learning of matrix product states\ifstocproceedings\else, \Cref{thm:mps}\fi]
    \label{thm:mps-informal}
    Let $n, d, r$ be positive integers, and let $\mathrm{MPS}_{n,d,r}$ be the class of matrix product states on $n$ qudits of local dimension $d$ with bond dimension $r$\ifstocproceedings\else\ (\cref{def:mps})\fi.
    For any state $\rho \in (\C^{d \times d})^{\otimes n}$, let $\OPT_{r} = \OPT_{n, d, r} (\rho) = \max_{\ket{\phi} \in \mathrm{MPS}_{n, d, r}} \braket{\phi | \rho | \phi}$ be the maximum fidelity any such MPS has with $\rho$.
    There is an algorithm which, given as input $\eps > 0$ and $N = \poly (n, d, r, 1/\eps)$ copies of an unknown $n$-qudit state $\rho$, runs in time $\poly(N)$ and outputs the classical description of a matrix product state $\ket{\widehat{\phi}}$ of bond dimension $dn^2 \cdot \poly (r, 1/ \eps)$ such that 
    \[
        \braket{\widehat{\phi} | \rho | \widehat{\phi}} \geq \OPT_r - \eps \,,
    \]
    (except with probability at most $.01$).
\end{theorem}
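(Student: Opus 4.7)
The plan is to construct $\ket{\widehat\phi}$ site-by-site via a DMRG-inspired nested-subspace procedure, adapting the ``density matrix truncation'' idea used in MPS tomography (e.g.\ Cram\'er et al.) to the agnostic setting. The guiding observation is that the optimal MPS $\ket{\phi^*}$ of bond dimension $r$ has a rank-$\leq r$ left Schmidt support at each cut $k$ whose projector $\Pi^*_k$ satisfies $\Tr(\Pi^*_k\,\rho_{[1..k]})\geq \OPT_r$, so $\Pi^*_k$ must overlap heavily with the top eigenspace of $\rho_{[1..k]}$; keeping a polynomially-inflated version of that eigenspace at each cut is therefore enough to contain an MPS approximation of $\ket{\phi^*}$ up to small fidelity loss.

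The algorithm would proceed in three phases. \emph{Tomography:} from $N=\poly(n,d,r,1/\eps)$ copies of $\rho$, use classical shadow tomography to obtain an access oracle for each left reduced density matrix $\widehat{\rho}_{[1..k]}$, accurate to Frobenius error $o(\eps/n^2)$ when restricted to any fixed polynomial-dimensional subspace. \emph{Subspace construction:} set $V_0=\C$ and, recursively, given $V_{k-1}\subseteq(\C^d)^{\otimes(k-1)}$ of dimension $\leq R:=dn^2\cdot\poly(r,1/\eps)$, let $V_k\subseteq V_{k-1}\otimes\C^d$ be the top-$R$ eigenspace of $(\Pi_{V_{k-1}}\otimes I_d)\,\widehat{\rho}_{[1..k]}\,(\Pi_{V_{k-1}}\otimes I_d)$, viewed as a $dR\times dR$ matrix on $V_{k-1}\otimes\C^d$. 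By the nested inclusion $V_k\subseteq V_{k-1}\otimes\C^d$, every unit vector in $V_n$ has Schmidt rank $\leq R$ at every cut, hence is an MPS of bond dimension $R$ with explicit site tensors read off the isometric inclusions. \emph{Extraction:} output $\ket{\widehat\phi}$, the top eigenvector of $\Pi_{V_n}\widehat{\rho}\Pi_{V_n}$ rewritten in MPS form.

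Correctness reduces to exhibiting some $\ket{\widetilde\phi}\in V_n$ with $\braket{\widetilde\phi|\rho|\widetilde\phi}\geq \OPT_r - \eps$. I would construct it recursively by setting $\ket{\phi^{(0)}}=\ket{\phi^*}$ and $\ket{\phi^{(k)}}\propto(\Pi_{V_k}\otimes I)\ket{\phi^{(k-1)}}$, and then taking $\ket{\widetilde\phi}=\ket{\phi^{(n)}}$. The crucial structural fact is that each projection acts only on qubits $[1..k]$, which lies on the \emph{left} side of every cut $k'\geq k$; hence $\ket{\phi^{(k-1)}}$ preserves Schmidt rank $\leq r$ at cut $k$, its left marginal $\rho^{(k-1)}_{[1..k]}$ is rank-$\leq r$, and the per-step loss equals $\Tr(\Pi_{V_k^\perp}\,\rho^{(k-1)}_{[1..k]})$---the tail mass of a rank-$r$ density matrix outside the top-$R$ eigenspace of a restricted version of $\widehat{\rho}_{[1..k]}$.

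The main obstacle is the key lemma bounding this per-step loss by $\eps/n$, which then telescopes over $n$ cuts into the claimed $\eps$ error. By induction one has $\braket{\phi^{(k-1)}|\rho|\phi^{(k-1)}}\geq \OPT_r-(k-1)\eps/n$, and since $\proj{\phi^{(k-1)}}\leq \Pi_{\mathrm{supp}}^{(k-1)}\otimes I$ for the rank-$\leq r$ support projector of $\rho^{(k-1)}_{[1..k]}$, one obtains $\Tr(\Pi_{\mathrm{supp}}^{(k-1)}\rho_{[1..k]})\geq \OPT_r-(k-1)\eps/n$. A Wedin-type spectral perturbation argument, combined with the elementary tail bound $\lambda_{R+1}(\widehat\rho_{[1..k]})\leq 1/(R+1)$ and a careful accounting of how tomography error compounds across the $n$ recursion steps, should show that the rank-$r$ support of $\rho^{(k-1)}_{[1..k]}$ lies almost entirely inside the top-$R$ eigenspace once $R=dn^2\poly(r,1/\eps)$. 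The hard part is pinning down the correct polynomial in $r$ and $1/\eps$ so that rank-$r$ truncation induced by successive projections does not interact badly with the induction; this is where I expect the bulk of the technical work to live. Summing the per-cut losses and incorporating tomography noise then gives $\braket{\widetilde\phi|\rho|\widetilde\phi}\geq \OPT_r-\eps$, and hence $\braket{\widehat\phi|\rho|\widehat\phi}\geq \OPT_r-\eps$ by the extraction step.
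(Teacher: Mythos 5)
Your overall architecture is the same one the paper uses (and that both adapt from Cramer et al.): sweep across the cuts, maintain a $\poly$-dimensional subspace at each cut that is nested in the previous one tensored with $\C^d$, argue any vector in the final subspace is an MPS of bond dimension $R=dn^2\poly(r,1/\eps)$, and prove correctness by projecting the optimal bond-$r$ MPS cut by cut, using that each projection acts to the left of all later cuts so the Schmidt rank stays $\le r$. The genuine gap is in your key lemma. First, the per-step loss is not $\Tr\bigl(\Pi_{V_k^\perp}\rho^{(k-1)}_{[1..k]}\bigr)$: writing $\ket{\phi^{(k-1)}}=\ket{u}+\ket{w}$ with $\ket{u}=(\Pi_{V_k}\otimes I)\ket{\phi^{(k-1)}}$, the fidelity with $\rho$ picks up cross terms $2\Re\braket{u|\rho|w}$, which are only controlled by a square root of the truncated quantity. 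Second, and more seriously, the statement you plan to prove with a Wedin-type argument --- that the rank-$r$ support of $\rho^{(k-1)}_{[1..k]}$ lies almost entirely inside the top-$R$ eigenspace --- is false in general: in the agnostic regime there is no eigengap, and nothing ties the Schmidt vectors of $\ket{\phi^{(k-1)}}$ to the top of the spectrum of $\rho_{[1..k]}$ (e.g.\ $\rho$ can put $1-\OPT_r$ of its mass on a highly mixed ``junk'' component that dominates the restricted spectrum, while a constant-weight Schmidt direction of the projected MPS has only constant Rayleigh quotient and hence sits mostly in the tail). So the perturbation route has nothing to bite on, and the quantity it would control is not the loss anyway. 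What does work --- and is exactly the paper's \cref{lem:partial-trace-eig-bound} --- is to bound the fidelity loss directly by Cauchy--Schwarz over the $\le r$ Schmidt terms: each truncated component $(I-\Pi_{V_k})\ket{v_j}$ lies in $V_{k-1}\otimes\C^d$ and is orthogonal to the top-$R$ eigenspace, so its Rayleigh quotient against the restricted $\rho_{[1..k]}$ is at most $\lambda_{R+1}\le 1/(R+1)$, giving a per-cut loss of order $r\sqrt{1/R}$ (plus tomography error), which telescopes to $\eps$ once $R=\poly(n,r,1/\eps)$. Your scheme is salvageable, but only by replacing your key lemma with this bound.

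The second gap is the tomography primitive you assume. You need $(\Pi_{V_{k-1}}\otimes I)\,\rho_{[1..k]}\,(\Pi_{V_{k-1}}\otimes I)$, and later $\rho$ compressed to $V_n$, to stated accuracy with $\poly$ samples \emph{and} $\poly$ time, where $V_{k-1}$ is a recursively defined subspace of an exponentially large space whose basis vectors are themselves bond-dimension-$R$ MPS. With global Clifford shadows the sample complexity is fine but evaluating the estimators requires overlaps between stabilizer states and these MPS basis vectors, which is not obviously polynomial time; with local single-qudit shadows the shadow norm of the needed rank-one observables is exponential in $k$. The paper sidesteps this by physically compressing the state as it sweeps: the isometry onto the kept subspace is implemented as a disentangling unitary on an $O(\log_d(nr/\eps))$-qudit window followed by postselection on $\ket{0}$ (\cref{def:disentangle}, \cref{alg:mps_learning}), so every tomography call is ordinary sub-normalized tomography on a $\poly$-dimensional register (\cref{lem:subnormalized_tomography}). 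As written, your ``access oracle via classical shadows'' is asserted rather than justified; you either need such a coherent compression step (at which point your algorithm essentially becomes the paper's) or a genuine argument that the shadow post-processing restricted to these recursively defined subspaces is efficient.
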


We can relate this task back to learning the closest product state by taking $r = 1$ and $d = 2$; then, $\mathrm{MPS}_{n,d,r}$ is the class of product states over qubits, and our algorithm is able to output a matrix product state with bond dimension $n^2 \poly(1/\eps)$ whose fidelity with $\rho$ is at least $\OPT - \eps$.
This gives an improper learner for product states, ``improper'' referring to our output not being a product state but instead a low-entanglement state.
Our main result \cref{thm:main-informal} is a \emph{proper} learner for product states.
In the error regimes of our lower bound, this gives an instance where improper agnostic learning is efficient, but proper agnostic learning is $\textsf{NP}$-hard.
In general, the output of this algorithm is an MPS with a bond dimension of at least $r n^2$, which achieves a fidelity which is optimal with respect to MPSs with bond dimension $r$; this dependence on $n$ in particular seems to be what makes this result more straightforward than proper learning of MPSs.

For this task, we recognize that the algorithm of Cramer, Plenio, Flammia, Somma, Gross, Bartlett, Landon-Cardinal, Poulin, and Liu~\cite{cpfsgb10} to learn an MPS also works when the input state is not an MPS, but merely has large constant fidelity with an MPS; our contribution is to generalize it to the agnostic case and perform the necessary analysis.
\ifstocproceedings\else We give a more detailed discussion in \cref{sec:mps}.\fi

\paragraph{Techniques.}
All of these results, as well as \cref{thm:main-informal}, are all based on a common algorithmic framework, which may have applications more broadly.
At a very high level, our algorithms sweep through the qubits one at a time, and generate a set of candidates for good solutions on the qubits seen so far.
This cover is then used as the starting point for generating candidates over the subsequent qubits.
The main algorithmic challenge is in making extending the cover efficient.
In the case of \cref{thm:main-informal}, extending this cover is intimately connected to tensor optimization, as mentioned above.
To achieve our faster algorithms~\cref{thm:high-fidelity-informal,thm:discrete-informal,thm:mps-informal}, our key insight is that there are relatively simple and ``greedy'' techniques that allow us to extend this cover.

Our algorithms interleave classical computation with a particular quantum subroutine: the only way we access $\rho$ is to perform tomography on various subspaces of subsystems, e.g.\ to estimate $\Pi \tr_S(\rho) \Pi$ for $S$ some subset of qubits and $\Pi$ a projector onto a subspace of $\poly(n)$ dimension (which can be represented efficiently with a quantum circuit).
Our algorithms apply this subroutine to various choices of $\Pi$ and $S$, which are adaptively chosen after classical computation on the output of the previous tomography routines.
Since such a tomography subroutine can be performed with single-copy measurements, our algorithm can also be performed with only single-copy measurements.
However, the adaptivity of this algorithm appears inherent: the classical shadows formalism~\cite{hkp20} is the standard technique to allow algorithms like these to perform all of their measurements up-front, but doing this comes at the cost of exponential running time, which we cannot tolerate.

\subsection{Related work}


\paragraph{Concurrent work.} 
In independent and concurrent work, Chen, Gong, Ye, and Zhang~\cite{cgyz24} give an algorithm for agnostic tomography of a finite set of product states, attaining a near-identical result to \cref{thm:discrete-informal} via completely unrelated techniques.
They also give an improved algorithm when the product states are stabilizer; we are also able to get a similar improvement in this setting\ifstocproceedings\else (see \cref{rmk:discrete-shadows} for more details)\fi.

\paragraph{Agnostic tomography.}
The notion of agnostic tomography was introduced by Grewal, Iyer, Kretschmer, and Liang~\cite{gikl24}, though similar notions have been considered under the notion ``quantum hypothesis selection''~\cite{bo24} and in the PAC-learning setting; we refer to the survey \cite{aa23} for a thorough discussion.
Recent work has given agnostic tomography algorithms for stabilizer product states~\cite{gikl24} and stabilizer states~\cite{cgyz24}.
These algorithms use unrelated techniques.

\paragraph{Product state testing.}
A notable related algorithm is the product state test, which, using copies of a state $\rho$ is able to distinguish the cases that $\OPT = 1$ from $\OPT \leq 1 - \eps$.
This algorithm, introduced by \cite{mkb05} and analyzed by Harrow and Montanaro~\cite{hm13}, plays an important role in complexity theory, being used to prove that $\mathsf{QMA}(2) = \mathsf{QMA}(k)$ for $k > 2$.
Though the algorithm suffices for testing, it cannot be used to estimate $\OPT$ when $\OPT$ is bounded away from $1$~\cite[Appendix B]{hm13}.
For similar reasons, it also does not seem to help with the task of finding good product states.

\paragraph{Product state approximations in other contexts.}
Our algorithm shows that it is possible to estimate the ``geometric measure of entanglement'' of a given pure state in polynomial time.
This measure of entanglement, defined by Wei and Goldbart \cite{wg03,vprk97}, has seen significant investigation as a measure of multipartite entanglement.
This interest comes from this measure's potential to capture aspects of entanglement in condensed matter systems which cannot be captured by the typical, bipartite measures of entanglement~\cite{wdmvg05,odv08,ow10}.
See the survey of De Chiara and Sanpera for further discussion \cite{ds18}.
However, research has been limited by computational intractability, so our work may give a possible avenue to expand its scope via quantum simulation.

Mean-field approximations also arise naturally in contexts where we want to understand things like ground states of many-body systems, and only have a handle on product states~\cite{bh16}.
For example, to prepare ground states of many-body systems, current heuristic phase estimation methods have a running time which depends on the fidelity between the ground state and an input product state~\cite{garnet23}.

\paragraph{Agnostic learning of product distributions.}
In some ways, the problem we consider here is the quantum analog of the well-studied problem of agnostic learning of product distributions on the hypercube.
In its most basic form, we are given samples from a distribution that is close to a product distribution over the hypercube, and the goal is to learn the optimal product distribution approximation.
Efficient algorithms for this problem were given in~\cite{diakonikolas2019robust,lai2016agnostic}.
However, these algorithms only work when their version of $\OPT$ is sufficiently large; in classical learning theory, the regime when $\OPT$ is small is known as \emph{list learning}, and efficient algorithms for list learning of product distributions are also known; see, e.g.~\cite{charikar2017learning,kothari2018robust}.
However, the guarantees they obtain are quite incomparable to ours, and their techniques do not have a meaningful parallel in the quantum setting.

\paragraph{Polynomial optimization.}
Polynomial optimization over the sphere is hard in general.
Multiplicative approximations for optimizing low-degree polynomials in the worst case are well-understood (see \cite{bhattiprolu2017weak} and references therein).   
However, polynomial optimization has still found prominent applications in classical learning problems in the last decade.
The polynomials that naturally appear in these settings do not tend to be worst-case, and admit significantly better approximations.
Optimizing low-degree polynomials (often subject to polynomial constraints) has become a key algorithmic primitive in dictionary learning~\cite{barak2015dictionary}, tensor decomposition~\cite{hopkins2015tensor}, robustly learning Gaussian mixture models~\cite{moitra2010settling,belkin2015polynomial, liu2021settling, bakshi2022robustly} and private and list-decodable learning~\cite{hopkins2023robustness, karmalkar2019list, raghavendra2020list, bakshi2021list}.
These techniques have also found applications in quantum tasks, such as best separable state~\cite{barak2017quantum} and learning quantum Hamiltonians~\cite{blmt24,narayanan2024improved}. 
An interesting feature of our algorithm, compared to this other work, is that we do not establish uniqueness of some strong structure arising from the underlying parameters.
Instead, we output a (non-unique) cover over solutions, and use polynomial optimization as a subroutine to produce such a cover.

Optimizing low-degree polynomials over the hypercube also leads to approximation algorithms for constraint satisfaction problems on dense and low-threshold rank graphs~\cite{barak2011rounding, raghavendra2012approximating, manurangsi2016birthday} and high-dimensional expanders~\cite{alev2019approximating}. These results roughly proceed via solving a sum-of-squares relaxation of a polynomial maximization problem, and obtain additive error that scales proportional to $\eps$ times the $\ell_2$-norm of the coefficients of the polynomial and runs in $n^{\poly(1/\eps)}$ time.   Similar techniques have also appeared in the context of refuting random CSP's~\cite{raghavendra2017strongly}. 

A closely related problem is that of optimizing random polynomials over the sphere, which has deep connections to statistical physics and admits an additive-error guarantee under full replica-symmetry breaking~\cite{subag2021following}.
While our optimization problem does not involve random polynomials, we show that we can optimize low-degree polynomials up to small additive-error efficiently.

\section{Technical overview}
\label{sec:tech}

We now cover the key technical ideas of our algorithms.
\ifstocproceedings
\else
The precise version of the main algorithm can be found in \cref{alg:outer_loop}, which describes its outer loop, and \cref{alg:inner_loop}, which describes the main subroutine.
Further explanation of the subroutine can be found in \cref{subsec:finding}.
\fi

\paragraph{Why naive approaches fail.}
Given an $n$-qubit quantum state with density matrix $\rho \in \C^{2^n \times 2^n}$, we want to find the product state that maximizes fidelity with $\rho$.
The obvious algorithm that one might try to learn the closest product state is to take the best pure state approximation to each of its single-qubit subsystems.
This algorithm works if $\rho$ itself is a pure product state.
However, the single-qubit subsystems do not contain enough information to deduce the best product state, even when the fidelity of $\rho$ with the best product state is very close to $1$.
This phenomenon is why many naive approaches give exponentially poor approximations to the optimal value.

An illustrative example is to consider $\rho = \ketbra{\psi}{\psi}$ where $\ket{\psi}$ is the state proportional to
\begin{equation*}
    \ket{\psi} \propto  \sqrt{1-\eps} \ket{0^n} + \sqrt{\eps} \ket{+^n}
\end{equation*}
for some small constant $\eps$.
Because $\braket{+^n|0^n} = 2^{-n/2}$, $\ket{\psi}$ as written is exponentially close to normalized.
The fidelity with the product state $\ket{0^n}$ can be computed explicitly:\footnote{It follows from one of our later results \ifstocproceedings \else(\Cref{thm:local_opt_works_better}) \fi that the maximum product state fidelity with $\ket{\psi}$ is exponentially close to $1 - \eps$.}
\[
    \braket{0^n| \rho|0^n} = \abs{\braket{\psi|0^n}}^2 = \mparen{\frac{\sqrt{1-\eps} + \sqrt{\eps/2^n}}{\norm{\sqrt{1+\eps} \ket{0^n} + \sqrt{\eps} \ket{+^n}}_2}}^2 \ge 1 - \eps\,.
\]
In the limit of large $n$, the one-qubit density matrices of $\ket{\psi}$ all approach
\[
\rho_i = \begin{bmatrix}
    1 - \eps/2 & \eps/2\\
    \eps/2 & \eps/2
\end{bmatrix}
\]
We will see that there is a distinct state $\ket{\psi'}$ that is also $\eps$-close to a product state, and has identical reduced density matrices, but for which $\ket{0^n}$ is a very bad product state approximation.
Take an eigendecomposition of $\rho_i$ as
\[
\rho_i = p_1\ketbra{v_1}{v_1} + p_2\ketbra{v_2}{v_2}\,,
\]
with $p_1 > p_2$.
The state
\[
\ket{\psi'} = \sqrt{p_1}\ket{v_1}^{\otimes n} + \sqrt{p_2}\ket{v_2}^{\otimes n}
\]
also has at least $1-\eps$ fidelity with a product state (namely $\ket{v_1}^{\otimes n}$), and has all its local density matrices equal to $\rho_i$.
However, calculation shows that $\abs{\braket{\psi'|0^n}}^2$ decays exponentially to $0$ in the limit of large $n$, because both $\ket{v_1}$ and $\ket{v_2}$ are constant-far from $\ket{0}$.
So, there is not enough information in the one-qubit reduced density matrices to learn the best product state approximation.

\paragraph{Barriers to agnostic product tomography.}
The hard case above illuminates broader challenges inherent to this task.
We are concerned with optimizing the fidelity $\braket{\pi | \rho | \pi}$ over the class of product states; however, fidelity is typically quite poorly behaved.
For example, almost all product states have exponentially small fidelity with $\rho$, which is too small to detect, and the fidelity landscape can have many local optima which thwart local search algorithms, like those based on convex optimization.
This ill-behavedness is a well-established phenomenon related to the ``barren plateau'' problem in quantum machine learning~\cite{mbsbn18}.

The regime where the optimal fidelity is a small constant like $0.1$ is particularly challenging since, unlike the case where $\OPT$ is near $1$, there are many well-separated globally optimal solutions.
This lack of uniqueness presents basic issues for us: even if we manage to traverse the fidelity landscape and find many locally-optimal product states with fidelity $0.1$, how can we conclude that we are done, and certify that there is no product state with fidelity $0.2$?

\paragraph{Maintaining a cover over good product states.}
Our crucial insight is that we can efficiently maintain a cover over all product states that have large fidelity.
This insight is enabled by the following observations:
\begin{enumerate}
    \item If a product state $\ket{\pi}$ has good fidelity with $\rho$, then its restriction to a subsystem $S$ has good fidelity with the partial trace of $\rho$ onto the subsystem: $\braket{\pi | \rho | \pi} \leq \braket{\pi_S | \rho_S | \pi_S}$.
    \item The number of product states with good fidelity with $\rho$ and which have pairwise small fidelity with each other is small.
\end{enumerate}
The first observation means that we do not have to optimize fidelity over the entire space of product states: just those which are extensions of good product states over a subsystem.
In short, we can build a set of good product states qubit by qubit.
The second observation means that, instead of maintaining \emph{all} good product states, of which there could be exponentially many, it suffices to maintain a small number of well-separated good product states.
In short, it suffices to maintain a cover. 

A priori, it is even unclear whether a small cover over such product states exists.
Our main technical contribution is to establish the existence of such a cover and demonstrate that it can be computed efficiently.
Our algorithm starts with a cover over good product states for $\rho_{[1]}$, the state on qubit 1, and iteratively expands the cover a single qubit at a time.
In particular, we show that given a cover for qubits $1, 2, \dots, m-1$, extending it to qubit $m$ can be reduced to polynomial optimization problems over the sphere with a dimension-independent number of $\ell_2$ and $\ell_\infty$ constraints.

For the remainder of the section, we outline our approach to efficiently maintain a cover.

\paragraph{Fidelity and tangent distance.}
We begin by introducing a pa\-ram\-e\-tri\-za\-tion of product states which is used throughout the paper.
For a $n$-dimensional vector of complex numbers, $\vec{z} \in \C^n$, we denote its corresponding product state by
\begin{equation*}
    \ket{\pi_{\vec{z}}} = \bigotimes_{i=1}^n \frac{\ket{0} + z_i\ket{1}}{\sqrt{1 + \abs{z_i}^2}}.
\end{equation*}
Looking ahead, we ultimately want to optimize over these $z_i$'s, so we want a notion of cover which behaves nicely with respect to this parametrization.
Fidelity is well-known to be an unwieldy notion of distance between quantum states and is typically hard to analyze.
So, instead of constructing a cover directly using fidelity, we introduce an alternate measure between product states:

\begin{definition}[Tangent distance\ifstocproceedings\else, \cref{def:tangent-distance}\fi] 
Given two product states $\ket{\pi_{\vec{z}}}$ and $\ket{\pi_{\vec{a}}}$, the tangent distance between them is defined as
\begin{equation*}
    \dtan(\ket{\pi_{\vec{z}}}, \ket{\pi_{\vec{a}}}) = \norm[\Big]{\frac{\Vec{z} - \Vec{a} }{1 + \Vec{z}^*\Vec{a}}}_2 = \parens[\Big]{\sum_{i=1}^n \abs[\Big]{\frac{z_i - a_i}{1 + z_i^* a_i}}^2}^{1/2}\,.
\end{equation*}
\end{definition}
We call it ``tangent distance'' because, for a single qubit, this measure corresponds to $\abs{\tan(\theta)}$, where $\theta$ is the angle between the two states on the Bloch sphere.
This notion of distance satisfies several desiderata, including being invariant under single-qubit unitaries and being equal to $\twonorm{\vec{z}}$ when $\vec{a} = \vec{0}$ \ifstocproceedings\else(see \cref{sec:tan-distance} for details)\fi. Importantly, tangent distance can be related to fidelity as follows\ifstocproceedings\else\ (see \cref{lem:dtan_fidelity_relationship} for a proof)\fi:
\begin{equation}\label{eq:dtan-intro}
    \log\left(\frac{1}{\abs{\braket{\pi_{\vec{z}} | \pi_{\vec{a}}}}^2}\right) \leq \dtan(\ket{\pi_{\vec{z}}}, \ket{\pi_{\vec{a}}})^2 \leq \frac{1}{\abs{\braket{\pi_{\vec{z}} | \pi_{\vec{a}}}}^2} - 1\,.
\end{equation}

Now, we can introduce our notion of cover under tangent distance\footnote{
    Though our algorithm naturally produces a cover with respect to tangent distance, one can use \eqref{eq:dtan-intro} to convert the guarantees to those involving fidelity.
}\ifstocproceedings\else\ (\cref{def:product-cover})\fi: a cover $\calC$ over product states is $(\eta, \eps)$-good for a state $\rho$ if the following hold
\begin{itemize}
    \item \textbf{Good fidelity:} For all product states $\ket{\pi} \in \calC$, $\braket{\pi | \rho | \pi} \geq \eta - \eps$;
    \item \textbf{Separation:} For all distinct $\ket{\pi} , \ket{\pi'}  \in \calC$, $\dtan\parens{ \ket{\pi} , \ket{\pi'} } \geq 2/\eta$;
    \item \textbf{Coverage:} For any product state $\ket{\phi}$ such that $\braket{\phi | \rho | \phi} \geq \eta$, there exists a $\ket{\pi} \in \calC$ such that $\dtan\parens{\ket{\phi} , \ket{\pi} } \leq 3/\eta$.
\end{itemize}
We design an algorithm which, given $\eta$ and $\eps < \eta/3$, outputs a $(\eta, \eps)$-good cover, where every product state $\ket{\pi_{\vec{z}}}$ in the cover is described by its parametrization $\vec{z}$.
In particular, this gives a product state with fidelity $\geq \eta - \eps$, assuming a product state with fidelity $\eta$ exists.
By performing binary search on $\eta$, one can use this to find a product state with fidelity $\geq \OPT - \eps$, as stated in \cref{thm:main-informal}.

\paragraph{Existence of small covers.} 
Our first step is to show that the size of an $(\eta, \eps)$-good cover is at most $6/\eta$\ifstocproceedings\else\ (see \cref{claim:cover_size})\fi.
Let $\mathcal{C} = \braces{\braket{\pi^{(i)}}}_i$ be an $(\eta, \eps)$-good cover.
For intuition, suppose the product states in the cover were not just well-separated but orthogonal.
Then each captures a different part of the ``mass'' of $\rho$.
That is,
\begin{align*}
    1 = \tr(\rho) \geq \sum_{i} \braket{\pi^{(i)} | \rho | \pi^{(i)}} \geq \abs{\mathcal{C}}(\eta - \eps) \geq \abs{\mathcal{C}} (2\eta/3),
\end{align*}
where the last two inequalities use the good fidelity property of the cover and that $\eps < \eta/3$, respectively.
In general, $\sum_{i} \braket{\pi^{(i)} | \rho | \pi^{(i)}}$ is equal to $\tr(MM^\dagger \rho)$ for $M$ the matrix whose columns are the states in the cover $\ket{\pi^{(i)}}$.
Then,
\begin{align*}
    \abs{\mathcal{C}} (2\eta/3) \leq \tr(MM^\dagger \rho) \leq \opnorm{MM^\dagger} = \opnorm{M^\dagger M} \leq 1 + \abs{\mathcal{C}} (\eta/2),
\end{align*}
giving the bound $\abs{\mathcal{C}} \leq 6/\eta$.
In the last step, we used the well-separated condition: the diagonal entries of $M^\dagger M$ are $1$, the off-diagonal ones have magnitude at most $\eta/2$ by \cref{eq:dtan-intro}, and by the Gershgorin circle theorem the operator norm of $MM^\dagger$ is bounded by the maximum sum of magnitudes of any of its rows.

We can further show how to construct an $(\eta, \eps)$-good cover algorithmically\ifstocproceedings\else\ (\cref{alg:outer_loop})\fi.
We do this by iteratively forming an $(\eta, \eps)$-good cover for $\rho_{[m]}$, the partial trace of $\rho$ onto qubits $1$ through $m$, for $m$ from $1$ to $n$.
We can construct a good cover for $\rho_{[m]}$ greedily: start with $\mathcal{C}_m$ empty, and look for a violation of the coverage property.
When we find one, add the corresponding $\ket{\phi}$ to $\mathcal{C}_m$, and repeat.
Because we know an $(\eta, \eps)$-good cover on $m-1$ qubits $\mathcal{C}_{m-1}$, we can restrict our search to just look over product states $\ket{\phi}$ whose first $m-1$ qubits are close in tangent distance to an element of $\mathcal{C}_{m-1}$.
This makes the problem of finding a violation tractable, since we only have to search in the neighborhood of some ``root'' product state.
In particular, we show that it suffices to solve the following optimization problem\ifstocproceedings\else\ (see \cref{claim:outer})\fi.
\begin{equation} \label{program:intro}
\begin{aligned}
    & \underset{\vec{z} \in \C^{m}}{\text{maximize}}
    & & \bra{\pi_{\vec{z}}} \rho \ket{\pi_{\vec{z}}} \\
    & \text{subject to}
    & & \dtan(\ket{\pi_{\vec{z}}}, \ket{\pi_{\vec{a}}}) \geq 2/\eta \text{ for all } \ket{\pi_{\vec{a}}} \in \calC_m, \\
    &&& \dtan(\ket{\pi_{\vec{z}}}, \ket{\pi_{\vec{0}}}) \leq 4/\eta.
\end{aligned}
\tag{Tangent Cover}
\end{equation}
The precise soundness and completeness guarantees needed are shown in \ifstocproceedings the full algorithm\else \cref{alg:outer_loop}\fi; the constraints allow for significant slack.
Note that the second constraint is equivalent to $\twonorm{\vec{z}} \leq 4/\eta$.


\paragraph{Constructing covers and polynomial optimization.}
Now, we consider the task of solving \eqref{program:intro}.
Solving this even in the simplest case is not straightforward.
An example to keep in mind is the following: suppose we are adding our first state to $\mathcal{C}_n$, which is currently empty.
So, there are no ``farness'' constraints, the first kind of constraint in the program.
Then, let $\rho = \proj{\psi}$, where $\ket{\psi}$ is a superposition over computational basis strings with Hamming weight $0$ and $d$:
\begin{align*}
    \ket{\psi} &= \sqrt{\gamma} \ket{0^n} + \sqrt{\frac{1-\gamma}{\binom{n}{d}}}\sum_{\substack{x \in \braces{0,1}^n \\ \abs{x} = d}} \ket{x}.
\end{align*}
We are imagining, say, $\gamma = 0.9 \eta$.
Then $\braket{0^n | \rho | 0^n} = \gamma$, so our root state has good fidelity, but not quite enough to be a violation as we desire.
(This can indeed happen; though $\ket{0^n}$ comes from the cover $\mathcal{C}_{n-1}$, so $\ket{0^{n-1}}$ has fidelity at least $\eta - \eps$, it is extended by one qubit, which can drop the fidelity to $\gamma$ or lower.)
However, for $\vec{z} = \frac{1}{\sqrt{n}} \vec{1}$,
\begin{align*}
    \braket{\pi_{\vec{z}} | \rho | \pi_{\vec{z}}}
    &= (1 + 1/n)^{-n} \parens[\Big]{\sqrt{\gamma} + \sqrt{\frac{\binom{n}{d}(1-\gamma)}{n^d}}}^2 \\
    &\underset{n \text{ large}}{\approx} \frac{1}{e} \parens[\Big]{\sqrt{\gamma} + \sqrt{(1-\gamma)/d!}}^2,
\end{align*}
which can be larger than $\eta$ even for $d = \Theta(\log(1/\eta) / \log\log(1/\eta))$.
Note that $\twonorm{\vec{z}} = 1 \leq 4/\eta$, so it is close enough to the root in \eqref{program:intro}, and our algorithm must be able to recognize this better solution of $\ket{\pi_{\vec{z}}}$.
This demands knowledge of $\rho$ in a (quite large) Hamming ball around the root product state.
Further, by changing the signs of the $\ket{x}$'s in $\ket{\psi}$, \eqref{program:intro} can encode dense $d$-CSP instances.
This suggests that the right approach is a reduction to polynomial optimization.

First, we consider solving \eqref{program:intro} when $\mathcal{C}_m$ is empty.
We can reduce this to low-degree polynomial optimization over the sphere.
We start by observing that it suffices to consider the projection of $\rho$ on basis states of low Hamming weight.
Concretely, let $\Pi_{\geq d}$ be the projection onto the subspace of Hamming weight at least $d = O(1/\eta + \log(1/\eps))$.
Then, we show that, provided $\norm{\vec{z}} \leq 4/\eta$ as in \eqref{program:intro},
\begin{equation*}
    \norm{ \Pi_{\geq d}  \ket{\pi_{\vec{z}}} } \leq \eps \,.
\end{equation*}
This is a Chernoff bound, since the squared norm is the probability that the $n$ Bernoulli random variables sums to at least $d$, where the probabilities come from the $\vec{z}$'s\ifstocproceedings\else\ (see \cref{lem:low_weight} with $\mu = 4/\eta$)\fi.
So, it suffices to perform state tomography for $\rho$ on the space of low Hamming weight vectors $\rho_{d} = \Pi_{< d} \rho \Pi_{<d}$, which is computationally efficient because this subspace has dimension $O(n^d)$\ifstocproceedings\else\ (\cref{lem:subspace-tomography})\fi.
We can use $\rho_d$ in place of $\rho$ in the objective because
\begin{equation*}
    \begin{split}
        \abs{\bra{\pi_{\vec{z}}} \rho \ket{\pi_{\vec{z}}} - \bra{\pi_{\vec{z}}} \rho_d \ket{\pi_{\vec{z}}}} 
    &= \abs[\Big]{\Tr\parens[\Big]{\rho(\proj{\pi_{\vec{z}}} - \Pi_{<d} \proj{\pi_{\vec{z}}} \Pi_{<d})}} \\
    &\leq \opnorm{\proj{\pi_{\vec{z}}} - \Pi_{<d} \proj{\pi_{\vec{z}}} \Pi_{<d}} \\
    &\leq 2\twonorm{\ket{\pi_{\vec{z}}} - \Pi_{<d} \ket{\pi_{\vec{z}}}} \\
    &= 2\twonorm{\Pi_{\geq d} \ket{\pi_{\vec{z}}}} \leq \eps\,.
    \end{split}
\end{equation*}
Further, once we have our estimate of $\rho_d$, the objective function is fully specified explicitly; the rest of the algorithm is classical.
Because $\rho_d$ is only supported on low Hamming weight, $\bra{\pi_{\vec{z}}} \rho_d \ket{\pi_{\vec{z}}}$ is a low-degree polynomial up to a normalization factor:
\ifstocproceedings
\begin{align}
\label{eqn:intro-expand-objective}
    \bra{\pi_{\vec{z}}} &\rho_d \ket{\pi_{\vec{z}}} 
    = \nonumber\\
    &\underbrace{ \frac{1}{ \prod_{ i \in [m] } (1+ \abs{z_i}^2) } }_{\eqref{eqn:intro-expand-objective}.(1) } \underbrace{ \sum_{ x, x' \in \braces{0,1}^m } \bra{x} \rho_d \ket{x'} \Paren{ \vec{z}^* }^x \Paren{ \vec{z} }^{x'}}_{\eqref{eqn:intro-expand-objective}.(2)} \,.
\end{align}
\else
\begin{equation}
\label{eqn:intro-expand-objective}
    \bra{\pi_{\vec{z}}} \rho_d \ket{\pi_{\vec{z}}} 
    = \underbrace{ \frac{1}{ \prod_{ i \in [m] } (1+ \abs{z_i}^2) } }_{\eqref{eqn:intro-expand-objective}.(1) } \underbrace{ \sum_{ x, x' \in \braces{0,1}^m } \bra{x} \rho_d \ket{x'} \Paren{ \vec{z}^* }^x \Paren{ \vec{z} }^{x'}}_{\eqref{eqn:intro-expand-objective}.(2)} \,.
\end{equation}
\fi
\eqref{eqn:intro-expand-objective}.(2) is a degree-$2d$ polynomial in the $z_i$'s and their complex conjugates.
Further, when the $\abs{z_i}$'s are small, we can approximate term \eqref{eqn:intro-expand-objective}.(1) by $\exp\Paren{- \norm{z}^2_2}$, which is a bounded scalar variable that we can hardcode into our constraints.
While the $\abs{z_i}$'s won't always be small, we can guess the ones that are large, fix their value and use the above approximation on the rest.
This is where we pick up an $\ell_\infty$ constraint on the $z_i$'s, since we must enforce that entries which we do not guess are small.
This reduces the algorithm to solving problems of the following form.
\begin{equation*}
    \max_{\substack{\twonorm{\vec{z}}=1 \\ \infnorm{\vec{z}} \leq \mu }} p(\vec{z}) = \max_{\substack{\twonorm{\vec{z}}=1 \\ \infnorm{\vec{z}} \leq \mu }} \sum_{ x, x' \in \braces{0,1}^m } \bra{x} \rho_d \ket{x'} \Paren{ \vec{z}^* }^x \Paren{ \vec{z} }^{x'}.
\end{equation*}
Optimizing low-degree polynomials over the sphere is known to be hard to approximate up to polynomial factors in the worst-case, even when the degree is $4$~\cite{barak2012hypercontractivity,bhattiprolu2017weak}.
However, in our case, $p(\vec{z})$ is not an arbitrary low-degree polynomial, but is quite `small': the $\ell_2$ norm of the coefficients $\braket{x | \rho_d | x'}$ is bounded, since it is $\fnorm{\rho_d} \leq \fnorm{\rho} \leq \tr(\rho) = 1$.
We will show that, in this case, obtaining additive error that scales with $\eps$ admits a polynomial time algorithm.
Additive-error approximations to max $k$-CSPs also admit a similar guarantee.

In the general case, we must also deal with the farness constraints in \eqref{program:intro}, $\dtan(\ket{\pi_{\vec{z}}}, \ket{\pi_{\vec{a}}}) \geq 2/\eta$ for a small number of $\ket{\pi_{\vec{a}}}$'s.
Recall that $\dtan\Paren{ \ket{\pi_{\vec{a}}} , \ket{\pi_{\vec{z}}} }^2 = \sum_{i \in [n]} \abs{ \frac{z_i - a_i }{ 1+ z_i^* a_i } }^2$ by definition.
We will not try to directly optimize over these constraints. 
We use a similar trick as before and show that when the $\abs{z_i}$'s are small, $\dtan\Paren{ \ket{\pi_{\vec{a}}} , \ket{\pi_{\vec{z}}} } \approx \twonorm{\vec{a} - \vec{z}}$\ifstocproceedings\else\ (see \cref{lem:dtan-l2})\fi.
These constraints essentially only enforce what $\vec{z}$ can be in the low-dimensional subspace spanned by the $\vec{a}$'s.
So, we can guess the choice of $\vec{z}$ on this subspace (in addition to the coordinates for which $\abs{z_i}$ is large), and for each guess, solve the problem with the guess hardcoded into the constraints.
Putting all the steps together, we show that we can reduce the problem of extending the cover to a polynomial optimization problem, where the $\ell_2$ norm of the coefficients is bounded by $1$, subject to $\ell_2$ and $\ell_\infty$ constraints. 


\paragraph{Optimizing low-degree polynomials over the sphere.} 
We now focus on the algorithmic problem of optimizing a low-degree polynomial over the sphere subject to $\ell_2$ and $\ell_\infty$ constraints.

Our results for polynomial optimization can be thought of as analogs of maximizing dense $k$-CSPs, only the domain is the sphere instead of the hypercube.
The underlying algorithms for max $k$-CSPs are either based on brute-force search over a dimension-independent number of variables followed by greedily completing the solution or global correlation rounding~\cite{mathieu2008yet,yaroslavtsev2014going, manurangsi2016birthday, alev2019approximating}. 
One may expect the correlation rounding algorithms for max $k$-CSPs to generalize straightforwardly to optimize low-degree polynomials over the sphere up to additive error.
However, the existing analysis~\cite{manurangsi2016birthday} would merely translate to outputting a product state with fidelity $\Omega(\OPT) - \eps$, as opposed to $\OPT - \eps$.
One can also try to extend the correlation rounding algorithm of Alev, Jeronimo and Tulsiani~\cite{alev2019approximating} to the sphere, but their algorithm obtains a doubly-exponential dependence on $k$.
In contrast, our approach is closer in spirit to the brute-force style algorithm for max $k$-CSPs, and allows for additional $\ell_2$ and $\ell_\infty$ constraints. 
Translating our algorithm back to optimizing dense polynomials over the hypercube, we can show that we obtain yet another algorithm that achieves additive error guarantees.

To get the key ideas across, we first consider the unconstrained polynomial optimization problem, reformulated as maximizing the injective norm of a tensor: 
\begin{equation*}
\max_{x \in \C^{m}, \norm{\vec{x}}_2 = 1}\langle T, \vec{x}^{\otimes k} \rangle 
\end{equation*}
for a tensor $T$ with $\norm{T}_F \leq 1$.  While it is hard to obtain a multiplicative approximation to tensor optimization problems, we show that we can obtain an additive  $\eps \cdot \norm{T}_F$ approximation in $n^{\poly(1/\eps)}$ time\ifstocproceedings\else\ (see \cref{thm:opt-main} for a formal statement)\fi. We begin by observing that there is a $\poly(1/\eps)$-dimensional subspace such that projecting $x$ onto this subspace suffices to obtain an $\eps \norm{T}_F$ approximation to the optimum objective value. To see this, we can unfold the tensor $T$ along the first mode to obtain a $m \times m^{k-1}$ matrix $M$. We can then compute the singular value decomposition of $M$ and let $\lambda_1 \geq \lambda_2 \geq \dots \geq \lambda_m\geq 0$ be the resulting singular values. There are at most $r = \ceil{1/\eps^2}$  singular values larger than $\eps \cdot \Norm{T}_F$. Let $\Pi^{(1)}_{>\eps}$ be the projection on the top-$r$ subspace of $M$. Then, 
\begin{equation*}
    \Abs{ \angles[\big]{ \vec{x} , M (\textsf{vec}(\vec{x}^{\otimes k-1}))} -  \angles[\big]{  \Pi^{(1)}_{>\eps } \vec{x} , M (\textsf{vec}(\vec{x}^{\otimes k-1})) } } \leq \eps \Norm{T}_F. 
\end{equation*}
Now, we can repeat the same argument for the remaining modes to obtain projectors $\Pi^{(2)}_{>\eps}, \dots , \Pi^{(k)}_{>\eps}$.
Setting $\Pi_{>\eps}$ to project onto the union of the spans of $\Pi^{(1)}, \dots , \Pi^{(k)}$, we conclude that
\[
    \Abs{ \angles[\big]{T , \vec{x}^{\otimes k }}  -  \angles[\big]{T , \Pi_{>\eps} \vec{x}^{\otimes k }} } \leq k \eps\,.
\]
In short, \emph{the polynomial can be approximated by projecting $\vec{x}$ onto a small-sized subspace.}
To solve this tensor optimization problem, it suffices to brute-force over a fine-enough net on this constant-dimensional subspace and pick the vector that obtains the maximal value.

In general, we need to deal with an optimization problem that involves a dimension-independent number of $\ell_2$ constraints and an $\ell_\infty$ constraint\ifstocproceedings\else\ (see \cref{def:opt-domain})\fi. We handle the $\ell_2$ constraints by simply projecting onto the union of the subspace $\Pi$ and the subspace corresponding to the span of the $\ell_2$ constraints. It remains to handle the $\ell_\infty$ constraint, which takes the form $\norm{\vec{x}}_{\infty} \leq \mu$ for some constant $\mu > 0$. Only $1/\mu^2$ of the coordinates can saturate the $\ell_{\infty}$ constraint.  Since $\mu > 0$ is a constant, we can brute-force over which coordinates saturate the $\ell_\infty$ constraint.  Ultimately, we can still reduce the constrained optimization problem to checking over a net in a constant-dimensional subspace. \ifstocproceedings\else We refer the reader to \cref{sec:opt} for more details\fi.

\paragraph{Hardness for agnostic product tomography.} 
Our lower bound starts from the hardness of computing (asymmetric) tensor spectral norm for $4$-tensors\ifstocproceedings\else\ (see Theorem~\ref{thm:tensor-pca-hardness})\fi.  In particular, for an $n \times n \times n \times n$ tensor $T$, computing the spectral norm to additive error $\norm{T}_F/\poly(n)$ is $\mathsf{NP}$-hard.
We attain our result by reducing tensor optimization to product state learning, essentially inverting the reduction discussed earlier.
The main idea is to set the unknown state $\rho = \proj{\psi}$ where 
\[
    \ket{\psi} = \frac{1}{\norm{T}_F} \sum_{i,j,k,l \in [m]} T_{ijkl} \ket{e_i} \ket{e_j} \ket{e_k} \ket{e_l} \,.
\]
Then, we can show that finding the $4m$-qubit product state 
\ifstocproceedings
\[
\ket{\pi_{\vec{x}}}\ket{\pi_{\vec{y}}}\ket{\pi_{\vec{u}}}\ket{\pi_{\vec{v}}}
\]
\else
$\ket{\pi_{\vec{x}}}\ket{\pi_{\vec{y}}}\ket{\pi_{\vec{u}}}\ket{\pi_{\vec{v}}}$
\fi
with optimal fidelity is essentially equivalent to maximizing the tensor form $\langle T, \vec{x} \otimes \vec{y} \otimes \vec{u} \otimes \vec{v} \rangle$.
The only additional difficulty is that this equivalence only holds if $T$ is sufficiently flat; our reduction thus requires an additional step where we embed our input $T$ in a larger space and randomly rotate it to make all its entries small without changing the optimal value.

\paragraph{Faster algorithms.}
In light of the lower bound, one can still ask what additional structure yields faster algorithms.
We consider three additional settings: the high-fidelity regime (high overlap with a product state), a bounded number of choices for each qubit, and matrix-product states.
In all of these settings, we follow the same overall strategy of sweeping over the qubits, but maintaining a cover becomes significantly easier:
\begin{enumerate}
    \item In the high-fidelity regime, the cover can be made to be only \emph{one} state;
    \item In the finite-choices setting, we can simply maintain \emph{all} good product states in the class, instead of a cover over them;
    \item In the MPS setting, we can make our cover one state, though one with a large bond dimension, in some sense capturing many good product states.
\end{enumerate}
For this overview, we focus on the high-fidelity setting.
Here, the optimal solution is unique and we do not require maintaining a net.
Instead, we only need to maintain a single candidate product state as we sweep across the qubits, performing local updates until convergence.

To illustrate how and why local optimization works, suppose for simplicity that $\rho = \ketbra{\psi}{\psi}$ is a pure state; the mixed state case is similar but involves some additional parameters.
Imagine that $\ket{0^n}$ is the current candidate product state (in some appropriately chosen basis), and consider what happens when we express $\ket{\psi}$ in the low-Hamming weight subspace:
\[
\ket{\psi} = \alpha\ket{0^n} + \delta\ket{v_1} + \beta\ket{v_{\ge 2}}\,.
\]
Above, we assume without loss of generality that $\ket{v_1}$ is a normalized state on the subspace of Hamming weight $1$, $\ket{v_{\ge 2}}$ is a normalized state on the subspace of Hamming weight at least $2$, and $\alpha$, $\beta$, $\delta$ are all nonnegative reals.
It is helpful to express $\ket{\psi}$ this way because $\delta$ quantifies local updates that we can make to improve fidelity.
By rotating qubit $i$ of our candidate product state away from $\ket{0}$, we can increase the product state fidelity from $\abs{\braket{0^n|\psi}}^2 = \alpha^2$ to $\alpha^2 + \delta^2\abs{\braket{e_i|v_1}}^2$, where $\ket{e_i}$ is the string with $1$ in position $i$ and $0$s elsewhere.
Our goal, then, will be to establish that $\alpha^2$ is close to $\OPT$ whenever $\delta$ is small, because it implies that local optimization converges efficiently: either $\delta$ is large, in which case we can increase the candidate fidelity by a substantial amount, or $\delta$ is small, in which case we are near the global optimum.

We prove this by bounding the contributions to product state fidelity from the Hamming weight $0$, $1$, and $\ge 2$ subspaces separately.
Consider an arbitrary product state $\ket{\pi}$ that, when measured in the computational basis, gives $\ket{0^n}$ with probability $1 - p$ and anything else with probability $p$.
We observe \ifstocproceedings\else(\Cref{cor:p0_vs_p2}) \fi that $\ket{\pi}$ places at most $O(p^2)$ probability on Hamming weight $2$ or larger, and use this to upper bound the overlap between $\ket{\pi}$ and $\ket{\psi}$\ifstocproceedings\else\ (this is a simplified version of \Cref{eq:alpha_minus_p_plus_delta})\fi:
\[
\abs{\braket{\pi|\psi}} \le \alpha(1 - \Omega(p)) + \delta\sqrt{p} + O(\beta p)\,.
\]
Working out the constants, we find that so long as $\alpha^2 \ge 2/3$, the $-\Omega(\alpha p)$ term dominates the $O(\beta p)$ term, leaving us with $\abs{\braket{\pi|\psi}} \le \alpha + \delta\sqrt{p} \le \alpha + \delta$.
So, \textit{every} product state satisfies $\abs{\braket{\pi|\psi}}^2 \le (\alpha + \delta)^2$, and therefore $\OPT \le (\alpha + \delta)^2$.

The above analysis straightforwardly gives rise to a polynomial-time but suboptimal algorithm for finding the closest product state.
We briefly summarize the additional tricks that are required to reduce the sample complexity to linear in $n$.

First, we observe that divide-and-conquer is more efficient than sweeping through one additional qubit at a time.
So, the basic structure of the learning algorithm \ifstocproceedings\else(\Cref{alg:high_fidelity_alg}) \fi is:
\begin{enumerate}
    \item Recursively run the algorithm on the left and right halves of $\rho$, obtaining product states $\ket{\pi_L}$ and $\ket{\pi_R}$ that have fidelity at least $5/6$ with the respective halves.
    \item Take $\ket{\pi} = \ket{\pi_L} \otimes \ket{\pi_R}$, which satisfies $\braket{\pi|\rho|\pi} \ge 2/3$.
    \item Run local optimization on $\ket{\pi}$ until convergence.
\end{enumerate}

Second, we improve the bound on $\abs{\braket{\pi|\psi}}$ when $\alpha^2$ is much larger than $2/3$.
Namely, we show the alternative bound
\[
\abs{\braket{\pi|\psi}} \le \alpha + O\mparen{\frac{\delta^2}{\alpha^2 - 2/3}}\,,
\]
which ultimately implies that local optimization needs fewer iterations to converge.

Third, we find that one can make larger improvements to the fidelity by updating all $n$ qubits simultaneously, rather than one at a time.
We take $\vec{z} \in \C^n$ to be the vector defined by $z_i = \braket{e_i|\rho|0^n}$, which captures the mass that $\rho$ places on Hamming weight $1$ that is coherent with $\ket{0^n}$.
Then we show that a step from $\ket{0^n}$ to $\ket{\pi_{\vec{z} / 10}}$ increases the fidelity with $\rho$ by $\Omega\mparen{\norm{\vec{z}}_2^2}$.
Note that in the pure state case $\rho = \ketbra{\psi}{\psi}$, this $\vec{z}$ is precisely $\alpha\delta\ket{v_1}$, and therefore the fidelity improvement is $\Omega(\delta^2)$.
For comparison, recall that the improvement from a single-qubit update was only $\delta^2\max_{i \in [n]} \abs{\braket{e_i|v_1}}^2$, which can be as small as $\delta^2 / n$.

Finally, we establish that it suffices to obtain a relative $\ell^2$-error estimate of $\vec{z}$ in order to perform these local updates.
In symbols, if we can produce an estimate $\vec{a}$ satisfying $\norm{\vec{a} - \vec{z}}_2 \le \norm{\vec{z}}_2 / 3$ (say), then we show that $\ket{\pi_{\vec{a} / 10}}$ also increases the fidelity by $\Omega\mparen{\norm{\vec{z}}_2^2}$.
This allows us to cut down some of the cost of the tomography subroutine by varying the error parameter throughout the algorithm, because we can afford to be sloppier when the step size is large.



\ifstocproceedings
\else
\subsection{Notation}
Throughout, $\log_b$ is the logarithm base $b$, and $\log$ is shorthand for the natural logarithm base $e \approx 2.718$.
The complex conjugate of $z \in \C$ is denoted $z^*$.
For a function $f: \R \to \R$, $f'(x)$ denotes its derivative at $x$.
An unspecified polynomially-bounded function of $n$ may be written as $\poly(n)$.

We use standard shorthand for tensor products of quantum basis states. So, for example, $\ket{0}\otimes\ket{1}$ can be written either as $\ket{0}\ket{1}$ or $\ket{01}$.

$[n]$ is defined as the set of integers $\{1,2,\ldots,n\}$. If $S \subseteq [n]$, $\overline{S}$ denotes its complement in $[n]$.
When $\rho$ is an $n$-qubit mixed state, we write $\rho_S \coloneqq \tr_{\overline{S}}(\rho)$ for the reduced state on the qubits indexed by $S$.

The Hamming weight of a binary string $x \in \{0,1\}^n$ is denoted by $|x|$. We write $\ket{e_i}$ for the Hamming weight-$1$ $n$-qubit computational basis state that has $\ket{1}$ in the $i$th position and $\ket{0}$ everywhere else, leaving $n$ implicit from context.

Vectors always have arrows over them (e.g.\ $\vec{v}$), unless they represent a (pure) quantum state, in which case we use ket notation (e.g.\ $\ket{\psi}$).
If $\vec{v} \in \C^n$, $\vec{v}^* \in \C^n$ is its entrywise complex conjugate.
For matrices $A \in \C^{m \times n}$, the conjugate transpose is $A^\dagger \in \C^{n \times m}$.
The Euclidean (or $\ell^2$) norm of a vector $\vec{v} \in \C^n$ is denoted by $\twonorm{\vec{v}} \coloneqq \sqrt{\sum_i \abs{v_i}^2}$, and the $\ell^\infty$ norm is denoted by $\infnorm{\vec{v}} \coloneqq \max_i \abs{v_i}$.
The norms that we use for matrices $A \in \C^{m \times n}$ are the operator norm $\norm{A}_{\op} \coloneqq \sup_{\vec{v} \neq 0} \frac{\norm{A\vec{v}}_2}{\norm{\vec{v}}_2}$, the trace norm $\norm{A}_1 \coloneqq \tr\mparen{\sqrt{A^\dagger A}}$, and the Frobenius norm $\norm{A}_F \coloneqq \sqrt{\tr\mparen{A^\dagger A}}$.
We also use $\norm{A}_F$ for the Frobenius norm of a tensor $A$, which is the square root of the sum of the squared magnitudes of the entries.
\fi




\section{Parametrization of product states}

\begin{definition}[Product state parametrization]
    \label{def:product_state_parametrization}
    A product state $\ket{\pi}$ can be described by $n$ parameters $\Vec{z} \in (\C \cup \{\infty\})^n$ in the extended complex plane.
    We use the notation
    \begin{align*}
        \ket{\pi_{\Vec{z}}} = \bigotimes_{i=1}^n \frac{\ket{0} + z_i\ket{1}}{\sqrt{1 + \abs{z_i}^2}}\,.
    \end{align*}
\end{definition}
Note that this parametrization normalizes global phase such that the $\ket{0^n}$ amplitude, or more generally, the non-zero amplitude with lowest Hamming weight, is real.
Though we define this parametrization for $\vec{z}$ with entries in the extended complex plane, we will generally work with $\vec{z} \in \C^n$ for simplicity: our algorithms incur error, so we will always be able to work with vectors with finite (but possibly very large) entries.
However, this limitation is not necessary, and the diligent reader can extend our results to hold even with parameters at infinity.

\subsection{Tangent distance}
\label{sec:tan-distance}

We now define a distance measure between product states which behaves nicely with respect to our parametrization, which we call the \emph{tangent distance}.  For the sake of building intuition, we first consider the case when $n = 1$:
\begin{definition}[Tangent distance, $n=1$] \label{def:tangent-n1}
    For $z, a \in \C \cup \{\infty\}$, we define the following distance between $1$-qubit states:
    \begin{align*}             
        \dtan(\ket{\pi_{z}},  \ket{\pi_{a}}) = \abs[\Big]{\frac{z_i - a_i}{1 + z_i^* a_i}}\,.
    \end{align*}
    We also have that $\dtan^2(\ket{\pi_{z}},  \ket{\pi_{a}}) = \tan^2(\theta/2)$ where $\theta \in [0,\pi]$ is the angle between the Bloch sphere points for $\ket{\pi_{z}}$ and $\ket{\pi_{a}}$.
\end{definition}
\begin{remark}
    Presumably the Bloch sphere interpretation is well known but we could not find a reference.  
    Here is a quick proof: The Bloch sphere point for $\ket{\pi_z}$ is the inverse stereographic projection of~$z$; namely $\vec{p}_z \coloneqq \frac{1}{1+|z|^2}(z+z^*, z-z^*, 1-|z|^2)$.  
    From this it follows that $\cos \theta = \vec{p}_z \cdot \vec{p}_a = 1 - \frac{2|z-a|^2}{(1+|z|^2)(1+|a|^2)}$.
    Now use $\tan^2(\theta/2) = \frac{1-\cos\theta}{1+\cos\theta}$.
\end{remark}

\begin{definition}[Tangent distance, general $n$]
\label{def:tangent-distance}
    For two product states $\ket{\pi_{\Vec{z}}}$ and $\ket{\pi_{\Vec{a}}}$, we define the distance measure $\dtan(\ket{\pi_{\vec{z}}}, \ket{\pi_{\vec{a}}})$ via
    \begin{align*}
        \dtan^2(\ket{\pi_{\vec{z}}}, \ket{\pi_{\vec{a}}}) = \sum_{i=1}^n \dtan(\ket{\pi_{z_i}}, \ket{\pi_{a_i}})^2 = \norm[\Big]{\frac{\Vec{z} - \Vec{a} }{1 + \Vec{z}^*\Vec{a}}}_2^2\,.
    \end{align*}
    Above, we abuse notation by using product and quotient of vectors to denote their entry-wise product and quotient.
\end{definition}
This distance measure has several nice properties: it satisfies $\dtan(\ket{\pi_{\vec{z}}}, \ket{\pi_{\vec{z}}}) = 0$; it is symmetric; and it is invariant under applying single-qubit unitaries.
This last condition is evident from the Bloch sphere interpretation.
However, it is not a metric: the triangle inequality does not hold, since the distance measure is infinity for orthogonal product states:
\begin{align*}
    \dtan(\ket{+}, \ket{0}) + \dtan(\ket{0}, \ket{-}) = 2 < \dtan(\ket{+}, \ket{-}) = \infty\,.
\end{align*}
Nevertheless, we can think of this as being approximately a metric when the denominator is ``close to constant'', i.e.\ when the two product states, and therefore their parameters, are close.

\begin{remark}[Generalizing to qudits] \label{rmk:qudit}
    We can take the following parametrization over qudits, which takes $(d-1)n$ parameters $\bz \in (\C \cup \{\infty\})^{(d-1)n}$:
    \begin{equation*}
        \ket{\pi_{\bz}} = \bigotimes_{i = 1}^{n} \frac{\ket{0} + \sum_{j = 1}^{d} z_{i, j} \ket{j}}{\sqrt{1 + \norm{\vec{z}_i}^2_2}}\,.
    \end{equation*}
    Here we imagine partitioning $\bz$ into $n$ many vectors $\vec{z_i}$ of size $d-1$.
    The qudit version of tangent distance becomes
    \begin{equation*}
        \dtan(\ket{\pi_{\bz}}, \ket{\pi_{\ba}}) = \parens[\Big]{\sum_{i=1}^n \frac{\norm{\vec{z}_i-\vec{a}_i}^2 + \twonorm{\vec{z}_i}^2\twonorm{\vec{a}_i}^2 - \abs{\angles{\vec{z}_i, \vec{a}_i}}^2}{\abs{1 + \angles{\vec{z}_i, \vec{a}_i}}^2}}^{1/2},
    \end{equation*}
    which we can see by rotating every qudit to reduce to the qubit definition.
    Notice that this reduces to the qubit definition when $d = 2$.
    Though not immediate, we anticipate no barriers in generalizing our results to qudits.
    Generalized versions of the lemmas to follow hold, and we can run the algorithms and analyses accordingly.
\end{remark}

We define tangent distance to be a version of fidelity which behaves more nicely with respect to the parametrization.
Tangent distance is related to fidelity in the following way.

\begin{lemma}[Relationship between tangent distance and fidelity]\label{lem:dtan_fidelity_relationship}
    For all product states $\ket{\pi_{\vec{z}}}$ and $\ket{\pi_{\vec{a}}}$, the following holds:
    \begin{equation*}
        \log\left(\frac{1}{\abs{\braket{\pi_{\vec{z}} | \pi_{\vec{a}}}}^2}\right) \leq \dtan(\ket{\pi_{\vec{z}}}, \ket{\pi_{\vec{a}}})^2 \leq \frac{1}{\abs{\braket{\pi_{\vec{z}} | \pi_{\vec{a}}}}^2} - 1\,.
    \end{equation*}
    Both inequalities are tight.
\end{lemma}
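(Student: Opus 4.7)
The strategy is to reduce both inequalities to elementary single-variable inequalities via a clean product identity. First I would compute the overlap explicitly: since $\ket{\pi_{\vec{z}}}$ and $\ket{\pi_{\vec{a}}}$ are product states in the given parametrization,
\[
\abs{\braket{\pi_{\vec{z}} | \pi_{\vec{a}}}}^2 = \prod_{i=1}^n \frac{\abs{1 + z_i^* a_i}^2}{(1+\abs{z_i}^2)(1+\abs{a_i}^2)}\,.
\]

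The key algebraic step is to observe that the ``numerator gap'' in each factor is exactly what appears in the definition of tangent distance. Expanding $(1+\abs{z_i}^2)(1+\abs{a_i}^2) - \abs{1+z_i^* a_i}^2 = \abs{z_i-a_i}^2$, dividing through by $\abs{1+z_i^* a_i}^2$, and writing $d_i^2 \coloneqq \abs{(z_i-a_i)/(1+z_i^* a_i)}^2$ for the $i$-th summand of $\dtan^2$, this yields the identity
\[
\frac{1}{\abs{\braket{\pi_{\vec{z}} | \pi_{\vec{a}}}}^2} = \prod_{i=1}^n (1 + d_i^2)\,, \qquad \dtan(\ket{\pi_{\vec{z}}}, \ket{\pi_{\vec{a}}})^2 = \sum_{i=1}^n d_i^2\,.
\]
Once this identity is in hand, both inequalities become elementary facts about nonnegative reals $d_i^2 \ge 0$: the lower bound follows from $\log\prod_i(1+d_i^2) = \sum_i \log(1+d_i^2) \le \sum_i d_i^2$, using $\log(1+x) \le x$ for $x \ge 0$; the upper bound follows by expanding the product and discarding higher-order nonnegative cross terms, which gives $\prod_i(1+d_i^2) \ge 1 + \sum_i d_i^2$.

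For tightness, I would note that the upper bound is achieved whenever at most one $d_i$ is nonzero (e.g., $n = 1$), in which case the product expansion has no higher-order terms to discard. For the lower bound, one gets equality in the limit $d_i \to 0$, which corresponds to product states whose overlap approaches $1$; this shows the constant on $\log(1/\abs{\braket{\cdot|\cdot}}^2)$ cannot be improved. I do not expect any real obstacle here — the content of the lemma is the identity relating $1/\abs{\braket{\pi_{\vec{z}}|\pi_{\vec{a}}}}^2$ to $\prod_i(1+d_i^2)$, after which the two bounds are one line each.
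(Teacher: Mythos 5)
Your proof is correct and takes essentially the same route as the paper: both reduce the statement to the identity $1/\abs{\braket{\pi_{\vec{z}}|\pi_{\vec{a}}}}^2 = \prod_i(1+d_i^2)$ and then apply $\log(1+x)\le x$ and $\prod_i(1+x_i)\ge 1+\sum_i x_i$. The paper reaches that identity slightly more indirectly, by first invoking unitary invariance of both quantities to reduce to $\vec{a}=\vec{0}$ (so $d_i=\abs{z_i}$) and then computing, whereas you establish the general identity directly via the algebraic observation $(1+\abs{z_i}^2)(1+\abs{a_i}^2)-\abs{1+z_i^*a_i}^2=\abs{z_i-a_i}^2$; the latter is a nice explicit computation, but the mathematical content is identical.
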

\begin{proof}
    First, because both fidelity and $\dtan$ are invariant under single-qubit unitaries, 
    it suffices to show the inequalities for $\ket{\pi_{\vec{a}}} = \ket{0^n}$.
    Then, $\abs{\braket{\pi_{\vec{z}} | 0^n}}^2 = \prod_{i=1}^n \frac{1}{1+\abs{z_i}^2}$ and $\dtan(\ket{\pi_{\vec{z}}}, \ket{0^n})^2 = \norm{\vec{z}}_2^2$.
    Consequently,
    \begin{align*}
        \log\left(\frac{1}{\abs{\braket{\pi_{\vec{z}} | 0^n}}^2}\right)
        = \sum_{i=1}^n \log(1 + \abs{z_i}^2)
        &\leq \sum_{i=1}^n \abs{z_i}^2
        = \dtan(\ket{\pi_{\vec{z}}}, \ket{0^n})^2
        \\
        &\leq \prod_{i=1}^n (1 + \abs{z_i}^2) - 1
        = \frac{1}{\abs{\braket{\pi_{\vec{z}} | 0^n}}^2} - 1\,.
    \end{align*}
    The first inequality is tight for $\vec{z} = 0^n$ and the right inequality is tight for $\vec{z} = (\infty, 0^{n-1})$.
\end{proof}

The approximation in \cref{lem:dtan_fidelity_relationship} can be made tighter when $\vec{z}$ and $\vec{a}$ are closer together.
For simplicity, we state the following lemma when $\vec{a} = 0^n$.

\begin{lemma}[Approximation of $\dtan$ at small distances]
    \label{lem:dtan-fidelity-approx}
    For a vector $\vec{z} \in \C^n$,
    \begin{align*}
        e^{-\dtan(\ket{\pi_{\vec{z}}}, \ket{0^n})^2}
        \leq  \prod_{i=1}^n \frac{1}{1 + \abs{z_i}^2}
        \leq e^{-\dtan(\ket{\pi_{\vec{z}}}, \ket{0^n})^2 + \sum_{i=1}^n \abs{z_i}^4}\,.
    \end{align*}
    Note that $\prod_{i=1}^n \frac{1}{1 + \abs{z_i}^2} = \abs{\braket{\pi_{\vec{z}}|0^n}}^2$.
\end{lemma}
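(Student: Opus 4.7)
The plan is to reduce the lemma to a single-variable calculus inequality by first simplifying both sides. Since single-qubit unitary invariance of $\dtan$ was already exploited in the proof of \cref{lem:dtan_fidelity_relationship} to reduce to the case $\vec{a} = 0^n$, we already have the useful identities $\dtan(\ket{\pi_{\vec{z}}}, \ket{0^n})^2 = \sum_{i=1}^n \abs{z_i}^2$ and $\abs{\braket{\pi_{\vec{z}} | 0^n}}^2 = \prod_{i=1}^n \frac{1}{1 + \abs{z_i}^2}$. Substituting these in and taking logarithms, the chain of inequalities to prove becomes
\[
\sum_{i=1}^n \Paren{ \abs{z_i}^2 - \abs{z_i}^4 } \;\leq\; \sum_{i=1}^n \log(1 + \abs{z_i}^2) \;\leq\; \sum_{i=1}^n \abs{z_i}^2\,,
\]
so it suffices to prove the term-by-term inequality $t - t^2 \le \log(1+t) \le t$ for all $t \ge 0$, applied with $t = \abs{z_i}^2$.

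The upper bound $\log(1+t) \le t$ is the standard inequality, provable by noting that $g(t) = t - \log(1+t)$ has $g(0) = 0$ and $g'(t) = 1 - 1/(1+t) \ge 0$ for $t \ge 0$. For the lower bound, I would let $h(t) = \log(1+t) - (t - t^2)$ and compute $h(0) = 0$ and
\[
h'(t) = \frac{1}{1+t} - 1 + 2t = \frac{1 - (1-2t)(1+t)}{1+t} = \frac{t + 2t^2}{1+t} \;\ge\; 0 \quad \text{for } t \ge 0\,,
\]
so $h(t) \ge 0$ on $[0, \infty)$. This yields $\log(1+t) \ge t - t^2$ for all $t \ge 0$, as required.

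Summing the per-coordinate bounds over $i \in [n]$ and exponentiating gives both inequalities in the lemma statement. There is no real obstacle here: the only thing to be careful about is that the bound $t - t^2 \le \log(1+t)$ must hold \emph{globally} on $[0, \infty)$ rather than just near $0$ (since the lemma places no restriction on the magnitude of $\vec{z}$), but the calculus argument above confirms this, and for large $t$ the left-hand side is trivially very negative while $\log(1+t) > 0$.
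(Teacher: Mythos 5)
Your proof is correct and takes essentially the same approach as the paper: both reduce to a per-coordinate scalar inequality, which you state as $t - t^2 \le \log(1+t) \le t$ and the paper states in the equivalent exponentiated form $e^{-x} \le \frac{1}{1+x} \le 1 - x + x^2 \le e^{-x+x^2}$. The only cosmetic difference is that you verify the lower bound by a derivative/monotonicity argument, whereas the paper verifies it by the algebraic identity behind $\frac{1}{1+x} \le 1 - x + x^2$ together with $1+s \le e^s$.
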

\begin{proof}
The first inequality is equivalent to the first inequality in \cref{lem:dtan_fidelity_relationship}.
For the second inequality, we use that, for $x \geq 0$,
\begin{align*}
    e^{-x} \leq \frac{1}{1 + x} \leq 1 - x + x^2 \leq e^{-x + x^2}\,.
\end{align*}
So, this gives us that 
\begin{equation*}
    \prod_{i=1}^n \frac{1}{1 + \abs{z_i}^2}
    \leq \prod_{i=1}^n e^{-\abs{z_i}^2 + \abs{z_i}^4}
    = e^{-\dtan(\ket{\pi_{\vec{z}}}, \ket{0^n})^2 + \sum_{i=1}^n \abs{z_i}^4}.
    \qedhere
\end{equation*}
\end{proof}

As a corollary, we can also relate tangent distance to trace distance.

\begin{corollary} \label{cor:dtan_trace_relationship}
    For $\vec{z}, \vec{a} \in \C^n$,
    $\frac12\norm{\proj{\pi_{\vec{z}}} - \proj{\pi_{\vec{a}}}}_1
    = \opnorm{\proj{\pi_{\vec{z}}} - \proj{\pi_{\vec{a}}}}
        \leq \dtan(\ket{\pi_{\vec{z}}}, \ket{\pi_{\vec{a}}})$.
\end{corollary}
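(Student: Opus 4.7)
The plan is to reduce the statement to a single scalar inequality between the fidelity $F \coloneqq \abs{\braket{\pi_{\vec{z}}|\pi_{\vec{a}}}}^2$ and $\dtan(\ket{\pi_{\vec{z}}}, \ket{\pi_{\vec{a}}})^2$, using the already-established relationship in \cref{lem:dtan_fidelity_relationship}.

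First I would invoke the standard fact that for any two pure states $\ket{\psi}, \ket{\phi}$,
\[
\tfrac12 \norm{\proj{\psi} - \proj{\phi}}_1 = \opnorm{\proj{\psi} - \proj{\phi}} = \sqrt{1 - \abs{\braket{\psi|\phi}}^2}.
\]
This follows by noting that $\proj{\psi} - \proj{\phi}$ is Hermitian of rank at most $2$, and diagonalizing it in the $2$-dimensional subspace spanned by $\ket{\psi}$ and $\ket{\phi}$ yields eigenvalues $\pm \sqrt{1 - \abs{\braket{\psi|\phi}}^2}$. This immediately gives both the equality between the operator and trace norms and their common value in terms of the fidelity. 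Applied to $\ket{\pi_{\vec{z}}}$ and $\ket{\pi_{\vec{a}}}$, it reduces the claim to
\[
\sqrt{1 - F} \leq \dtan(\ket{\pi_{\vec{z}}}, \ket{\pi_{\vec{a}}}).
\]

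Next I would apply the left-hand inequality of \cref{lem:dtan_fidelity_relationship}, namely $\log(1/F) \leq \dtan(\ket{\pi_{\vec{z}}}, \ket{\pi_{\vec{a}}})^2$. Combined with the elementary inequality $-\log x \geq 1 - x$ valid for all $x \in (0,1]$ (applied with $x = F$), this yields
\[
1 - F \leq \log(1/F) \leq \dtan(\ket{\pi_{\vec{z}}}, \ket{\pi_{\vec{a}}})^2,
\]
and taking square roots completes the proof.

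There is no real obstacle here: the only mild subtlety is ensuring that $F > 0$ so that $\log(1/F)$ is finite, but when $F = 0$ the product states are orthogonal and $\dtan = \infty$ by definition, making the bound trivial. The whole argument is a two-line sandwich once the pure-state formula for trace distance is in hand.
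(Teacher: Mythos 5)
Your proof is correct and takes essentially the same route as the paper: both reduce to showing $\sqrt{1-F} \leq \dtan$ via the left-hand inequality of \cref{lem:dtan_fidelity_relationship} together with $1+x \leq e^x$; you insert $\log(1/F)$ as the intermediate quantity while the paper inserts $1 - e^{-\dtan^2}$, which is the same algebraic step read in the opposite direction. The only cosmetic difference is that you derive the exact pure-state formula $\tfrac12\norm{\cdot}_1 = \opnorm{\cdot} = \sqrt{1-F}$ directly by diagonalizing the rank-two difference, whereas the paper establishes the first equality via the traceless-rank-two observation and then cites Fuchs--van de Graaf for the bound by $\sqrt{1-F}$.
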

\begin{proof}
The equality holds because $\proj{\pi_{\vec{z}}} - \proj{\pi_{\vec{a}}}$ is traceless and rank two.
The inequality holds because
\begin{align*}
    \frac12\norm{\proj{\pi_{\vec{z}}} - \proj{\pi_{\vec{a}}}}_1
    &\leq \sqrt{1 - \abs{\braket{\pi_{\vec{z}}|\pi_{\vec{a}}}}^2} \\
    &\leq \sqrt{1 - e^{-\dtan(\ket{\pi_{\vec{z}}}, \ket{\pi_{\vec{a}}})^2}} \\
    &\leq \dtan(\ket{\pi_{\vec{z}}}, \ket{\pi_{\vec{a}}})\,,
\end{align*}
where the first inequality is Fuchs--van de Graaf, the second is \cref{lem:dtan_fidelity_relationship}, and the third is $1 + x \leq e^x$.
\end{proof}

We can further simplify $\dtan$ to a simple $\ell^2$ norm when $\vec{z}$ and $\vec{a}$ are close entry-wise.
\begin{lemma} \label{lem:dtan-l2}
    Let $\vec{z}, \vec{a} \in \C^n$.
    Then
    \begin{align*}
        \abs[\Big]{\dtan(\ket{\pi_{\vec{z}}}, \ket{\pi_{\vec{a}}})
        - \twonorm{\vec{z} - \vec{a}}}
        \leq \dtan(\ket{\pi_{\vec{z}}}, \ket{\pi_{\vec{a}}}) (\max_i \abs{z_i}\abs{a_i})\,.
    \end{align*}
\end{lemma}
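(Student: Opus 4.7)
The plan is to apply the (forward and reverse) triangle inequality to each denominator $1 + z_i^* a_i$ appearing in the definition of tangent distance, and then compare the resulting sum to $\twonorm{\vec{z}-\vec{a}}^2$ term by term. Setting $\mu = \max_i \abs{z_i}\abs{a_i}$, the triangle inequality gives $1 - \mu \leq \abs{1 + z_i^* a_i} \leq 1 + \mu$ for every $i$, where the lower bound is only informative when $\mu < 1$.

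Plugging these bounds into $\dtan(\ket{\pi_{\vec{z}}}, \ket{\pi_{\vec{a}}})^2 = \sum_i \abs{z_i - a_i}^2 / \abs{1 + z_i^* a_i}^2$ and summing, I would obtain the two-sided sandwich
\[
\frac{\twonorm{\vec{z}-\vec{a}}^2}{(1+\mu)^2} \,\leq\, \dtan(\ket{\pi_{\vec{z}}}, \ket{\pi_{\vec{a}}})^2 \,\leq\, \frac{\twonorm{\vec{z}-\vec{a}}^2}{(1-\mu)^2}.
\]
Taking square roots and rearranging yields $(1-\mu)\dtan(\ket{\pi_{\vec{z}}}, \ket{\pi_{\vec{a}}}) \leq \twonorm{\vec{z}-\vec{a}} \leq (1+\mu)\dtan(\ket{\pi_{\vec{z}}}, \ket{\pi_{\vec{a}}})$, which is exactly the desired $\abs{\dtan(\ket{\pi_{\vec{z}}}, \ket{\pi_{\vec{a}}}) - \twonorm{\vec{z}-\vec{a}}} \leq \mu \cdot \dtan(\ket{\pi_{\vec{z}}}, \ket{\pi_{\vec{a}}})$.

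The only wrinkle is the regime $\mu \geq 1$, in which the lower bound $\abs{1+z_i^* a_i} \geq 1 - \mu$ is vacuous. There, the upper bound $\twonorm{\vec{z}-\vec{a}} \leq (1+\mu)\dtan$ still follows from the easy direction of the triangle inequality, while the other inequality $\dtan - \twonorm{\vec{z}-\vec{a}} \leq \dtan \leq \mu \dtan$ is trivial from nonnegativity of $\twonorm{\vec{z}-\vec{a}}$. I do not anticipate any real obstacle; the lemma reduces to a direct coordinatewise calculation with the triangle inequality as the only ingredient.
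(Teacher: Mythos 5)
Your proposal is correct, but it takes a genuinely different route than the paper. The paper observes the exact coordinate-wise identity $\frac{z_i - a_i}{1 + z_i^* a_i} - (z_i - a_i) = -\frac{z_i - a_i}{1 + z_i^* a_i}\,z_i^* a_i$, which makes the error vector $\frac{\vec{z}-\vec{a}}{1+\vec{z}^*\vec{a}} - (\vec{z}-\vec{a})$ an entrywise rescaling of $\frac{\vec{z}-\vec{a}}{1+\vec{z}^*\vec{a}}$ by the factors $z_i^* a_i$; one application of the reverse triangle inequality for the $\ell^2$ norm then gives $\abs{\dtan - \twonorm{\vec{z}-\vec{a}}} \le \twonorm{\frac{\vec{z}-\vec{a}}{1+\vec{z}^*\vec{a}} - (\vec{z}-\vec{a})} \le \mu \, \dtan$ directly, with no case analysis. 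Your argument instead bounds each denominator via $1-\mu \le \abs{1 + z_i^* a_i} \le 1+\mu$ and deduces a two-sided multiplicative sandwich between $\dtan^2$ and $\twonorm{\vec{z}-\vec{a}}^2$ before taking square roots. Both yield the same final constant. The paper's version is tighter as an intermediate step (it keeps each $\abs{z_i}\abs{a_i}$ factor until the last line, rather than replacing all of them by $\mu$ up front) and sidesteps the $\mu \ge 1$ case you need to dispatch separately; your version has the minor advantage of exhibiting the clean two-sided comparison $(1-\mu)\dtan \le \twonorm{\vec{z}-\vec{a}} \le (1+\mu)\dtan$ explicitly, which could be of independent use. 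Your handling of the $\mu \ge 1$ case is correct as stated.
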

\begin{proof}
Consider a single qubit $i$.
Then,
\begin{align} \label{eq:dtan-l2}
    \abs[\Big]{\frac{z_i - a_i}{1 + z_i^*a_i} - (z_i - a_i)}
    = \abs[\Big]{\frac{z_i - a_i}{1 + z_i^*a_i}}\abs{1 - (1 + z_i^*a_i)}
    = \abs[\Big]{\frac{z_i - a_i}{1 + z_i^*a_i}}\abs{z_i}\abs{a_i}\,.
\end{align}
So, we can conclude
\begin{equation*}
    \abs[\Big]{\dtan(\ket{\pi_{\vec{z}}}, \ket{\pi_{\vec{a}}}) - \twonorm{\vec{z} - \vec{a}}}
    \leq \twonorm[\Big]{\frac{\vec{z} - \vec{a}}{1 + \vec{z}^*\vec{a}} - (\vec{z} - \vec{a})}
    \leq \dtan(\ket{\pi_{\vec{z}}}, \ket{\pi_{\vec{a}}}) (\max_i \abs{z_i}\abs{a_i})\,. \qedhere
\end{equation*}
\end{proof}

\subsection{Approximation lemmas}

\begin{lemma}[Low-weight truncation of product states]
\label{lem:low_weight}
    For a product state $\ket{\pi_{\vec{z}}}$, let $\mu = \sum_{i=1}^n \frac{\abs{z_i}^2}{1 + \abs{z_i}^2}$.
    Then for $c \leq \mu$ and $d \geq \mu$,
    \begin{align*}
        \twonorm{\Pi_{\geq d} \ket{\pi_{\vec{z}}}}^2 &\leq e^{-d \log(d/\mu) + (d - \mu)}\,, \\
        \twonorm{\Pi_{\leq c} \ket{\pi_{\vec{z}}}}^2 &\leq e^{-(2\mu - c) \log(2 - c/\mu) + (\mu - c)}\,,
    \end{align*}
    where $\Pi_{\geq d}$ is the projection onto computational basis states $\ket{b}$ such that $\abs{b} \geq d$, and similarly for $\Pi_{\leq c}$.
\end{lemma}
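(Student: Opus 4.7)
The plan is to recognize both quantities as tail probabilities for a Poisson binomial distribution and then apply the standard Chernoff moment-generating-function argument. Expanding $\ket{\pi_{\vec{z}}}$ in the computational basis yields
\[
    \abs{\langle b | \pi_{\vec{z}}\rangle}^2 = \prod_{i=1}^n p_i^{b_i}(1 - p_i)^{1 - b_i}, \qquad p_i \coloneqq \frac{\abs{z_i}^2}{1 + \abs{z_i}^2},
\]
so if $Y_1, \dots, Y_n$ are independent Bernoullis with $\Pr[Y_i = 1] = p_i$ and $X = \sum_i Y_i$, then $\twonorm{\Pi_{\geq d}\ket{\pi_{\vec{z}}}}^2 = \Pr[X \geq d]$ and $\twonorm{\Pi_{\leq c}\ket{\pi_{\vec{z}}}}^2 = \Pr[X \leq c]$, with $\E[X] = \mu$.

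For the upper tail, I would use the probability generating function bound $\E[t^X] = \prod_i(1 + (t-1)p_i) \leq e^{(t-1)\mu}$, which follows from $1 + x \leq e^x$ applied factorwise. For $t \geq 1$, Markov's inequality gives $\Pr[X \geq d] \leq t^{-d}\E[t^X] \leq e^{(t-1)\mu - d\log t}$, and choosing $t = d/\mu \geq 1$ (permitted since $d \geq \mu$) yields exactly the stated bound $e^{-d\log(d/\mu) + (d - \mu)}$.

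For the lower tail, the analogous argument with $t \leq 1$ gives $\Pr[X \leq c] \leq e^{(t-1)\mu - c\log t}$, which is minimized at $t = c/\mu \leq 1$ to produce the standard Chernoff lower-tail bound $e^{-c\log(c/\mu) + (c - \mu)}$. The form $e^{-(2\mu - c)\log(2 - c/\mu) + (\mu - c)}$ claimed in the lemma is actually a weaker version of this standard bound, and I would derive it by verifying the one-variable inequality
\[
(2-s)\log(2-s) - s\log s \leq 2(1 - s) \qquad \text{for } s = c/\mu \in [0, 1].
\]
Both sides agree at $s = 1$, and the derivative of (RHS $-$ LHS) equals $\log((2-s)s) \leq 0$ on $(0, 1]$ since $(2-s)s$ is maximized at $s = 1$ with value $1$; hence (RHS $-$ LHS) is nonincreasing and nonnegative throughout $[0, 1]$, which translates back to the lemma's bound dominating the standard Chernoff bound. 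Alternatively, one can obtain the lemma's form directly by substituting a suboptimal value of $t$ into the MGF bound above.

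The main potential obstacle is minor: this is essentially a textbook Chernoff calculation once the Poisson binomial structure is identified, and the only subtlety is that the lower-tail bound stated in the lemma is not the tightest Chernoff estimate but rather a slightly weaker algebraic variant, which is why the short one-variable comparison above is needed.
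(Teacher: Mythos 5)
Your proof is correct, and it takes a slightly different route from the paper that is worth noting. The paper, after the same reduction to a Poisson binomial tail, applies Bennett's inequality (citing [Boucheron--Lugosi--Massart, Theorem 2.9]) to \emph{both} tails; substituting $d = \mu + t$ and $c = \mu - t$ then produces the two stated bounds directly, with no extra comparison step. You instead run the exponential-moment (MGF) Chernoff argument from scratch, which for the upper tail reproduces Bennett exactly, but for the lower tail yields the genuinely sharper relative-entropy bound $e^{-c\log(c/\mu) + (c-\mu)}$; you then need the one-variable inequality
\[
(2-s)\log(2-s) - s\log s \leq 2(1-s), \qquad s = c/\mu \in [0,1],
\]
to recover the lemma's weaker form. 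Your verification of that inequality is correct: the difference vanishes at $s=1$ and its derivative $\log\bigl((2-s)s\bigr)$ is $\leq 0$ on $(0,1]$. So your approach is more self-contained (no citation of Bennett) and in fact proves a stronger lower-tail estimate along the way, at the cost of the extra calculus lemma; the paper's route is shorter because the Bennett form matches the stated bound directly after the substitution $t = \mu - c$.
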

\begin{proof}
Let $p_i = \frac{\abs{z_i}^2}{1 + \abs{z_i}^2}$ be the probability that qubit $i$ of $\ket{\pi_{\vec{z}}}$ outputs $\ket{1}$ when measured in the computational basis.
Consider $n$ independent Bernoulli random variables $Z_1, \dots, Z_n$, where $\Pr[Z_i = 1] = p_i$.
Then, because the qubits of $\pi_{\vec{z}}$ are uncorrelated,
\begin{align*}
    \twonorm{\Pi_{\geq k} \ket{\pi_{\vec{z}}}}^2 = \Pr[Z_1 + \dots + Z_n \geq k]\,.
\end{align*}
Let $\mu = \E[Z_1 + \dots + Z_n] = p_1 + \dots + p_n$.
By Bennett's inequality~\cite[Theorem 2.9]{blm13}, for any $t \geq 0$,
\begin{align*}
    \Pr\bracks[\Big]{\sum_{i=1}^n (Z_i - p_i) \geq t}
    \leq e^{-\mu ((1 + t/\mu) \log(1 + t/\mu) - t/\mu)}
    = e^{-(t + \mu) \log(1 + t/\mu) + t}\,.
\end{align*}
The same bound holds on the other tail:
\begin{align*}
    \Pr\bracks[\Big]{\sum_{i=1}^n (Z_i - p_i) \leq -t}
    \leq e^{-(t + \mu) \log(1 + t/\mu) + t}\,.
\end{align*}
The statement follows upon taking $d = \mu + t$ in the first inequality and $c = \mu - t$ in the second.
\end{proof}

\section{High-fidelity product state learning}

In this section, we give a simple polynomial-time algorithm for product state agnostic learning in the high signal-to-noise regime, where the fidelity of the unknown state $\rho$ with the closest product state $\ket{\pi}$ is sufficiently close to $1$. The algorithm operates by local optimization, where a candidate product state approximation is updated on the Hamming weight-$1$ subspace until convergence. We argue this algorithm's correctness by showing that once all of the local updates are sufficiently small, the algorithm must be close to a global optimum. 
The analysis involves bounding the optimal product state fidelity $\max_{\text{product }\ket{\pi}} \braket{\pi|\rho|\pi}$ in terms of the projection of $\rho$ onto the subspace of Hamming weight $0$ or $1$.

\subsection{Properties of product distributions}

We begin by showing some simple concentration bounds on the Hamming weight of a product distribution.

\begin{lemma}
\label{lem:p0_vs_p1}
    Let $\pi$ be a product distribution over $\{0,1\}^n$, and let $p_0 = \Pr_{x \sim \pi}[x = 0^n]$. Then
    \[
    \Pr_{x \sim \pi}[|x| = 1] \ge -p_0 \log p_0\,,
    \]
    with the understanding that $0 \log 0 = 0$.
\end{lemma}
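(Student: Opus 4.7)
The plan is to parametrize $\pi$ by its marginals $p_i = \Pr_{x \sim \pi}[x_i = 1]$, reduce the inequality to a pointwise claim on each coordinate, and then apply the elementary inequality $\log u \le u - 1$.

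First, I would assume that every $p_i < 1$; if some $p_i = 1$, then $p_0 = 0$, the right-hand side $-p_0 \log p_0$ equals $0$ under the stated convention, and the inequality is trivial. Under this assumption,
\[
p_0 = \prod_{i=1}^n (1 - p_i), \qquad \Pr_{x \sim \pi}[|x| = 1] = \sum_{i=1}^n p_i \prod_{j \ne i}(1 - p_j) = p_0 \sum_{i=1}^n \frac{p_i}{1 - p_i}.
\]
On the other side,
\[
-p_0 \log p_0 = -p_0 \sum_{i=1}^n \log(1 - p_i) = p_0 \sum_{i=1}^n \log\!\Paren{\tfrac{1}{1 - p_i}}.
\]
So after canceling the common factor $p_0 \ge 0$, the desired inequality reduces to the coordinate-wise claim
\[
\frac{p_i}{1 - p_i} \;\ge\; \log\!\Paren{\tfrac{1}{1 - p_i}} \qquad \text{for each } i.
\]

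Next, I would verify this coordinate-wise inequality by substituting $u = 1/(1 - p_i) \ge 1$, which turns it into $u - 1 \ge \log u$. This is the standard inequality $\log u \le u - 1$ for $u > 0$, which holds with equality at $u = 1$ and can be seen by noting that $f(u) = u - 1 - \log u$ satisfies $f(1) = 0$ and $f'(u) = 1 - 1/u$ is nonnegative for $u \ge 1$. Summing the coordinate-wise inequality over $i$ and multiplying back by $p_0$ yields the result.

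There is no real obstacle here; the only mild subtlety is the boundary case $p_i = 1$ (equivalently $p_0 = 0$), which is handled separately so that we do not divide by zero in the reduction step. Everything else is a direct computation using independence of the coordinates.
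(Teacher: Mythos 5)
Your proof is correct and is essentially identical to the paper's, differing only in notation: you parametrize by $p_i = \Pr[x_i = 1]$ where the paper uses $a_i = \Pr[x_i = 0] = 1 - p_i$, and both arguments reduce via independence to the same coordinate-wise inequality $\log u \le u - 1$. The boundary case (some marginal forcing $p_0 = 0$) is handled identically.
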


\begin{proof}
    Let $a_i = \Pr_{x \sim \pi}[x_i = 0]$. If any $a_i$ are equal to $0$, we have $p_0 = 0$ and the lemma is trivial. Otherwise, assuming all $a_i$ are strictly positive,
    \begin{align*}
        \Pr_{x \sim \pi}[|x| = 1]
        &= \sum_{i=1}^n (1 - a_i) \prod_{j \neq i} a_j\\
        &= \sum_{i=1}^n \frac{1 - a_i}{a_i} \prod_{j=1}^n a_j\\
        &= p_0 \sum_{i=1}^n \frac{1 - a_i}{a_i}\\
        &\ge p_0 \sum_{i=1}^n \log(1/a_i) && (x - 1 \ge \log x)\\
        &= -p_0 \log\left(\prod_{i=1}^n a_i \right)\\
        &= -p_0 \log p_0\,. \tag*{\qedhere}
    \end{align*}
\end{proof}



Next, we prove a useful upper bound on the quantity $p_0 \log p_0$ appearing in the previous lemma.

\begin{lemma}
\label{lem:weird_ln_sqrt}
    For all $p \in [0, 1]$,
    \[
        1 - p + p\log p \le 2(1 - \sqrt{p})^2\,,
    \]
    with the understanding that $0 \log 0 = 0$.
\end{lemma}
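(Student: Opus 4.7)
My plan is to reduce the inequality to a one-variable calculus exercise by defining the difference
\[
g(p) \coloneqq 2(1-\sqrt{p})^2 - \bigl(1 - p + p \log p\bigr) = 1 - 4\sqrt{p} + 3p - p\log p
\]
and showing $g(p) \ge 0$ on $[0,1]$, with the convention $0\log 0 = 0$ taken care of by continuity (both $\sqrt{p}$ and $p\log p$ extend continuously to $p=0$, where $g(0) = 1$).

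The key observation is that the endpoint $p = 1$ is where the inequality is tight, so I will argue that $g$ decreases on $[0,1]$ to the value $g(1) = 1 - 4 + 3 - 0 = 0$. Concretely, I would compute
\[
g'(p) = 2 - \tfrac{2}{\sqrt{p}} - \log p, \qquad g''(p) = \tfrac{1}{p^{3/2}} - \tfrac{1}{p} = \frac{1 - \sqrt{p}}{p^{3/2}},
\]
noting that $g''(p) \ge 0$ for all $p \in (0,1]$. Hence $g'$ is nondecreasing on $(0,1]$, and since $g'(1) = 2 - 2 - 0 = 0$, we conclude $g'(p) \le 0$ on $(0,1]$. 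Therefore $g$ is nonincreasing on $[0,1]$, and because $g(1) = 0$ this gives $g(p) \ge 0$ throughout, which is exactly the desired inequality.

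The only mildly subtle point, which I would mention briefly but not belabor, is handling $p = 0$: the derivative computations blow up there, but $g$ itself extends continuously to $g(0) = 1 \ge 0$, so the monotonicity argument on $(0,1]$ combined with continuity at $0$ covers the full interval $[0,1]$. Everything else is routine differentiation, so I do not anticipate any real obstacle.
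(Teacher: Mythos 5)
Your proof is correct and follows essentially the same route as the paper: you define the same difference function (the paper calls it $f$ with the opposite sign convention), compute the identical derivative $2 - 2/\sqrt{p} - \log p$, and argue it is nonpositive so the difference decreases to $0$ at $p=1$. The only minor variation is the justification that the derivative is nonpositive — the paper invokes $1 - 1/x \le \log x$ directly (with $x = \sqrt{p}$), while you take a second derivative and combine $g'' \ge 0$ with $g'(1)=0$; both are fine.
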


\begin{proof}
    Define
    \[
    f(p) \coloneqq 2(1 - \sqrt{p})^2 - 1 + p - p \log(p)\,.
    \]
    Notice that $f(1) = 0$. Moreover, the derivative of $f$ satisfies:
    \[
        f'(p) = 2 - \frac{2}{\sqrt{p}} - \log p \le 0\,,
    \]
    because $1 - \frac{1}{x} \le \log(x)$ for all $x > 0$ (substituting $x = \sqrt{p}$). So, $f$ is decreasing on $[0, 1]$, and therefore $f(p) \ge 0$ for all $p \in [0, 1]$, which implies the lemma.
\end{proof}

Combining the previous two lemmas gives an upper bound on the probability assigned by a product distribution to strings of Hamming weight at least $2$.

\begin{corollary}
\label{cor:p0_vs_p2}
    Let $\pi$ be a product distribution over $\{0,1\}^n$, and let $p_0 = \Pr_{x \sim \pi}[x = 0^n]$. Then
    \[
    \Pr_{x \sim \pi}[|x| \ge 2] \le 2(1 - \sqrt{p_0})^2\,.
    \]
\end{corollary}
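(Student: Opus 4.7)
The proof is a direct chaining of the two lemmas just established. The plan is to expand $\Pr_{x \sim \pi}[|x| \ge 2]$ as $1 - \Pr_{x \sim \pi}[x = 0^n] - \Pr_{x \sim \pi}[|x| = 1]$, use \cref{lem:p0_vs_p1} to lower-bound the $|x| = 1$ term, and then use \cref{lem:weird_ln_sqrt} to convert the resulting logarithmic expression into the clean $2(1-\sqrt{p_0})^2$ form stated in the corollary.

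More concretely, I would first write
\[
\Pr_{x \sim \pi}[|x| \ge 2] = 1 - p_0 - \Pr_{x \sim \pi}[|x| = 1],
\]
which is just the partition of $\{0,1\}^n$ by Hamming weight into the three classes $\{|x|=0\}, \{|x|=1\}, \{|x|\geq 2\}$. Applying \cref{lem:p0_vs_p1} gives $\Pr_{x \sim \pi}[|x| = 1] \ge -p_0 \log p_0$, and substituting yields the bound
\[
\Pr_{x \sim \pi}[|x| \ge 2] \le 1 - p_0 + p_0 \log p_0.
\]
Finally, \cref{lem:weird_ln_sqrt} (with $p = p_0 \in [0,1]$, which is valid since $p_0$ is a probability) gives $1 - p_0 + p_0 \log p_0 \le 2(1 - \sqrt{p_0})^2$, completing the argument.

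There is no real obstacle here, since the corollary is effectively just the composition of the two preceding lemmas. The only mildly nontrivial point is verifying that the $0 \log 0 = 0$ convention used in both lemmas is consistent with the boundary case $p_0 = 0$ here (in which case the left-hand side is at most $1$ and the right-hand side equals $2$, so the inequality is loose but valid). Everything else is a substitution.
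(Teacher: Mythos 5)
Your proof is correct and matches the paper's (the paper simply writes ``Follows from \Cref{lem:p0_vs_p1} and \Cref{lem:weird_ln_sqrt}''); you have just spelled out the decomposition by Hamming weight and the two substitutions explicitly.
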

\begin{proof}
    Follows from \Cref{lem:p0_vs_p1} and \Cref{lem:weird_ln_sqrt}.
\end{proof}

\subsection{Characterizing optimal product approximations}

Proved in this section is the theorem below, which is the heart of the algorithm's correctness.

\begin{theorem}
\label{thm:local_opt_works_better}
Consider an arbitrary quantum state $\ket{\psi}$, which we express in the form
\begin{align*}
    \ket{\psi} = \mparen{\alpha \ket{0^n} + \delta \ket{v_1} + \beta \ket{v_{\ge 2}}}\ket{g} + \gamma\ket{\perp}\,.
\end{align*}
Assume without loss of generality that
\begin{itemize}
    \item $\alpha$, $\beta$, $\delta$, and $\gamma$ are all real and nonnegative,
    \item $\ket{v_1}$, $\ket{v_{\ge 2}}$, $\ket{g}$, and $\ket{\perp}$ are unit vectors,
    \item $\ket{v_1}$ is supported only on strings of Hamming weight $1$,
    \item $\ket{v_{\ge 2}}$ is supported only on strings of Hamming weight $2$ or larger, and
    \item $\ket{\perp}$ is orthogonal to all states beginning with $\ket{0^n}$ or ending with $\ket{g}$.
\end{itemize}
Suppose further that $\alpha^2 = 2/3 + c$ for some $c \ge 0$.
Then for all product states $\ket{\pi}$ on $n$ qubits:
\begin{align*}
    \norm{(\bra{\pi} \otimes I)\ket{\psi}}_2^2 \leq \mparen{\alpha + \min\mbrace{\delta, \sqrt{\frac{2}{27}}\frac{\delta^2}{c}}}^2\,.
\end{align*}
\end{theorem}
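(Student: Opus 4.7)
First, I express $\Norm{(\bra{\pi}\otimes I)\ket{\psi}}_2^2 = \bra{\pi}\rho\ket{\pi}$ with $\rho = \tr_B\ketbra{\psi}{\psi}$. Because $\ket{\perp}$ has no $\ket{g}$-component on the second register, the cross terms vanish in the partial trace, giving $\rho = \ketbra{\phi}{\phi} + \gamma^2 R$, where $R$ is PSD, has unit trace, and is supported on the orthogonal complement of $\ket{0^n}$. Since $R \preceq \Pi_{\ne 0^n}$ (as $\opnorm{R} \le \tr(R) = 1$), we get
\[
\bra{\pi}\rho\ket{\pi} \;\le\; |\braket{\pi|\phi}|^2 + \gamma^2(1 - t_0^2), \qquad t_0 \coloneqq |\braket{0^n|\pi}|.
\]
Decomposing $\ket{\pi} = t_0\ket{0^n} + t_1\ket{u_1} + t_2\ket{u_{\ge 2}}$ along Hamming-weight subspaces (with $t_i \ge 0$) yields $|\braket{\pi|\phi}| \le t_0\alpha + t_1\delta + t_2\beta$, and the product structure of $\ket{\pi}$ enters through \cref{cor:p0_vs_p2}: $t_2 \le \sqrt{2}(1-t_0)$. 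The hypothesis $\alpha^2 = 2/3 + c \ge 2/3$ combined with $\beta^2 \le 1 - \alpha^2 \le 1/3 \le \alpha^2/2$ forces $\sqrt{2}\beta \le \alpha$, which is the key structural inequality.

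For the first bound $(\alpha + \delta)^2$, I would proceed by casework on $t_0$. When $t_0$ is bounded below by a constant, combining $t_2 \le \sqrt{2}(1-t_0)$ with $\sqrt{2}\beta \le \alpha$ gives
\[
|\braket{\pi|\phi}| \;\le\; t_0\alpha + t_1\delta + \sqrt{2}(1-t_0)\beta \;\le\; \alpha + t_1\delta \;\le\; \alpha + \sqrt{1-t_0^2}\,\delta,
\]
whence the target reduces to checking $\gamma^2(1-t_0^2) \le 2\alpha\delta(1 - \sqrt{1-t_0^2}) + t_0^2\delta^2$, which follows from $\gamma^2 \le 1 - \alpha^2 \le \alpha^2/2$ and $t_0$ bounded below. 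When $t_0$ is small, I would instead use the unconditional Cauchy--Schwarz bound $|\braket{\pi|\phi}| \le t_0\alpha + \sqrt{(1-t_0^2)(\delta^2+\beta^2)}$; combined with $\gamma^2(1-t_0^2) \le \gamma^2$, a short computation yields $\bra{\pi}\rho\ket{\pi} \le 1 - \alpha^2 + O(t_0)$, and for $\alpha^2 \ge 2/3$ this is clearly at most $(\alpha+\delta)^2$.

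For the tighter bound $\alpha + \sqrt{2/27}\,\delta^2/c$, I would exploit the slack in $\sqrt{2}\beta \le \alpha$ when $c > 0$. From $\alpha^2 = 2/3+c$ and $\beta^2 \le 1/3-c$ we get $\alpha^2 - 2\beta^2 \ge 3c$, while by concavity $\alpha + \sqrt{2}\beta \le 2\sqrt{2/3}$, so $\alpha - \sqrt{2}\beta \ge 3c/(2\sqrt{2/3})$. The pure-state analysis sharpens to
\begin{align*}
|\braket{\pi|\phi}| &\le \sqrt{(\alpha-\sqrt{2}\beta)^2 + \delta^2} + \sqrt{2}\beta \\
&= \alpha + \frac{\delta^2}{\sqrt{(\alpha-\sqrt{2}\beta)^2+\delta^2} + (\alpha-\sqrt{2}\beta)} \\
&\le \alpha + \frac{\delta^2}{2(\alpha-\sqrt{2}\beta)},
\end{align*}
and substituting the lower bound on $\alpha - \sqrt{2}\beta$ yields $|\braket{\pi|\phi}| \le \alpha + \sqrt{2/27}\,\delta^2/c$. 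The same casework on $t_0$ then absorbs the $\gamma^2(1-t_0^2)$ contribution to give the refined squared inequality.

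The principal obstacle I anticipate is the algebraic bookkeeping in the mixed case ($\gamma > 0$): the naive bound $|\braket{\pi|\phi}| \le \alpha+\delta$ does not by itself imply the desired squared inequality because of the additive $\gamma^2(1-t_0^2)$ term, and the product constraint $t_2 \le \sqrt{2}(1-t_0)$ must be combined carefully with the boundary constraints on $(\beta, \gamma)$ so that the bound is tight exactly when $\gamma = 0$ and $\sqrt{2}\beta = \alpha$. The casework on $t_0$ with tailored Cauchy--Schwarz bounds in each regime---and in particular the verification of the scalar inequality $\gamma^2(1-t_0^2) \le 2\alpha\delta(1-\sqrt{1-t_0^2}) + t_0^2\delta^2$ in the large-$t_0$ regime---will be the bulk of the work.
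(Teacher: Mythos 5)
Your opening moves are faithful to the paper's proof: the partial-trace decomposition $\rho = \ketbra{\phi}{\phi} + \gamma^2 R$ with $R \preceq \Pi_{\ne 0^n}$, giving $\bra{\pi}\rho\ket{\pi} \le |\braket{\pi|\phi}|^2 + \gamma^2(1 - t_0^2)$, the Hamming-weight decomposition of $\ket{\pi}$, and the appeal to \cref{cor:p0_vs_p2} to get $t_2 \le \sqrt{2}(1-t_0)$ are all exactly what the paper does (with $p_i = t_i^2$). Your computation of the slack $\alpha - \sqrt{2}\beta \ge c\sqrt{27/8}$ also parallels \Cref{lem:alpha_c}.

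The gap is in the absorption of the $\gamma^2(1-t_0^2)$ term, and it is not merely a bookkeeping obstacle: the scalar inequality you propose to check,
\[
\gamma^2(1-t_0^2) \;\le\; 2\alpha\delta\bigl(1-\sqrt{1-t_0^2}\bigr) + t_0^2\delta^2\,,
\]
is simply false. Take $\delta = 0$, $\gamma > 0$, and any $t_0 \in (0,1)$: the right-hand side vanishes while the left-hand side does not, regardless of how large $t_0$ is. (Yet the theorem's conclusion $\le \alpha^2$ still holds in this case.) The culprit is the lossy step $|\braket{\pi|\phi}| \le \alpha + t_1\delta$: by using $\sqrt{2}\beta \le \alpha$ and $1-t_0 \ge 0$ to crudely discard the $t_0$-dependence, you give up precisely the credit you need---namely that when $t_0 < 1$ and $\beta$ is small, $|\braket{\pi|\phi}|^2$ falls strictly below $(\alpha + t_1\delta)^2$, and that shortfall must be set against $\gamma^2(1-t_0^2)$. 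No ``$t_0$ bounded below by a constant'' threshold can fix this, because the threshold would have to depend on $\delta/\gamma$, and the small-$t_0$ branch then does not cover $t_0$ near $1$. The same unpatched term also infects your sharper $\sqrt{2/27}\,\delta^2/c$ bound, where you say ``the same casework absorbs $\gamma^2(1-t_0^2)$.''

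What the paper does instead is \Cref{lem:p_beta_gamma_r}: after arriving at an expression of the form $\bigl(r(1-\tfrac{p}{2}) + \tfrac{\sqrt{2}\beta}{2}p\bigr)^2 + \gamma^2 p$ with $r \ge \sqrt{2/3}$ (here $p = 1-p_0$ and $r = \frac{\alpha(1-p/2)+\delta\sqrt{p}}{1-p/2}$), it proves this is at most $\bigl(r(1-\tfrac{p}{2}) + \tfrac{\sqrt{2}\sqrt{\beta^2+\gamma^2}}{2}p\bigr)^2$. In effect, $\gamma$ can be folded into $\beta$ and treated as though $\gamma = 0$. Your slack $\alpha - \sqrt{2}\beta$ is then the wrong quantity---the paper works with $\alpha - \sqrt{2}\sqrt{\beta^2+\gamma^2} \ge \alpha - \sqrt{2-2\alpha^2}$, which is smaller but no longer has a dangling $\gamma^2$ to explain away. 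Without something playing the role of \Cref{lem:p_beta_gamma_r}, your casework does not close.
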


Before establishing this theorem, let us explain how to interpret it. Suppose we are searching for the best product state approximation to the leftmost $n$ qubits of some state $\ket{\psi}$. Our current candidate for the best product state is $\ket{0^n}$ (in some appropriately chosen product basis), whose fidelity with the leftmost $n$ qubits is currently $\alpha^2 = 2/3 + c$.

First consider the special case where $\ket{\psi}$ itself is an $n$-qubit pure state, in which case we may take $\ket{g} = 1$ and $\gamma = 0$ to write
\[
\ket{\psi} = \alpha \ket{0^n} + \delta\ket{v_1} + \beta \ket{v_{\ge 2}}
\]
as a sum over basis states of Hamming weight $0$, $1$, and $2$ or greater.
Then, the theorem says that if $\ket{\psi}$'s support on Hamming weight $1$ is small (as captured by $\delta$), $\ket{0^n}$ must approximately maximize the product state fidelity with $\ket{\psi}$.

In the general case, where $\ket{\psi}$ has more than $n$ qubits, the theorem similarly shows that $\ket{0^n}$ is an approximate maximizer of product fidelity, but under a slightly different assumption: that $\ket{\psi}$ places small support on states of Hamming weight $1$ \textit{that are coherent with $\ket{0^n}$ on the leftmost $n$ qubits}.
In other words, $\ket{\psi}$ may place large mass on Hamming weight $1$ in $\ket{\perp}$, but this does not affect the bound on product state fidelity.

Now we proceed towards the proof. We first establish two quantitative bounds, whose importance will become clear in the proof of \Cref{thm:local_opt_works_better}.

\begin{lemma}
    \label{lem:p_beta_gamma_r}
    Let $p$, $\beta$, $\gamma$, and $r$ be nonnegative reals satisfying $0 \le p \le 1$, $\beta^2 + \gamma^2 \le 1/3$, and $r \ge \sqrt{2/3}$. Then:
    \[
    \mparen{r\mparen{1 - \frac{p}{2}} + \frac{\beta\sqrt{2}}{2}p}^2 + \gamma^2 p \le \mparen{r\mparen{1 - \frac{p}{2}} + \frac{\sqrt{\beta^2 + \gamma^2}\sqrt{2}}{2}p}^2\,.
    \]
\end{lemma}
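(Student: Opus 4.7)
The plan is to write the inequality as a difference of squares, rationalize the resulting expression, and reduce it to the scalar inequality $r \geq \sqrt{2/3}$ guaranteed by hypothesis.

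To begin, set $A = r(1 - p/2)$, $B = \tfrac{\sqrt{2}}{2}\, p\, \beta$, and $B' = \tfrac{\sqrt{2}}{2}\, p\, \sqrt{\beta^2+\gamma^2}$, so the inequality to prove becomes $(A+B)^2 + \gamma^2 p \le (A+B')^2$. Handle the degenerate cases first: if $\gamma = 0$ then $B' = B$ and the inequality is $0 \le 0$, and if $p = 0$ both sides equal $r^2$. Otherwise $p, \gamma > 0$, and using $B' \ge B$ we may rewrite the desired bound as
\[
    \gamma^2 p \;\le\; (B' - B)(2A + B + B').
\]

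Next I would rationalize the factor $B' - B$ using the conjugate:
\[
    B' - B \;=\; \frac{\sqrt{2}}{2}\, p\,\bigl(\sqrt{\beta^2+\gamma^2} - \beta\bigr) \;=\; \frac{\sqrt{2}}{2}\, p \cdot \frac{\gamma^2}{\sqrt{\beta^2+\gamma^2} + \beta}.
\]
Substituting this and dividing through by $\gamma^2 p > 0$, the required inequality becomes
\[
    \sqrt{2}\,\bigl(\sqrt{\beta^2+\gamma^2} + \beta\bigr) \;\le\; 2A + B + B'.
\]
Plugging the definitions of $A$, $B$, $B'$ back in, the right-hand side equals $r(2-p) + \tfrac{\sqrt{2}}{2} p (\beta + \sqrt{\beta^2+\gamma^2})$, so after moving the $p$-term to the left the claim simplifies to
\[
    r(2-p) \;\ge\; \sqrt{2}\,\bigl(\sqrt{\beta^2+\gamma^2} + \beta\bigr) \cdot \frac{2-p}{2}.
\]

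Since $p \le 1$, the factor $(2-p)/2$ is strictly positive, so dividing by it reduces everything to the purely scalar inequality $2r \ge \sqrt{2}\bigl(\sqrt{\beta^2+\gamma^2} + \beta\bigr)$. Finally, the hypothesis $\beta^2 + \gamma^2 \le 1/3$ gives both $\beta \le 1/\sqrt{3}$ and $\sqrt{\beta^2+\gamma^2} \le 1/\sqrt{3}$, so
\[
    \frac{\sqrt{2}\bigl(\sqrt{\beta^2+\gamma^2} + \beta\bigr)}{2} \;\le\; \frac{\sqrt{2}\cdot 2/\sqrt{3}}{2} \;=\; \sqrt{2/3} \;\le\; r,
\]
completing the proof. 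No step is really the ``hard'' one: the only mild subtlety is noticing that after rationalizing the difference $\sqrt{\beta^2+\gamma^2} - \beta$, the common factor $(2-p)$ appears on both sides and cancels, which is precisely what aligns the constants so that the bound $r \ge \sqrt{2/3}$ is exactly tight.
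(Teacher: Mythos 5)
Your proof is correct and takes essentially the same route as the paper: both reduce the claim, after handling the degenerate cases, to the scalar inequality $\sqrt{2}\bigl(\sqrt{\beta^2+\gamma^2}+\beta\bigr)\le 2r$ by rationalizing the conjugate pair $\sqrt{\beta^2+\gamma^2}\pm\beta$ and cancelling the common $p$-dependent factor. The only superficial difference is the order of operations---you factor the difference of squares and rationalize before cancelling the $(2-p)/2$ factor, while the paper expands, cancels $p(1-p/2)$ first, and then rationalizes.
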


\begin{proof}
    Expanding out, the desired inequality becomes
    \begin{multline*}
    r^2\mparen{1 - \frac{p}{2}}^2 + \beta\sqrt{2}rp\mparen{1 - \frac{p}{2}} + \frac{\beta^2}{2}p^2 + \gamma^2 p \\ \le r^2\mparen{1 - \frac{p}{2}}^2 + \sqrt{\beta^2 + \gamma^2}\sqrt{2}rp\mparen{1 - \frac{p}{2}} + \frac{\beta^2 + \gamma^2}{2}p^2\,,
    \end{multline*}
    which considerably simplifies to
    \[
    \gamma^2 p\mparen{1 - \frac{p}{2}} \le \sqrt{\beta^2 + \gamma^2}\sqrt{2}rp\mparen{1 - \frac{p}{2}} - \beta\sqrt{2}rp\mparen{1 - \frac{p}{2}}\,.
    \]
    Factoring out $p\mparen{1 - \frac{p}{2}} \ge 0$, it suffices to show that
    \[
    \gamma^2 \le r\sqrt{2}\mparen{\sqrt{\beta^2 + \gamma^2} - \beta}\,.
    \]
    We can assume henceforth that $\beta$ and $\gamma$ are both strictly positive, because the inequality above is easily verified if either are zero.
    Multiplying both sides by $\sqrt{\beta^2 + \gamma^2} + \beta$ is equivalent to
    \[
    \gamma^2\mparen{\sqrt{\beta^2 + \gamma^2} + \beta} \le r\sqrt{2}\gamma^2\,,
    \]
    and then we divide by $\gamma^2$ to obtain
    \[
    \sqrt{\beta^2 + \gamma^2} + \beta \le r\sqrt{2}\,.
    \]
    This is implied by the assumptions of the lemma because the left side is at most $\frac{2}{\sqrt{3}}$, and the right side is at least $\frac{2}{\sqrt{3}}$.
\end{proof}

\begin{lemma}
\label{lem:alpha_c}
Suppose that $\alpha^2 = 2/3 + c$ for some $c \ge 0$. Then:
\[
\alpha - \sqrt{2 - 2\alpha^2} \ge c\sqrt{\frac{27}{8}}\,.
\]
\end{lemma}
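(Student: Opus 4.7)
The plan is to give a short analytic proof by rationalizing the left-hand side and then applying Cauchy--Schwarz (equivalently, the quadratic mean--arithmetic mean inequality). First I would write
\[
\alpha - \sqrt{2-2\alpha^2} \;=\; \frac{\alpha^2 - (2-2\alpha^2)}{\alpha + \sqrt{2-2\alpha^2}} \;=\; \frac{3\alpha^2 - 2}{\alpha + \sqrt{2-2\alpha^2}} \;=\; \frac{3c}{\alpha + \sqrt{2-2\alpha^2}},
\]
using $\alpha^2 = 2/3 + c$ in the last step. Note that by hypothesis $c \geq 0$ and $\alpha \leq 1$ (so that $\sqrt{2-2\alpha^2}$ is real, i.e.\ $c \leq 1/3$), and that at $c=0$ both sides of the desired inequality equal $0$, so the computation is consistent.

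Next I would bound the denominator from above. By the inequality $x + y \leq \sqrt{2(x^2+y^2)}$ (equivalently, Cauchy--Schwarz applied to $(x,y)$ and $(1,1)$), with $x = \alpha$ and $y = \sqrt{2-2\alpha^2}$,
\[
\alpha + \sqrt{2-2\alpha^2} \;\leq\; \sqrt{2\bigl(\alpha^2 + (2-2\alpha^2)\bigr)} \;=\; \sqrt{4-2\alpha^2}.
\]
Substituting $\alpha^2 = 2/3 + c$ gives $4 - 2\alpha^2 = 8/3 - 2c \leq 8/3$ for $c \geq 0$, so
\[
\alpha + \sqrt{2-2\alpha^2} \;\leq\; \sqrt{8/3} \;=\; 2\sqrt{2/3}.
\]

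Combining these two bounds yields
\[
\alpha - \sqrt{2-2\alpha^2} \;\geq\; \frac{3c}{2\sqrt{2/3}} \;=\; \frac{3c}{2}\sqrt{\tfrac{3}{2}} \;=\; c\sqrt{\tfrac{27}{8}},
\]
which is exactly the desired inequality. There is no real obstacle here: the only subtlety is recognizing that rationalizing the difference of square roots turns the inequality into a clean upper bound on the sum $\alpha + \sqrt{2-2\alpha^2}$, at which point a single Cauchy--Schwarz step finishes the proof. Equality holds at $c = 0$, consistent with the fact that the two sides and their first derivatives in $\alpha$ agree at $\alpha = \sqrt{2/3}$.
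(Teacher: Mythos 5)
Your proof is correct, and it takes a genuinely different route from the paper. The paper's proof is a calculus argument: it defines $f(\alpha) = \alpha - \sqrt{2-2\alpha^2} - \sqrt{27/8}\,(\alpha^2 - 2/3)$, verifies $f(\sqrt{2/3}) = 0$ and $f'(\sqrt{2/3}) = 0$, checks that $f''(\alpha) > 0$ for $\alpha \ge \sqrt{2/3}$, and concludes $f \ge 0$ by monotonicity. Your argument is purely algebraic: you rationalize $\alpha - \sqrt{2-2\alpha^2}$ to $\tfrac{3c}{\alpha + \sqrt{2-2\alpha^2}}$ and then use the elementary bound $x + y \le \sqrt{2(x^2+y^2)}$ to cap the denominator by $\sqrt{8/3}$. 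This avoids differentiation entirely, is shorter, and makes the equality case ($c=0$) transparent in one line. The two approaches buy roughly the same thing here, but yours is cleaner and would generalize more easily to a sharper bound (the denominator bound $\sqrt{8/3-2c}$ you derived before relaxing to $\sqrt{8/3}$ actually gives the slightly stronger $\alpha - \sqrt{2-2\alpha^2} \ge \tfrac{3c}{\sqrt{8/3 - 2c}}$ for free). One small note: you should state explicitly that the denominator $\alpha + \sqrt{2-2\alpha^2}$ is strictly positive (clear since $\alpha = \sqrt{2/3+c} > 0$), which justifies both the rationalization and the direction of the final inequality.
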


\begin{proof}
    In other words, we wish to show that
    \[
    f(\alpha) \coloneqq \alpha - \sqrt{2 - 2\alpha^2} - \sqrt{\frac{27}{8}}\mparen{\alpha^2 - \frac{2}{3}} \ge 0\,.
    \]
    Consider the first and second derivatives of $f$:
    \begin{align*}
    f'(\alpha) = 1 + \frac{2\alpha}{\sqrt{2 - 2\alpha^2}} - \sqrt{\frac{27}{2}}\alpha,
    &&
    f''(\alpha) = \frac{\sqrt{2}}{(1-\alpha^2)^{3/2}} - \sqrt{\frac{27}{2}}\,.
    \end{align*}
    A simple calculation shows that $f''(\sqrt{2/3}) > 0$, and $f''(\alpha)$ is clearly increasing in $\alpha$, so $f''(\alpha) > 0$ for all $\alpha \ge \sqrt{2/3}$. $f'(\sqrt{2/3}) = 0$, and $f'(\alpha)$ is increasing in $\alpha$ as a consequence of the positive second derivative, so $f'(\alpha) \ge 0$ for all $\alpha \ge \sqrt{2/3}$. Hence, $f$ is non-decreasing for $\alpha \ge \sqrt{2/3}$. Since $f(\sqrt{2/3}) = 0$, the lemma follows.
\end{proof}

\begin{proof}[{Proof of \Cref{thm:local_opt_works_better}}]
    Write
    \[
    \ket{\pi} = \sqrt{p_0} \ket{0^n} + \sqrt{p_1} \ket{w_1} + \sqrt{p_{\ge 2}}\ket{w_{\ge 2}}\,,
    \]
    where $p_0$, $p_1$, and $p_{\ge 2}$ are probabilities summing to $1$, $\ket{w_1}$ is supported over strings of weight $1$, $\ket{w_{\ge 2}}$ is supported over strings of weight at least $2$, and both $\ket{w_1}$ and $\ket{w_{\ge 2}}$ are unit vectors.
    Then:
    \begin{align*}
        \norm{(\bra{\pi} \otimes I)\ket{\psi}}_2^2
        &= \norm{\mparen{\alpha\sqrt{p_0} + \delta\sqrt{p_1}\braket{w_1|v_1} + \beta\sqrt{p_{\ge 2}}\braket{w_{\ge 2}|v_{\ge 2}}}\ket{g} + \gamma(\bra{\pi} \otimes I)\ket{\perp}}_2^2\\
        &= \norm{\mparen{\alpha\sqrt{p_0} + \delta\sqrt{p_1}\braket{w_1|v_1} + \beta\sqrt{p_{\ge 2}}\braket{w_{\ge 2}|v_{\ge 2}}}\ket{g}}_2^2 + \norm{\gamma(\bra{\pi} \otimes I)\ket{\perp}}_2^2\\
        &= \abs{\alpha\sqrt{p_0} + \delta\sqrt{p_1}\braket{w_1|v_1} + \beta\sqrt{p_{\ge 2}}\braket{w_{\ge 2}|v_{\ge 2}}}^2 + \gamma^2\norm{(\bra{\pi} \otimes I)\ket{\perp}}_2^2\\
        &\le \mparen{\alpha \sqrt{p_0} + \delta\sqrt{p_1} + \beta\sqrt{p_{\ge 2}}}^2 + \gamma^2(1 - p_0)\,,
    \end{align*}
    where in the second line we used the Pythagorean theorem which is valid because $\ket{\perp}$ has no support on $\ket{g}$, and in the last line we applied the triangle inequality and the assumption that $\ket{\perp}$ has no support on $\ket{0^n}$.
    We first turn our attention to bounding (the square root of) the left term. Let $p = p_1 + p_{\ge 2}$ be the probability that the measurement distribution of $\ket{\pi}$ assigns to strings of Hamming weight $1$ or more. Then:
    \begin{align*}
        \alpha \sqrt{p_0} + \delta\sqrt{p_1} + \beta\sqrt{p_{\ge 2}}
        &\le \alpha \sqrt{p_0} + \delta\sqrt{p_1} + \beta\sqrt{2}(1 - \sqrt{p_0}) && (\mathrm{\Cref{cor:p0_vs_p2}})\\
        &= (\alpha - \beta\sqrt{2})\sqrt{p_0} + \delta\sqrt{p_1} + \beta\sqrt{2}\\
        &\le (\alpha - \beta\sqrt{2})\sqrt{1 - p} + \delta\sqrt{p} + \beta\sqrt{2} && (p_0 = 1-p, p_1 \le p)\\
        &\le (\alpha - \beta\sqrt{2})\mparen{1 - \frac{p}{2}} + \delta\sqrt{p} + \beta\sqrt{2} && (\sqrt{1 - p} \le 1 - \frac{p}{2}, \alpha \ge \beta\sqrt{2})\\
        &= \alpha\mparen{1 - \frac{p}{2}} + \frac{\beta\sqrt{2}}{2}p + \delta\sqrt{p}\,.
    \end{align*}
    Substituting, we find that
    \begin{align}
        \norm{(\bra{\pi} \otimes I)\ket{\psi}}_2^2
        &\le \mparen{\alpha\mparen{1 - \frac{p}{2}} + \frac{\beta\sqrt{2}}{2}p + \delta\sqrt{p}}^2 + \gamma^2 p\nonumber\\
        &\le \mparen{\alpha\mparen{1 - \frac{p}{2}} + \frac{\sqrt{\beta^2 + \gamma^2}\sqrt{2}}{2}p + \delta\sqrt{p}}^2 && (\mathrm{\Cref{lem:p_beta_gamma_r}})\nonumber\\
        &\le \mparen{\alpha - \frac{p}{2}\mparen{\alpha - \sqrt{2 - 2\alpha^2}} + \delta\sqrt{p}}^2 && (\alpha^2 + \beta^2 + \gamma^2 \le 1)\nonumber\\
        &\le \mparen{\alpha - p\sqrt{\frac{27}{32}}c + \delta\sqrt{p}}^2\,. && (\mathrm{\Cref{lem:alpha_c}})\label{eq:alpha_minus_p_plus_delta}
    \end{align}
    The use of \Cref{lem:p_beta_gamma_r} in the second line is by choosing $r = \frac{\alpha\mparen{1 - \frac{p}{2}} + \delta\sqrt{p}}{1 - \frac{p}{2}} \ge \alpha \ge \sqrt{2/3}$. (To give some interpretation for this use of \Cref{lem:p_beta_gamma_r}, it effectively says that we can assume without loss of generality that $\gamma = 0$, by placing all of its amplitude on $\beta$ instead.)
    To bound this last quantity, we first observe that
    \begin{align*}
        p\sqrt{\frac{27}{32}}c &\le \sqrt{\frac{3}{32}} && (0 \le p \le 1, 0 \le c \le 1/3)\\
        &< \sqrt{\frac{2}{3}}\\
        &\le \alpha\,,
    \end{align*}
    and therefore the expression inside the parentheses is always positive. So, it suffices to upper bound what is in the parentheses.
    Next, we note that
    \[
    \sqrt{\frac{2}{27}}\frac{\delta^2}{c} - \delta\sqrt{p} + p\sqrt{\frac{27}{32}}c \ge 0\,,
    \]
    because the discriminant of the left side (as a quadratic function of $\delta$) is $0$.
    Plugging into the bound obtained in \Cref{eq:alpha_minus_p_plus_delta} gives
    \[
        \norm{(\bra{\pi} \otimes I)\ket{\psi}}_2^2 \le \mparen{\alpha + \sqrt{\frac{2}{27}}\frac{\delta^2}{c}}^2\,.
    \]
    Alternatively, we can take
    \[
    \norm{(\bra{\pi} \otimes I)\ket{\psi}}_2^2 \le \mparen{\alpha + \delta}^2
    \]
    by using $0 \le p \le 1$.
\end{proof}

We conclude this subsection by proving a version of \Cref{thm:local_opt_works_better} for mixed states.

\begin{corollary}
    \label{cor:mixed_local_opt_works}
    For an $n$-qubit density matrix $\rho$, define $\Vec{z} \in \C^n$ by
    \[
    z_i = \braket{e_i|\rho|0^n}\,.
    \]
    If $\braket{0^n|\rho|0^n} = 2/3 + c$ for some $c > 0$, then for all product states $\ket{\pi}$:
    \[
    \braket{\pi|\rho|\pi} \le \braket{0^n|\rho|0^n} + \min\mbrace{3\norm{\Vec{z}}_2, \frac{\norm{\Vec{z}}_2^2}{c}}\,.
    \]
\end{corollary}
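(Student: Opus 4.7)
The plan is to reduce to \Cref{thm:local_opt_works_better} via purification. Let $\ket{\psi}$ be any purification of $\rho$ on the $n$-qubit register plus an auxiliary register, so that $\braket{\pi|\rho|\pi} = \twonorm{(\bra{\pi} \otimes I)\ket{\psi}}^2$ for every product state $\ket{\pi}$. Writing $\ket{\phi_x} \coloneqq (\bra{x} \otimes I)\ket{\psi}$, one has $\braket{\phi_y|\phi_x} = \braket{x|\rho|y}$; in particular, setting $\alpha \coloneqq \sqrt{\braket{0^n|\rho|0^n}}$ and $\ket{g} \coloneqq \ket{\phi_{0^n}}/\alpha$ gives $\twonorm{\ket{\phi_{0^n}}}^2 = \alpha^2$ and $\braket{g|\phi_{e_i}} = z_i/\alpha$.

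Next, I decompose each $\ket{\phi_x}$ (for $x \neq 0^n$) into its component along $\ket{g}$ plus a piece orthogonal to $\ket{g}$. Grouping the $\ket{g}$-components by Hamming weight yields the decomposition
\[
\ket{\psi} = \mparen{\alpha\ket{0^n} + \delta\ket{v_1} + \beta\ket{v_{\ge 2}}}\ket{g} + \gamma\ket{\perp}
\]
required by \Cref{thm:local_opt_works_better}, where $\delta\ket{v_1} = \sum_i (z_i/\alpha)\ket{e_i}$, so $\delta = \twonorm{\vec{z}}/\alpha$, and $\ket{\perp}$ is supported on $x \neq 0^n$ on the qubit register and orthogonal to $\ket{g}$ on the auxiliary register. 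The four summands are mutually orthogonal (separated either by Hamming weight on the qubit register or by the $\ket{g}$-split on the auxiliary register), so unit normalization of $\ket{\psi}$ gives $\alpha^2 + \delta^2 + \beta^2 + \gamma^2 = 1$, and in particular $\twonorm{\vec{z}}^2 = \alpha^2\delta^2 \leq \alpha^2(1-\alpha^2)$. Applying \Cref{thm:local_opt_works_better} with these parameters (its hypothesis $\alpha^2 = 2/3+c$ is exactly the corollary's assumption) then yields
\[
\braket{\pi|\rho|\pi} \leq \mparen{\alpha + \min\mbrace{\delta,\ \sqrt{\tfrac{2}{27}}\,\tfrac{\delta^2}{c}}}^2.
\]

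It remains to show each branch of the $\min$ implies the stated form. For the $3\twonorm{\vec{z}}$ bound, expanding $(\alpha+\delta)^2 - \alpha^2 = 2\twonorm{\vec{z}} + \twonorm{\vec{z}}^2/\alpha^2$, the residual $\twonorm{\vec{z}}^2/\alpha^2$ is at most $\twonorm{\vec{z}}$ because $\twonorm{\vec{z}}^2 \leq \alpha^2(1-\alpha^2) \leq \alpha^4$ using $\alpha^2 \geq 2/3 > 1/2$. For the $\twonorm{\vec{z}}^2/c$ bound I case-split: if $\twonorm{\vec{z}}^2 > 2c$, the bound $\alpha^2 + \twonorm{\vec{z}}^2/c > 2/3 + 2 > 1 \geq \braket{\pi|\rho|\pi}$ is vacuous; otherwise, setting $m \coloneqq \sqrt{2/27}\,\delta^2/c$, the bound $\alpha \geq \sqrt{2/3}$ gives $2\alpha m \leq \tfrac{2}{3}\twonorm{\vec{z}}^2/c$, and combining $\alpha^4 \geq 4/9$ with $\twonorm{\vec{z}}^2 \leq 2c$ gives $m^2 \leq \tfrac{1}{6}\twonorm{\vec{z}}^4/c^2 \leq \tfrac{1}{3}\twonorm{\vec{z}}^2/c$, summing to exactly $\twonorm{\vec{z}}^2/c$. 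The main obstacle is purely constant-chasing: both branches land on the stated constants thanks to the hypothesis $\alpha^2 \geq 2/3$, and the constant $\sqrt{2/27}$ in \Cref{thm:local_opt_works_better} was evidently tuned so that the second bound closes up to $\tfrac{2}{3} + \tfrac{1}{3} = 1$.
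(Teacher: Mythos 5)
Your proposal is correct and follows essentially the same route as the paper's proof: purify $\rho$ into the form of \Cref{thm:local_opt_works_better}, identify $\delta = \norm{\vec{z}}_2/\alpha$, apply the theorem, and close by constant-chasing under $\alpha^2 \geq 2/3$. The only cosmetic difference is your trivial-case threshold ($\norm{\vec{z}}_2^2 > 2c$ versus the paper's $\norm{\vec{z}}_2^2/c > 1/3$); both choices make the final algebra close.
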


\begin{proof}
    Take $\ket{\psi}$ to be a purification of $\rho$ in the form of \Cref{thm:local_opt_works_better}:
    \[
    \ket{\psi}= \mparen{\alpha \ket{0^n} + \delta \ket{v_1} + \beta \ket{v_{\ge 2}}}\ket{g} + \gamma\ket{\perp}\,.
    \]
    Observe that
    \[
    \delta^2 = \sum_{i=1}^n \frac{\abs{\braket{e_i|\rho|0^n}}^2}{\alpha^2} = \frac{\norm{\Vec{z}}_2^2}{\alpha^2}\,,
    \]
    because $z_i = \braket{e_i|\rho|0^n} = \alpha \delta \braket{e_i|v_1}$.
    Hence:
    \begin{align*}
    \braket{\pi|\rho|\pi} 
    &= \norm{(\bra{\pi} \otimes I)\ket{\psi}}_2^2\\
    &\le \mparen{\alpha + \frac{\norm{\Vec{z}}_2}{\alpha}}^2 && (\mathrm{\Cref{thm:local_opt_works_better}})\\
    &= \alpha^2 + 2\norm{\Vec{z}}_2 + \frac{\norm{\Vec{z}}_2^2}{\alpha^2}\\
    &= \alpha^2 + \mparen{2 + \frac{\delta}{\alpha}}\norm{\Vec{z}}_2\\
    &\le \alpha^2 + 3\norm{\Vec{z}}_2\,. && (\delta \le \sqrt{1/3}, \alpha \ge \sqrt{2/3})
    \end{align*}
    We can also use the other half of \Cref{thm:local_opt_works_better} to obtain:
    \begin{align*}
    \braket{\pi|\rho|\pi} 
    &= \norm{(\bra{\pi} \otimes I)\ket{\psi}}_2^2\\
    &\le \mparen{\alpha + \sqrt{\frac{2}{27}}\frac{\norm{\Vec{z}}_2^2}{\alpha^2 c}}^2 && (\mathrm{\Cref{thm:local_opt_works_better}})\\
    &\le \mparen{\alpha + \frac{\norm{\Vec{z}}_2^2}{3\alpha c}}^2 && (\alpha \ge \sqrt{2/3})\\
    &= \alpha^2 + \frac{2\norm{\Vec{z}}_2^2}{3c} + \frac{\norm{\Vec{z}}_2^4}{9\alpha^2 c^2}\,.
    \end{align*}
    To complete the proof, assume without loss of generality that $\frac{\norm{z}_2^2}{c} \le 1/3$, as otherwise the statement is trivial. Then:
    \begin{align*}
    \braket{\pi|\rho|\pi} &\le \alpha^2 + \frac{2\norm{\Vec{z}}_2^2}{3c} + \frac{\norm{\Vec{z}}_2^2}{27\alpha^2 c}\\
    &\le \alpha^2 + \frac{2\norm{\Vec{z}}_2^2}{3c} + \frac{\norm{\Vec{z}}_2^2}{18 c} && (\alpha^2 \ge 2/3)\\
    &\le \alpha^2 + \frac{\norm{z}_2^2}{c}\,.\tag*{\qedhere}
    \end{align*}
\end{proof}

\subsection{Bounding local updates}

\Cref{cor:mixed_local_opt_works} shows that the maximum product fidelity can be bounded in terms of the $\ell^2$ norm of a certain vector $\vec{z}$, where $\vec{z}$ captures mass that $\rho$ places coherently between $\ket{0^n}$ and strings of Hamming weight $1$. In this subsection, we show a sort of converse: that there always exists a product state $\ket{\pi}$ whose increase in fidelity compared to $\ket{0^n}$ is proportional to $\norm{\vec{z}}_2^2$. So, we can use $\vec{z}$ to guide our local optimization algorithm until convergence.

We first need a claim showing that $\norm{\vec{z}}_2$ is bounded:

\begin{claim}
    \label{claim:norm_z_at_most_half}
    For an $n$-qubit density matrix $\rho$, define $\Vec{z} \in \C^n$ by
    \[
    z_i = \braket{e_i|\rho|0^n}\,.
    \]
    Then $\norm{\Vec{z}}_2 \le 1/2$.
\end{claim}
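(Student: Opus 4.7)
The plan is to exploit the positive semidefiniteness of $\rho$ together with the fact that the $z_i$'s are off-diagonal entries in mutually orthogonal rows/columns. Concretely, I would first package $\vec{z}$ into a single test vector: define the unit vector
\[
\ket{\phi} = \frac{1}{\norm{\vec{z}}_2} \sum_{i=1}^n z_i \ket{e_i}\,,
\]
which lies in the Hamming-weight-$1$ subspace and in particular is orthogonal to $\ket{0^n}$. A direct computation gives
\[
\braket{\phi | \rho | 0^n} = \frac{1}{\norm{\vec{z}}_2}\sum_{i=1}^n z_i^* \braket{e_i | \rho | 0^n} = \frac{1}{\norm{\vec{z}}_2}\sum_{i=1}^n \abs{z_i}^2 = \norm{\vec{z}}_2\,.
\]

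Now I would apply the standard ``principal $2 \times 2$ submatrix'' argument: because $\rho \succeq 0$, the restriction of $\rho$ to the two-dimensional subspace spanned by the orthonormal pair $\ket{0^n},\ket{\phi}$ is positive semidefinite, so its off-diagonal entry is bounded by the geometric mean of the diagonal entries:
\[
\norm{\vec{z}}_2^2 = \abs{\braket{\phi | \rho | 0^n}}^2 \le \braket{0^n | \rho | 0^n}\cdot \braket{\phi | \rho | \phi}\,.
\]
Since $\ket{0^n}$ and $\ket{\phi}$ are orthogonal unit vectors, they can be extended to an orthonormal basis, and the trace-one condition on $\rho$ then gives $\braket{0^n|\rho|0^n} + \braket{\phi|\rho|\phi} \le \tr(\rho) = 1$. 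Applying AM--GM to the two nonnegative numbers $p := \braket{0^n|\rho|0^n}$ and $q := \braket{\phi|\rho|\phi}$ with $p+q \le 1$ yields $pq \le 1/4$, so $\norm{\vec{z}}_2^2 \le 1/4$ and thus $\norm{\vec{z}}_2 \le 1/2$, as desired.

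No step here is a real obstacle: the whole argument is just ``Cauchy--Schwarz plus $2\times 2$ PSD minor plus AM--GM'', with the one small piece of cleverness being the choice of $\ket{\phi}$ that turns the vector norm $\norm{\vec{z}}_2$ into a single off-diagonal matrix entry of $\rho$. The bound is tight, saturated when $\rho$ is the pure state $\tfrac{1}{\sqrt{2}}(\ket{0^n}+\ket{\phi})$.
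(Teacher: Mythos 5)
Your proof is correct, and it is a genuinely different (and in some ways cleaner) route than the one in the paper. The paper's proof works with a purification $\ket{\psi} = \paren{\alpha\ket{0^n} + \delta\ket{v_1} + \beta\ket{v_{\geq 2}}}\ket{g} + \gamma\ket{\perp}$ of $\rho$, observes that $\norm{\vec{z}}_2 = \alpha\delta$, and then bounds $\alpha\delta \leq 1/2$ using $\alpha^2 + \delta^2 \leq 1$ (the text states ``maximized when $\alpha = \delta = \sqrt{2}$,'' which is a typo for $1/\sqrt{2}$). You instead pick the test state $\ket{\phi} \propto \sum_i z_i \ket{e_i}$ so that $\norm{\vec{z}}_2$ becomes the single off-diagonal entry $\braket{\phi|\rho|0^n}$ of the $2\times 2$ principal minor of $\rho$ in the orthonormal pair $\ket{0^n},\ket{\phi}$, then combine the PSD determinant bound with $\braket{0^n|\rho|0^n} + \braket{\phi|\rho|\phi} \leq \tr(\rho) = 1$ and AM--GM. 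Both arguments ultimately hinge on the same elementary inequality $xy \leq (x^2+y^2)/2 \leq 1/2$, but yours operates directly on the density matrix and is self-contained, whereas the paper's reuses the purification machinery that it has already set up for Theorem~\ref{thm:local_opt_works_better} and Corollary~\ref{cor:mixed_local_opt_works}, so the fact $\norm{\vec{z}}_2 = \alpha\delta$ falls out for free. The only cosmetic gap in your write-up is that $\ket{\phi}$ is undefined when $\vec{z} = 0$, but that case is vacuous since the claim holds trivially. Your tightness example is also correct.
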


\begin{proof}
    Following the proof of \Cref{cor:mixed_local_opt_works}, let $\ket{\psi}$ to be a purification of $\rho$ in the form of \Cref{thm:local_opt_works_better}:
    \[
    \ket{\psi}= \mparen{\alpha \ket{0^n} + \delta \ket{v_1} + \beta \ket{v_{\ge 2}}}\ket{g} + \gamma\ket{\perp}\,.
    \]
    Recall that
    \[
    \delta^2 = \sum_{i=1}^n \frac{\abs{\braket{e_i|\rho|0^n}}^2}{\alpha^2} = \frac{\norm{\Vec{z}}_2^2}{\alpha^2}\,,
    \]
    and therefore $\norm{\Vec{z}}_2 = \alpha\delta$. Since $\alpha^2 + \delta^2 \le 1$, $\norm{\Vec{z}}_2$ is maximized when $\alpha = \delta = \sqrt{2}$, and the claim follows.
\end{proof}

Now we show how to make local updates based on $\vec{z}$. To understand the theorem below, think of $\vec{a}$ as a ``good enough'' approximation to $\vec{z}$. (We require an approximation because quantum tomography algorithms are necessarily non-exact.) Then the theorem shows that, using the product state parametrization from \Cref{def:product_state_parametrization}, moving in the direction of $\vec{a}$ always increases the fidelity by an amount proportional to $\norm{\vec{a}}_2^2$.

\begin{theorem}
\label{thm:local_opt_improvement_with_error}
For an $n$-qubit density matrix $\rho$, define $\Vec{z} \in \C^n$ by
\[
z_i = \braket{e_i|\rho|0^n}\,.
\]
Then if $\norm{\Vec{a} - \Vec{z}}_2 \le \norm{\Vec{a}}_2 / 2$,
\[
\braket{\pi_{\Vec{a}/10}|\rho|\pi_{\Vec{a}/10}} \ge \braket{0^n|\rho|0^n} + \frac{\norm{\Vec{a}}_2^2}{20}\,.
\]
\end{theorem}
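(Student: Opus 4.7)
The plan is to expand $\braket{\pi_{\vec{a}/10}|\rho|\pi_{\vec{a}/10}}$ in the computational basis, so that the mass captured by $\vec{z}$ appears as an explicit linear term that dominates the various quadratic error contributions. Writing $\vec{b} = \vec{a}/10$ and $Z = \prod_i \sqrt{1 + \abs{b_i}^2}$, decompose $\ket{\pi_{\vec{b}}} = Z^{-1}\mparen{\ket{0^n} + \sum_i b_i\ket{e_i} + \ket{h}}$ where $\ket{h}$ is supported on strings of Hamming weight at least $2$. Substituting and using the Hermiticity of $\rho$,
\begin{align*}
Z^2\braket{\pi_{\vec{b}}|\rho|\pi_{\vec{b}}}
&= \braket{0^n|\rho|0^n} + 2\Re(\vec{b}^\dagger \vec{z}) + \vec{b}^\dagger M \vec{b} \\
&\quad + 2\Re\mparen{\braket{h|\rho|0^n} + \sum_j b_j\braket{h|\rho|e_j}} + \braket{h|\rho|h},
\end{align*}
where $M_{ij} = \braket{e_i|\rho|e_j}$. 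Since $M$ is a compression of $\rho$ (hence PSD) and $\rho$ is PSD, the third and fifth terms are nonnegative and may be dropped for a lower bound.

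Next I would extract the dominant gain from $2\Re(\vec{b}^\dagger \vec{z})$. The hypothesis $\norm{\vec{a} - \vec{z}}_2 \leq \norm{\vec{a}}_2/2$ and Cauchy--Schwarz give
\[
\Re(\vec{a}^\dagger \vec{z}) \geq \norm{\vec{a}}_2^2 - \norm{\vec{a}}_2\norm{\vec{a} - \vec{z}}_2 \geq \tfrac{1}{2}\norm{\vec{a}}_2^2,
\]
so this term contributes at least $\norm{\vec{a}}_2^2/(10 Z^2)$ to the inner product. Combining the hypothesis with \cref{claim:norm_z_at_most_half} via the triangle inequality yields $\norm{\vec{a}}_2 \leq \norm{\vec{z}}_2 + \norm{\vec{a}-\vec{z}}_2 \leq 1/2 + \norm{\vec{a}}_2/2$, hence $\norm{\vec{a}}_2 \leq 1$ and $\norm{\vec{b}}_2^2 \leq 1/100$; this boundedness will control the error terms.

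For the losses, the standard product estimates $1 + \norm{\vec{b}}_2^2 \leq Z^2 \leq e^{\norm{\vec{b}}_2^2}$ give $1 - Z^{-2} \leq \norm{\vec{b}}_2^2$, bounding the first term by $-\norm{\vec{a}}_2^2/100$. From $\twonorm{\ket{\pi_{\vec{b}}}} = 1$ one reads off $\twonorm{\ket{h}}^2 = Z^2 - 1 - \norm{\vec{b}}_2^2 \leq \tfrac{1}{2}\norm{\vec{b}}_2^4 e^{\norm{\vec{b}}_2^2}$, so $\twonorm{\ket{h}} = O(\norm{\vec{b}}_2^2)$; Cauchy--Schwarz and $\opnorm{\rho} \leq 1$ then bound the fourth term's magnitude by $2\twonorm{\ket{h}}(1 + \norm{\vec{b}}_2) = O(\norm{\vec{b}}_2^2) = O(\norm{\vec{a}}_2^2/100)$. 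Summing, the gain of at least $\norm{\vec{a}}_2^2/(10 e^{1/100})$ dominates the total loss of order $\norm{\vec{a}}_2^2/100$ by a margin comfortably above $\norm{\vec{a}}_2^2/20$. The main obstacle I expect is purely arithmetic: the factor $1/10$ in $\vec{a}/10$ is calibrated so the linear-in-$\vec{b}$ gain $\sim \norm{\vec{a}}_2^2/10$ beats the quadratic-in-$\vec{b}$ losses $\sim \norm{\vec{a}}_2^2/100$, and verifying that the remaining slack is at least $\norm{\vec{a}}_2^2/20$ requires carefully tracking the various $e^{\norm{\vec{b}}_2^2}$ factors through the explicit estimates above.
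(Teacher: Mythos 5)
Your proof is correct, and it takes a genuinely different route from the paper's. The paper passes to a purification $\ket{\psi}$ of $\rho$, writes the fidelity as $\norm{(\bra{\pi_{\vec{a}/10}} \otimes I)\ket{\psi}}_2^2$, decomposes $\ket{\pi_{\vec{a}/10}}$ as $\sqrt{p_0}\ket{0^n} + \sqrt{p_1}\ket{w_1} + \sqrt{p_{\geq 2}}\ket{w_{\geq 2}}$, and works through a chain of triangle-inequality and Cauchy--Schwarz bounds that leans on \Cref{cor:p0_vs_p2} (the product-distribution tail bound on $p_{\geq 2}$) and on $\dtan$-fidelity estimates for $p_0$. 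You instead expand the quadratic form $\braket{\pi_{\vec{b}}|\rho|\pi_{\vec{b}}}$ directly by Hamming weight in both bra and ket, observe that the weight-$1$ diagonal block $\vec{b}^\dagger M\vec{b}$ and the weight-$\geq 2$ diagonal block $\braket{h|\rho|h}$ are each nonnegative because $\rho$ is PSD, and bound the remaining cross terms by Cauchy--Schwarz with $\opnorm{\rho}\leq 1$ and $\twonorm{\ket{h}} = O(\twonorm{\vec{b}}^2)$. This buys you a more elementary argument that avoids the purification entirely and, in particular, sidesteps the $\frac{1}{\alpha}$ factors in the paper's purification (which implicitly require $\alpha > 0$); on the other hand, the paper's purification-based framing is shared with \Cref{thm:local_opt_works_better} and \Cref{cor:mixed_local_opt_works}, so it keeps the section's notation uniform. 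Both proofs ultimately rest on the same two pillars: the inequality $\Re\langle\vec{a},\vec{z}\rangle \geq \twonorm{\vec{a}}\mparen{\twonorm{\vec{a}} - \twonorm{\vec{a}-\vec{z}}} \geq \tfrac{1}{2}\twonorm{\vec{a}}_2^2$, and the bound $\twonorm{\vec{a}}_2 \leq 1$ deduced from \Cref{claim:norm_z_at_most_half}. Your final arithmetic checks out: the normalized gain is at least $e^{-1/100}\twonorm{\vec{a}}_2^2/10 \approx 0.099\twonorm{\vec{a}}_2^2$, the normalization loss is at most $\twonorm{\vec{a}}_2^2/100$, the cross-term loss is at most about $1.6\twonorm{\vec{a}}_2^2/100$, and $0.099 - 0.026 > 1/20$.
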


\begin{proof}
    Consider a purification $\ket{\psi}$ of $\rho$ in a form similar to \Cref{thm:local_opt_works_better}:
    \[
    \ket{\psi} = \mparen{\alpha \ket{0^n} + \sum_{i=1}^n \frac{\braket{e_i|\rho|0^n}}{\alpha}\ket{e_i} + \beta \ket{v_{\ge 2}}}\ket{g} + \gamma\ket{\perp}\,.
    \]
    We assume without loss of generality that
    \begin{itemize}
        \item $\alpha = \sqrt{\braket{0^n|\rho|0^n}}$ and $\beta$ and $\gamma$ are nonnegative reals,
        \item $\ket{v_{\ge 2}}$, $\ket{g}$, and $\ket{\perp}$ are unit vectors,
        \item $\ket{v_{\ge 2}}$ is supported only on strings of Hamming weight $2$ or larger, and
        \item $\ket{\perp}$ is orthogonal to all states beginning in $\ket{0^n}$ or ending in $\ket{g}$.
    \end{itemize}

    Let us express $\ket{\pi_{\Vec{a}/10}}$ in the form:
    \[
    \ket{\pi_{\Vec{a}/10}} = \sqrt{p_0}\ket{0^n} + \sqrt{p_1}\ket{w_1} + \sqrt{p_{\ge 2}}\ket{w_{\ge 2}}\,,
    \]
    where $p_0$, $p_1$, and $p_{\ge 2}$ are probabilities summing to $1$, $\ket{w_1}$ is supported over strings of weight $1$, $\ket{w_{\ge 2}}$ is supported over strings of weight at least $2$, and both $\ket{w_1}$ and $\ket{w_{\ge 2}}$ are unit vectors.
    We note that:
    \[
    \braket{\pi_{\Vec{a}/10}|e_i} = \frac{a_i^*/10}{\prod_{j=1}^n \sqrt{1 + \abs{a_i/10}^2}} = \frac{a_i^*\sqrt{p_0}}{10}\,.
    \]
    So, the fidelity is equal to:
    \begin{align*}
        \braket{\pi_{\Vec{a}/10}|\rho|\pi_{\Vec{a}/10}} &=
        \norm{(\bra{\pi_{\Vec{a}/10}} \otimes I)\ket{\psi}}_2^2\\
        &= \norm[\Bigg]{\mparen{\alpha\sqrt{p_0} + \sum_{i=1}^n \frac{a_i^*\sqrt{p_0}}{10}\frac{\braket{e_i|\rho|0^n}}{\alpha} + \beta\sqrt{p_{\ge 2}}\braket{w_{\ge 2}|v_{\ge 2}}}\ket{g} + \gamma(\bra{\pi} \otimes I)\ket{\perp}}_2^2\\
        &= \norm[\Bigg]{\mparen{\alpha\sqrt{p_0} + \frac{\sqrt{p_0}\braket{\Vec{a}, \Vec{z}}}{10\alpha} + \beta\sqrt{p_{\ge 2}}\braket{w_{\ge 2}|v_{\ge 2}}}\ket{g} + \gamma(\bra{\pi} \otimes I)\ket{\perp}}_2^2\\
        &= \norm[\Bigg]{\mparen{\alpha\sqrt{p_0} + \frac{\sqrt{p_0}\braket{\Vec{a}, \Vec{z}}}{10\alpha} + \beta\sqrt{p_{\ge 2}}\braket{w_{\ge 2}|v_{\ge 2}}}\ket{g}}_2^2 + \norm{\gamma(\bra{\pi} \otimes I)\ket{\perp}}_2^2
    \end{align*}
    by the Pythagorean theorem, because $\ket{\perp}$ has no support on $\ket{g}$.
    We can lower bound this by:
    \begin{align*}
        \braket{\pi_{\Vec{a}/10}|\rho|\pi_{\Vec{a}/10}}
        &\ge \abs[\Bigg]{\alpha\sqrt{p_0} + \frac{\sqrt{p_0}\braket{\Vec{a}, \Vec{z}}}{10\alpha} + \beta\sqrt{p_{\ge 2}}\braket{w_{\ge 2}|v_{\ge 2}}}^2\\
        &= \abs[\Bigg]{\alpha\sqrt{p_0} + \frac{\sqrt{p_0}\mparen{
        \norm{\vec{a}}_2^2 - \braket{\vec{a},\vec{a} - \vec{z}}
        }}{10\alpha} + \beta\sqrt{p_{\ge 2}}\braket{w_{\ge 2}|v_{\ge 2}}}^2\\
        &\ge \mparen{\alpha\sqrt{p_0} + \frac{\sqrt{p_0}\mparen{
        \norm{\vec{a}}_2^2 - \abs{\braket{\vec{a},\vec{a} - \vec{z}}}
        }}{10\alpha} - \beta\sqrt{p_{\ge 2}}}^2 && (\text{Triangle inequality})\\
        &\ge \mparen{\alpha\sqrt{p_0} + \frac{\sqrt{p_0}\mparen{
        \norm{\vec{a}}_2^2 - \norm{\vec{a}}_2\norm{\vec{a} - \vec{z}}_2
        }}{10\alpha} - \beta\sqrt{p_{\ge 2}}}^2  && (\text{Cauchy-Schwarz})\\
        &\ge \mparen{\alpha\sqrt{p_0} + \frac{\sqrt{p_0}
        \norm{\vec{a}}_2^2}{20\alpha} - \beta\sqrt{p_{\ge 2}}}^2  && (\norm{\Vec{a} - \Vec{z}}_2 \le  \norm{\Vec{a}}_2/2)\\
        &\ge \mparen{\alpha \sqrt{p_0} + \frac{\sqrt{p_0}
        \norm{\vec{a}}_2^2}{20\alpha} - \beta\sqrt{2}(1 - \sqrt{p_0})}^2 && (\mathrm{\Cref{cor:p0_vs_p2}})\\
        &\ge \mparen{\alpha p_0 + \frac{p_0
        \norm{\vec{a}}_2^2}{20\alpha} - \beta\sqrt{2}(1 - p_0)}^2\,. && (p_0 \le 1)
    \end{align*}

    Let us now obtain some bounds on $p_0$.
    From \Cref{lem:dtan-fidelity-approx}, we know that
    \[
    p_0 = \prod_{i=1}^n \frac{1}{1 + \abs{a_i/10}^2} \ge e^{-\dtan(\ket{\pi_{\Vec{a}/10}}, \ket{0^n})^2} = e^{-\norm{\Vec{a}/10}_2^2} \ge 1 - \frac{\norm{\Vec{a}}_2^2}{100}\,.
    \]
    By \Cref{claim:norm_z_at_most_half}, because $\norm{\Vec{z}}_2 \le \frac{1}{2}$, we know that
    \[
    \norm{\Vec{a}}_2\le \norm{\Vec{z}}_2 + \norm{\Vec{a} - \Vec{z}}_2 \le \frac{1 + \norm{\Vec{a}}_2}{2}\,,
    \]
    and therefore $\norm{\Vec{a}}_2 \le 1$.
    This further implies $p_0 \ge 0.99$. Using these two bounds on $p_0$, we obtain the desired lower bound:
    \begin{align*}
        \braket{\pi_{\Vec{a}/10}|\rho|\pi_{\Vec{a}/10}}
        &\ge \mparen{\alpha\mparen{1 - \frac{\norm{\Vec{a}}_2^2}{100}} + \frac{0.99\norm{\Vec{a}}_2^2}{20\alpha} - \beta\sqrt{2}\frac{\norm{\Vec{a}}_2^2}{100}}^2\\
        &\ge \mparen{\alpha + \frac{0.0495}{\alpha}\norm{\Vec{a}}_2^2 - \frac{1 + \sqrt{2}}{100}\norm{\Vec{a}}_2^2 }^2 && (\alpha, \beta \le 1)\\
        &\ge \mparen{\alpha + \frac{0.0395 - \sqrt{2}/100}{\alpha}\norm{\Vec{a}}_2^2}^2 && (\alpha \le 1)\\
        &\ge \mparen{\alpha + \frac{1}{40\alpha}\norm{\Vec{a}}_2^2}^2\\
        &\ge \alpha^2 + \frac{\norm{\Vec{a}}_2^2}{20}\\
        &= \braket{0^n|\rho|0^n} + \frac{\norm{\Vec{a}}_2^2}{20}\,.\tag*{\qedhere}
    \end{align*}
\end{proof}

\subsection{Local optimization algorithms}

Let's briefly take a step back to show the power of the combination of \Cref{cor:mixed_local_opt_works}  and \Cref{thm:local_opt_improvement_with_error}.
Say that (in some appropriately chosen basis), we've found that
\[
\braket{0^n|\rho|0^n} = 2/3 + c\,.
\]
Suppose that the largest fidelity achievable with $\rho$ by any product state is $2/3 + D$.
Then \Cref{cor:mixed_local_opt_works} shows that
\[
2/3 + D \le 2/3 + c + \min\mbrace{3\norm{\vec{z}}_2, \frac{\norm{\vec{z}}_2^2}{c}}\,,
\]
or equivalently
\[
    \norm{\Vec{z}}_2^2 \ge \max\mbrace{\frac{(D-c)^2}{9}, c(D - c)}\,,
\]
which further implies
\[
    \norm{\Vec{z}}_2^2 \ge \frac{D(D - c)}{10}\,.
\]
To simplify things, imagine that we were able to learn $\vec{z}$ \textit{exactly}.
Then \Cref{thm:local_opt_improvement_with_error} shows that we can find a product state $\ket{\pi}$ whose fidelity with $\rho$ is at least
\[
\braket{\pi|\rho|\pi} \ge 2/3 + c + \frac{\norm{z}_2^2}{20}+\,.
\]
Combining these two shows that after a single local update according to \Cref{thm:local_opt_improvement_with_error}, $c$ increases at least as fast as
\[
c \to c + \frac{D(D - c)}{200}\,.
\]
If we make such local updates repeatedly, the fidelity of our product state with $\rho$ converges towards $2/3 + D$ exponentially quickly. This is the high-level idea behind our local optimization procedure, \Cref{alg:local_opt_no_logs} below.

We first take a small detour to show how to learn an approximation of $\Vec{z}$ in a sample- and time-efficient manner, via a simple modification of the classical shadows protocol~\cite{hkp20}.

\begin{lemma}
    \label{lem:estimating_z}
    For an $n$-qubit density matrix $\rho$, define $\Vec{z} \in \C^n$ by
    \[
    z_i = \braket{e_i|\rho|0^n}\,.
    \]
    Then there is a procedure to find $\Vec{a}$ satisfying $\norm{\Vec{a} - \Vec{z}}_2 \le \eps$ with probability $1 - \delta$ that uses $O(\frac{n}{\eps^2} \log \frac{1}{\delta})$ copies of $\rho$ and $O\mparen{\frac{n^2 \log n}{\eps^2}\log\frac{1}{\delta} + n\log^2 \frac{1}{\delta}}$ time.
\end{lemma}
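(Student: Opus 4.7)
The plan is to cast this as estimating $n$ simultaneous linear functionals $z_i = \tr(\ket{0^n}\bra{e_i}\rho)$ of $\rho$ by running the classical-shadow protocol of~\cite{hkp20} with random $n$-qubit Clifford measurements and then aggregating the per-copy estimates via a vector-valued median-of-means. For each of $N$ copies of $\rho$, I would sample a uniformly random Clifford $U^{(k)}$, apply it, measure in the computational basis to obtain $b^{(k)}$, and form the shadow $\hat\rho^{(k)} = (2^n+1)\,U^{(k)\dagger}\proj{b^{(k)}}U^{(k)} - I$, which is unbiased for $\rho$. The per-coordinate estimator $\hat z_i^{(k)} = \bra{e_i}\hat\rho^{(k)}\ket{0^n}$ is then unbiased for $z_i$, so the vector $\hat{\Vec{z}}^{(k)} = (\hat z_i^{(k)})_i \in \C^n$ is unbiased for $\Vec{z}$.

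The main technical step is controlling $\E[\twonorm{\hat{\Vec{z}}^{(k)} - \Vec{z}}^2]$. Splitting $\ket{0^n}\bra{e_i}$ into its Hermitian and anti-Hermitian parts $H_i,A_i$, the orthogonality $\ket{0^n} \perp \ket{e_i}$ makes the cross terms in $H_i^2$ and $A_i^2$ vanish, giving $\tr(H_i^2) = \tr(A_i^2) = 1/2$. The Clifford-shadow variance bound $\Var(\tr(O\hat\rho^{(k)})) \leq 3\tr(O^2)$ for traceless Hermitian $O$~\cite[Prop.~S1]{hkp20} then yields $\E[\abs{\hat z_i^{(k)} - z_i}^2] \leq 3$ for each $i$, and summing gives $\E[\twonorm{\hat{\Vec{z}}^{(k)} - \Vec{z}}^2] \leq 3n$. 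To boost to a high-probability $\ell^2$ bound, I would partition the $N$ shadows into $m = \Theta(\log(1/\delta))$ equal batches, compute the batch means $\bar{\Vec{z}}^{(b)} \in \C^n$, and output their geometric median $\Vec{a}$; a standard sub-Gaussian bound for the geometric median of a bounded-second-moment random vector (e.g.\ Minsker's inequality) gives $\twonorm{\Vec{a} - \Vec{z}} \leq C\sqrt{mn/N}$ with probability at least $1-\delta$, so $N = \Theta(n\log(1/\delta)/\eps^2)$ copies suffice.

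The hardest part of the full proof will be matching the stated runtime. Sampling and implementing each random Clifford is $O(n^2)$ via a canonical form, but naively computing the $n+1$ amplitudes $\bra{b^{(k)}}U^{(k)}\ket{0^n}$ and $\bra{b^{(k)}}U^{(k)}\ket{e_i}$ independently via the stabilizer formalism would cost $O(n^4)$ per shadow. To reach $O(n\log n)$ work per shadow I would exploit the identity $\ket{e_i} = X_i\ket{0^n}$: once the stabilizer tableau for $U^{(k)}\ket{0^n}$ has been computed, each of the $n$ remaining amplitudes is obtained by a single-qubit Pauli update and one basis-state amplitude query, amortized to $O(\log n)$ apiece using a balanced-tree data structure for the tableau. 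Summing across $N$ shadows gives the $O(n^2\log n \cdot \log(1/\delta)/\eps^2)$ main term, and running a standard geometric-median procedure on $m$ vectors in $\C^n$ contributes the additive $O(n\log^2(1/\delta))$.
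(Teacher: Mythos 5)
Your statistical argument is essentially identical to the paper's, merely with a different tally: you treat each $z_i$ as one complex observable and split it into Hermitian and anti-Hermitian parts to get a per-coordinate variance of $3$, while the paper treats the same split as $2n$ real observables each of variance $3/2$; both arrive at $\E[\twonorm{\hat{\vec z} - \vec z}^2] \le 3n$, and both boost to high probability by a vector median-of-means in $\ell^2$ distance. That part is correct.

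The runtime, however, has a real gap, and you yourself flag it as the hardest part. Your plan stays entirely in the $n$-qubit space: sample an $n$-qubit Clifford $U$, measure, and recover the $n+1$ amplitudes $\braket{b | U | 0^n}$ and $\braket{b | U | e_i}$. But as you concede, sampling and implementing each random $n$-qubit Clifford already costs $\Theta(n^2)$ per shadow (the group has $2^{\Theta(n^2)}$ elements, so $\Omega(n^2)$ random bits are unavoidable); with $N = \Theta(n\log(1/\delta)/\eps^2)$ shadows this alone is $\Theta(n^3 \log(1/\delta)/\eps^2)$, which exceeds the stated $O(n^2 \log n \log(1/\delta)/\eps^2)$. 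So the claimed ``$O(n\log n)$ work per shadow'' contradicts your own accounting. Separately, the proposed amortization — computing $n$ stabilizer amplitudes at basis states $\ket{b \oplus x_i}$ in $O(\log n)$ each ``using a balanced-tree data structure for the tableau'' — is not a standard primitive and is not justified; stabilizer amplitude queries ordinarily cost $\Theta(n^2)$ each even after tableau preprocessing (you must solve membership in an affine subspace and evaluate a quadratic form). The paper sidesteps all of this with a quantum preprocessing trick you do not mention: for each copy of $\rho$, coherently map the Hamming-weight-$\le 1$ subspace into an ancilla register of $O(\log n)$ qubits, and only then run the random-Clifford shadow protocol on that small register. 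This makes the ambient dimension $O(n)$ rather than $2^n$, so Clifford sampling costs $O(\log^2 n)$, the full amplitude vector of $U^\dagger \ket{\hat b}$ is computed in $O(n\log n)$, and each sparse observable $O_i$ is evaluated in $O(1)$. Without this dimension reduction the stated runtime bound does not follow.
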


\begin{proof}
    Note that the real part of $\vec{z}$ is
    \[
    \Re(z_i) = \Tr\mparen{ 
    \frac{\ketbra{0^n}{e_i} + \ketbra{e_i}{0^n}}{2}
    \rho
    },
    \]
    and the imaginary part is
    \[
    \Im(z_i) = \Tr\mparen{
    \rho
    \frac{i\ketbra{0^n}{e_i} - i\ketbra{e_i}{0^n}}{2}
    \rho
    }\,.
    \]
    So, it suffices to obtain an $\ell^2$-error estimate of the $2n$ observables
    \[
    \mbrace{\frac{\ketbra{0^n}{e_i} + \ketbra{e_i}{0^n}}{2}
    , \frac{i\ketbra{0^n}{e_i} - i\ketbra{e_i}{0^n}}{2}}_{i \in [n]}\,,
    \]
    which we will call $O_1,\ldots,O_{2n}$.
    
    For some $N$ and $K$ that we choose later, we will use $NK$ copies of $\rho$ to produce estimates $\hat{o}_i(N, K)$ of each $o_i = \tr(O_i \rho)$. Our success criterion will then be
    \[
    \sum_{i=1}^{2n} (o_i - \hat{o}_i(N, K))^2 \le \eps^2\,.
    \]
    To do so, we use the classical shadows framework of~\cite{hkp20}. Consider measuring $\rho$ in a random Clifford basis $U$, obtaining outcome $\ket{\hat{b}}$. Following~\cite[Eqs. (S16) and (S5)]{hkp20}, we define
    \[
    \hat{\rho} \coloneqq (2^n + 1)U^\dagger \ketbra{\hat{b}}{\hat{b}}U - I
    \]
    to be the classical shadow, and
    \[
    \hat{o}_i = \tr(O_i \hat{\rho})
    \]
    the estimator corresponding to $O_i$.
    The key fact shown in~\cite[Lemma 1 and Eq. (S16)]{hkp20} is
    \[
    \E[\hat{o}_i] = o_i \qquad\qquad\text{and}\qquad\qquad \Var[\hat{o}_i] \le 3\tr(O_i)^2 = \frac{3}{2}\,.
    \]
    So, the expected sum of squared deviations is at most:
    \[
    \E\mbracket{\sum_{i=1}^{2n} (o_i - \hat{o}_i)^2} = \sum_{i=1}^{2n} \E\mbracket{ (o_i - \hat{o}_i)^2} \le 3n\,.
    \]
    If we take $N$ classical shadows $\hat{\rho}_1,\ldots,\hat{\rho}_N$ and compute the mean of their estimators, as in~\cite[Eq. (S11)]{hkp20}:
    \[
    \hat{o}_i(N, 1) \coloneqq \frac{1}{N} \sum_{j=1}^N \tr(O_i \hat{\rho}_j)\,,
    \]
    the expected sum of squared deviations is decreased by a factor of $N$:
    \[
    \E\mbracket{\sum_{i=1}^{2n} (o_i - \hat{o}_i(N, 1))^2} \le \frac{3n}{N}\,.
    \]
    Choosing $N = \frac{n}{27\eps^2}$, by Markov's inequality we will have
    \[
    \Pr\mbracket{\sum_{i=1}^{2n} (o_i - \hat{o}_i(N, 1))^2 \le \frac{\eps^2}{3}} \ge \frac{2}{3}\,.
    \]
    To boost the success probability from $2/3$ to $1 - \delta$, we can use a median-of-means trick by letting $\vec{\hat{o}}(N,K)$ be the ``median'' of $K = O(\log \frac{1}{\delta})$ independent samples from $\vec{\hat{o}}(N,1)$; see for example~\cite[Proposition 2.4]{hkot23}. (Note: it is important that the ``median'' in this case is performed with respect to the entire vectors $\vec{\hat{o}}(N,1)$ rather than entrywise on $\hat{o}_i(N,1)$; see~\cite{hkot23} for details. The important feature is that the measure of error, $\ell^2$ distance, is a metric.)

    It remains to bound the runtime. Naively, operating on the classical shadows takes exponential time. However, because all of the observables are supported only the subspace of Hamming weight $0$ or $1$, we can encode the information about this subspace into a register of just $O(\log n)$ qubits and perform the classical shadows there. For each sample of $\rho$, we append a register of $1 + \lceil \log_2(n + 1) \rceil$ qubits initialized to $\ket{0}$. We coherently set the first appended qubit to $\ket{1}$ conditioned on having Hamming weight $0$ or $1$ in the $n$-qubit register. Then, conditioned on having Hamming weight $0$ or $1$, we populate the remainder of the register with the binary representation of the qubit that was set to $\ket{1}$, or zero if no such qubit exists.
    Now we can take the classical shadows entirely within this $O(\log n)$-qubit register.

    Mapping $\rho$ into this $O(\log n)$-qubit register takes $O(n\log n)$ time per classical shadow.
    Next, sampling the random Clifford $U$ and then measuring to get $\ket{\hat{b}}$ takes $O(\log^2 n)$ time per classical shadow~\cite{vdB20-clifford}. We then compute the entire amplitude vector of $U^\dagger \ket{\hat{b}}$ in $\C^{O(n)}$, which takes time $O(n \log n)$ per classical shadow~\cite[Algorithm 2]{dSSY23-fast}. Since the $O_i$'s are sparse, computing each $\hat{o}_i = \tr(O_i\hat{\rho})$ takes $O(1)$ time by using $U^\dagger \ket{\hat{b}}$ as a lookup table. Finally, as noted in~\cite[Proposition 2.4]{hkot23}, the ``median-of-means procedure'' takes $O(K^2)$ times the cost of computing the distance between two vectors $\vec{\hat{o}}(N,1)$, for a total of $O(K^2n)$ time.\footnote{$O(K^2n)$ is a worst-case bound, but the average-case time complexity is easily seen to be $O(Kn)$. We believe that a more careful accounting of this factor over the iterations of \Cref{alg:local_opt_no_logs} could remove the $\log^2\frac{1}{\delta}$ factors that appear in the runtime bounds later in this section.} The overall runtime is therefore:
    \[
    O(NK(n \log n) + K^2n) = O\mparen{\frac{n^2 \log n}{\eps^2}\log\frac{1}{\delta} + n\log^2 \frac{1}{\delta}}\,.\qedhere
    \]
\end{proof}

We remark that, instead of the lemma above, one could simply appeal to~\cite[Theorem 1]{hkp20} as a black box to estimate each $z_i$ to additive error $\sqrt{\frac{\eps}{n}}$. However, our approach saves a $\log(n)$ factor in the sample complexity because we only need to approximate $\vec{z}$ in $\ell^2$ distance, rather than $\ell^\infty$.

We can now state our algorithm for optimizing toward a global maximizer of product state fidelity.

\begin{longfbox}[breakable=false, padding=1em, margin-top=1em, margin-bottom=1em]
\begin{algorithm}
\label{alg:local_opt_no_logs}
    (Local product state optimization).
\begin{description}
    \item[Input:] Copies of an $n$-qubit state $\rho$, an $n$-qubit product state $\ket{\pi}$, $\eps \in (0, 1/3]$, $\delta \in (0, 1)$, $C \in [0, 1/3]$
    \item[Promise:] $\braket{\pi|\rho|\pi} \ge 2/3$, and $\max_{\text{product }\ket{\pi'}} \braket{\pi'|\rho|\pi'} \ge 2/3 + C$
    \item[Output:] A product state $\ket{\pi}$ satisfying $\braket{\pi|\rho|\pi} \ge \max_{\text{product }\ket{\pi'}} \braket{\pi'|\rho|\pi'} - \eps$ with probability at least $1 - \delta$
    \item[Procedure:] \mbox{}
    \begin{algorithmic}[1]
        \State $C' \coloneqq \max\{\eps, C\}$ \;
        \State $m \coloneqq \left\lceil \frac{1}{2}\log\mparen{\frac{90}{C'\eps}} \right\rceil$ \;
        \Loop
            \State Find a product unitary $U$ such that $U\ket{\pi} = \ket{0^n}$ \;
            \State Define $\Vec{z}$ by $z_i = \braket{e_i|U\rho U^\dagger|0^n}$ for $i \in [n]$\;

            \For{$\lambda = 1,\ldots,m$}\label{line:for_loop_lambda}
                \State $\ell_\lambda \coloneqq m + 1 - \lambda$ \;
                \State $\delta_\lambda \coloneqq \delta 2^{-\ell_\lambda}\mparen{\left\lceil 5\eps e^{2m} \right\rceil + \left\lceil \frac{900\ell_\lambda}{C'} \right\rceil}^{-1}$
                \State Use \Cref{lem:estimating_z} to find $\Vec{a}$ satisfying $\norm{\Vec{a} - \Vec{z}}_2 \le e^{-\lambda}$ with prob.\ $1 - \delta_\lambda$ \label{line:a_minus_z_bound_no_logs}\;
                \If{$\norm{\Vec{a}}_2 \ge 2e^{- \lambda}$}
                    \State $\ket{\pi} \coloneqq U^\dagger \ket{\pi_{\Vec{a}/10}}$ \label{line:update_qubit_no_logs}\; 
                    \State Exit for-loop\;
                \ElsIf{$\lambda = m$} 
                    \State \Return $\ket{\pi}$ \label{line:a_not_big_enough}\;
                \EndIf
            \EndFor
        \EndLoop
    \end{algorithmic}
\end{description}
\end{algorithm}
\end{longfbox}

We note that the parameter $C$ is effectively optional: one can always set $C = 0$ and the algorithm will be correct. However, the algorithm becomes more efficient when $C$ is larger.

We will show the correctness of \Cref{alg:local_opt_no_logs} in a series of smaller steps. First we lower bound the improvement from updates:

\begin{claim}
    \label{claim:fidelity_always_increases_no_logs}
    In a given non-terminating iteration of the outer loop of \Cref{alg:local_opt_no_logs}, if \Cref{line:a_minus_z_bound_no_logs} does not err, then $\braket{\pi|\rho|\pi}$ increases by at least $\frac{\norm{\vec{a}}_2^2}{20}$.
\end{claim}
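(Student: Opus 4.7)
The plan is to directly apply \Cref{thm:local_opt_improvement_with_error} to the conjugated density matrix $U\rho U^\dagger$, after checking that the hypotheses of that theorem hold on any iteration of the outer loop in which an update to $\ket{\pi}$ actually occurs.

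First I would observe that, since the iteration under consideration is non-terminating, the algorithm does not exit via the return at \Cref{line:a_not_big_enough}. The only other way for the inner for-loop over $\lambda$ to conclude is by hitting the branch that updates $\ket{\pi} \coloneqq U^\dagger\ket{\pi_{\vec{a}/10}}$ and breaks out of the for-loop; in particular this means that the triggering condition $\norm{\vec{a}}_2 \ge 2e^{-\lambda}$ was satisfied for the relevant value of $\lambda$. Combined with the hypothesis that \Cref{line:a_minus_z_bound_no_logs} did not err, we have $\norm{\vec{a} - \vec{z}}_2 \le e^{-\lambda} \le \tfrac{1}{2}\norm{\vec{a}}_2$, which is exactly the precondition of \Cref{thm:local_opt_improvement_with_error}.

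Next I would translate the quantities appearing in the theorem to our setting. Let $\sigma \coloneqq U\rho U^\dagger$, which is a valid $n$-qubit density matrix. By construction, $\vec{z}$ is defined by $z_i = \braket{e_i|\sigma|0^n}$, so $\vec{z}$ is precisely the vector to which \Cref{thm:local_opt_improvement_with_error} refers. Applying the theorem to $\sigma$ and the estimate $\vec{a}$, I get
\[
\braket{\pi_{\vec{a}/10}|\sigma|\pi_{\vec{a}/10}} \ge \braket{0^n|\sigma|0^n} + \frac{\norm{\vec{a}}_2^2}{20}.
\]
Finally I would unwind the conjugation by $U$: since $U\ket{\pi} = \ket{0^n}$, the old value is $\braket{\pi|\rho|\pi} = \braket{0^n|\sigma|0^n}$, and since the updated state is $U^\dagger\ket{\pi_{\vec{a}/10}}$, the new value is $\braket{\pi_{\vec{a}/10}|\sigma|\pi_{\vec{a}/10}}$. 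Subtracting gives the desired increase of at least $\norm{\vec{a}}_2^2/20$.

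There is really no main obstacle here; the content of the claim is essentially a bookkeeping exercise that combines the triggering condition on $\norm{\vec{a}}_2$ with \Cref{thm:local_opt_improvement_with_error}. The only subtlety worth double-checking is that $U$ can indeed be chosen to be a product unitary (so that $U\rho U^\dagger$ makes sense as a state on which the theorem applies with respect to the computational basis), which is immediate because $\ket{\pi}$ is a product state.
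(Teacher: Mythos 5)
Your proposal is correct and follows the same route as the paper: observe that a non-terminating iteration reaches the update line with $\norm{\vec{a}}_2 \ge 2e^{-\lambda} \ge 2\norm{\vec{a}-\vec{z}}_2$, then apply \Cref{thm:local_opt_improvement_with_error} to $U\rho U^\dagger$ and unwind the conjugation. The paper's version is just terser in the bookkeeping.
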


\begin{proof}
    $\ket{\pi}$ only changes in \Cref{line:update_qubit_no_logs}, so consider an iteration in which the algorithm reaches \Cref{line:update_qubit_no_logs}, and therefore $\norm{\Vec{a}}_2 \ge 2e^{- \lambda} \ge 2\norm{\vec{a} - \vec{z}}_2$ for some $\lambda$. Then we may appeal to \Cref{thm:local_opt_improvement_with_error} to conclude that
    \begin{align*}
        \braket{\pi_{\Vec{a}/10}|U\rho U^\dagger|\pi_{\Vec{a}/10}} - \braket{\pi|\rho|\pi} &= \braket{\pi_{\Vec{a}/10}|U\rho U^\dagger|\pi_{\Vec{a}/10}} - \braket{0^n|U\rho U^\dagger|0^n}\\
        &\ge \frac{\norm{\Vec{a}}_2^2}{20}\,.\qedhere
    \end{align*}
\end{proof}

The helper lemma below will be used to show that the product state fidelity converges to within $\eps$ of the optimum in roughly $O\mparen{\log \frac{1}{\eps}}$ iterations of the outer loop.

\begin{lemma}
    \label{lem:exponential_decay}
    Let $\{x_i\}_{i \in \N}$ be a sequence satisfying $x_0 \ge c \ge 0$ and
    \[
    x_{i+1} \ge \min\left\{x_i + \frac{D(D - x_i)}{r}, D - \eps\right\}
    \]
    for some $r > D$.
    If we define
    \[
    k \coloneqq \frac{r}{D}\log\mparen{\frac{D - c}{\eps}}\,,
    \]
    then for all $i \ge k$, $x_i \ge D - \eps$.
\end{lemma}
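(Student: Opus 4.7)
The plan is to change variables to $y_i \coloneqq D - x_i$, which turns the one-sided lower bound on $x_{i+1}$ into a one-sided upper bound on $y_{i+1}$ that exhibits geometric contraction. From $y_0 \le D - c$ and $y_i \le \eps$ stable under the recursion, the target $x_i \ge D - \eps$ becomes $y_i \le \eps$, and a standard linear-recurrence estimate gives the claimed number of iterations.

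\emph{Step 1: Derive the contraction for $y_i$.} From the hypothesis $x_{i+1} \ge \min\{x_i + D(D-x_i)/r,\, D - \eps\}$, negating and adding $D$ gives
\[
y_{i+1} \;\le\; \max\!\big\{(1 - D/r)\, y_i,\; \eps\big\}.
\]
So one of two things happens at each step: either $y_{i+1} \le (1 - D/r)\, y_i$ (geometric contraction by factor $1 - D/r \in (0,1)$, using $r > D$), or $y_{i+1} \le \eps$.

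\emph{Step 2: Absorbing state at $\eps$.} I would next observe that once $y_j \le \eps$, the recursion gives $y_{j+1} \le \max\{(1-D/r)\eps, \eps\} = \eps$, and inductively $y_i \le \eps$ for all $i \ge j$. So it suffices to find the first index $k$ at which $y_k \le \eps$ can be guaranteed. Until that point, $y_i \le (1 - D/r)^i \, y_0 \le (1 - D/r)^i (D-c)$ by iterating the contraction half of the max in Step 1.

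\emph{Step 3: Converting to a logarithmic bound on $k$.} Using the elementary inequality $1 - x \le e^{-x}$ with $x = D/r$, we have $(1 - D/r)^i \le e^{-iD/r}$. Substituting the stated choice $k = (r/D) \log((D-c)/\eps)$ gives $e^{-kD/r} = \eps/(D-c)$, whence $(D-c)(1 - D/r)^k \le \eps$, i.e., $y_k \le \eps$. Combined with Step 2, this yields $y_i \le \eps$ (equivalently $x_i \ge D - \eps$) for every $i \ge k$, as required.

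The proof is essentially a routine linear-recurrence argument; there is no real obstacle beyond being careful about which branch of the $\min$ in the defining recurrence is active. The only mildly subtle point is verifying that the ``$\eps$-absorbing'' behavior in Step 2 is genuine—this uses nothing more than $1 - D/r < 1$—after which the logarithmic count in Step 3 follows from $\log(1/(1-D/r)) \ge D/r$.
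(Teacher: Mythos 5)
Your proof is correct and takes essentially the same approach as the paper: both linearize the recursion to extract the geometric contraction factor $(1 - D/r)$ and then apply $(1 - D/r)^i \le e^{-iD/r}$ to convert to a logarithmic iteration count. The paper introduces a separate auxiliary sequence (with $y_0 = c$, $y_{i+1} = y_i + D(D-y_i)/r$) and shows $x_i \ge \min\{y_i, D-\eps\}$, whereas you directly substitute $y_i = D - x_i$ and track the one-sided bound $y_{i+1} \le \max\{(1-D/r)y_i, \eps\}$ with an absorbing-state argument at $\eps$; the latter is a slightly cleaner bookkeeping of the same idea.
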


\begin{proof}
    We assume without loss of generality that $c \le D - \eps$, because otherwise $k$ is negative and the statement clearly holds.

    We first note that $x_i \ge t$ implies
    \[
    x_{i+1} \ge \min\left\{t + \frac{D(D - t)}{r}, D - \eps\right\}\,,
    \]
    because $f(t) = t + \frac{D(D - t)}{r}$ is increasing on $[0, D]$.
    So, if we define the sequence $\{y_i\}_{i \in \N}$ by $y_0 = c$ and
    \[
    y_{i+1} = y_i + \frac{D(D - y_i)}{r}\,,
    \]
    then $x_i \ge \min \{y_i, D - \eps\}$, and thus it suffices to show that $y_{i} \ge D - \eps$ for all $i \ge k$.

    We can write the definition of $y$ equivalently as
    \[
    D - y_{i+1} = D - y_i - \frac{D(D - y_i)}{r} = \mparen{1 - \frac{D}{r}}(D - y_i)\,.
    \]
    In other words, $D - y_i$ decays exponentially as
    \[
    D - y_{i} = \mparen{1 - \frac{D}{r}}^{i}(D - c)\,.
    \]
    Thus choosing
    \[
    k' = \frac{\log\mparen{\frac{\eps}{D - c}}}{\log\mparen{1 - \frac{D}{r}}}
    \]
    guarantees that $y_i \ge D - \eps$ for all $i \ge k'$. Because
    \begin{align*}
        k' &= -\frac{\log\mparen{\frac{D-c}{\eps}}}{\log\mparen{1 - \frac{D}{r}}}\\
        &\le \frac{r}{D}\log\mparen{\frac{D - c}{\eps}} && (\log(1+x) \le x, \log\mparen{\frac{D-c}{\eps} \ge 0})\\
        &\le k\,,
    \end{align*}
    the lemma follows.
\end{proof}

\begin{lemma}
    \label{lem:local_opt_iters_to_converge}
    Suppose that
    \[
    \max_{\mathrm{product}\ \ket{\pi'}}\braket{\pi'|\rho|\pi'} = 2/3 + D\,,
    \]
    and suppose $\braket{\pi|\rho|\pi} \ge 2/3 + c \ge 2/3$ in some iteration of \Cref{line:a_minus_z_bound_no_logs}'s outer loop. 
    Then conditioned on \Cref{line:a_minus_z_bound_no_logs} never erring thereafter,
    \Cref{alg:local_opt_no_logs} returns a $\ket{\pi}$ satisfying the output condition (i.e., $\braket{\pi|\rho|\pi} \ge 2/3 + D - \eps$) within at most
    \[
    \left\lceil5\eps e^{2m}\right\rceil + \max\mbrace{0, \left\lceil\frac{450}{D}\log\mparen{\frac{D - c}{\eps}}\right\rceil}
    \]
    additional iterations of the outer loop.
\end{lemma}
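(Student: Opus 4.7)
The plan is to partition the additional outer-loop iterations into two phases based on the current value $c_i \coloneqq \braket{\pi|\rho|\pi} - 2/3$ (writing $c_0 = c$), and bound each phase separately. Phase~1 corresponds to iterations in which $c_i < D - \eps$; Phase~2 corresponds to the remainder, including the single terminating iteration.

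\textbf{Per-iteration fidelity gain.} Once the product rotation $U$ is chosen so that $U\ket{\pi}=\ket{0^n}$, \Cref{cor:mixed_local_opt_works} applied to $U\rho U^\dagger$ (with $c$ replaced by $c_i$) yields $D - c_i \le \min\mbrace{3\norm{\vec{z}}_2,\, \norm{\vec{z}}_2^2/c_i}$, and standard arithmetic gives $\norm{\vec{z}}_2^2 \ge D(D - c_i)/10$. Suppose the inner for-loop succeeds at some $\lambda = \lambda_i$. By \Cref{claim:fidelity_always_increases_no_logs}, the fidelity gain is at least $\norm{\vec{a}}_2^2/20$. I claim this is in turn at least $\norm{\vec{z}}_2^2/45$, by case analysis on $\lambda_i$: if $\norm{\vec{z}}_2 \ge 3e^{-\lambda_i}$, then $\norm{\vec{a}}_2 \ge \norm{\vec{z}}_2 - e^{-\lambda_i} \ge \tfrac{2}{3}\norm{\vec{z}}_2$, giving gain $\ge \norm{\vec{z}}_2^2/45$; otherwise $e^{-\lambda_i} > \norm{\vec{z}}_2/3$, so gain $\ge (2e^{-\lambda_i})^2/20 > \norm{\vec{z}}_2^2/45$. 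Combining with $\norm{\vec{z}}_2^2 \ge D(D - c_i)/10$ yields gain $\ge D(D - c_i)/450$ in any non-terminating iteration.

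\textbf{Phase 1.} While $c_i < D - \eps$, I must verify that the for-loop does not prematurely terminate (i.e., $\lambda_i \le m$). Success is guaranteed once $\lambda \ge \lambda^* \coloneqq \lceil \log(3/\norm{\vec{z}}_2) \rceil$, and the bound $\norm{\vec{z}}_2^2 \ge D(D - c_i)/10 > D\eps/10 \ge C'\eps/10$ together with $m = \lceil \tfrac{1}{2}\log(90/(C'\eps))\rceil$ gives $\lambda^* \le m$. (The inequality $D \ge C'$ used here holds throughout Phase~1: $c_i < D - \eps$ and $c_i \ge 0$ force $D > \eps$, and $D \ge C$ by assumption. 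If instead $D < \eps$, then $c_0 \ge 0 \ge D - \eps$, so Phase~1 is empty.) Hence $c_{i+1} \ge c_i + D(D - c_i)/450$ throughout Phase~1, and \Cref{lem:exponential_decay} bounds its length by $\lceil (450/D)\log((D - c)/\eps) \rceil$, or $0$ if $c \ge D - \eps$ already.

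\textbf{Phase 2 and termination.} Once $c_i \ge D - \eps$, any additional non-terminating iteration still gains at least $(2e^{-m})^2/20 = e^{-2m}/5$ in fidelity, using only $\lambda_i \le m$. Since the remaining headroom to $\OPT = 2/3 + D$ is at most $\eps$, the number of such iterations is at most $\lceil 5\eps e^{2m}\rceil$. The algorithm then terminates at the single iteration where the for-loop exhausts $\lambda = 1, \ldots, m$ without success; there $\norm{\vec{a}}_2 < 2e^{-m}$ forces $\norm{\vec{z}}_2 < 3e^{-m}$, whence $D - c_i \le 10\norm{\vec{z}}_2^2/D < 90 e^{-2m}/D \le C'\eps/D \le \eps$ (using $m$'s definition and the same case split on $D \ge C'$ or $D < \eps$). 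Thus the returned $\ket{\pi}$ satisfies the output condition, and the total additional iterations match the claimed bound.

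\textbf{The main obstacle.} The crux is the per-iteration gain bound $\ge \norm{\vec{z}}_2^2/45$: a coarser analysis using only the worst-case lower bound $e^{-\lambda_i} \ge \norm{\vec{z}}_2/(3e)$ (derived from $\lambda_i \le \lceil \log(3/\norm{\vec{z}}_2) \rceil$) incurs a spurious factor of $e^2$, yielding $r = 450 e^2$ instead of $r = 450$. The two-case argument above, which exploits the fact that $\norm{\vec{a}}_2$ must exceed both $2e^{-\lambda_i}$ and (roughly) $\norm{\vec{z}}_2$, sidesteps this loss and produces the constant $450$ appearing in the statement.
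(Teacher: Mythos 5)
Your proof is correct and follows essentially the same structure as the paper's: a Phase~1 governed by \Cref{lem:exponential_decay} with per-iteration gain $\geq D(D-c_i)/450$, a Phase~2 bounded by the $e^{-2m}/5$ gain floor, and the same verification (via $\norm{\vec{z}}_2 \geq \sqrt{C'\eps/10}$) that the inner for-loop succeeds at some $\lambda \leq m$ while $c_i < D-\eps$. The only divergence is in how you reach $\norm{\vec{a}}_2 \geq \tfrac{2}{3}\norm{\vec{z}}_2$: the paper observes that the loop-exit condition $\norm{\vec{a}}_2 \geq 2e^{-\lambda} \geq 2\norm{\vec{a}-\vec{z}}_2$ already gives $\norm{\vec{a}-\vec{z}}_2 \leq \norm{\vec{a}}_2/2$, so the triangle inequality closes in one step; your two-case split on $\lambda_i$ is a valid alternative, but the ``spurious $e^2$'' you frame as the main obstacle does not arise in the paper's derivation, since it never needs to bound $e^{-\lambda_i}$ numerically against $\norm{\vec{z}}_2$.
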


\begin{proof}
    \Cref{claim:fidelity_always_increases_no_logs} shows that in each non-terminating iteration, $\braket{\pi|\rho|\pi}$ increases by at least $\frac{\norm{\vec{a}}_2^2}{20} \ge \frac{e^{-2m}}{5}$. So, if $c \ge D - \eps$, then the algorithm must halt within $\lceil 5\eps e^{2m} \rceil$ additional iterations (as the fidelity can never exceed $2/3 + D$). Conversely, if $c < D - \eps$, then it suffices to show that the algorithm achieves $\braket{\pi|\rho|\pi} \ge 2/3 + D - \eps$ within the initial $\left\lceil\frac{450}{D}\log\mparen{\frac{D - c}{\eps}}\right\rceil$ iterations of the outer loop, because thereafter the algorithm must halt within $\lceil 5\eps e^{2m} \rceil$ additional iterations. So, we assume henceforth that $c < D - \eps$.

    For $i \in \N$, define $x_i$ so that $\braket{\pi|\rho|\pi} = 2/3 + x_i$ immediately after the end of $i$ additional iterations of the outer loop, with the convention that $\braket{\pi|\rho|\pi} = 2/3 + x_0 \ge 2/3 + c$ at the start of the algorithm.
    
    Suppose that $x_i \le D - \eps$, and consider what happens in iteration $i+1$. We know that $x_i \ge c \ge 0$ by \Cref{claim:fidelity_always_increases_no_logs}.
    Then \Cref{cor:mixed_local_opt_works} tells us that
    \[
    2/3 + D \le 2/3 + x_i + \min\mbrace{3\norm{\vec{z}}_2,\frac{\norm{\Vec{z}}_2^2}{x_i}}\,,
    \]
    or equivalently
    \[
    \norm{\Vec{z}}_2^2 \ge \max\mbrace{\frac{(D-x_i)^2}{9}, x(D - x_i)}\,,
    \]
    which then gives
    \begin{equation}
    \label{eq:norm_z_in_terms_of_D}
    \norm{\Vec{z}}_2^2 \ge \frac{D(D-x_i)}{10}.
    \end{equation}

    We claim that within this $(i+1)$th iteration of the outer loop, the algorithm finds $\Vec{a}$ satisfying $\norm{\Vec{a}}_2 \ge e^{1 - \lambda} \ge 2\norm{\Vec{a} - \Vec{z}}_2$ for \textit{some} $\lambda \le m$, as otherwise upon reaching \Cref{line:a_not_big_enough} we would have
    \begin{align*}
        \sqrt{\frac{C'\eps}{10}}
        &\le 
        \sqrt{\frac{D\eps}{10}} && (D \ge C \text{ and } D \ge x_i + \eps \ge \eps \text{ by supposition})\\
        &\le 
        \sqrt{\frac{D(D - x_i)}{10}} && (D \ge x_i + \eps\text{ by supposition}))\\
        &\le \norm{\vec{z}}_2 && (\mathrm{\Cref{eq:norm_z_in_terms_of_D}})\\
        &\le \norm{\vec{a}}_2 + \norm{\vec{a} - \vec{z}}_2 && (\text{Triangle inequality})\\
        &< 2e^{- m} + e^{-m}\\
        &\le \sqrt{\frac{C'\eps}{10}}\,,
    \end{align*}
    a contradiction.

    We know that $\norm{\Vec{a}}_2 \ge \norm{\Vec{z}}_2 - \norm{\Vec{a} - \vec{z}}_2$ by the triangle inequality, and since $\norm{\Vec{a} - \vec{z}}_2 \le \norm{\Vec{a}}_2/2$, we get that
    \begin{align*}
        \norm{\Vec{a}}_2 &\ge \frac{2}{3}\norm{\Vec{z}}_2\\
            &\ge \frac{2}{3}\sqrt{\frac{D(D-x_i)}{10}}\,. && (\mathrm{\Cref{eq:norm_z_in_terms_of_D}})
    \end{align*}
    Plugging into \Cref{cor:mixed_local_opt_works},
    \begin{align*}
        \braket{\pi_{\Vec{a}/10}|U\rho U^\dagger|\pi_{\Vec{a}/10}} - \braket{\pi|\rho|\pi} &= \braket{\pi_{\Vec{a}/10}|U\rho U^\dagger|\pi_{\Vec{a}/10}} - \braket{0^n|U\rho U^\dagger|0^n}\\
        &\ge \frac{\norm{\Vec{a}}_2^2}{20}\\
        &\ge \frac{D(D - x_i)}{450}\,.
    \end{align*}

    We have effectively established that
    \[
    x_{i+1} \ge \min\left\{x_i + \frac{D(D - x_i)}{450}, D - \eps\right\}\,.
    \]
    \Cref{lem:exponential_decay} shows that for all $i \ge k$, $x_{i} \ge D - \eps$, where
    \[
    k = \frac{450}{D}\log\mparen{\frac{D - c}{\eps}}\,,
    \]
    which proves the lemma.
\end{proof}

\Cref{lem:local_opt_iters_to_converge} is almost sufficient to establish \Cref{alg:local_opt_no_logs}'s correctness; we only need to show that the total error probability is at most $\delta$. For that, we bound the total number of calls to the tomography subroutine (\Cref{line:a_minus_z_bound_no_logs}).

\begin{lemma}
    \label{lem:iters_per_lambda}
    Consider a run of 
    \Cref{alg:local_opt_no_logs} where \Cref{line:a_minus_z_bound_no_logs} never errs. Then for a given $\lambda$, \Cref{line:a_minus_z_bound_no_logs} is executed at most
    \[
    \left\lceil5\eps e^{2m}\right\rceil + \left\lceil\frac{900\ell_\lambda}{C'}\right\rceil
    \]
    times before \Cref{alg:local_opt_no_logs} halts.
\end{lemma}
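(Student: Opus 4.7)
The plan is to bound $N_\lambda$ by invoking \Cref{lem:local_opt_iters_to_converge} at the \emph{first} outer iteration $i^*$ whose inner for-loop reaches level $\lambda$. The key observation is that merely reaching level $\lambda$ already forces the current fidelity to be close to $2/3 + D$, which tightens the decay bound of \Cref{lem:local_opt_iters_to_converge} in a $\lambda$-dependent way. Throughout the argument, I write $D$ for $(\max_{\text{product }\ket{\pi'}} \braket{\pi'|\rho|\pi'}) - 2/3$ and $c_i$ for $\braket{\pi|\rho|\pi} - 2/3$ at iteration $i$.

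If $D < \eps$, the output condition holds vacuously from the start and \Cref{lem:local_opt_iters_to_converge} bounds the total number of outer iterations by $\lceil 5\eps e^{2m}\rceil$, which suffices; so henceforth I assume $D \geq \eps$, whence $D \geq C' = \max\{\eps, C\}$. If no iteration reaches level $\lambda$, then $N_\lambda = 0$; otherwise let $i^*$ be the first such iteration. For $\lambda \geq 2$, reaching level $\lambda$ means that the call at level $\lambda - 1$ found $\twonorm{\vec{a}} < 2 e^{-(\lambda-1)}$, and since \Cref{line:a_minus_z_bound_no_logs} never errs we also have $\twonorm{\vec{a} - \vec{z}} \leq e^{-(\lambda-1)}$; by the triangle inequality, $\twonorm{\vec{z}} < 3 e^{-(\lambda-1)}$. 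For $\lambda = 1$ the analogous inequality $\twonorm{\vec{z}}^2 \leq 9 e^{0}$ holds trivially by \Cref{claim:norm_z_at_most_half}. Now using the combined bound $\twonorm{\vec{z}}^2 \geq D(D - c)/10$ established in the proof of \Cref{lem:local_opt_iters_to_converge} (which comes from taking the maximum of both branches of \Cref{cor:mixed_local_opt_works}), I obtain the crucial inequality
\begin{equation*}
    D - c_{i^*} \;\leq\; \frac{10\, \twonorm{\vec{z}}^2}{D} \;<\; \frac{90\, e^{-2(\lambda-1)}}{D} \;\leq\; \frac{90\, e^{-2(\lambda-1)}}{C'}\,.
\end{equation*}

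I then apply \Cref{lem:local_opt_iters_to_converge} at iteration $i^*$. Since every iteration counted by $N_\lambda$ occurs at or after $i^*$, this yields $N_\lambda \leq \lceil 5\eps e^{2m}\rceil + \lceil (450/D) \log((D-c_{i^*})/\eps) \rceil$. The definition $m = \lceil \tfrac{1}{2}\log(90/(C'\eps))\rceil$ gives $\log(90/(C'\eps)) \leq 2m$, so combined with the bound on $D - c_{i^*}$ and $D \geq C'$,
\begin{equation*}
    \frac{450}{D}\, \log\mparen{\frac{D - c_{i^*}}{\eps}} \;\leq\; \frac{450}{C'}\, [\log(90/(C'\eps)) - 2(\lambda-1)] \;\leq\; \frac{450}{C'} \cdot 2 \ell_\lambda \;=\; \frac{900\, \ell_\lambda}{C'}\,,
\end{equation*}
which produces the claimed bound.

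The main obstacle is identifying the right bound on $D - c_{i^*}$: applying only the first branch of \Cref{cor:mixed_local_opt_works} would give $D - c_{i^*} \leq 9 e^{-(\lambda-1)}$, which is off by a factor of $e^{\lambda-1}$ inside the logarithm and would not produce the correct $\ell_\lambda$-dependence. It is essential to take the maximum of both branches (including the quadratic $\twonorm{\vec{z}}^2/c$ branch) to obtain the combined inequality $\twonorm{\vec{z}}^2 \geq D(D-c)/10$ that yields the $e^{-2(\lambda-1)}$ decay in $D - c_{i^*}$.
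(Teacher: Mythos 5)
Your proof is correct and takes essentially the same approach as the paper's: both use the observation that reaching inner-loop level $\lambda$ forces $\twonorm{\vec{z}} < 3e^{-(\lambda-1)}$, combine this with the bound $\twonorm{\vec{z}}^2 \geq D(D-c)/10$ from \Cref{cor:mixed_local_opt_works}, and feed the resulting lower bound on $c$ into \Cref{lem:local_opt_iters_to_converge}. The paper separates the cases $\lambda = 1$ and $\lambda \geq 2$ while you unify them, and it phrases the bound via $c := \max\{0, D - \eps e^{2\ell_\lambda}\}$ rather than $c_{i^*}$ at the first such iteration, but these are cosmetic differences leading to the same arithmetic; your closing remark about the necessity of the quadratic branch of \Cref{cor:mixed_local_opt_works} is also a correct observation that is implicit in the paper.
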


\begin{proof}
    Define $D \coloneqq \max_{\mathrm{product}\ \ket{\pi'}}\braket{\pi'|\rho|\pi'} - 2/3$ as in \Cref{lem:local_opt_iters_to_converge}. We break into cases depending on $D$ and $\lambda$.

    Case 1: $D < \eps$. Then the total number of calls to \Cref{line:a_minus_z_bound_no_logs} is bounded by the number of iterations of the outer loop, which by \Cref{lem:local_opt_iters_to_converge} is at most
    \[
        \left\lceil5\eps e^{2m}\right\rceil + \max\mbrace{0, \left\lceil\frac{450}{D}\log\mparen{\frac{D}{\eps}}\right\rceil}
        = \left\lceil5\eps e^{2m}\right\rceil\,.
    \]

    Case 2: $D \ge \eps$, $\lambda = 1$. Again we bound the number of calls to \Cref{line:a_minus_z_bound_no_logs} by the number of iterations of the outer loop, using \Cref{lem:local_opt_iters_to_converge} and that $1/3 \ge D \ge \max\{C, \eps\} = C'$:
    \begin{align*}
        \left\lceil5\eps e^{2m}\right\rceil + \max\mbrace{0, \left\lceil\frac{450}{D}\log\mparen{\frac{D}{\eps}}\right\rceil}
        &\le \left\lceil5\eps e^{2m}\right\rceil + \left\lceil\frac{450}{C'}\log\mparen{\frac{1}{3\eps}}\right\rceil \\
        &\le \left\lceil5\eps e^{2m}\right\rceil + \left\lceil\frac{900m}{C'}\right\rceil\\
        &= \left\lceil5\eps e^{2m}\right\rceil + \left\lceil\frac{900\ell_\lambda}{C'}\right\rceil\,.
    \end{align*}

    Case 3: $D \ge \eps$, $\lambda \ge 2$. Consider an iteration of the algorithm in which $\braket{\pi|\rho|\pi} = 2/3 + x$. 
    In order for the for-loop (\Cref{line:for_loop_lambda}) to reach iteration $\lambda$, by the triangle inequality we must have
    \begin{align*}
    \norm{\vec{z}}_2 &\le \norm{\vec{a}}_2 + \norm{\vec{a}-\vec{z}}_2\\
    &\le 2e^{-(\lambda - 1)} + e^{-(\lambda - 1)}\\
    &= 3e^{\ell_\lambda - m}\\
    &\le \sqrt{\frac{C'\eps}{10}}e^{\ell_\lambda}\,.
    \end{align*}
    We know from \Cref{cor:mixed_local_opt_works} that
    \[
    2/3 + D \le 2/3 + x + \min\mbrace{3\norm{\vec{z}}_2,\frac{\norm{\Vec{z}}_2^2}{x}}\,,
    \]
    or equivalently
    \[
    \norm{\Vec{z}}_2^2 \ge \max\mbrace{\frac{(D-x)^2}{9}, x(D - x)}\,,
    \]
    which then gives
    \[
    \norm{\Vec{z}}_2^2 \ge \frac{D(D-x)}{10}\,.
    \]
    Combining, we find that
    \[
    \frac{D-x}{\eps} \le \frac{C'}{D}e^{2\ell_\lambda} \le e^{2\ell_\lambda}\,,
    \]
    because $D \ge \max\{\eps, C\} = C'$.

    We have shown that \Cref{line:a_minus_z_bound_no_logs} is executed for this particular $\lambda$ only when $x \ge D - \eps e^{2\ell_\lambda}$. 
    Since $x$ never drops below $0$ (\Cref{claim:fidelity_always_increases_no_logs}), we can let $c \coloneqq \max\{0,D - \eps e^{2\ell_\lambda}\}$ and appeal to
    \Cref{lem:local_opt_iters_to_converge} to bound the number of additional iterations for which the algorithm can run by
    \[
    \left\lceil5\eps e^{2m}\right\rceil + \max\mbrace{0,\left\lceil\frac{450}{D}\log\mparen{\frac{D - c}{\eps}}\right\rceil}\,.
    \]
    The lemma follows by substituting $c \ge D - \eps e^{2\ell_\lambda}$ and $C' \le D$.
\end{proof}

\begin{corollary}
    The total failure probability of \Cref{alg:local_opt_no_logs} is at most $\delta$.
\end{corollary}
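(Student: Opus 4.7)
The plan is to combine a union bound over all tomography calls with the counting lemma (\Cref{lem:iters_per_lambda}), exploiting the geometric schedule of $\delta_\lambda$. The parameter $\delta_\lambda$ was defined precisely so that $\delta_\lambda\cdot N_\lambda = \delta\cdot 2^{-\ell_\lambda}$, where $N_\lambda \coloneqq \lceil 5\eps e^{2m}\rceil + \lceil 900\ell_\lambda/C'\rceil$ is the per-$\lambda$ call bound from \Cref{lem:iters_per_lambda}.

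The first step is to handle the circular dependency: \Cref{lem:iters_per_lambda} bounds the number of calls only on the event that no prior call has erred, yet we need to bound the total error probability of the algorithm. The standard resolution is a ``first-failure'' union bound. For each $\lambda \in \{1,\dots,m\}$ and each $k \in \{1,\dots,N_\lambda\}$, let $F_{\lambda,k}$ denote the event that the $k$-th call of \Cref{line:a_minus_z_bound_no_logs} at for-loop index $\lambda$ occurs \emph{and} this call errs. By the tomography guarantee of \Cref{lem:estimating_z}, each individual call errs with probability at most $\delta_\lambda$ conditioned on reaching it, so $\Pr[F_{\lambda,k}] \le \delta_\lambda$.

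Next, I would show that the event ``some call of \Cref{line:a_minus_z_bound_no_logs} errs'' is contained in $\bigcup_{\lambda,k} F_{\lambda,k}$, where $k$ only ranges up to $N_\lambda$. The reason is that on the complement of this union, no call has erred, so \Cref{lem:iters_per_lambda} applies and the total number of calls at level $\lambda$ never exceeds $N_\lambda$; hence no call with index $k > N_\lambda$ ever occurs. Consequently, by the union bound,
\[
\Pr[\text{any call errs}] \;\le\; \sum_{\lambda=1}^{m} N_\lambda\, \delta_\lambda \;=\; \sum_{\lambda=1}^{m} \delta\, 2^{-\ell_\lambda} \;=\; \delta\sum_{\ell=1}^{m} 2^{-\ell} \;\le\; \delta,
\]
where we used $\{\ell_\lambda : 1\le \lambda\le m\} = \{1,\dots,m\}$ and $\sum_{\ell\ge 1} 2^{-\ell} = 1$.

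The only subtle point is the first-failure reduction in the preceding paragraph; once that is pinned down, the calculation is routine because the $\delta_\lambda$'s were defined specifically to make the telescoping work. I expect no further obstacles.
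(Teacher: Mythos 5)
Your proof is correct and takes essentially the same approach as the paper: a union bound over tomography calls, using \Cref{lem:iters_per_lambda} to cap the number of calls per $\lambda$, with the per-call error $\delta_\lambda$ chosen so the sum telescopes to $\delta\sum_\ell 2^{-\ell} \le \delta$. Your explicit ``first-failure'' reduction fills in a detail the paper leaves implicit, but it is the same argument.
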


\begin{proof}
    By a union bound, we can use \Cref{lem:iters_per_lambda} to bound the contribution of each run of \Cref{line:a_minus_z_bound_no_logs} for a given $\lambda$:
    \begin{align*}
        \sum_{\lambda = 1}^m \delta_\lambda \mparen{\left\lceil5\eps e^{2m}\right\rceil + \left\lceil\frac{900\ell_\lambda}{C'}\right\rceil}
        &= \delta\sum_{\lambda = 1}^m 2^{-\ell_\lambda}\\
        &= \delta\sum_{\ell = 1}^{m} 2^{- \ell} && (\ell \coloneqq m + 1 - \lambda)\\
        &< \delta\sum_{\ell = 1}^{\infty} 2^{-\ell}\\
        &= \delta\,.\tag*{\qedhere}
    \end{align*}
\end{proof}

\begin{corollary}
    \label{cor:local_opt_complexity}
    Assuming \Cref{line:a_minus_z_bound_no_logs} never errs, the total sample complexity \Cref{alg:local_opt_no_logs} is at most
    \[
    O\mparen{\frac{n}{\eps C'^2}\log\frac{1}{\delta C'}}
    \]
    and the runtime is
    \[
    O\mparen{\frac{n^2 \log n}{\eps C'^2}\log\frac{1}{\delta C'} + \frac{n}{C'}\log^2 \frac{1}{\eps C'}\log^2\frac{1}{\delta C'}}\,,
    \]
    recalling that $C' = \max\{\eps,C\}$.
\end{corollary}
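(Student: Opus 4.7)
The plan is to combine the per-$\lambda$ iteration bound from \Cref{lem:iters_per_lambda} with the per-call cost from \Cref{lem:estimating_z}, and then sum over $\lambda = 1, \dots, m$. First, from the choice $m = \lceil \tfrac{1}{2}\log(90/(C'\eps))\rceil$, we read off $e^{2m} = \Theta(1/(C'\eps))$, so $\eps e^{2m} = O(1/C')$. Thus the number of tomography calls made with parameter $\lambda$ is $N_\lambda = O((1+\ell_\lambda)/C')$. Inspecting the definition of $\delta_\lambda$, the extra ceiling factor is at most $O(\log(1/(C'\eps))/C')$, so its logarithm is dominated by $\log(1/C')$ up to double-log terms, giving $\log(1/\delta_\lambda) = O(\log(1/(\delta C')) + \ell_\lambda)$.

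Next I would invoke \Cref{lem:estimating_z} with $\eps_\lambda = e^{-\lambda}$: each call uses $O(n e^{2\lambda}\log(1/\delta_\lambda))$ copies and time $O(n^2 \log n \cdot e^{2\lambda}\log(1/\delta_\lambda) + n\log^2(1/\delta_\lambda))$. The sums that control everything are
\[
\sum_{\lambda=1}^m e^{2\lambda} = O(e^{2m}), \qquad \sum_{\lambda=1}^m \ell_\lambda e^{2\lambda} = O(e^{2m}), \qquad \sum_{\lambda=1}^m \ell_\lambda^2 e^{2\lambda} = O(e^{2m}),
\]
all of which follow because $\sum_{k\ge 1} k^j e^{-2k}$ converges for any fixed $j$, so the geometric weight $e^{2\lambda}$ puts essentially all the mass near $\lambda = m$. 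Using these bounds, the total sample complexity is
\[
\sum_{\lambda=1}^m N_\lambda \cdot O\mparen{n e^{2\lambda}\log(1/\delta_\lambda)} = O\mparen{\tfrac{n}{C'}} \cdot O\mparen{e^{2m}\log\tfrac{1}{\delta C'}} = O\mparen{\tfrac{n}{\eps C'^2}\log\tfrac{1}{\delta C'}},
\]
as claimed.

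For the runtime, the $n^2\log n$ part of the per-call cost scales by the same factor and yields $O(\tfrac{n^2 \log n}{\eps C'^2}\log\tfrac{1}{\delta C'})$. The remaining $n\log^2(1/\delta_\lambda)$ contribution is
\[
\sum_{\lambda=1}^m N_\lambda \cdot O\mparen{n\log^2(1/\delta_\lambda)} = O\mparen{\tfrac{n}{C'}} \sum_{\lambda=1}^m (1+\ell_\lambda)\mparen{\log\tfrac{1}{\delta C'} + \ell_\lambda}^2.
\]
Here the $e^{2\lambda}$ geometric weight is gone, so I would expand $(a+b)^2 \le 2a^2 + 2b^2$ and use the polynomial sums $\sum_\lambda \ell_\lambda^k = O(m^{k+1})$, then substitute $m = O(\log(1/(\eps C')))$ to obtain $O(\tfrac{n}{C'}\log^2\tfrac{1}{\eps C'}\log^2\tfrac{1}{\delta C'})$. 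Adding gives the stated total.

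The main obstacle is the log-factor bookkeeping: (i) verifying that $\log(1/\delta_\lambda)$ reduces to $O(\log(1/(\delta C')) + \ell_\lambda)$ after accounting for the double-log coming from the ceiling in $\delta_\lambda$, and (ii) checking that the geometric-weight sums $\sum_\lambda \ell_\lambda^j e^{2\lambda}$ are dominated by their last term rather than accumulating a factor of $m$, since without this the sample complexity would blow up by a factor of $\log(1/(\eps C'))$. Everything else is mechanical substitution.
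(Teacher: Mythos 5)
Your approach matches the paper's essentially step for step: bound the per-$\lambda$ call count via \Cref{lem:iters_per_lambda}, bound $\log(1/\delta_\lambda)$ by $O(\ell_\lambda + \log\frac{1}{\delta C'})$, invoke \Cref{lem:estimating_z} per call, and sum; the geometric-weight sums with the reindexing $\ell = m+1-\lambda$ and the identification $e^{2m} = \Theta(1/(\eps C'))$ are exactly the paper's calculation, and the sample-complexity bound goes through as you argue.

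There is one small slip in the runtime bookkeeping. Expanding $\sum_{\ell=1}^m (1+\ell)(\ell + L)^2$ with $L = \log\frac{1}{\delta C'}$ gives $O(m^4 + m^2 L^2)$. You claim this whole contribution, scaled by $n/C'$, is $O\mparen{\frac{n}{C'}\log^2\frac{1}{\eps C'}\log^2\frac{1}{\delta C'}}$ — but the $m^4$ piece alone is $O\mparen{\frac{n}{C'}\log^4\frac{1}{\eps C'}}$, which is \emph{not} bounded by $\frac{n}{C'}\log^2\frac{1}{\eps C'}\log^2\frac{1}{\delta C'}$ in general (e.g.\ take $\delta$ a fixed constant while $\eps \to 0$). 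The paper deals with this explicitly: since $m^4 \le O(e^{2m}) = O(1/(\eps C'))$, the $m^4$ piece is $O\mparen{\frac{n}{\eps C'^2}}$, which is then absorbed into the \emph{first} summand of the stated runtime bound, $O\mparen{\frac{n^2\log n}{\eps C'^2}\log\frac{1}{\delta C'}}$, not the second. Your final conclusion survives because the corollary's bound carries both summands, but as written the step that attributes the entire $n\log^2(1/\delta_\lambda)$ contribution to the second summand does not close the argument; you need the extra routing of the $m^4$ term.
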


\begin{proof}
    For a given $\lambda$, a single run of \Cref{line:a_minus_z_bound_no_logs} has sample complexity $O\mparen{ne^{2\lambda}\log \frac{1}{\delta_\lambda}}$ according to \Cref{lem:estimating_z}. A simple calculation shows that
    \[
    \log \frac{1}{\delta_\lambda} \le O\mparen{\ell_\lambda + \log \frac{\ell_\lambda}{\delta C'}} \le O\mparen{\ell_\lambda + \log \frac{1}{\delta C'}}\,.
    \]
    Using \Cref{lem:iters_per_lambda}, \Cref{line:a_minus_z_bound_no_logs} is called a total of
    \[
    \left\lceil5\eps e^{2m}\right\rceil + \left\lceil\frac{900\ell_\lambda}{C'}\right\rceil \le O\mparen{\frac{\ell_\lambda}{C'}}
    \]
    times.
    The total sample complexity is therefore
    \begin{align}
        \sum_{\lambda = 1}^m O\mparen{ne^{2\lambda} \log\mparen{\frac{1}{\delta_\lambda}} \frac{\ell_\lambda}{C'}}
        &\le \sum_{\lambda = 1}^m O\mparen{ne^{2\lambda} \mparen{\ell_\lambda + \log \frac{1}{\delta C'}} \frac{\ell_\lambda}{C'}}\nonumber\\
        &\le \sum_{\ell = 1}^{m} O\mparen{ne^{2m + 2 - 2\ell} \mparen{\ell + \log \frac{1}{\delta C'}} \frac{\ell}{C'}} && (\ell \coloneqq m + 1 - \lambda)\nonumber\\
        &\le \frac{n}{\eps C'^2} \sum_{\ell = 1}^{m} O\mparen{e^{- 2\ell} \mparen{\ell + \log \frac{1}{\delta C'}}\ell}\nonumber\\
        &\le O\mparen{\frac{n}{\eps C'^2}\log\frac{1}{\delta C'}}\,.\label{eq:local_opt_total_sample_complexity}
    \end{align}

    We turn to the time complexity.
    By \Cref{lem:estimating_z}, the runtime of a single call to \Cref{line:a_minus_z_bound_no_logs} with chosen $\lambda$ is
    \[
    (\text{Sample complexity}) \cdot O(n \log n) + O\mparen{n\log^2 \frac{1}{\delta_\lambda}}.
    \]
    Since the $O(n \log n)$ is independent of $\lambda$, it is easy to bound the contribution of the left term to the runtime: we multiply the total sample complexity (\Cref{eq:local_opt_total_sample_complexity}) by $O(n \log n)$, yielding $O\mparen{\frac{n^2 \log n}{\eps C'^2}\log\frac{1}{\delta C'}}$.
    So, we focus on bounding the contribution of the right term, which is
    \begin{align}
        \sum_{\lambda = 1}^m O\mparen{n \log^2\mparen{\frac{1}{\delta_\lambda}} \frac{\ell_\lambda}{C'}}
        &\le \sum_{\lambda = 1}^m O\mparen{n \mparen{\ell_\lambda + \log \frac{1}{\delta C'}}^2 \frac{\ell_\lambda}{C'}}\nonumber\\
        &\le \sum_{\ell = 1}^{m} O\mparen{n \mparen{\ell + \log \frac{1}{\delta C'}}^2 \frac{\ell}{C'}} && (\ell \coloneqq m + 1 - \lambda)\nonumber\\
        &\le \frac{n}{C'} \sum_{\ell = 1}^{m} O\mparen{ \mparen{\ell + \log \frac{1}{\delta C'}}^2 \ell}\nonumber\\
        &\le \frac{n}{C'} \sum_{\ell=1}^{m} O\mparen{\ell^3 + \ell\log^2\frac{1}{\delta C'}} && ((a+b)^2 \le O\mparen{a^2 + b^2})\nonumber\\
        &\le \frac{n}{C'} O\mparen{m^4 + m^2\log^2\frac{1}{\delta C'}}\nonumber\\
        &\le O\mparen{\frac{n}{\eps C'^2}} + O\mparen{\frac{n}{C'}\log^2 \frac{1}{\eps C'}\log^2\frac{1}{\delta C'}}\,. && (m = O\mparen{\log\frac{1}{\eps C'}})\label{eq:local_opt_runtime_log_squared_part}
    \end{align}
    The left part of \Cref{eq:local_opt_runtime_log_squared_part} gets absorbed into the $O\mparen{\frac{n^2 \log n}{\eps C'^2}\log\frac{1}{\delta C'}}$.
\end{proof}

\subsection{Divide and conquer}

As written, \Cref{alg:local_opt_no_logs} assumes that we begin with a product state $\ket{\pi}$ having fidelity at least $2/3$ with $\rho$. This is not a true learning algorithm, then, because such a state $\ket{\pi}$ might not be known in advance. Nevertheless, we can straightforwardly generalize \Cref{alg:local_opt_no_logs} to a learning algorithm that only takes copies of $\rho$ as input, using a divide-and-conquer approach. This works as a consequence of the following lemma:

\begin{lemma}
\label{lem:fidelity_tensor}
    Let $\rho_{AB}$ be a state on a systems $A$ and $B$.
    Suppose that $\braket{\phi|\rho_A|\phi} \ge 1 - \eps_1$ and $\braket{\pi|\rho_B|\pi} \ge 1 - \eps_2$.
    Then
    \[
    \braket{\phi\pi|\rho_{AB}|\phi\pi} \ge 1 - \eps_1 - \eps_2\,.
    \]
\end{lemma}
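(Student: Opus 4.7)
The plan is to reformulate the problem in terms of the projectors $P \coloneqq \proj{\phi}$ and $Q \coloneqq \proj{\pi}$, and then exploit a simple operator inequality on the complementary projectors $\bar P \coloneqq I - P$ and $\bar Q \coloneqq I - Q$.

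First, I would observe that the three fidelities appearing in the statement can be written uniformly as traces against $\rho_{AB}$:
\[
\braket{\phi|\rho_A|\phi} = \tr\bigl((P \otimes I)\rho_{AB}\bigr), \qquad \braket{\pi|\rho_B|\pi} = \tr\bigl((I \otimes Q)\rho_{AB}\bigr),
\]
\[
\braket{\phi\pi|\rho_{AB}|\phi\pi} = \tr\bigl((P \otimes Q)\rho_{AB}\bigr).
\]
So the hypotheses are $\tr((\bar P \otimes I)\rho_{AB}) \le \eps_1$ and $\tr((I \otimes \bar Q)\rho_{AB}) \le \eps_2$, and the goal reduces to showing $\tr((I - P \otimes Q)\rho_{AB}) \le \eps_1 + \eps_2$.

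The key step is the identity
\[
I - P \otimes Q = \bar P \otimes I + P \otimes \bar Q,
\]
obtained by adding and subtracting $P \otimes I$. Since $P \psdle I$ and $\bar Q \psdge 0$, we get the operator inequality $P \otimes \bar Q \psdle I \otimes \bar Q$, hence
\[
I - P \otimes Q \psdle \bar P \otimes I + I \otimes \bar Q.
\]
Taking the trace against the positive operator $\rho_{AB}$ and applying the two hypotheses gives $\tr((I - P \otimes Q)\rho_{AB}) \le \eps_1 + \eps_2$, which is the desired bound after rearranging. There is no real obstacle here; the only thing to be careful about is the direction of the operator inequality and the fact that trace against a PSD operator preserves PSD ordering.
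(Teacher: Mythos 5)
Your proof is correct. The key identity $I - P \otimes Q = \bar P \otimes I + P \otimes \bar Q$, combined with $P \otimes \bar Q \psdle I \otimes \bar Q$ (using $P \psdle I$, $\bar Q \psdge 0$), immediately gives $I - P \otimes Q \psdle \bar P \otimes I + I \otimes \bar Q$, and tracing against $\rho_{AB} \psdge 0$ yields the bound. This is a clean, basis-free, operator-inequality argument.

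The paper's proof takes a slightly different route: it extends $\ket{\pi}$ to an orthonormal basis $\{\ket{\pi_i}\}_{i \in [d]}$ of $B$, writes $\braket{\phi|\rho_A|\phi} = \sum_{i} \braket{\phi\pi_i|\rho_{AB}|\phi\pi_i}$ via the partial trace, isolates the $i = 1$ term, and upper-bounds the tail $\sum_{i \ge 2}\braket{\phi\pi_i|\rho_{AB}|\phi\pi_i}$ by $\sum_{i\ge 2}\braket{\pi_i|\rho_B|\pi_i} = 1 - \braket{\pi|\rho_B|\pi} \le \eps_2$. When unpacked, the paper's tail bound is exactly the observation that $\tr((P \otimes \bar Q)\rho_{AB}) \le \tr((I \otimes \bar Q)\rho_{AB})$, which is your $P \otimes \bar Q \psdle I \otimes \bar Q$ applied to $\rho_{AB}$, so the two proofs rest on the same underlying facts. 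Yours packages the argument as a single operator inequality, making the roles of $A$ and $B$ (and the two hypotheses) transparently symmetric; the paper's version is more concrete and basis-driven. Either is a fine write-up.
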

\begin{proof}
    Extend $\ket{\pi} \eqqcolon \ket{\pi_1}$ to an orthonormal basis $\{\ket{\pi_i} \mid i \in [d]\}$ for $B$. Since the partial trace can be computed by summing over any orthonormal basis, we have
    \begin{align*}
    1 - \eps_1 &\le
    \braket{\phi|\rho_A|\phi}\\
    &= 
    \sum_{i=1}^d \braket{\phi\pi_i|\rho_{AB}|\phi\pi_i}\\
    &\le \braket{\phi\pi|\rho_{AB}|\phi\pi} + \sum_{i=2}^{d}\braket{\pi_i|\rho_{B}|\pi_i}\\
    &= \braket{\phi\pi|\rho_{AB}|\phi\pi} + 1 - \braket{\pi|\rho_B|\pi}\\
    &\le \braket{\phi\pi|\rho_{AB}|\phi\pi} + \eps_2\,.
    \end{align*}
    The lemma follows by rearranging.
\end{proof}

The full learning algorithm is below; its correctness is self-explanatory.
Note that we must assume the existence of a product state with fidelity above $5/6$, instead of $2/3$ in the previous algorithm.
This is because of the loss incurred from combining the two halves via \Cref{lem:fidelity_tensor}.

\begin{longfbox}[breakable=false, padding=1em, margin-top=1em, margin-bottom=1em]
\begin{algorithm}
\label{alg:high_fidelity_alg}
    (High-fidelity product state agnostic learning).
\begin{description}
    \item[Input:] Copies of an $n$-qubit state $\rho$, $\eps \in (0, 1/6]$, $\delta \in (0, 1)$
    \item[Promise:] There exists a product state $\ket{\pi}$ satisfying $\braket{\pi|\rho|\pi} \ge 5/6 + \eps$
    \item[Output:] A product state $\ket{\pi}$ satisfying $\braket{\pi|\rho|\pi} \ge \max_{\text{product }\ket{\pi'}} \braket{\pi'|\rho|\pi'} - \eps$ with probability at least $1 - \delta$
    \item[Procedure:] \mbox{}
    \begin{algorithmic}[1]
        \State \; \Comment{First find a product state $\ket{\pi}$ such that $\braket{\pi|\rho|\pi} \ge 2/3$}
        \If{$n$ = 1}
            \State Use tomography to find a $\ket{\pi}$ satisfying $\braket{\pi|\rho|\pi} \ge 2/3$ with prob.\ $1 - \delta/2$ \label{line:rough_single_qubit_tomography}\;
        \Else
            \State $\rho_L \coloneqq$ the left half of $\rho$ \;
            \State $\ket{\pi_L} \coloneqq $ \Cref{alg:high_fidelity_alg}$(\rho_L, \eps, \delta/4)$ \; \Comment{$\braket{\pi_L|\rho_L|\pi_L} \ge 5/6$}
            \State $\rho_R \coloneqq$ the right half of $\rho$ \;
            \State $\ket{\pi_R} \coloneqq$ \Cref{alg:high_fidelity_alg}$(\rho_R, \eps, \delta/4)$ \; \Comment{$\braket{\pi_R|\rho_R|\pi_R} \ge 5/6$}
            \State $\ket{\pi} \coloneqq \ket{\pi_L} \otimes \ket{\pi_R}$ \; \Comment{$\braket{\pi|\rho|\pi} \ge 2/3$ by \Cref{lem:fidelity_tensor}}
        \EndIf
        \State\Return \Cref{alg:local_opt_no_logs}$(\rho, \ket{\pi}, \eps, \delta/2, 1/3 + \eps)$ \Comment{Error $\le \delta/4 + \delta/4 + \delta/2 = \delta$}\label{line:call_local_opt}\;
    \end{algorithmic}
\end{description}
\end{algorithm}
\end{longfbox}

\begin{theorem}
    \label{thm:high_fidelity_alg}
    \Cref{alg:high_fidelity_alg} has sample complexity
    \[
    O\mparen{\frac{n}{\eps} \log\frac{1}{\delta}}
    \]
    and runs in time
    \[
    O\mparen{\frac{n^2\log n}{\eps}\log\frac{1}{\delta} + n\log n\log^2\frac{1}{\eps}\log^2\frac{1}{\delta}} \le O\mparen{\frac{n^2\log n}{\eps}\log^2\frac{1}{\delta}}\,.
    \]
\end{theorem}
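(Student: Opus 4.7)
The plan is to prove both claims by induction on $n$, handling correctness first and then reading off two separate recurrences, one for samples and one for time. At the base case $n = 1$, \Cref{line:rough_single_qubit_tomography} performs single-qubit tomography: estimating $\rho$ to constant trace distance and returning its top eigenvector uses $O(\log(1/\delta))$ samples and yields a pure state with fidelity at least $5/6$, which is certainly at least $2/3$ as required. For the inductive step, let $\ket{\pi^*} = \ket{\pi^*_L} \otimes \ket{\pi^*_R}$ be a product state achieving $\braket{\pi^*|\rho|\pi^*} \ge 5/6 + \eps$; tracing out one side gives $\braket{\pi^*_L|\rho_L|\pi^*_L}, \braket{\pi^*_R|\rho_R|\pi^*_R} \ge 5/6+\eps$, so each recursive call's promise is met. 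By induction, the two recursive calls succeed with probability $\ge 1-\delta/4$ each and return $\ket{\pi_L}, \ket{\pi_R}$ with fidelity at least $5/6$ against their respective reduced states. Applying \Cref{lem:fidelity_tensor} with $\eps_1 = \eps_2 = 1/6$ then yields $\braket{\pi_L \otimes \pi_R|\rho|\pi_L \otimes \pi_R} \ge 2/3$, meeting the precondition of \Cref{alg:local_opt_no_logs} (whose $C$ parameter is chosen so that $2/3 + C \le \OPT$). The call on \Cref{line:call_local_opt} then outputs a product state with fidelity $\ge \OPT - \eps$ except with probability $\delta/2$, and a union bound over the three failure modes gives total error at most $\delta$.

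The key observation for the sample complexity bound is that sibling recursive calls can \emph{share the same copies of $\rho$}: because the two subroutines only measure on disjoint halves of $\rho$, a single physical copy of $\rho$ can serve simultaneously as a copy of $\rho_L$ (through measurements on its left qubits) and as a copy of $\rho_R$ (through independent measurements on its right qubits). This yields the ``max'' recurrence $S(n, \delta) \le S(n/2, \delta/4) + S_{LO}(n, \delta/2)$, where $S_{LO}$ denotes the sample complexity from \Cref{cor:local_opt_complexity}. Since the chosen $C$ gives $C' = \max\{\eps, C\} \ge 1/6$, a constant, we have $S_{LO}(n, \delta/2) = O\mparen{\tfrac{n}{\eps}\log\tfrac{1}{\delta}}$. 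Unfolding the recurrence, depth $d$ contributes $O\mparen{\tfrac{n/2^d}{\eps}(d + \log(1/\delta))}$, which is geometrically decaying in $d$ and sums to the claimed $O\mparen{\tfrac{n}{\eps}\log\tfrac{1}{\delta}}$.

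For the runtime, copy sharing no longer helps because the classical post-processing in the two branches is genuinely independent, so the recurrence is the usual $T(n, \delta) = 2T(n/2, \delta/4) + T_{LO}(n, \delta/2)$. Substituting the two-term expression for $T_{LO}$ from \Cref{cor:local_opt_complexity} and unfolding, the $\tfrac{n^2 \log n}{\eps}$-type contribution at depth $d$ scales as $\tfrac{n^2 \log n}{2^d \eps}(d + \log(1/\delta))$ and telescopes to $O\mparen{\tfrac{n^2 \log n}{\eps}\log\tfrac{1}{\delta}}$, while the lower-order term $n \log^2(1/\eps)\log^2(1/\delta)$ picks up at most an extra $\log n$ factor when summed across the $\log n$ recursion levels. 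Combining gives the stated runtime.

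The main subtlety is justifying the copy-sharing trick carefully: one must verify that every measurement inside \Cref{alg:local_opt_no_logs} --- in particular the classical-shadow subroutine of \Cref{lem:estimating_z} --- is supported only on the qubits of the current sub-instance, so that measurements at sibling subproblems commute and a single copy of $\rho$ can indeed be multiplexed across all $2^d$ sibling sub-instances at each depth. The remainder is routine bookkeeping, in which one tracks how the factor $\log(4^d/\delta) = O(d + \log(1/\delta))$ at depth $d$ propagates through each of the two recurrences without accruing additional polylogarithmic overhead.
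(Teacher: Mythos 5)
Your approach matches the paper's: the same copy-sharing observation between sibling recursive calls, the same use of \Cref{cor:local_opt_complexity} with $C'$ bounded below by a constant, and the same unfolding of the depth-$d$ contribution $\bigO{\tfrac{n 2^{-d}}{\eps}(d + \log(1/\delta))}$ for samples and the analogous geometric sums for runtime. One place where you are slightly loose is the second-order runtime term: at depth $d$ it is $n2^{-d} \cdot 2^d \cdot \log^2\frac{1}{\eps}\log^2\frac{4^d}{\delta} = n\log^2\frac{1}{\eps}(d + \log\frac{1}{\delta})^2$, whose $d^2$ part sums to $\Theta(n\log^3 n \log^2\frac{1}{\eps})$, not just ``an extra $\log n$ factor''; the paper handles this by explicitly splitting $(d + \log\frac{1}{\delta})^2 \le O(d^2 + \log^2\frac{1}{\delta})$ and absorbing the $n\log^3 n\log^2\frac{1}{\eps}$ term into the dominant $\tfrac{n^2\log n}{\eps}\log\frac{1}{\delta}$ contribution, which you should make explicit. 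Also, your correctness digression is not part of the theorem (which concerns only complexity), and the parenthetical ``whose $C$ parameter is chosen so that $2/3 + C \le \OPT$'' does not match what \Cref{alg:high_fidelity_alg} actually passes ($C = 1/3+\eps$, for which $2/3 + C > 1$); this is an issue with how the algorithm is written rather than with your complexity analysis, but since the proof only relies on $C' = \max\{\eps, C\}$ being a constant, the complexity bounds go through regardless.
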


\begin{proof}
    For the sample complexity, observe that copies of $\rho$ can be shared between the two recursive calls to \Cref{alg:high_fidelity_alg}, because a copy of $\rho$ is both a copy of $\rho_L$ and a copy of $\rho_R$.
    
    When $n = 1$, it is clear that \Cref{line:rough_single_qubit_tomography} can be performed using $O\mparen{\log\frac{1}{\delta}}$ time and samples. For example, one can estimate the coordinates of $\rho$ on the Bloch sphere to some small constant precision by measuring repeatedly in the $X$, $Y$, and $Z$ bases. Therefore, the contribution of \Cref{line:rough_single_qubit_tomography} to sample complexity and runtime across recursive calls to \Cref{alg:high_fidelity_alg} are, respectively, at most $O\mparen{\log\frac{n^2}{\delta}} \le O\mparen{\log n\log\frac{1}{\delta}}$ and 
    $O\mparen{n\log\frac{n^2}{\delta}} \le O\mparen{n\log n\log\frac{1}{\delta}}$.
    These are dominated by the other terms.

    \Cref{line:call_local_opt} accounts for the remaining algorithmic complexity.
    By \Cref{cor:local_opt_complexity}, because we pick $C > 1/3$, the overall sample complexity due to \Cref{alg:local_opt_no_logs} across recursive calls is at most
    \[
        \sum_{i=0}^{\lceil\log_2 n\rceil} O\mparen{\frac{n2^{-i}}{\eps}\log\frac{4^i}{\delta}} \le \frac{n}{\eps}\log\frac{1}{\delta}\sum_{i=0}^{\lceil\log_2 n\rceil} O\mparen{\frac{i}{2^i}} \le O\mparen{\frac{n}{\eps} \log\frac{1}{\delta}}\,.
    \]
    The time complexity bound is
    \[
        \sum_{i=0}^{\lceil\log_2 n\rceil} 2^i \cdot O\mparen{\frac{\mparen{n2^{-i}}^2 \log(n2^{-i})}{\eps}\log\frac{4^i}{\delta} + n2^{-i}\log^2 \frac{1}{\eps}\log^2\frac{4^i}{\delta}}\,.
    \]
    We handle the two terms within the summation separately. The left term is bounded by:
    \begin{equation}
        \label{eq:high_fidelity_left_runtime}
        \frac{n^2 \log n}{\eps}\log\frac{1}{\delta} \sum_{i=0}^{\lceil\log_2 n\rceil} O\mparen{\frac{i}{2^i}} \le O\mparen{\frac{n^2\log n}{\eps}\log\frac{1}{\delta}}\,.
    \end{equation}
    The right term is bounded by:
    \begin{align*}
        n\log^2\frac{1}{\eps} \sum_{i=0}^{\lceil\log_2 n\rceil} O\mparen{\log^2\frac{4^i}{\delta}}
        &\le n\log^2\frac{1}{\eps} \sum_{i=0}^{\lceil\log_2 n\rceil} O\mparen{i^2 + \log^2\frac{1}{\delta}} \tag{$(a+b)^2 \le O\mparen{a^2 + b^2}$} \\
        &\le O\mparen{n\log^3n \log^2\frac{1}{\eps} + n\log n\log^2\frac{1}{\eps}\log^2\frac{1}{\delta}}\,.
    \end{align*}
    The theorem follows because $n\log^3n \log^2\frac{1}{\eps}$ is bounded by \Cref{eq:high_fidelity_left_runtime}.
\end{proof}

\section{Agnostic learning of product states}
\label{sec:reduction-to-poly-opt}

In this section, we give our general algorithm for finding product states which have good fidelity with an input state $\rho$.
Our output will take the form of a ``good'' product state cover, as given below.

\begin{definition}[Good product state cover] \label{def:product-cover}
    A collection of pure product states on $m$ qubits, $\mathcal{C} = \{\ket{\pi_{i}}\}_{i}$ is a \emph{$(\eta, \eps, b, B)$-good cover for a state $\rho$} if 
    \begin{enumerate}
        \item For all $\ket{\pi} \in \mathcal{C}$, $\bra{\pi} \rho \ket{\pi} \geq \eta - \eps$.
        \item For all distinct $\ket{\pi}, \ket{\pi'} \in \mathcal{C}$, $\dtan(\ket{\pi}, \ket{\pi'}) \geq b$.
        \item For all product states on $m$ qubits, $\ket{\phi}$, such that $\bra{\phi}\rho\ket{\phi} \geq \eta$, there exists $\ket{\pi} \in \mathcal{C}$ such that $\dtan(\ket{\phi}, \ket{\pi}) \leq B$.
    \end{enumerate}
    We will eventually take $b = 2/\eta$ and $B = 3/\eta$, so for brevity a $(\eta, \eps)$-good cover refers to a $(\eta, \eps, 2/\eta, 3/\eta)$-good cover.\footnote{
        The properties our parameters must satisfy are that $\eta - \eps - \frac{1}{b} > 0$ (for \cref{claim:cover_size}) and that $B > b$ (for the greedy algorithm to succeed).
    }
\end{definition}

It may not yet be clear that a good product state cover for $\rho$ even exists.
When $B \geq b$, a greedy approach works here: start with $\mathcal{C} = \varnothing$, and while property 3 does not hold for $\mathcal{C}$, add the violating $\ket{\phi}$ to $\mathcal{C}$.
This approach will be used and adapted to be more tractable in \cref{alg:outer_loop}.
We also soon show that the size of good product state convers is small (\cref{claim:cover_size}).
First, we state the theorem we will prove in this section.

\begin{theorem}[Agnostic learning of product states] \label{thm:main} 
    Let $\rho$ be an $n$-qubit state and suppose we are given error parameters $\eta \in (0,1)$, $\eps \in (0, \eta/3)$, and $\delta \in (0, 1)$.
    Then there is an algorithm which, with probability $\geq 1-\delta$, outputs an $(\eta, \eps)$-good product state cover for $\rho$.
    The algorithm uses $N \leq (\poly(n))^{1/\eta^2 + \log\frac{1}{\eps}}\poly(\log\frac{1}{\delta})$ copies of $\rho$, $\poly(n, 1/\eta, \log\frac{1}{\eps})$ quantum gates per copy of $\rho$, and $n^{\poly(1/\eps)} \poly(\log\frac{1}{\delta})$ additional classical overhead.
\end{theorem}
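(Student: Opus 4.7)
The plan is to construct the cover inductively, qubit by qubit: for each $m \in [n]$ in order, I produce an $(\eta, \eps)$-good cover $\mathcal{C}_m$ for the reduced state $\rho_{[m]}$, and $\mathcal{C}_n$ is the output. The base case $m=1$ is trivial (the space of single-qubit states is compact and low-dimensional, so a standard $\eps$-net works after single-qubit tomography). For the inductive step, the key observation is monotonicity of fidelity under partial trace: if $\ket{\pi_{\vec{z}}}$ is any product state on $m$ qubits with $\braket{\pi_{\vec{z}}|\rho_{[m]}|\pi_{\vec{z}}} \geq \eta$, then its first $m-1$ qubits also achieve fidelity $\geq \eta$ on $\rho_{[m-1]}$. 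By the coverage property of $\mathcal{C}_{m-1}$, its first $m-1$ qubits must lie within tangent distance $B=3/\eta$ of some $\ket{\pi_{\vec{a}}} \in \mathcal{C}_{m-1}$. Thus extending reduces to searching, for each root $\ket{\pi_{\vec{a}}}$, the tangent ball of radius $\approx 3/\eta$ around it. Within each such ball, I greedily add violators of the coverage property, which amounts to solving the \eqref{program:intro} program.

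Before turning to the algorithmic heart, I first verify that the output has bounded size. I would prove that any $(\eta, \eps)$-good cover $\mathcal{C}$ (for $\eps < \eta/3$) has $|\mathcal{C}| \leq 6/\eta$: letting $M$ be the matrix whose columns are the states of $\mathcal{C}$,
\[
    |\mathcal{C}|(2\eta/3) \;\leq\; \sum_{i} \braket{\pi^{(i)}|\rho|\pi^{(i)}} \;=\; \tr(MM^\dagger \rho) \;\leq\; \opnorm{M^\dagger M} \;\leq\; 1 + |\mathcal{C}|\cdot(\eta/2),
\]
where the last inequality uses Gershgorin together with $\dtan(\ket{\pi^{(i)}},\ket{\pi^{(j)}}) \geq 2/\eta$ translating via \Cref{lem:dtan_fidelity_relationship} to $|\braket{\pi^{(i)}|\pi^{(j)}}| \leq \eta/2$. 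Rearranging yields the bound. Consequently the total work across all $n$ qubits is dominated by at most $O(n/\eta)$ greedy extension steps.

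The core algorithmic subroutine solves \eqref{program:intro} for a given root $\ket{\pi_{\vec{a}}}$ and a growing list of farness constraints. Applying the single-qubit unitaries that map $\vec{a} \mapsto \vec{0}$, I may assume WLOG $\vec{a} = \vec{0}$. By \Cref{lem:low_weight}, taking $d = O(1/\eta^2 + \log(1/\eps))$ ensures $\twonorm{\Pi_{\geq d}\ket{\pi_{\vec{z}}}} \leq \eps$ for every feasible $\vec{z}$ (with $\twonorm{\vec{z}} \leq 4/\eta$), so the objective is determined up to $O(\eps)$ by $\rho_d \coloneqq \Pi_{<d}\rho_{[m]}\Pi_{<d}$, which lives on an $O(m^d)$-dimensional subspace and can be estimated via subspace tomography from $N = \text{poly}(m^d/\eps) \log(1/\delta)$ copies of $\rho$. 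The objective $\braket{\pi_{\vec{z}}|\rho_d|\pi_{\vec{z}}}$ splits as \eqref{eqn:intro-expand-objective}.(1) times a degree-$2d$ polynomial in $\vec{z},\vec{z}^*$ whose coefficients have $\ell_2$ norm at most $\fnorm{\rho_d} \leq 1$. I handle the normalization \eqref{eqn:intro-expand-objective}.(1) by guessing (via enumeration) which of the $O(1/\eta^2)$ coordinates with $|z_i|$ exceeding a threshold are "large", fixing those values on an $\eps$-net, and on the remaining coordinates approximating $\prod_i (1+|z_i|^2)^{-1}$ by $\exp(-\twonorm{\vec{z}}^2)$ via \Cref{lem:dtan-fidelity-approx}, which becomes a hardcoded scalar once $\twonorm{\vec{z}}$ is also guessed. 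The farness constraints $\dtan(\ket{\pi_{\vec{z}}}, \ket{\pi_{\vec{a}'}}) \geq 2/\eta$ are similarly approximated by $\ell_2$ constraints via \Cref{lem:dtan-l2} after guessing $\vec{z}$'s projection onto the (low-dimensional) span of the finitely many $\vec{a}'$; the generous slack ($B=3/\eta$ coverage vs.\ $b=2/\eta$ separation vs.\ $4/\eta$ search radius) absorbs all such approximation errors. What remains is a constrained polynomial optimization problem over the sphere with bounded-$\ell_2$ coefficients and $\ell_\infty$ constraint, which the referenced polynomial optimization subroutine solves to additive error $\eps$ in time $n^{\text{poly}(1/\eps)}$.

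The main obstacle, and where the slack in \Cref{def:product-cover} becomes essential, is propagating errors cleanly: tomography error, Hamming-weight truncation error, normalization approximation error, $\dtan \approx \ell_2$ approximation error, the enumeration granularity for "large" coordinates, and the additive error of polynomial optimization must each be set to a fraction of $\eps$, and the output of the subroutine (which only satisfies slightly weakened versions of the constraints) must still satisfy the actual cover properties when translated back via \Cref{lem:dtan_fidelity_relationship} and \Cref{lem:dtan-l2}. A union bound over the $O(n/\eta)$ greedy steps with failure probability $\delta/\text{poly}(n/\eta)$ per tomography call yields the overall $1-\delta$ success probability. Combining, the sample complexity is $n^{d} \cdot \text{poly}(\log(1/\delta)) = n^{O(1/\eta^2 + \log(1/\eps))}\text{poly}(\log(1/\delta))$ and the classical overhead is $n^{\text{poly}(1/\eps)}\text{poly}(\log(1/\delta))$, matching the claimed bounds. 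Finally, the binary-search reduction from agnostic tomography proper to this cover-construction problem (over $O(\log(1/\eps))$ values of $\eta$) is absorbed into the same bounds.
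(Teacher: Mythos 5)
Your overall plan is essentially the paper's: build $(\eta,\eps)$-good covers $\mathcal{C}_m$ for $\rho_{[m]}$ by sweeping over qubits, use the size bound from the Gershgorin argument, and reduce the extension step to a constrained polynomial optimization after truncating to low Hamming weight and performing subspace tomography. The subroutine you sketch (WLOG $\vec{a}=\vec{0}$, degree-$2d$ polynomial with $\fnorm{\rho_d}\le 1$, enumerate large coordinates, approximate the normalization by $e^{-\twonorm{\vec{z}}^2}$, linearize the farness constraints by \cref{lem:dtan-l2} after guessing the projection onto the span of the $\vec a'$'s, hand off to the polynomial-optimization oracle) is precisely \cref{alg:inner_loop}/\cref{lem:IT}.

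There is, however, one concrete omission in the inductive step. You argue that the first $m-1$ coordinates of any target $\vec{z}$ lie within $\dtan$-distance $3/\eta$ of some $\vec{a} \in \mathcal{C}_{m-1}$, and then propose to ``search the tangent ball of radius $\approx 3/\eta$ around $\ket{\pi_{\vec{a}}}$.'' But $\ket{\pi_{\vec{a}}}$ lives on $m-1$ qubits, so this does not constrain $z_m$ at all: the resulting search region is an unbounded cylinder, not a ball. Without a bound on $|z_m|$, the norm $\twonorm{\vec z}$ (and hence the Hamming-weight parameter $\mu$ in \cref{lem:low_weight}) is unbounded, and the truncation degree $d$ cannot be taken $O(1/\eta^2 + \log(1/\eps))$. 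The paper fixes this by enumerating roots over $\mathcal{C}_{m-1}\otimes\mathcal{N}$, where $\mathcal{N}=\{\ket{0},\ket{1},\ket{+},\ket{-},\ket{+\ii},\ket{-\ii}\}$ is a constant-size $1$-net in tangent distance over single-qubit states; this is exactly what turns the effective search radius from $3/\eta$ into $\sqrt{(3/\eta)^2+1}\le 4/\eta$. You later quote the $\twonorm{\vec z}\le 4/\eta$ constraint of \eqref{program:intro}, but nothing in your argument justifies it without this extra net. This is a small but genuinely load-bearing step.

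Two further small inaccuracies: the number of oracle calls is $O(n/\eta^2 \cdot |\mathcal{N}|)$ (not $O(n/\eta)$), since each of the $\le n$ stages iterates over $|\mathcal{C}_{m-1}|\cdot|\mathcal{N}| = O(1/\eta)$ roots and may add up to $O(1/\eta)$ elements to $\mathcal{C}_m$; and the closing remark about binary search over $\eta$ pertains to \cref{rmk:applications_of_thm}, not to the statement of \cref{thm:main} itself, which only promises a cover for the given $\eta$. Neither affects the claimed asymptotic complexity.
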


\begin{remark}[Applications of \cref{thm:main}]
    \label{rmk:applications_of_thm}
    We use this remark to note some immediate corollaries of the above algorithm.

    First, it can be used to deduce $\OPT = \max_{\ket{\pi}} \bra{\pi} \rho \ket{\pi}$, the maximum fidelity a product state has with $\rho$, to a specified error $2\eps$.
    This is because the cover $\mathcal{C}$ output by the algorithm contains at least one product state with fidelity at least $\eta - \eps$, if a product state with fidelity at least $\eta$ exists.
    So, we can start with $\eta = 1/2$ and perform binary search on the choice of $\eta$, reducing $\eta$ when the output cover is empty and increasing it when it is non-empty.
    After $O(\log(1/\eps))$ iterations (and with an appropriate choice of $\delta$), $\eta$ will be an $\eps$-good estimate.
    This also gives a product state $\ket{\pi}$ such that $\bra{\pi} \rho \ket{\pi} \geq \OPT - 2\eps$.
    The running time for this algorithm is also $n^{\poly(1/\eps)} \poly(\log\frac{1}{\delta})$, since if $\eta$ ever drops below $\eps$, $0$ is a suitable output, and the number of copies used is at most $(\poly(n))^{1/(\max(\OPT, \eps))^2 + \log\frac{1}{\eps}}\poly(\log\frac{1}{\delta})$.
\end{remark}

Now, we show that a good product state cover cannot be too large.

\begin{claim}[Size of a good cover] \label{claim:cover_size}
    Let $\mathcal{C}$ be a $(\eta, \eps, b, B)$-good cover for a state with density matrix $\rho$.
    Then $\abs{\mathcal{C}} \leq \frac{1}{\eta - \eps - \frac{1}{b}}$, provided $\eta - \eps - \frac{1}{b} > 0$.
\end{claim}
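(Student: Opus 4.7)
The plan is to follow the matrix argument sketched in the technical overview, but tracked with general parameters $(\eta, \eps, b)$ instead of the specialization $b = 2/\eta$. Let $\mathcal{C} = \{\ket{\pi^{(1)}}, \ldots, \ket{\pi^{(k)}}\}$ and form the matrix $M \in \C^{2^m \times k}$ whose $i$th column is $\ket{\pi^{(i)}}$. The two quantities to compare are $\tr(M M^\dagger \rho)$, which will be bounded \emph{below} using the good-fidelity property, and $\opnorm{M M^\dagger} = \opnorm{M^\dagger M}$, which will be bounded \emph{above} using the separation property.

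For the lower bound, since $M M^\dagger = \sum_i \proj{\pi^{(i)}}$, we have
\[
\tr(M M^\dagger \rho) \;=\; \sum_i \bra{\pi^{(i)}} \rho \ket{\pi^{(i)}} \;\geq\; \abs{\mathcal{C}}\,(\eta - \eps),
\]
using property 1 of \cref{def:product-cover}. On the other hand, since $\rho$ is a density matrix,
\[
\tr(M M^\dagger \rho) \;\leq\; \opnorm{M M^\dagger} \,\tr(\rho) \;=\; \opnorm{M^\dagger M}.
\]

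For the upper bound on $\opnorm{M^\dagger M}$, note that $M^\dagger M$ is a $k \times k$ Hermitian matrix whose diagonal entries equal $1$ and whose off-diagonal entries are $\braket{\pi^{(i)}|\pi^{(j)}}$ for $i \neq j$. By property 2, $\dtan(\ket{\pi^{(i)}}, \ket{\pi^{(j)}}) \geq b$, so the right-hand inequality of \cref{lem:dtan_fidelity_relationship} yields
\[
\abs{\braket{\pi^{(i)}|\pi^{(j)}}}^2 \;\leq\; \frac{1}{1 + b^2} \;\leq\; \frac{1}{b^2},
\]
hence $\abs{\braket{\pi^{(i)}|\pi^{(j)}}} \leq 1/b$. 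The Gershgorin circle theorem then gives $\opnorm{M^\dagger M} \leq 1 + (k-1)/b \leq 1 + k/b$.

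Combining the two inequalities yields $\abs{\mathcal{C}}(\eta - \eps) \leq 1 + \abs{\mathcal{C}}/b$, which rearranges (using the hypothesis $\eta - \eps - 1/b > 0$) to the claimed bound $\abs{\mathcal{C}} \leq 1/(\eta - \eps - 1/b)$. There is no real obstacle here; the only mild care required is invoking the correct direction of \cref{lem:dtan_fidelity_relationship} to turn the tangent-distance lower bound into a fidelity upper bound, and making sure the sign condition $\eta - \eps - 1/b > 0$ is used exactly to allow division at the final step.
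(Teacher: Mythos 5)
Your proof is correct and follows essentially the same route as the paper's: form the matrix $M$ with the cover states as columns, bound $\opnorm{M^\dagger M}$ above via the separation property, \cref{lem:dtan_fidelity_relationship}, and Gershgorin, bound $\tr(MM^\dagger\rho)$ below via the fidelity property, and compare. The only cosmetic difference is that you spell out the intermediate step $\abs{\braket{\pi^{(i)}|\pi^{(j)}}}^2 \leq 1/(1+b^2)$ and carry the slightly tighter $(k-1)/b$ before relaxing to $k/b$, which the paper elides.
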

\begin{proof}
    Let $\mathcal{C} = \{\ket{\pi^{(i)}}\}_i$, and let $M$ be the matrix whose $i$th column is $\ket{\pi^{(i)}}$.
    Consider the Gram matrix $M^{\dagger}M$; its $(i, j)$th entry is $\braket{\pi^{(i)} | \pi^{(j)}}$.
    When $i = j$, this entry is $1$, and otherwise, the entry has bounded magnitude:
    \begin{align*}
        \abs{\braket{\pi^{(i)} | \pi^{(j)}}} \leq \frac{1}{\dtan(\ket{\pi^{(i)}}, \ket{\pi^{(j)}})} \leq \frac{1}{b},
    \end{align*}
    where the first step follows from \cref{lem:dtan_fidelity_relationship} and the second from property 2 of the definition of a good product state cover.
    So, every row has one diagonal entry of value $1$ and the other $\abs{\mathcal{C}} - 1$ entries are bounded by $\frac{1}{b}$.
    A consequence of the Gershgorin circle theorem is that the operator norm of a symmetric matrix is bounded by the maximum sum of absolute values of all the entries in a single column.
    So,
    \begin{equation*}
        \opnorm{M}^2 = \opnorm{M^{\dagger} M} \leq 1 + \frac{|C|}{b}\,.
    \end{equation*}
    This gives us the upper bound $ \leq 1 + |\mathcal{C}| / b$. We can similarly lower bound the operator norm of $M$:
    \begin{align*}
        \opnorm{M}^2 &= \opnorm{MM^\dagger}
        \geq \Tr[MM^{\dagger} \rho]
        = \sum_{i} \bra{\pi_i} \rho \ket{\pi_i}
        \geq |\mathcal{C}|(\eta - \eps) \,.
    \end{align*}
    Here we use the definition of the operator norm and property 1 of the definition of a good product state cover.
    Putting both bounds together, we have that
    \begin{align*}
        \abs{\mathcal{C}} \leq \frac{1}{\eta - \eps - \frac{1}{b}}
    \end{align*}
    as desired.
\end{proof}

\subsection{Finding a good product state cover}
Now we present an iterative algorithm that builds a good product state cover as it sweeps along the registers of the input state.

\begin{longfbox}[breakable=false, padding=1em, margin-top=1em, margin-bottom=1em]
\begin{algorithm}[Extending a good product state cover]
\label{alg:outer_loop}\mbox{}
    \begin{description}
    \item[Input:] Copies of a state $\rho$; parameters $\eta \in (0, 1)$ and $\eps \in (0, \frac{\eta}{3})$
    \item[Output:] $\mathcal{C}$, a $(\eta, \eps)$ good product state cover for $\rho$.
    \item[Subroutine:] We assume the existence of an oracle which, given a set of product states $\mathcal{C}$, copies of the state $\rho$, and a ``root'' product state $\ket{\varphi}$, either outputs a classical description of a product state $\ket{\pi}$ or $\bot$.
    The output is guaranteed to satisfy
    \begin{enumerate}[label=(\alph*)]
        \item $\bra{\pi}\rho\ket{\pi} \geq \eta - \eps$;
        \item For all $\ket{\pi'} \in \mathcal{C}_{k}$, $\dtan(\ket{\pi}, \ket{\pi'}) \geq 2/\eta$;
    \end{enumerate}
    If there is a product state $\ket{\pi}$ such that
    \begin{enumerate}[label=(\alph*')]
        \item $\bra{\pi}\rho\ket{\pi} \geq \eta$;
        \item For all $\ket{\pi'} \in \mathcal{C}_{k}$, $\dtan(\ket{\pi}, \ket{\pi'}) \geq 3/\eta$;
        \item $\dtan(\ket{\pi}, \ket{\varphi}) \leq 4/\eta$;
    \end{enumerate}
    then the oracle is guaranteed to not output $\bot$.
    \item[Procedure:] \mbox{}
    \begin{algorithmic}[1]
        \State Let $\mathcal{N} \subset \C^2$ be a $1$-tangent distance net over qubits;
        \State Let $\mathcal{C}_0 = \{1\}$;
        \For{$k$ from $1$ to $n$}
            \LComment{Create an $(\eta, \eps)$-good product state cover for $\rho_{[k]}$}
            \State Let $\mathcal{C}_{k} = \varnothing$;
            \Loop
                \State Call the oracle on $\mathcal{C}_k$ and $\rho_{[k]}$ and all $\ket{\varphi} \in \mathcal{C}_{k-1} \otimes \mathcal{N}$;
                \label{line:add-to-cover}
                \State If any of the calls return a product state $\ket{\pi} \in (\mathbb{C}^2)^{\otimes k}$, add it to $\mathcal{C}_k$;
                \State Otherwise, exit the loop;
            \EndLoop
        \EndFor
        \State Output $\mathcal{C}_n$;
    \end{algorithmic}
    \end{description}
\end{algorithm}
\end{longfbox}

\begin{remark}[An explicit 1-tangent distance net]
    In the algorithm, we need a net over qubit states $\net$ such that, for every state $\ket{\phi}$, there is a $\ket{\varphi} \in \net$ such that $\dtan(\ket{\phi}, \ket{\varphi}) < 1$.
    By \cref{def:tangent-n1}, considering the states on the Bloch sphere, this means that the angle $\theta$ between $\ket{\phi}$ and $\ket{\varphi}$ is less than $\pi/2$.
    So, we can form such a net just by picking the states on the axes of the Bloch sphere: $\net = \braces{\ket{0}, \ket{1}, \ket{+}, \ket{-}, \ket{+\ii}, \ket{-\ii}}$.
\end{remark}

\begin{claim} \label{claim:outer}
    \cref{alg:outer_loop} outputs a $(\eta, \eps)$ good product state cover for $\rho$, requiring $O(\frac{n}{\eta^2})$ runs of the subroutine and $O(\frac{n^2}{\eta^2})$ classical overhead.
\end{claim}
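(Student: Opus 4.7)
The plan is to proceed by induction on $k$, showing that after the $k$th outer iteration $\mathcal{C}_k$ is an $(\eta,\eps)$-good cover for $\rho_{[k]}$. The base case $k=0$ is trivial. For the inductive step, properties (1) and (2) of a good cover---good fidelity and pairwise separation---are inherited directly from the oracle's guarantees (a) and (b) every time a new element is added. Almost all of the real work lies in verifying the coverage property (3): every product state $\ket{\phi}$ on $k$ qubits with $\bra{\phi}\rho_{[k]}\ket{\phi}\geq\eta$ must lie within tangent distance $3/\eta$ of some element of $\mathcal{C}_k$.

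To prove coverage, I would argue by contradiction. Suppose the inner loop at step $k$ has exited but some product state $\ket{\phi}=\ket{\phi'}\otimes\ket{\phi_k}$ satisfies $\bra{\phi}\rho_{[k]}\ket{\phi}\geq\eta$ while being more than $3/\eta$ in tangent distance from every element of $\mathcal{C}_k$. The crucial step is to produce a ``root'' $\ket{\varphi}\in\mathcal{C}_{k-1}\otimes\net$ within $4/\eta$ of $\ket{\phi}$. Since fidelity cannot decrease under partial trace, $\bra{\phi'}\rho_{[k-1]}\ket{\phi'}\geq\bra{\phi}\rho_{[k]}\ket{\phi}\geq\eta$, so the inductive hypothesis supplies $\ket{\pi'}\in\mathcal{C}_{k-1}$ with $\dtan(\ket{\phi'},\ket{\pi'})\leq 3/\eta$; and by definition the net $\net$ supplies $\ket{\varphi_k}$ with $\dtan(\ket{\phi_k},\ket{\varphi_k})<1$. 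Because squared tangent distance is additive across qubits, the concatenation $\ket{\varphi}=\ket{\pi'}\otimes\ket{\varphi_k}$ satisfies $\dtan(\ket{\phi},\ket{\varphi})\leq\sqrt{9/\eta^2+1}\leq 4/\eta$. But then $\ket{\phi}$ itself witnesses the oracle's completeness conditions (a'), (b'), (c') relative to $\ket{\varphi}$, so that oracle call could not have returned $\bot$, contradicting the fact that the loop exited. I expect locating this root to be the main conceptual step; everything else in the coverage argument is bookkeeping.

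For the complexity, I would apply \cref{claim:cover_size} with $b=2/\eta$ and $\eps<\eta/3$, yielding $|\mathcal{C}_k|\leq 6/\eta$ for every $k$. Thus the inner loop makes at most $O(1/\eta)$ passes per value of $k$ (each pass either adds a new element or exits), and each pass invokes the oracle on $|\mathcal{C}_{k-1}\otimes\net|=O(1/\eta)$ roots. Summing over $k$ gives $O(n/\eta^2)$ total oracle calls. The remaining classical work between oracle calls is dominated by maintaining and comparing tangent distances between $n$-qubit parametrizations against the current cover and the net, costing $O(n/\eta)$ per pass, for a total classical overhead of $O(n^2/\eta^2)$ as claimed.
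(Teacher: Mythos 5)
Your proof is correct and follows essentially the same route as the paper's: induction on $k$, inheriting properties (1) and (2) from the oracle's guarantees, and verifying coverage by locating a root $\ket{\varphi}\in\mathcal{C}_{k-1}\otimes\net$ within tangent distance $\sqrt{(3/\eta)^2+1}\leq 4/\eta$ of $\ket{\phi}$ via the inductive hypothesis and the net, then invoking the oracle's completeness. The only superficial difference is that you phrase coverage as a contradiction while the paper argues it directly; the size and overhead accounting via \cref{claim:cover_size} is the same.
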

\begin{proof}
We will show that $\mathcal{C}_k$ is an $(\eta, \eps)$-good product state cover for $\rho_{[k]}$, by induction on $k$.
We consider the oracle's behavior in \cref{line:add-to-cover}, when run on $\mathcal{C}_k$ and $\rho_{[k]}$.
First, we observe that, for all $k$, because conditions (a) and (b) of the subroutine guarantee are identical to properties 1 and 2 of \cref{def:product-cover} with respect to $\rho_{[k]}$, $\mathcal{C}_k$ will always obey such properties.
So, it suffices to show that $\mathcal{C}_k$ also obeys property 3 at the end of the ``repeat'' loop.

We first show for $k = 1$.
Consider some product state $\ket{\phi}$ such that $\bra{\phi} \rho_{[k]} \ket{\phi} \geq \eta$.
Then $\ket{\phi}$ satisfies condition (a') and (c') in \cref{line:add-to-cover} for some element of $\mathcal{C}_0 \otimes \mathcal{N}$: since $\mathcal{N}$ is a net, there is some $\ket{\varphi} \in \mathcal{C}_0 \otimes \mathcal{N} = \mathcal{N}$ such that $\dtan(\ket{\varphi}, \ket{\phi}) < 1 \leq 4/\eta$.
Because the oracle output $\bot$ when run on $\mathcal{C}_k$, this means that (b') must not be satisfied for $\ket{\phi}$.
In other words, some $\ket{\pi} \in \mathcal{C}_1$ satisfies $\dtan(\ket{\pi}, \ket{\phi}) < \frac{3}{\eta}$.
This shows that $\mathcal{C}_1$ satisfies property 3 of the product state cover.

For $k > 1$, again consider a product state $\ket{\phi} = \ket{\phi_1} \dots \ket{\phi_k}$ such that $\bra{\phi} \rho_{[k]} \ket{\phi} \geq \eta$.
Then $\ket{\phi}$ satisfies condition (a').
Further, it satisfies condition (c') for some $\ket{\varphi} \in \mathcal{C}_{k-1} \otimes \mathcal{N}$: since $\mathcal{C}_{k-1}$ is a good product state cover and $(\bra{\phi_1}\dots\bra{\phi_{k-1}}) \rho_{[k-1]} (\ket{\phi_1}\dots\ket{\phi_{k-1}}) \geq \eta$, there exists a product state $\ket{\nu}$ in $\mathcal{C}_{k-1}$ such that $\dtan(\ket{\phi_1}\dots\ket{\phi_{k-1}}, \ket{\nu}) \leq 3/\eta$.
Then, there is some $\ket{\varphi_k}$ such that $\dtan(\ket{\phi}, \ket{\nu}\ket{\varphi_k}) \leq \sqrt{(\frac{3}{\eta})^2  + 1} \leq \frac{4}{\eta}$ as claimed.

Thus, because of the guarantee of the oracle, after the repeat loop terminates, condition (b') cannot be true for $\ket{\phi}$.
So, there is a $\ket{\pi} \in \mathcal{C}_k$ such that $\dtan(\ket{\phi}, \ket{\pi}) < 3/\eta$.
This shows that $\mathcal{C}_k$ satisfies property 3 of the product state cover.

Every product state cover satisfies $\abs{\mathcal{C}_k} \leq 1/(\eta - \eps - \eta/2) \leq 6/\eta$ by \cref{claim:cover_size}, so the subroutine only needs to be run at most $n(6/\eta)^2\abs{\mathcal{N}}$ times.
The only additional overhead is the task of storing the cover, which takes $O(n)$ time with a classical computer per element added.
\end{proof}

\subsection{Finding candidate product states} \label{subsec:finding}

Now, we specify how to perform \cref{line:add-to-cover} in \cref{alg:outer_loop}.
We restate the goal of that subroutine here.

\begin{lemma} \label{lem:IT}
    Suppose we are given a set of $r$ product state constraints $\{(\vec{a}^{(s)}, b)\}_{s \in [r]}$ where $\vec{a}^{(s)} \in \C^m$ and $b > 0$, a description of a known ``root'' product state $\ket{\varphi} \in (\C^2)^{\otimes m}$, and error parameters $\eta \in (0,1)$, $\eps \in (0, \eta/3)$, and $\delta \in (0,1)$.
    Then there is an algorithm which, with probability $\geq 1-\delta$, successfully performs the subroutine as specified in \cref{alg:outer_loop}: it outputs either $\bot$ or a $\vec{z} \in \C^m$ such that
    \begin{enumerate}[label=(\alph*)]
        \item $\bra{\pi_{\vec{z}}}\rho\ket{\pi_{\vec{z}}} \geq \eta - \eps$;
        \item For all $s \in [r]$, $\dtan(\ket{\pi_{\vec{z}}}, \ket{\pi_{\vec{a}^{(s)}}}) \geq b$.
    \end{enumerate}
    If there is a product state $\ket{\pi}$ such that
    \begin{enumerate}[label=(\alph*')]
        \item $\bra{\pi}\rho\ket{\pi} \geq \eta$;
        \item For all $s \in [r]$, $\dtan(\ket{\pi_{\vec{z}}}, \ket{\pi_{\vec{a}^{(s)}}}) \geq 1.5b$;
        \item $\dtan(\ket{\pi}, \ket{\varphi}) \leq B$;
    \end{enumerate}
    then the output is guaranteed to not be $\bot$.
    The algorithm uses $N \leq (\poly(m))^{(B^2 + \log\frac{1}{\eps})}\log\frac{1}{\delta}$ copies of $\rho$, $\poly(m, B, \log\frac{1}{\eps})$ quantum gates per copy of $\rho$, and $m^{\poly(r, B, b, 1/b, 1/\eps)} \poly(\log\frac{1}{\delta})$ additional classical overhead.
\end{lemma}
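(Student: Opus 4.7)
The plan is to reduce the problem to a constrained low-degree polynomial optimization problem, as sketched in the technical overview. First I would apply single-qubit unitaries to rotate the root product state $\ket{\varphi}$ to $\ket{0^m}$; this preserves both fidelity with $\rho$ (after the corresponding conjugation) and all tangent distances, and it converts the constraint $\dtan(\ket{\pi_{\vec{z}}}, \ket{\varphi}) \leq B$ into $\twonorm{\vec{z}} \leq B$ via \cref{lem:dtan_fidelity_relationship}. Henceforth I work in this rotated frame and restrict attention to $\vec{z}$ in the ball of radius $B$.

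Next I would truncate $\rho$ to low Hamming weight. Applying \cref{lem:low_weight} with $\mu = B^2$ and taking $d = O(B^2 + \log(1/\eps))$ yields $\twonorm{\Pi_{\geq d}\ket{\pi_{\vec{z}}}}^2 \leq \eps^2$ for every $\vec{z}$ with $\twonorm{\vec{z}} \leq B$. Writing $\rho_d \coloneqq \Pi_{<d}\rho\Pi_{<d}$, a standard operator-norm argument then gives $|\bra{\pi_{\vec{z}}}\rho\ket{\pi_{\vec{z}}} - \bra{\pi_{\vec{z}}}\rho_d\ket{\pi_{\vec{z}}}| \leq 2\eps$ uniformly over the ball. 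The low-Hamming-weight subspace has dimension $m^{O(d)}$, so I can obtain an estimate $\widehat{\rho}_d$ in Frobenius norm to error $\eps$ using $\poly(m^d/\eps)\log(1/\delta)$ copies of $\rho$ through a subspace-tomography primitive (mapping into a $\poly(m^d)$-dimensional register via $\poly(m, d)$-gate circuit and running standard tomography there). Expanded in the computational basis,
\[
\bra{\pi_{\vec{z}}}\widehat{\rho}_d\ket{\pi_{\vec{z}}} = \frac{1}{\prod_{i\in[m]}(1 + |z_i|^2)} \sum_{\substack{x,x'\in\{0,1\}^m\\ |x|,|x'|<d}} \bra{x}\widehat{\rho}_d\ket{x'}(\vec{z}^*)^x (\vec{z})^{x'},
\]
which, up to the normalization factor, is a degree-$2d$ polynomial in $\vec{z}, \vec{z}^*$ whose coefficient vector has $\ell_2$ norm at most $\fnorm{\widehat{\rho}_d} \leq 1 + \eps$.

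The remaining task is to maximize this objective subject to $\twonorm{\vec{z}} \leq B$ and the $r$ farness constraints $\dtan(\ket{\pi_{\vec{z}}}, \ket{\pi_{\vec{a}^{(s)}}}) \geq b$. I would deal with the normalization and the farness constraints by enumeration. Fix an $\ell_\infty$ threshold $\mu$ (a small constant times $b$, chosen so that on the ``small'' coordinates \cref{lem:dtan-l2} gives $\dtan \approx \twonorm{\cdot}$ to additive error $\ll b$ and $-\log\prod(1+|z_i|^2)$ is well approximated by $\twonorm{\vec{z}}^2$ up to additive $\eps$). At most $B^2/\mu^2 = O(\poly(B/b))$ coordinates can exceed $\mu$, so I enumerate over the choice of such coordinates and their values on an $\eps$-net in $\C$; after fixing these, the residual optimization variable $\vec{z}'$ satisfies $\infnorm{\vec{z}'} \leq \mu$. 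Similarly, $\vec{z}'$ restricted to the span of (the small-coordinate parts of) the $\vec{a}^{(s)}$'s is an $r$-dimensional vector which I guess on a net, reducing each farness constraint to a scalar comparison that can be hardcoded. After all guesses are fixed, one substitutes $\prod (1+|z_i'|^2)^{-1} \approx e^{-\twonorm{\vec{z}'}^2}$ (treated as a bounded scalar parameter we likewise net over), obtaining a bounded-coefficient polynomial optimization over $\vec{z}'$ with an $\ell_2$ constraint, an $\ell_\infty$ constraint, and a subspace constraint.

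Each such subproblem is solved to additive error $\eps \fnorm{\widehat{\rho}_d} \leq 2\eps$ by invoking the constrained tensor-optimization primitive described in the overview, which runs in time $m^{\poly(d/\eps)} = m^{\poly(B,1/\eps)}$. I accept any candidate $\vec{z}$ whose estimated objective exceeds $\eta - \eps/2$ and output $\bot$ otherwise. Soundness follows from the $O(\eps)$-approximation of the true fidelity, tightened by adjusting constants. Completeness follows because some enumerated guess matches the hypothesized $\ket{\pi}$ up to the net accuracy, and then the slack $\eta$ vs $\eta - \eps$ and $1.5b$ vs $b$ absorbs the cumulative approximation losses. The main obstacle is exactly this parameter bookkeeping: the $\ell_\infty$ threshold $\mu$ must simultaneously be small enough that \cref{lem:dtan-l2} renders the nonconvex tangent constraints essentially linear, small enough that the log-normalization is well approximated by a quadratic, but large enough that the brute-force enumeration over large coordinates only contributes $m^{O(B^2/\mu^2)}$ time; verifying that a single choice of $\mu$ (polynomial in $b$ and $\eps$) makes all three requirements compatible within the $0.5b$ and $\eps/2$ slack budgets is the delicate part.
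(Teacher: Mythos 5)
Your proposal follows the paper's approach in all essentials: recenter the root to $\ket{0^m}$; truncate $\rho$ to the low-Hamming-weight subspace with $d = O(B^2 + \log(1/\eps))$ and do tomography there; expand the fidelity as a bounded-coefficient degree-$2d$ polynomial times the normalization $\prod_i(1+|z_i|^2)^{-1}$; handle the normalization and the nonconvex farness constraints by enumerating the few large coordinates of $\vec{z}$ and the projection of $\vec{z}$ onto the span of the $\vec{a}^{(s)}$'s over a net; solve the residual $\ell_2$/$\ell_\infty$/subspace-constrained polynomial optimization; accept if the estimated value exceeds $\eta - \eps/2$. These are exactly the steps of \cref{alg:inner_loop}, and the soundness/completeness argument proceeds just as you sketch.

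Where you stop short is exactly where the paper spends its effort, and there is one concrete point worth correcting. You parenthetically suggest the $\ell_\infty$ threshold $\mu$ should be ``a small constant times $b$.'' Since $b = 2/\eta \geq 2$, this would make $\mu$ a constant bounded away from $0$, and the low-Hamming-weight truncation and the $\prod(1+|z_i|^2)^{-1} \approx e^{-\twonorm{\vec{z}}^2}$ approximation would not give additive error $O(\eps)$. The paper actually sets $\mu \leq \frac{1}{10}\min\bigl(b,\, 1/b,\, \sqrt{\appreps}/B\bigr)$, and the crucial binding constraint is $\mu \lesssim 1/b$ (and $\mu \lesssim \sqrt{\eps}/B$), not $\mu \lesssim b$. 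The $1/b$ requirement emerges in the paper's \cref{claim:guess-to-subspace}: when handling a farness constraint $\vec{a}^{(s)}$ that has a coordinate larger than $1/(2\mu)$, one needs $|\mu - 1/\mu| \gtrsim b$ to argue the constraint is vacuously satisfied, which forces $\mu \lesssim 1/b$. Your proposal also defers the exact construction of the enumeration net and the domain passed to the polynomial-optimization primitive; the paper's \cref{claim:net} and the stability estimate \cref{lem:tan-dist-lipchtiz}/\cref{coro:lipchitz-constraint} are what ensure the net actually respects the farness constraints after discretization. None of this changes the architecture of your argument, but the $\mu \asymp b$ scaling as written would not close, so the ``delicate bookkeeping'' you flag genuinely requires the corrected threshold.
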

From this, we can immediately conclude our main result:
\begin{proof}[Proof of \cref{thm:main}]
    By \cref{claim:outer}, to construct the cover, it suffices to call \cref{lem:IT} $\poly(n, 1/\eta)$ times with parameters $m \leq n$, $B = 4/\eta$, $b = 3/\eta$, and a set of product state constraints where $r = O(1/\eta)$ by \cref{claim:cover_size}.
    There is a failure of probability associated to each run of the subroutine, but the failure probability parameter can be rescaled such that the probability of all calls succeeding is at least $\geq 1-\delta$.
    This gives the stated running time, and associated quantum complexities.
\end{proof}

The rest of this section is devoted to proving \cref{lem:IT}.
The algorithm is given in \cref{alg:inner_loop}; we spend the rest of this subsection describing the intuition for this algorithm.
We prove the desired complexity bounds in \cref{claim:fast}, and then we prove the above guarantees in \cref{claim:sound} and \cref{claim:complete}.
The complexity bound requires the use of a polynomial optimization routine, which is described and proved later, in \cref{sec:opt}.

\paragraph{Algorithm intuition.}
Our goal is to search within tangent distance $B$ of the root state for a product state that has good overlap with $\rho$, if one exists.
Further, we have additional constraints that the state we find be far away from a collection of $r$ product states in tangent distance.
Without loss of generality, we can take the root state to be $\ket{\pi_{\vec{0}}} = \ket{0^n}$.
Then it suffices to find an (approximate) solution to the following optimization problem, which finds the product state with the best fidelity under these constraints:
\begin{equation} \label{program:true}
\begin{aligned}
    & \underset{\vec{z} \in \C^m}{\text{maximize}}
    & & \bra{\pi_{\vec{z}}} \rho \ket{\pi_{\vec{z}}} \\
    & \text{subject to}
    & & \dtan(\ket{\pi_{\vec{z}}}, \ket{\pi_{\vec{a}^{(s)}}}) \geq b \text{ for all } s \in [r], \\
    &&& \dtan(\ket{\pi_{\vec{z}}}, \ket{\pi_{\vec{0}}}) \leq B.
\end{aligned}
\tag{P1}
\end{equation}

We would like to reduce this task to one of optimizing a low-degree polynomial under simple constraints.
The criterion $\dtan(\ket{\pi_{\vec{z}}}, \ket{\pi_{\vec{0}}}) \leq B$ is equivalent to $\twonorm{\vec{z}} \leq B$.
However, the other tangent distance constraints do not simplify so easily.
We can simplify them for small coordinates, and because $\vec{z}$ is bounded norm, most coordinates are small; we encode this into the program:
\begin{equation} \label{program:guess}
\begin{aligned}
    & \underset{\vec{z},\,\diffy{S \subseteq [m]}}{\text{maximize}}
    & & \bra{\pi_{\vec{z}}} \rho \ket{\pi_{\vec{z}}} \\
    & \text{subject to}
    && \diffy{\abs{\vec{z}_i} \leq \mu \text{ for all } i \not\in S} \\
    &&& \diffy{\abs{S} \leq B^2/\mu^2} \\
    &&& \dtan(\ket{\pi_{\vec{z}}}, \ket{\pi_{\vec{a}^{(s)}}}) \geq b \text{ for all } s \in [r], \\
    &&& \diffy{\twonorm{\vec{z}} \leq B}.
\end{aligned}
\tag{P2}
\end{equation}
This program gives an equivalent solution to \eqref{program:true}.
Now we can approximate the farness constraints by an $\ell^2$ constraint for the coordinates outside of $S$.
\begin{equation} \label{program:subspace}
\begin{aligned}
    & \underset{\vec{z},\,S \subset [m]}{\text{maximize}}
    & & \bra{\pi_{\vec{z}}} \rho \ket{\pi_{\vec{z}}} \\
    & \text{subject to}
    && \abs{\vec{z}_i} \leq \mu \text{ for all } i \not\in S \\
    &&& \abs{S} \leq B^2/\mu^2 \\
    &&& \diffy{\dtan(\ket{\pi_{\vec{z}_S}}, \ket{\pi_{\vec{a}^{(s)}_S}})^2 + \norm{\vec{z}_{\overline{S}} - \vec{a}_{\overline{S}}^{(s)}}_2^2 \geq 1.5 b^2} \text{ for all } s \in [r], \\
    &&& \twonorm{\vec{z}} \leq B.
\end{aligned}
\tag{P3}
\end{equation}
Our first key lemma (\cref{claim:guess-to-subspace}) shows that this is a stronger constraint than that of \eqref{program:guess}, but still satisfies for completeness, so the program will still find a good product state under the desired circumstances.

Finally, we would like to approximate the objective function $\braket{\pi_{\vec{z}} | \rho | \pi_{\vec{z}}}$ by a low-degree polynomial.
We do this by replacing $\rho$ with a truncation: for $d \in [m]$, let $\rho_{d} = \Pi_{\leq d} \rho \Pi_{\leq d}$, where $\Pi_{\leq d}$ is the projector onto computational basis strings with Hamming weight less than or equal to $d$.
Then $\braket{\pi_{\vec{z}} | \rho_d | \pi_{\vec{z}}}$ is a degree-$d$ polynomial multiplied by the normalization $\prod_{i \in [m]} \frac{1}{1 + \abs{z_i}^2}$.
Because most coordinates of $\vec{z}$ are small, this normalization can be captured by a simpler quantity: $\prod_{i \in \overline{S}} \frac{1}{1 + \abs{z_i}^2} \approx e^{-\twonorm{\vec{z}_{\overline{S}}}^2}$.
After taking these approximations, we have the following optimization problem.
\begin{equation} \label{program:function}
\begin{aligned}
    & \underset{\vec{z},\,S \subset [m]}{\text{maximize}}
    & & \diffy{p_{\vec{z}_S}(\vec{z}_{\overline{S}}) = \frac{e^{-\twonorm{\vec{z}_{\overline{S}}}^2}}{\prod_{i \in S} (1 + \abs{z_i}^2)} \sum_{x, x' \in \{0,1\}^m} \bra{x} \rho_{d} \ket{x'} (\vec{z}^*)^{x}(\vec{z})^{x'}} \\
    & \text{subject to}
    && \abs{z_i} \leq \mu \text{ for all } i \not\in S \\
    &&& \abs{S} \leq B^2/\mu^2 \\
    &&& \dtan(\ket{\pi_{\vec{z}_S}}, \ket{\pi_{\vec{a}^{(s)}_S}})^2 + \norm{\vec{z}_{\overline{S}} - \vec{a}_{\overline{S}}^{(s)}}_2^2 \geq 1.5b^2 \text{ for all } s \in [r], \\
    &&& \twonorm{\vec{z}} \leq B.
\end{aligned}
\tag{P4}
\end{equation}
Above, for a vector $\Vec{z} \in \mathbb{C}^{m}$ and string $x \in \{0, 1\}^{m}$, we use the notation $(\Vec{z})^{x} = \prod_{x_i = 1} z_i$.
Our second key lemma (\cref{claim:subspace-to-function}) shows that the objective functions of \eqref{program:function} and \eqref{program:subspace} are close on the domain.

\eqref{program:function} is the optimization program implicitly being run by the eventual algorithm (\cref{alg:inner_loop}).
There are three major differences.
First, we do not know $\rho_d$, so we perform tomography on $\rho$ to get some estimate $\varrho_d$ which we use instead (\cref{lem:subspace-tomography}).
This is efficient because we are performing tomography on a small subspace $\Pi_d$.
Second, the above program still has poorly behaved elements.
However, these are all located in the subspace $V^{(S)}$ spanned by the coordinates of $S$ and the $\vec{a}^{(s)}$'s.
If we knew the true value of $\vec{z}$ on $V^{(S)}$ (and the true value of $\twonorm{\vec{z}_{\overline{S}}}$), then the objective function becomes a degree-$d$ polynomial and the constraints all become simple $\ell_2$ and $\ell_\infty$ constraints, which is a form we can solve (\cref{thm:opt-main}).
However, this is a low-dimensional subspace, so we can simply guess the value of $\vec{z}$ on this subspace, and pick the best solution among all guesses.
Finally, we need to allow for some tolerance $\toleps$ in our argument, because of the error in guessing and the error of the final polynomial optimization algorithm.

\subsection{Algorithm and showing running time}
We begin with a lemma for the form of state tomography we need for the algorithm.
We did not attempt to optimize this, but we note that this is the only ``quantum'' part of the algorithm, and any tomography algorithm could be used here.
Here, we choose one which only performs single-copy Clifford measurements.
This is also the only part of the algorithm which is random, and so this is the only place where there is a probability of failure.
For our correctness proofs, we will assume that this tomography step always completes successfully.

\begin{lemma}[Computationally efficient state tomography] \label{lem:subspace-tomography}
    There is an algorithm which, given copies of $\rho$ and a parameter $d$, outputs a positive semi-definite matrix $\hat{\rho}$ with trace at most 1 and supported on the strings of Hamming weight $\leq d$, such that $\opnorm{\hat{\rho} - \Pi_{\leq d}\rho \Pi_{\leq d}} < \eps$.
    This algorithm uses $N = O(\frac{(10n)^{2d}}{\eps^2} \log\frac{1}{\delta})$ copies of $\rho$, along with $\poly(n, d)$ quantum gates per copy of $\rho$ and $\poly(N)$ classical processing.
\end{lemma}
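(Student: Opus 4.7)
The plan is to use the Clifford-based classical shadows protocol to efficiently estimate the $D \times D$ matrix $\Pi_{\leq d} \rho \Pi_{\leq d}$, where $D = \sum_{k=0}^{d} \binom{n}{k} \leq (10n)^d$ is the dimension of the subspace spanned by computational basis states of Hamming weight at most $d$. The key observation is that, although $\rho$ lives in an exponentially large space, the target matrix has only $D^2 = O((10n)^{2d})$ nontrivial entries, and classical shadows allow us to estimate each of them from a single measurement basis.

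First, I would apply the standard random-Clifford shadows subroutine: for each copy of $\rho$, measure in a uniformly random $n$-qubit Clifford basis $U$ to obtain an outcome $\ket{\hat{b}}$, and form the unbiased estimator $\hat{\rho}_i = (2^n + 1) U^\dagger \ketbra{\hat{b}}{\hat{b}} U - \Id$ of $\rho$. For each ordered pair $(b, b')$ of computational basis strings with $|b|, |b'| \leq d$, I would use the shadow to produce a single-shot estimate of the entry $\braket{b | \rho | b'}$ by evaluating $\bra{b}\hat{\rho}_i\ket{b'}$; splitting this estimator into its Hermitian and anti-Hermitian parts, the variance of each is bounded by an absolute constant by the standard Clifford-shadow variance bound.

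Next, I would aggregate $N$ shadows via a median-of-means procedure, splitting into $K = \Theta(\log(D^2/\delta))$ batches so that each of the $D^2$ matrix entries is estimated with error at most $\eps/D$ with failure probability at most $\delta/D^2$. A union bound then gives all entries estimated to this precision simultaneously with probability at least $1-\delta$, provided $N = O(D^2 \log(D/\delta)/\eps^2)$. Summing the entrywise squared errors bounds the Frobenius norm error by $\eps$, which in turn upper bounds the operator norm error, yielding a matrix $\tilde{\rho}$ supported on the weight-$\leq d$ subspace with $\opnorm{\tilde{\rho} - \Pi_{\leq d}\rho \Pi_{\leq d}} \leq \eps$. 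Plugging in $D \leq (10n)^d$ and absorbing the $\log(n)$ factor into the base of the exponent gives the stated bound $N = O((10n)^{2d} \log(1/\delta) / \eps^2)$.

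Finally, the raw estimator $\tilde{\rho}$ may fail to be PSD or may have trace exceeding $1$, so I would post-process it by computing an eigendecomposition of $\tilde{\rho}$ and projecting onto the convex set of PSD matrices supported on the weight-$\leq d$ subspace with trace at most $1$. Since $\Pi_{\leq d}\rho \Pi_{\leq d}$ itself lies in this convex set, projection in operator norm (equivalently, truncating negative eigenvalues and rescaling if needed) degrades the error by at most a constant factor, which can be absorbed into $\eps$. The quantum overhead per copy is dominated by sampling and implementing a uniformly random $n$-qubit Clifford, which takes $\poly(n)$ gates; the classical post-processing evaluates $D^2$ sparse linear functionals on each of $N$ shadows and performs a single eigendecomposition of a $D \times D$ matrix, running in $\poly(N)$ time since $D \leq N$. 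The main point to verify carefully is the variance calculation for off-diagonal entries and the median-of-means bookkeeping; no sharper matrix-concentration argument is needed, since the naive entrywise Frobenius bound already matches the target $(10n)^{2d}/\eps^2$ scaling.
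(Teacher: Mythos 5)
Your proposal is correct and reaches the same sample-complexity bound, but by a genuinely different route than the paper. The paper's proof first applies a unitary compression circuit that coherently maps the Hamming-weight-$\le d$ information into an $O(d \log n)$-qubit ancilla register, discards the rest, and then runs a black-box random-Clifford tomography routine on the resulting $D$-dimensional system; it invokes the Flammia--O'Donnell postprocessing propositions to obtain a PSD, trace-$\le 1$ estimate with a high-probability Frobenius-norm guarantee in one stroke, and finally extracts the submatrix corresponding to $\Pi_{\le d}\rho\Pi_{\le d}$. You instead run classical shadows directly on the full $n$-qubit state, estimate each of the $D^2$ entries $\braket{b|\rho|b'}$ by evaluating the sparse observables on each shadow (efficiently, since stabilizer-state amplitudes $\braket{b|U^\dagger|\hat b}$ are computable in $\poly(n)$ time), drive each entry to precision $\eps/D$ via per-entry median-of-means, and union bound; the Frobenius bound then follows from summing entrywise squared errors, and you restore PSD-and-trace-$\le 1$ by a post-hoc projection onto that convex set, which incurs only a factor-two loss by the standard nonexpansiveness-plus-triangle-inequality argument. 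Your approach trades the paper's clean single-shot Frobenius guarantee and black-box positivity for a more elementary, if bookkeeping-heavier, entrywise argument, and avoids designing the compression circuit. The minor price is the extra $\log(D^2/\delta) = O(d\log n + \log\tfrac{1}{\delta})$ factor from the union bound, which, as you note, is absorbed into the base of the $(10n)^{2d}$ because $D \le (d+1)\binom{n}{d}$ leaves slack against $(10n)^d$. Your variance claim for the off-diagonal Hermitian and anti-Hermitian parts is correct: each has $\tr(O^2) = \tfrac12$, so the standard $\Var \le 3\tr(O_0^2)$ Clifford-shadow bound gives an absolute constant.
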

\begin{proof}
    First, similarly to \Cref{lem:estimating_z}, for every copy of $\rho$, we can attach $d \ceil{\log_2(n + 1)} + O(\log d)$ ancilla qubits, and then apply a $\poly(n, d)$-sized circuit which performs the following unitary.
    For a set $J \subseteq [n]$ with elements $i_1 < i_2 < \dots < i_{\abs{J}}$, the circuit maps $\ket{0^d} \ket{e_J} \mapsto \ket{i_1}\ket{i_2}\dots\ket{i_{\abs{J}}} \ket{0^{d - \abs{J}}} \ket{e_J}$ when $\abs{J} \leq d$, and does nothing to $\ket{0^d}\ket{e_J}$ when $\abs{J} > d$.
    Here, we use $e_J$ to denote the $n$-qubit state which is $\ket{1}$ on the qubits in $J$ and $\ket{0}$ otherwise.
    After applying this circuit and discarding the $n$-qubit state, the density matrix of the $d \ceil{\log_2(n + 1)}$ ancilla qubits contains $\Pi_{\leq d} \rho \Pi_{\leq d}$ as a submatrix.
    So, it suffices to perform tomography on this density matrix of size $D \leq (10n)^d$, which we denote $\sigma$, and output the corresponding matrix, which is $\Pi \sigma \Pi$ for some projector $\Pi$ on some subset of computational basis states.

    We use a simple, gate-efficient tomography algorithm.
    Consider the process that measures a random Clifford circuit, $C$, performs $C^{\dagger}$ on the $\sigma$ and measures in the computational basis to get $\ket{i}$, and then taking the estimator $((D+1) C \proj{i} C^{\dagger} - \id)$.
    If we perform this process $N$ times and then average the estimator to form $\hat{\sigma}$, it satisfies the following \cite[Section 1.5.2]{lowe2021learning}:
    \begin{equation*}
        \E \fnorm{\hat{\sigma} - \sigma}^2 \leq \frac{D^2 + D - 1}{N} \,.
    \end{equation*}
    In the language of Flammia and O'Donnell~\cite{fo24}, this is a state estimation algorithm with Frobenius-squared rate $\frac{D^2 + D - 1}{N}$, and so by \cite[Proposition 3.10]{fo24}, with only a constant factor loss in the guarantee, we can modify the algorithm such that it always outputs a matrix $\hat{\sigma}$ which is positive semi-definite and satisfies $\tr(\hat{\rho}) = 1$.
    Further, by \cite[Proposition 3.11]{fo24}, the guarantee can be upgraded to the guarantee that $\fnorm{\hat{\sigma} - \sigma}^2 \leq \frac{c(D^2 + D - 1)}{N} \log\frac{1}{\delta}$ with probability $\geq 1-\delta$, for some sufficiently large constant.
    Since $\opnorm{X} \leq \fnorm{X}$, for some choice of $N = O(\frac{D}{\eps}\log\frac{1}{\delta})$, we have that $\opnorm{\hat{\sigma} - \sigma} < \eps$ with probability $\geq 1-\delta$.
    Finally, this operator norm bound also bounds the operator norm distance for every submatrix, so we can conclude that for $\hat{\rho}$ formed by re-labeling the rows and columns of $\Pi \hat{\sigma} \Pi$, $\opnorm{\hat{\rho} - \Pi_{\leq d} \rho \Pi_{\leq d}} < \eps$.
\end{proof}

For the following algorithm, and throughout this section, for a vector $\vec{v}$ we use the notation $\vec{v}_S \in \C^{\abs{S}}$ to denote the vector restricted to the coordinates of $S$, and for two vectors $\vec{u}_S, \vec{v}_{\overline{S}}$, the notation $(\vec{u}_S, \vec{v}_{\overline{S}})$ denotes the vector which is $\vec{u}$ on the coordinates corresponding to $S$ and $\vec{v}$ on the coordinates corresponding to $\overline{S}$.

\begin{longfbox}[breakable=false, padding=1em, margin-top=1em, margin-bottom=1em]
\begin{algorithm}[Adding a new element to the cover]
\label{alg:inner_loop}\mbox{}
    \begin{description}
    \item[Input:] A set of product state constraints $\{(\vec{a}^{(s)}, b)\}_{s \in [r]}$; copies of an $m$-qubit state $\rho$; an explicit known ``root'' product state $\ket{\varphi} \in (\C^2)^{\otimes m}$; parameters $B, \eta, \eps, \delta$.
    \item[Output:] Either $\bot$ or a $\vec{z} \in \C^m$ with guarantees as in \cref{lem:IT}
    \item[Procedure:]\mbox{}
    \begin{algorithmic}[1]
        \State Recenter $\ket{\varphi}$ to be $\ket{0^m}$;
        \State Let $\appreps = \eps/100$;
        \Comment{$\appreps$ is for approximation errors.}
        \State Choose $\mu \leq \frac{1}{10} \min(b, \frac{1}{b}, \frac{\sqrt{\appreps}}{B})$;
        \Comment{$\mu$ is the eventual $\ell_\infty$ bound; we need the $b$ upper bound for \cref{claim:guess-to-subspace}, and the $\sqrt{\eps}/B^2$ upper bound for \cref{claim:subspace-to-function}}
        \State Let $d = 10 B^2 + \log\frac{2}{\appreps}$;
        \Comment{$d$ is the degree of the eventual polynomial}
        \State Let $\toleps \leq \min(\gamma, \mu^4 / \sqrt{m}, 0.01\eps) = \poly(1/m, \mu, \eps)$;
        \Comment{Here, $\gamma$ is the tolerance parameter coming from \cref{thm:opt-main}; the other parts are used in the proof of soundness (\cref{claim:sound}).}
        \State Perform tomography on $\rho_d = \Pi_{\leq d}\rho \Pi_{\leq d}$ to get a description of $\varrho_d$ such that $\opnorm{\varrho_d - \rho_d} \leq \appreps$ with probability $\geq 1-\delta$, where $\Pi_{\leq d} = \sum_{\abs{x} \leq d} \ketbra{x}{x}$ (\cref{lem:subspace-tomography});
        \Comment{Note that we will use the additional properties of $\varrho_d$ guaranteed by \cref{lem:subspace-tomography}.
        In particular, $\varrho_d$ is supported on the image of $\Pi_{\leq d}$.}
        \ForAll{subsets $S \subseteq [m]$ of size $\leq B^2/\mu^2$}
            \State Let $\vsub$ be the subspace spanned by $\braces{\vec{a}^{(s)}}_{s \in [r]}$ and the computational basis vectors associated to $S$;
            \State Let $\net_S$ be an $\toleps$-net over the space $\calF_S$, where
            \begin{multline} \label{eq:final-feasible}
                \calF_S = \Big\{\vec{v} \in \vsub,\, \nu \in [0, B] \,\Big|\, \twonorm{\vec{v}_S}^2 + \nu^2 \leq B^2,\, \twonorm{\vec{v}} \leq B,\, \\
                \nu^2 - \twonorm{\vec{v}_{\overline{S}}}^2 + \twonorm{\vec{v}_{\overline{S}} - \vec{a}_{\overline{S}}^{(s)}}^2 \geq 1.5b^2 - \dtan(\ket{\pi_{\vec{v}_S}}, \ket{\pi_{\vec{a}^{(s)}_S}})^2 \text{ for all } s \in [r]
                \Big\}.
            \end{multline}
            \Comment{The net enforces that the output is far from the $\vec{a}^{(s)}$'s; the error parameter needs to be at most $\toleps$, and is used to show completeness (\cref{claim:complete}). The formal criteria the net satisfies is given in \cref{claim:net}.}
            \For{$(\vec{v}, \nu) \in \net_S$}
                \State Consider the domain
                \begin{multline} \label{program:final-domain}
                    \mathcal{D}_{\toleps} = \braces{
                        \vec{z}_{\overline{S}} \in \C^{\abs{\overline{S}}}
                        \mid \abs{\twonorm{\vec{z}_{\overline{S}}} - \nu} \leq \toleps,\,
                        \twonorm{\Pi_\vsub \vec{z} - \vec{v}} \leq \toleps, \\
                        \infnorm{\vec{z}_{\overline{S}}} \leq \mu + \toleps
                        \text{ for } \vec{z} = (\vec{v}_S,\,\vec{z}_{\overline{S}})};
                \end{multline}
                \State Use \cref{thm:opt-main} to find a $\vec{y}_{\overline{S}} \in \mathcal{D}_{2\toleps}$ such that
                \begin{multline} \label{program:final}
                    p_{\vec{v}_S, \nu}(\vec{y}_{\overline{S}}) \geq \max_{\vec{z}_{\overline{S}} \in \mathcal{D}_{\toleps}} p_{\vec{v}_S, \nu}(\vec{z}_{\overline{S}}) - \appreps \\
                    \text{for } p_{\vec{z}_S, \nu}(\vec{z}_{\overline{S}}) = \frac{e^{-\nu^2}}{\prod_{i \in S} (1 + \abs{z_i}^2)} \sum_{x, x' \in \{0,1\}^m} \bra{x} \varrho_d \ket{x'} (\vec{z}^*)^{x}\vec{z}^{x'},
                    \tag{P0}
                \end{multline}
                \State Add $\vec{y} = (\vec{v}_{S}, \vec{y}_{\overline{S}})$ to the list of candidate solutions, along with its objective value $p_{\vec{y}} \gets p_{\vec{v}_S, \nu}(\vec{y}_{\overline{S}})$;
            \EndFor
        \EndFor
        \State Let $\vec{u}$ be the candidate solution achieving the largest objective value $p_{\vec{u}}$;
        \State If $p_{\vec{u}} \geq \eta - \eps/2$, output it; output $\bot$ otherwise.
    \end{algorithmic}
    \end{description}
\end{algorithm}
\end{longfbox}

\begin{claim} \label{claim:fast}
    For some sufficiently large $C > 1$, \cref{alg:inner_loop} requires $N \leq m^{C(B^2 + \log\frac{1}{\eps})}\log\frac{1}{\delta}$ copies of $\rho$, $\poly(m, B, \log\frac{1}{\eps})$ quantum gates per copy of $\rho$, $(\poly(B, b, m)/\toleps)^{B^2/\mu^2 + r + 1}$ calls to the optimization problem \eqref{program:final}, and $(N + (\poly(B, b, m)/\toleps)^{B^2/\mu^2 + r})^C$ additional classical overhead.
\end{claim}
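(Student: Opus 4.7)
}
The claim is a bookkeeping exercise that collects the costs of the three components of Algorithm~\ref{alg:inner_loop}: (i) the tomography subroutine invoked once on $\rho$, (ii) the outer enumeration over subsets $S$ and over a net on the subspace $\vsub$, and (iii) each invocation of the polynomial optimization routine from \cref{thm:opt-main}. The plan is to bound each of these in turn, plug in the choices of $d$, $\mu$, and $\toleps$ set at the top of the algorithm, and check that every quantity fits into the stated bound for a large enough absolute constant $C$.

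For the quantum cost, I would apply \cref{lem:subspace-tomography} with truncation parameter $d = 10B^2 + \log(2/\appreps)$ and operator-norm error $\appreps = \eps/100$. Since $d = O(B^2 + \log(1/\eps))$, the sample complexity bound $O\mparen{(10m)^{2d}\appreps^{-2}\log(1/\delta)}$ is immediately of the form $m^{C(B^2 + \log(1/\eps))}\log(1/\delta)$ for sufficiently large $C$, and the per-copy circuit count claimed in \cref{lem:subspace-tomography} is $\poly(m, d) = \poly(m, B, \log(1/\eps))$. This handles the first two quantities in the claim.

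For the number of calls to \eqref{program:final}, I would first count subsets: the number of $S \subseteq [m]$ of size at most $B^2/\mu^2$ is at most $m^{B^2/\mu^2 + 1}$. Fixing such an $S$, I would then bound $\abs{\net_S}$ by a standard volume/covering argument applied to $\calF_S$. The ambient subspace $\vsub$ has complex dimension at most $r + \abs{S}$; together with the scalar $\nu \in [0,B]$, $\calF_S$ sits inside an $\ell_2$-ball of radius $B$ in real dimension at most $2(r + \abs{S}) + 1$, so an $\toleps$-net has size at most $(CB/\toleps)^{2(r + B^2/\mu^2) + 1}$. Multiplying the subset count by the net size yields $(\poly(B,b,m)/\toleps)^{B^2/\mu^2 + r + 1}$ as desired, absorbing the constants $B$ and $b$ from $\mu$'s definition into the base.

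For the classical overhead, I would observe that the per-iteration work is dominated by evaluating and optimizing $p_{\vec{v}_S,\nu}$ on $\mathcal{D}_{\toleps}$, plus bookkeeping the best candidate; each of these takes $\poly(N, 1/\toleps)$ time by \cref{thm:opt-main} (and the description size of $\varrho_d$ is $\poly(N)$). Multiplying by the number of iterations and adding the $\poly(N)$ cost of tomography postprocessing gives an overall cost of $(N + (\poly(B,b,m)/\toleps)^{B^2/\mu^2 + r})^C$ for sufficiently large $C$. The only non-routine step in this plan is the dimension count for the net: I would need to be careful that the constraints defining $\calF_S$ in \eqref{eq:final-feasible} do not force me to work in a larger ambient space than $\vsub \oplus \R$, and that the appearance of the entire vector $\vec{v}$ (not just $\vec{v}_S$) in the $\twonorm{\vec{v}} \leq B$ constraint still leaves the feasible region inside a ball of radius $B$ in a space of dimension $O(r + \abs{S})$. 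Once that is checked, the remaining arithmetic is routine.
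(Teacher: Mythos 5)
Your plan is essentially the paper's proof: both route the quantum cost directly through \cref{lem:subspace-tomography} with $d = 10B^2 + \log(2/\appreps)$, count the calls to \eqref{program:final} as (number of subsets $S$) $\times$ (size of $\net_S$), and observe that the remaining classical work is just tomography postprocessing and net construction. Your bound on $\abs{\net_S}$ via the dimension count on $\vsub \oplus \R$ (real dimension $O(r + \abs{S})$, $\ell_2$-ball of radius $B$, granularity $\toleps$) gives the same polynomial scaling the paper gets by invoking \cref{claim:net}; the paper's construction of $\net_S$ is a grid-then-filter step rather than a raw covering argument, but for the \emph{counting} needed here the two give the same size up to the choice of $C$.

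One thing you state that is not correct: each call to \cref{thm:opt-main} does \emph{not} take $\poly(N, 1/\toleps)$ time. Its runtime is $(1/\gamma)^{\poly(d,r,1/\eps,1/\mu)}$, and with $d = O(B^2 + \log(1/\eps))$ and $\mu = \Theta(\min(b, 1/b, \sqrt{\eps}/B))$ the exponent here is much larger than a fixed constant, so this quantity can vastly exceed $\poly(N, 1/\toleps)$. The reason this does not derail the claim is that \cref{claim:fast} deliberately carves the optimization calls out as a separate count and labels the last quantity ``\emph{additional} classical overhead''; the cost per optimization call is charged in \cref{claim:opt-corollary} instead. If you instead fold the optimization-call cost into the classical overhead term as your last paragraph does, the stated bound would not hold. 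The fix is simply to note that the per-iteration work excluding the \cref{thm:opt-main} call (building $\net_S$, forming $\calD_{\toleps}$, comparing objective values) is $\poly(N, 1/\toleps, B, b, m)$, and that the optimization cost is accounted for separately.
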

\begin{proof}
    The only step of the algorithm which is quantum is the tomography step, so the quantum complexities follow from \cref{lem:subspace-tomography} with $d = 10B^2 + \log\frac{2}{\appreps}$.

    The number of calls to the optimization problem is at most the number of subsets iterated over times a bound on the size of every net $\net_S$.
    The number of subsets of cardinality at most $B^2/\mu^2$ is at most $(m+1)^{B^2/\mu^2}$.
    By \cref{claim:net}, the cardinality of $\net_S$ is at most $(\frac{1}{c\toleps})^{\abs{S} + r + 1}$ for some $c = \poly(1/B, 1/b, m)$.
    Multiplying the two together, we get $(m+1)^{B^2/\mu^2}(\frac{1}{c\toleps})^{B^2 / \mu^2 + r + 1}$ as desired.

    The running time is dominated by the tomography algorithm and the construction of the nets $\net_S$; both take polynomial time, by \cref{lem:subspace-tomography} and \cref{claim:net}.
\end{proof}

\begin{claim} \label{claim:opt-corollary}
    A $y_{\overline{S}}$ as in \eqref{program:final} can be solved in $(B / \toleps)^{\poly(d, r, 1/\appreps, 1/\mu)}$ time.
    Thus, the classical overhead of \cref{alg:inner_loop} is $m^{\poly(r, B, b, 1/b, 1/\eps)} \poly(\log\frac{1}{\delta})$.
\end{claim}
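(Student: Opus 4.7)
The plan is to recognize that \eqref{program:final} is exactly an instance of constrained low-degree polynomial optimization, and to invoke \cref{thm:opt-main} as a black box. The first step is to verify that, once $\vec{v}_S$ and $\nu$ are fixed, the objective $p_{\vec{v}_S,\nu}(\vec{z}_{\overline{S}})$ really is a bounded-coefficient polynomial in the free variables. The prefactor $e^{-\nu^2}/\prod_{i \in S}(1 + |v_i|^2)$ is a known constant in $[0,1]$. Because $\varrho_d$ is supported on computational basis strings of Hamming weight at most $d$ (by construction in the tomography subroutine), the remaining sum is a polynomial in $\vec{z}_{\overline{S}}$ and $\vec{z}_{\overline{S}}^*$ of total degree at most $2d$, and its coefficient tensor (after collapsing the factors of $\vec{v}_S$) has Frobenius norm at most $\fnorm{\varrho_d}\cdot(1+B)^{O(d)} \le \poly(B)^d$, using $\tr(\varrho_d) \le 1$ and $\twonorm{\vec{v}_S} \le B$. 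Rescaling by this Frobenius norm puts the objective exactly into the shape required by \cref{thm:opt-main}.

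Next I would rewrite the domain $\mathcal{D}_{\toleps}$ in the format of \cref{thm:opt-main}, namely as the intersection of a bounded number of $\ell^2$ constraints together with a single $\ell^\infty$ constraint. The condition $|\twonorm{\vec{z}_{\overline{S}}} - \nu| \le \toleps$ is itself an $\ell^2$ constraint. After substituting $\vec{z}_S = \vec{v}_S$, the projection constraint $\twonorm{\Pi_{V^{(S)}}\vec{z} - \vec{v}} \le \toleps$ becomes a set of $r$ (noisy) affine constraints pinning down the components of $\vec{z}_{\overline{S}}$ along the $r$ directions $\vec{a}^{(s)}_{\overline{S}}$; these contribute $r$ more $\ell^2$ constraints. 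Finally $\infnorm{\vec{z}_{\overline{S}}} \le \mu + \toleps$ is a pure $\ell^\infty$ bound. Applying \cref{thm:opt-main} with additive slack $\appreps$ (renormalized against the $\poly(B)^d$ coefficient bound) then produces a $\vec{y}_{\overline{S}} \in \mathcal{D}_{2\toleps}$ meeting \eqref{program:final} in time $(B/\toleps)^{\poly(d,r,1/\appreps,1/\mu)}$, which is the first claim.

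For the bound on total classical overhead, I would combine this per-call cost with \cref{claim:fast}. The number of subsets $S$ is at most $(m+1)^{B^2/\mu^2}$, the cover $\net_S$ has size at most $(\poly(B,b,m)/\toleps)^{B^2/\mu^2 + r + 1}$, and each optimization call runs in $(B/\toleps)^{\poly(d,r,1/\appreps,1/\mu)}$ time. Recalling that the parameters were set as $d = O(B^2 + \log(1/\eps))$, $\mu^{-1} = O(\max(1/b, b, B/\sqrt{\eps}))$, $\appreps = \eps/100$, and $\toleps = \poly(1/m, \mu, \eps)$, each of $d, r, 1/\appreps, 1/\mu$ is polynomially bounded in $r, B, b, 1/b, 1/\eps$ (independent of $m$), so the per-call exponent is a polynomial in these parameters, while the base $B/\toleps$ is $\poly(B, b, 1/\eps, m)$. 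Multiplying the two gives the stated bound $m^{\poly(r,B,b,1/b,1/\eps)}\poly(\log(1/\delta))$, with the $\poly(\log(1/\delta))$ coming from the tomography step. The main bookkeeping obstacle is tracking that the $m$-dependence enters only through $\toleps$ and only as a $\log m$ factor in the exponent, which is absorbed into the outer $m^{\poly(\cdots)}$; no other step introduces super-polynomial dependence on $m$.
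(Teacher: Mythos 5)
Your plan — recognize the objective as a constrained bounded-coefficient polynomial and invoke \cref{thm:opt-main} as a black box — is the right high-level strategy and matches the paper's. But there is a genuine gap in how you establish the ``bounded coefficient'' hypothesis, and it is exactly the technical crux of this claim.

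First, your Frobenius norm bound is incorrect and, taken at face value, fatal. You bound the coefficient tensor by $\fnorm{\varrho_d}(1+B)^{O(d)} \le \poly(B)^d$. This overcounts: when you absorb the prefactor $1/\prod_{i\in S}(1+|v_i|^2)$ into the coefficients, the factors $(\vec{v}_S)^{x_S}$ combine with the normalization to give exactly $\braket{x_S|\pi_{\vec{v}_S}}$ — inner products with the \emph{unit} state $\ket{\pi_{\vec{v}_S}}$, not raw powers of $\vec{v}_S$. The resulting matrix $\sigma = (\id_{\overline S}\otimes\bra{\pi_{\vec{v}_S}})\varrho_d(\id_{\overline S}\otimes\ket{\pi_{\vec{v}_S}})$ has $\fnorm{\sigma}\le\fnorm{\varrho_d}\le 1$, independent of $B$. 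If $\poly(B)^d$ really were the right bound, your rescaling would force you to call \cref{thm:opt-main} with error parameter $\appreps/\poly(B)^d$. Since $d = \Theta(B^2 + \log(1/\eps))$, this is $\exp(\Theta(B^2\log B))\cdot\eps^{-O(1)}$, and plugging $1/\eps' = \poly(B)^d/\appreps$ into the exponent $\poly(d,r,1/\eps',1/\mu)$ of \cref{thm:opt-main}'s running time gives $m^{\exp(\poly(B))}$, not the claimed $m^{\poly(B,b,1/b,1/\eps,r)}$.

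Second, even with the correct $\fnorm{\sigma}\le 1$, you still need the decomposition $\sum_{k} \fnorm{T^{(k)}}\le 1$ required by \cref{thm:opt-main}, and you need the optimization variable to have norm $\le 1$ (the theorem's $\nu\le 1$ restriction). You handle neither. The degree-$k$ tensor $T^{(k)}$ built from $\sigma$ carries a combinatorial factor $\frac{1}{k!}$, giving $\fnorm{T^{(k)}}\le\frac{1}{k!}\fnorm{\sigma}$, so $\sum_k\fnorm{T^{(k)}}\le e$ — close but not immediately $\le 1$. More seriously, when $\nu>1$ one must substitute $\vec t = \vec z_{\overline S}/\nu$ so that $\twonorm{\vec t}\approx 1$, which rescales the degree-$2k$ tensor by $\nu^{2k}$. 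The step that actually closes the argument is noticing that the $e^{-\nu^2}$ prefactor in the objective exactly Poisson-normalizes this blowup: $e^{-\nu^2}\sum_k \nu^{2k}/k! = 1$, so $\sum_k e^{-\nu^2}\nu^{2k}\fnorm{T^{(k)}}\le \fnorm{\sigma}\le 1$ regardless of $\nu$. Without that cancellation — which does not appear anywhere in your proposal — the rescaled tensors for $\nu\sim B$ have coefficient mass $\sim B^{2d}$ and you are back to the exponential blowup of the previous paragraph. This is the idea you are missing; once you have it, your remaining bookkeeping about the net size, the subset enumeration, and the $\log m$ absorption is essentially the same as the paper's.
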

\begin{proof}
This is a corollary of \cref{thm:opt-main}.
Recall that we are considering the objective function
\begin{align*}
    p_{\vec{z}_S, \nu}(\vec{z}_{\overline{S}}) &= \frac{e^{-\nu^2}}{\prod_{i \in S} (1 + \abs{z_i}^2)} \sum_{x, x' \in \{0,1\}^m} \bra{x} \varrho_d \ket{x'} (\vec{z}^*)^{x}\vec{z}^{x'}. \\
    &= e^{-\nu^2} \sum_{x,x' \in \braces{0,1}^{\abs{\overline{S}}}} \bra{x}\parens[\Big]{(\id_{\overline{S}} \otimes \bra{\pi_{\vec{z}_{S}}}_S) \varrho_d (\id_{\overline{S}} \otimes \ket{\pi_{\vec{z}_{S}}}_S)} \ket{x'} (\vec{z}_{\overline{S}}^*)^x (\vec{z}_{\overline{S}})^{x'}
\end{align*}
Defining $\sigma = (\id_{\overline{S}} \otimes \bra{\pi_{\vec{z}_{S}}}_S) \varrho_d (\id_{\overline{S}} \otimes \ket{\pi_{\vec{z}_{S}}}_S)$, we can express $p_{\vec{z}_S, \nu}(z_{\overline{S}})$ as
\begin{align*}
    p_{\vec{z}_S, \nu}(\vec{z}_{\overline{S}})
    = e^{-\nu^2} (\prod_{i \in \overline{S}} (1 + \abs{z_i}^2)) \bra{\pi_{\vec{z}_{\overline{S}}}} \sigma \ket{\pi_{\vec{z}_{\overline{S}}}},
\end{align*}
We can further write it in terms of tensors $T^{(k)} \in (\C^n)^{\otimes 2k}$.
\begin{align*}
    p_{\vec{z}_S, \nu}(\vec{z}_{\overline{S}})
    &= e^{-\nu^2}\parens[\Big]{T^{(0)} + \angles{T^{(1)}, z_{\overline{S}}^* \otimes z_{\overline{S}}} + \dots + \angles{T^{(d)}, (z_{\overline{S}}^*)^{\otimes d} \otimes z_{\overline{S}}^{\otimes d}}}, \\
    T^{(k)}_{i_1\dots i_k j_1 \dots j_k} &= \begin{cases}
        \frac{1}{(k!)^2}\bra{e_{i_1,\dots,i_k}}\sigma\ket{e_{j_1,\dots,j_k}} & i_1, \dots, i_k \text{ are distinct},\, j_1,\dots,j_k \text{ are distinct} \\
        0 & \text{otherwise}
    \end{cases}
\end{align*}
Note that $\fnorm{T^{(k)}} \leq \frac{1}{(k!)} \fnorm{\sigma}$, since for every entry of $\sigma$ there are $(k!)^2$ corresponding entries in $T^{(k)}$ containing it, scaled down by a factor of $(k!)^2$.
In order to apply \cref{thm:opt-main}, we can further break this up into two cases.
When $\nu \leq 1$, we optimize $\frac{1}{10} p_{\vec{z}_S, \nu}(z_{\overline{S}})$ (that is, the function $f_{\frac{e^{-\nu^2}}{10} T}(\vec{z}_{\overline{S}})$) to $\appreps/10$ error, since with this choice the tensors satisfy
\begin{align*}
    \sum_{k=0}^d \frac{e^{-\nu^2}}{10}\fnorm{T^{(k)}}
    \leq \sum_{k=0}^d \frac{1}{10(k!)}
    \leq 1.
\end{align*}
The domain we want to optimize over is, recalling \eqref{program:final-domain},
\begin{multline*}
    \mathcal{D}_{\toleps} = \braces{
        \vec{z}_{\overline{S}} \in \C^{\abs{\overline{S}}}
        \mid \abs{\twonorm{\vec{z}_{\overline{S}}} - \nu} \leq \toleps,\,
        \twonorm{\Pi_\vsub \vec{z} - \vec{v}} \leq \toleps, \\
        \infnorm{\vec{z}_{\overline{S}}} \leq \mu + \toleps
        \text{ for } \vec{z} = (\vec{v}_S,\,\vec{z}_{\overline{S}})}.
\end{multline*}
This domain is almost of the form needed for \cref{thm:opt-main}, \cref{def:opt-domain}; all we need is to make the following adjustment:
\begin{gather*}
    \twonorm{\Pi_\vsub \vec{z} - \vec{v}} \leq \toleps
    \iff \twonorm{W^\dagger (\vec{z}_{\overline{S}} - \vec{v}_{\overline{S}})} \leq \toleps,
\end{gather*}
where $W \in \C^{\abs{\overline{S}} \times r}$ is a matrix with $\opnorm{W} \leq 1$ such that $(WW^\dagger) \vec{z}_{\overline{S}} = \Pi_\vsub (\vec{0}_S, \vec{z}_{\overline{S}})$.
This is possible since $\Pi_{\vsub}$ is rank at most $r$ over the subspace spanned by the vectors $(\vec{0}_S, \vec{z}_{\overline{S}})$.
Then, \cref{thm:opt-main} outputs a $\vec{y} \in \C^{\abs{\overline{S}}}$ such that $\frac{1}{10} p_{\vec{z}_S, \nu}(\vec{y}) \geq  \max_{\vec{z}_{\overline{S}} \in \calD^{2\toleps}} \frac{1}{10} p_{\vec{z}_S, \nu}(\vec{z}_{\overline{S}}) - \frac{\appreps}{10}$.
This is the desired bound after rescaling.

When $\nu > 1$, we optimize the tensors $e^{-\nu^2}\nu^{2k} T^{(k)}$ with respect to the variables $\vec{t} = \vec{z}_{\overline{S}} / \nu$.
That is, we write
\begin{align*}
    p_{\vec{z}_S, \nu}(\vec{z}_{\overline{S}})
    &= \parens[\Big]{ e^{-\nu^2} T^{(0)} + \dots + \angles{e^{-\nu^2} \nu^{2d} T^{(d)}, (\vec{t}^*)^{\otimes d} \otimes (\vec{t})^{\otimes d}}}
\end{align*}
Then, the sum of the norms of the tensors is
\begin{align*}
    \sum_{k=0}^d e^{-\nu^2} \nu^{2k} \fnorm{T^{(k)}}
    \leq e^{- \nu^2} \sum_{k=0}^d \frac{\nu^{2k}}{k!}
    \leq 1.
\end{align*}
So, if we apply \cref{thm:opt-main} to with error parameter $\appreps$, we are done.
Finally, we can write the domain in terms of the variables $\vec{t}$:
\begin{gather*}
    \abs{\twonorm{\vec{z}_{\overline{S}}} - \nu} \leq \toleps
    \iff \abs{\twonorm{\vec{t}} - 1} \leq \toleps/\nu \\
    \twonorm{\Pi_\vsub \vec{z} - \vec{v}} \leq \toleps
    \iff \twonorm{W^\dagger (\vec{t} - \vec{v}_{\overline{S}} / \nu)} \leq \toleps/\nu
\end{gather*}
Here, we set our tolerance parameter to be $\toleps / \nu$, where $1 \leq \nu \leq B$, giving the desired running time.
\end{proof}

\subsection{Showing correctness}

\begin{lemma}\label{lem:tan-dist-lipchtiz}
For any vectors $\vec{u}, \vec{v}, \vec{a} \in \C^m$ and parameters $B > 1, \eps > 0$, if $\norm{\vec{u}}_2, \norm{\vec{v}}_2, \norm{\vec{a}}_2 \leq B$ and $\dtan(\ket{\pi_{\vec{v}}}, \ket{\pi_{\vec{a}}}) \leq B$ and $\norm{\vec{u} - \vec{v}}_2 \leq \eps $ where $\eps \leq 1/(10mB)^6$ then
\[
\left\lvert \dtan( \ket{\pi_{\vec{a}}}, \ket{\pi_{\vec{v}}}) - \dtan( \ket{\pi_{\vec{a}}}, \ket{\pi_{\vec{u}}}) \right\rvert \leq \eps (10mB)^6
\]
\end{lemma}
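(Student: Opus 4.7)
The plan is to prove a Lipschitz-type bound coordinate by coordinate and then sum using the reverse triangle inequality for the $\ell^2$ norm. Writing $f_i(z) = (z - a_i)/(1 + z^* a_i)$, recall that $\dtan(\ket{\pi_{\vec{w}}}, \ket{\pi_{\vec{a}}})^2 = \sum_i |f_i(w_i)|^2$. The reverse triangle inequality in $\ell^2$ gives
\[
\bigl\lvert \dtan(\ket{\pi_{\vec{a}}}, \ket{\pi_{\vec{u}}}) - \dtan(\ket{\pi_{\vec{a}}}, \ket{\pi_{\vec{v}}}) \bigr\rvert \leq \Bigl( \sum_i |f_i(u_i) - f_i(v_i)|^2 \Bigr)^{1/2},
\]
so it suffices to show that each $|f_i(u_i) - f_i(v_i)|$ is $O(B^4)$ times $|u_i - v_i|$, after which Cauchy--Schwarz and $\|\vec u - \vec v\|_2 \leq \eps$ yield the stated bound (with lots of slack, since the target constant is $(10mB)^6$).

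The first key step is obtaining a uniform lower bound on the denominators $|1 + v_i^* a_i|$ and $|1 + u_i^* a_i|$. I will use the algebraic identity
\[
|1 + v_i^* a_i|^2 + |v_i - a_i|^2 = (1 + |v_i|^2)(1 + |a_i|^2),
\]
which follows by direct expansion. Combined with $|f_i(v_i)|^2 \leq \dtan(\ket{\pi_{\vec{v}}}, \ket{\pi_{\vec{a}}})^2 \leq B^2$, this rearranges to
\[
|1 + v_i^* a_i| \geq \frac{1}{\sqrt{B^2 + 1}}.
\]
By the triangle inequality and $|u_i - v_i| \cdot |a_i| \leq \eps B$, the same lower bound holds for $|1 + u_i^* a_i|$ up to a factor of $2$, since $\eps B \ll 1/(2\sqrt{B^2+1})$ under the hypothesis $\eps \leq 1/(10mB)^6$.

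The second key step is to bound the numerator of $f_i(u_i) - f_i(v_i)$ after clearing denominators. A short expansion with $w_i = u_i - v_i$ gives
\[
(u_i - a_i)(1 + v_i^* a_i) - (v_i - a_i)(1 + u_i^* a_i) = w_i + a_i(w_i v_i^* - v_i w_i^*) + a_i^2 w_i^*,
\]
which is bounded in absolute value by $|w_i|(1 + 2|a_i||v_i| + |a_i|^2) \leq |w_i|(1 + 3B^2)$. Combining this with the denominator bounds from the previous step yields
\[
|f_i(u_i) - f_i(v_i)| \leq 2(B^2+1)(1 + 3B^2)\,|u_i - v_i| \leq C B^4 \, |u_i - v_i|
\]
for an absolute constant $C$. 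Plugging into the reverse triangle inequality gives
\[
\bigl\lvert \dtan(\ket{\pi_{\vec{a}}}, \ket{\pi_{\vec{u}}}) - \dtan(\ket{\pi_{\vec{a}}}, \ket{\pi_{\vec{v}}}) \bigr\rvert \leq C B^4 \|\vec u - \vec v\|_2 \leq C B^4 \eps,
\]
which is easily dominated by $\eps(10mB)^6$.

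The only subtlety I anticipate is making sure the lower bound transfer from $v$ to $u$ holds for all coordinates simultaneously; this is exactly where the hypothesis $\eps \leq 1/(10mB)^6$ (in particular, something polynomially small in $B$) is used. Everything else is routine bookkeeping once the algebraic identity is in hand.
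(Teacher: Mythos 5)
Your proof is correct. The key identity $|1 + v_i^* a_i|^2 + |v_i - a_i|^2 = (1 + |v_i|^2)(1 + |a_i|^2)$ is verified by direct expansion, and it cleanly yields $|1 + v_i^* a_i| \geq 1/\sqrt{B^2+1}$ from the hypothesis $\dtan(\ket{\pi_{\vec{v}}}, \ket{\pi_{\vec{a}}}) \leq B$. Your transfer of this lower bound to $|1 + u_i^* a_i|$ via the triangle inequality and the smallness of $\eps$ is sound, and the numerator expansion $w_i + a_i(w_i v_i^* - v_i w_i^*) + a_i^2 w_i^*$ is algebraically correct, giving the coordinate-wise Lipschitz constant $O(B^4)$ and the final bound $O(B^4)\,\eps \leq \eps(10mB)^6$.

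The paper's proof reaches the same conclusion by a different route: it notes $\frac{v_i - a_i}{1 + a_i^* v_i}$ is a M\"obius transformation holomorphic in $v_i$, bounds the magnitude of the complex derivative by $O(B^2)(1 + |V_i|^2)$ (using that $|1 + a_i^* v_i| < 1/2$ forces $|v_i - a_i| \geq 0.1$), and then integrates along the segment from $\vec{v}$ to $\vec{u}$. Your direct algebraic approach is cleaner and arguably more watertight: it bounds the denominator only at the two endpoints (where the hypotheses apply directly), whereas the paper's integration bound implicitly requires controlling $|V_i(t)|$ along the entire interpolating path, a subtlety it does not explicitly address. As a side benefit your Lipschitz constant $O(B^4)$ carries no dimension factor, whereas the paper's $200B^3 m \eps$ picks up an extra $m$; both are easily dominated by $(10mB)^6$, so this is immaterial to the lemma but reflects the tighter bookkeeping of your argument. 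One small imprecision: the step you attribute to Cauchy--Schwarz is really just the $\ell^2$ reverse triangle inequality combined with summing the per-coordinate bounds---no Cauchy--Schwarz is needed.
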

\begin{proof}
Recall that   
\[
\dtan( \ket{\pi_{\vec{a}}}, \ket{\pi_{\vec{v}}})^2 = \sum_{i = 1}^m \left\lvert \frac{v_i - a_i}{1 + a_i^* v_i} \right\rvert^2 \,. 
\]
The derivative of $\frac{v_i - a_i}{1 + a_i^* v_i} $ with respect to $v_i$ is 
\[
\frac{1 + a_i^* v_i - a_i^*(v_i - a_i)}{(1 + a_i^* v_i)^2} \,.
\]
Note that whenever $|1 + a_i^* v_i| \leq 1/2$ then $|v_i - a_i| \geq 0.1$.  Thus, the magnitude of the derivative is at most $100B^2\left( 1 + \left\lvert \frac{v_i - a_i}{1 + a_i^* v_i}  \right \rvert^2 \right) $.  Thus, if we define the vectors 
\[
\vec{V} =  \left\{ \frac{v_i - a_i}{1 + a_i^* v_i} \right\}_{i \in [m]} \; , \; \vec{U} = \left\{ \frac{u_i - a_i}{1 + a_i^* u_i} \right\}_{i \in [m]} 
\]
then integrating the above and using the assumption that $\norm{\vec{u} - \vec{v}}_2 \leq \eps$ implies
\[
\norm{\vec{V} - \vec{U}}_2 \leq 200B^3m \eps \,.
\]
From this, we immediately get the desired inequality.
\end{proof}


\begin{corollary}\label{coro:lipchitz-constraint}
    For two vectors $\vec{u}, \vec{v} \in \C^m$ such that $\norm{\vec{u} - \vec{v}} < \poly(1/B, 1/m, 1/b)$ and $\norm{\vec{u}}, \norm{\vec{v}} \leq B$, along with an arbitrary $\vec{a}$ and $\nu, b \in \R$, we have that 
    \begin{gather*}
        \nu^2 - \twonorm{\vec{v}_{\overline{S}}}^2 + \twonorm{\vec{v}_{\overline{S}} - \vec{a}_{\overline{S}}}^2 \geq 1.5b^2 - \dtan(\ket{\pi_{\vec{v}_S}}, \ket{\pi_{\vec{a}_S}})^2 \\
        \implies \nu^2 - \twonorm{\vec{u}_{\overline{S}}}^2 + \twonorm{\vec{u}_{\overline{S}} - \vec{a}_{\overline{S}}}^2 \geq 1.49b^2 - \dtan(\ket{\pi_{\vec{u}_S}}, \ket{\pi_{\vec{a}_S}})^2.
    \end{gather*}
\end{corollary}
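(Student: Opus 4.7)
The plan is to show that both sides of the inequality depend continuously on $\vec{w} \in \{\vec{v}, \vec{u}\}$ with a Lipschitz constant polynomial in $B$, $m$, and $\twonorm{\vec{a}}$, so that the perturbation $\twonorm{\vec{u} - \vec{v}}$ can be chosen small enough to absorb the $0.01b^2$ slack between $1.5b^2$ and $1.49b^2$. Defining
\[
F(\vec{w}) \coloneqq \nu^2 - \twonorm{\vec{w}_{\overline{S}}}^2 + \twonorm{\vec{w}_{\overline{S}} - \vec{a}_{\overline{S}}}^2 + \dtan(\ket{\pi_{\vec{w}_S}}, \ket{\pi_{\vec{a}_S}})^2,
\]
the two inequalities become $F(\vec{v}) \geq 1.5b^2$ and $F(\vec{u}) \geq 1.49b^2$, so it suffices to show $|F(\vec{v}) - F(\vec{u})| \leq 0.01b^2$.

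First, I would collapse the two Euclidean-norm terms using the identity
\[
\twonorm{\vec{w}_{\overline{S}} - \vec{a}_{\overline{S}}}^2 - \twonorm{\vec{w}_{\overline{S}}}^2 = \twonorm{\vec{a}_{\overline{S}}}^2 - 2\Re\angles{\vec{a}_{\overline{S}}, \vec{w}_{\overline{S}}}.
\]
The constant $\twonorm{\vec{a}_{\overline{S}}}^2$ drops out of the difference $F(\vec{v}) - F(\vec{u})$, and the linear functional $-2\Re\angles{\vec{a}_{\overline{S}}, \vec{w}_{\overline{S}}}$ is $2\twonorm{\vec{a}_{\overline{S}}}$-Lipschitz by Cauchy--Schwarz, contributing at most $2\twonorm{\vec{a}}\twonorm{\vec{v}-\vec{u}}$ to the change.

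Next, I would handle the $\dtan^2$ term by applying \cref{lem:tan-dist-lipchtiz} with $(\vec{v}_S, \vec{u}_S, \vec{a}_S)$ in place of $(\vec{v}, \vec{u}, \vec{a})$, which (assuming $\dtan(\ket{\pi_{\vec{v}_S}}, \ket{\pi_{\vec{a}_S}}) \leq B$) gives the bound $|\dtan_v - \dtan_u| \leq \twonorm{\vec{v}-\vec{u}}(10mB)^6$. Since $\dtan_v, \dtan_u \leq 2B$ under this assumption, factoring the difference of squares yields
\[
\bigl|\dtan_v^2 - \dtan_u^2\bigr| = |\dtan_v - \dtan_u|(\dtan_v + \dtan_u) \leq 4B(10mB)^6 \twonorm{\vec{v}-\vec{u}}.
\]
Combining both contributions shows $|F(\vec{v}) - F(\vec{u})| \leq (2\twonorm{\vec{a}} + 4B(10mB)^6)\twonorm{\vec{v}-\vec{u}}$, which is at most $0.01b^2$ provided $\twonorm{\vec{v}-\vec{u}}$ is $\poly(1/B, 1/m, 1/b)$-small, as required (absorbing the dependence on $\twonorm{\vec{a}}$ into the polynomial, since $\vec{a}$ is a cover vector with norm $O(B)$ in the calling context).

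The main obstacle is the degenerate case where $\dtan(\ket{\pi_{\vec{v}_S}}, \ket{\pi_{\vec{a}_S}}) > B$, so \cref{lem:tan-dist-lipchtiz} does not directly apply. However, in this regime we have $\dtan_v^2 > B^2 \gg 1.5b^2$, so $F(\vec{v})$ exceeds $1.5b^2$ with a large slack. I would handle this via a direct coordinate-wise Lipschitz estimate on $(w_i - a_i)/(1 + w_i^* a_i)$ (essentially the calculation in the proof of \cref{lem:tan-dist-lipchtiz}) to show that $\dtan_u^2 \geq \dtan_v^2 - O(1)$ for sufficiently small $\twonorm{\vec{v} - \vec{u}}$, which is still much larger than $1.49b^2$ and closes the argument.
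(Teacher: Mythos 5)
Your overall plan mirrors the paper's: apply \cref{lem:tan-dist-lipchtiz} for the tangent-distance term, bound the Euclidean terms directly, and handle the regime where the Lipschitz lemma is unavailable as a separate trivial case. However, there is a genuine gap in the treatment of the $\ell^2$ terms, traceable to a false assumption about $\vec{a}$.

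Your decomposition collapses the Euclidean terms into the linear functional $-2\Re\angles{\vec{a}_{\overline{S}}, \vec{w}_{\overline{S}}}$, whose Lipschitz constant is $2\twonorm{\vec{a}_{\overline{S}}}$, and you dispose of this by asserting that ``$\vec{a}$ is a cover vector with norm $O(B)$ in the calling context.'' That assertion is false. The cover vectors $\vec{a}^{(s)}$ are the previously-found states \emph{recentered} so that the current root is $\ket{0^m}$, and under recentering the parameters can blow up — for instance, a cover state with $i$th qubit near $\ket{1}$ in the new basis has $\abs{a_i}$ arbitrarily large. The paper explicitly flags this in the proof of \cref{claim:sound}: ``Notice that we have no bound on the size of $\vec{a}^{(S)}$, so we cannot remove this dependence.'' Moreover, the corollary's hypothesis requires $\twonorm{\vec{u}-\vec{v}} < \poly(1/B, 1/m, 1/b)$ \emph{independently of $\vec{a}$}, so a Lipschitz constant that scales with $\twonorm{\vec{a}}$ cannot be absorbed into the threshold without weakening the statement.

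The paper avoids this by never expanding the norms; instead it bounds $\twonorm{\vec{v}_{\overline{S}} - \vec{a}_{\overline{S}}}^2 - \twonorm{\vec{u}_{\overline{S}} - \vec{a}_{\overline{S}}}^2$ as a difference of squares, yielding a Lipschitz constant proportional to $\twonorm{\vec{v}_{\overline{S}} - \vec{a}_{\overline{S}}}$ rather than $\twonorm{\vec{a}_{\overline{S}}}$, and then adds a \emph{second} trivial case: when $\twonorm{\vec{v}_{\overline{S}} - \vec{a}_{\overline{S}}} \geq 2(B+b)$, both inequalities hold automatically (using that the left-hand sides are $\geq -B^2$). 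Your proof only has the degenerate case $\dtan(\ket{\pi_{\vec{v}_S}}, \ket{\pi_{\vec{a}_S}}) > B$, which addresses the $S$-coordinates but leaves the $\overline{S}$-coordinates unconstrained, so the case of distant $\vec{a}_{\overline{S}}$ slips through. To repair the argument you need to add this second trivial case (or observe that large $\twonorm{\vec{a}_{\overline{S}}}$ forces large $\twonorm{\vec{v}_{\overline{S}} - \vec{a}_{\overline{S}}}$ since $\twonorm{\vec{v}} \leq B$). A more minor issue: your threshold $\dtan_v > B$ is not large enough to guarantee $F(\vec{v}) \geq 1.5b^2$, since $\text{LHS}(\vec{v})$ can be as negative as $-B^2$; the paper's threshold $2(b+B)$ is needed to clear both $B^2$ and $1.5b^2$.
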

\begin{proof}
Note that when $\dtan(\ket{\pi_{\vec{v}_S}}, \ket{\pi_{\vec{a}_S}}) \geq 2(b + B)$ then the inequalities are both trivially true.  Also, if $\twonorm{\vec{v}_{\overline{S}} - \vec{a}_{\overline{S}}} \geq 2(B + b)$ then the inequalities are both trivially true. 

Otherwise, we can just apply Lemma~\ref{lem:tan-dist-lipchtiz} to bound the difference between $\dtan(\ket{\pi_{\vec{u}_S}}, \ket{\pi_{\vec{a}_S}})$ and $\dtan(\ket{\pi_{\vec{v}_S}}, \ket{\pi_{\vec{a}_S}})$ and also directly bound the differences $ \twonorm{\vec{v}_{\overline{S}}}^2 - \twonorm{\vec{u}_{\overline{S}}}^2$ and $\twonorm{\vec{v}_{\overline{S}} - \vec{a}_{\overline{S}}}^2 - \twonorm{\vec{u}_{\overline{S}} - \vec{a}_{\overline{S}}}^2$ to get the desired conclusion.    
\end{proof}

\begin{claim} \label{claim:net}
    There is a $\net_S$ which satisfies the following properties.
    First, $\net_S$ satisfies that, for every $(\vec{v}, \nu) \in \net_S$, the following conditions hold: $\vec{v} \in V_S$; $\norm{\vec{v}_S}^2 + \nu^2 \leq B^2$; $\norm{\vec{v}} \leq B$; and
    \[
        \nu^2 - \twonorm{\vec{v}_{\overline{S}}}^2 + \twonorm{\vec{v}_{\overline{S}} - \vec{a}_{\overline{S}}^{(s)}}^2 \geq 1.49b^2 - \dtan(\ket{\pi_{\vec{v}_S}}, \ket{\pi_{\vec{a}^{(s)}_S}})^2.
    \]
    Second, for any $(\vec{v}, \nu) \in \mathcal{F}_S$, there is a $(\vec{v}', \nu')$ such that $\norm{\vec{v} - \vec{v}'} \leq \toleps$ and $\abs{\nu - \nu'} \leq \toleps$.
    This net has at most $(\frac{1}{c\toleps})^{10(\abs{S} + r + 1)}$ elements and can be constructed in $(\frac{1}{c\toleps})^{10(\abs{S} + r + 1)}$ time, for some $c = \poly(1/B, 1/b, 1/m)$.
\end{claim}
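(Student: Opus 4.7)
My plan is a standard volume-based covering argument on the low-dimensional complex subspace $\vsub$, combined with \cref{coro:lipchitz-constraint} to bridge the $1.5b^2$ constraint on $\calF_S$ and the $1.49b^2$ constraint required of $\net_S$. Since $D \coloneqq \dim_\C(\vsub) \le r + |S|$, the region $\vsub \times [0, B]$ has real dimension $2D + 1$, so a $\toleps$-net on it has size $(B/\toleps)^{O(D)}$, matching the claimed bound up to constants absorbed into $c$.

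I will fix an orthonormal basis of $\vsub$ and let $\mathcal{M}$ be the lattice of pairs $(\vec{v}', \nu')$ whose real and imaginary coordinates in this basis, and whose $\nu'$, lie on a regular grid of mesh $c'\toleps$ in $[-B,B]$ (respectively $[0,B]$), where $c' = \poly(1/B, 1/b, 1/m, 1/\sqrt{D})$ is chosen small enough. Then $|\mathcal{M}| \le (O(B/(c'\toleps)))^{2D+1}$. I define
\[
\net_S \coloneqq \bigl\{(\vec{v}', \nu') \in \mathcal{M} : \twonorm{\vec{v}'} \le B,\ \twonorm{\vec{v}'_S}^2 + \nu'^2 \le B^2,\ \text{the $1.49b^2$-inequality holds for every $s \in [r]$}\bigr\}.
\]
By construction every element of $\net_S$ satisfies the first (structural) part of the claim, and membership is checkable in $\poly(m, r)$ time per grid point, yielding the claimed runtime.

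For the covering property, I take an arbitrary $(\vec{v}, \nu) \in \calF_S$ and first shrink it to $(\tilde{\vec{v}}, \tilde{\nu}) \coloneqq (1 - \toleps/(3B))(\vec{v}, \nu)$. This costs only $\toleps/3$ in distance but creates $\Omega(\toleps)$ slack in both norm constraints: $\twonorm{\tilde{\vec{v}}} \le B - \toleps/3$ and $\twonorm{\tilde{\vec{v}}_S}^2 + \tilde{\nu}^2 \le B^2 - \Omega(\toleps B)$. I then pick any grid point $(\vec{v}', \nu') \in \mathcal{M}$ within coordinate distance $\toleps/3$ of $(\tilde{\vec{v}}, \tilde{\nu})$, which exists by the choice of $c'$. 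The slack absorbs the grid rounding in the norm bounds, and a single application of \cref{coro:lipchitz-constraint} to $\vec{v}$ and $\vec{v}'$---after absorbing the $O(B\toleps)$ shift from replacing $\nu$ with $\nu'$ into the $0.01b^2$ threshold gap---transfers the $1.5b^2$-constraint at $(\vec{v}, \nu)$ to the $1.49b^2$-constraint at $(\vec{v}', \nu')$. The triangle inequality then yields $\twonorm{\vec{v} - \vec{v}'} \le \toleps$ and $|\nu - \nu'| \le \toleps$, so $(\vec{v}', \nu') \in \net_S$ covers $(\vec{v}, \nu)$. The size bound $|\net_S| \le (1/(c\toleps))^{10(|S| + r + 1)}$ follows by collecting the factors $B^{2D+1}$ and $c'^{-(2D+1)}$ into a single $c = \poly(1/B, 1/b, 1/m)$ and using $2D + 1 \le 10(|S| + r + 1)$; the degenerate case $D = 0$ reduces to a one-dimensional net in $\nu'$ alone.

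The main obstacle is just the multi-way bookkeeping: I need the $\Omega(\toleps)$ slack from shrinking to cover the $\toleps/3$ grid rounding in both norm constraints simultaneously, and the $0.01b^2$ threshold gap in \cref{coro:lipchitz-constraint} to cover both the pure Lipschitz loss on the pair $(\vec{v}, \vec{v}')$ and the $O(B\toleps)$ shift coming from $|\nu^2 - \nu'^2|$. Each loss is polynomial in $(B, 1/b, m)$ times $\toleps$, so everything fits inside the guarantees of the corollary precisely because the algorithm runs in the regime $\toleps \le \poly(1/B, 1/b, 1/m)$ (as enforced by the choice of $\toleps$ in \cref{alg:inner_loop}).
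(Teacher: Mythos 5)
Your proposal is correct and matches the paper's approach: both construct a fine grid on the subspace $\vsub \times [0,B]$, keep only the grid points satisfying the $1.49b^2$-constraints, and invoke \cref{coro:lipchitz-constraint} to show no needed cover point is discarded. The one refinement you add is the explicit $(1 - \toleps/(3B))$-shrinking step to guarantee grid points land inside the feasible ball, a detail the paper leaves implicit under ``sufficiently fine grid''; you also handle the $|\nu^2 - \nu'^2|$ shift more explicitly than the paper, but otherwise the arguments are the same.
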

\begin{proof}
Let $\calA_S$ be the set   
\[
\calA_S = \Big\{\vec{v} \in \vsub,\, \nu \in [0, B] \,\Big|\, \twonorm{\vec{v}_S}^2 + \nu^2 \leq B^2,\, \twonorm{\vec{v}} \leq B     \Big\}
\]
i.e. it is $\calF_S$ with the tangent distance constraint removed.  Now we can construct a net $\calB_S$ for the set $\calA_S$ such that 
\begin{itemize}
\item For all $(\vec{v}, \nu) \in \calF_S$, there is $(\vec{v}', \nu') \in \calB_S$ such that $\norm{\vec{v} - \vec{v}'} \leq c\toleps$ and $\abs{\nu - \nu'} \leq c\toleps$
\item All elements of $\calB_S$ are in $\calA_S$
\item The net $\calB_S$ can be enumerated in $(\frac{1}{c\toleps})^{10(\abs{S} + r + 1)}$ time
\end{itemize}
We can construct $\calB_S$  by simply enumerating over a sufficiently fine grid for $\vec{v}$ and $\nu$.  Now given $\calB_S$, we construct $\calN_S$ by simply removing all points $(\vec{v}', \nu')$ such that
\[
\nu'^2 - \twonorm{\vec{v'}_{\overline{S}}}^2 + \twonorm{\vec{v'}_{\overline{S}} - \vec{a}_{\overline{S}}^{(s)}}^2 < 1.49b^2 - \dtan(\ket{\pi_{\vec{v'}_S}}, \ket{\pi_{\vec{a}^{(s)}_S}})^2 \,.
\]
By Corollary~\ref{coro:lipchitz-constraint}, this removal never removes any $(\vec{v}', \nu')$ such that $\norm{\vec{v} - \vec{v}'} \leq \toleps$ and $\abs{\nu - \nu'} \leq \toleps$ for some $(\vec{v}, \nu) \in \calF_S$.  Thus, $\calN_S$ still covers all points in $\calF_S$, as desired.
\end{proof}

\begin{lemma}[Approximating tangent distance constraints] \label{claim:guess-to-subspace}
    For $\vec{z}, \vec{a} \in \C^m$, $b > 0$, and $\abs{S} \subseteq [m]$, if $\infnorm{\vec{z}_{\overline{S}}} \leq \frac{1}{6} \min(\frac{1}{b}, b)$, then the following implications hold.
    \begin{gather*}
        \dtan(\ket{\pi_{\vec{z}}}, \ket{\pi_{\vec{a}}}) \geq 1.5 b
        \implies 
        \dtan(\ket{\pi_{\vec{z}_S}}, \ket{\pi_{\vec{a}_S}})^2 + \twonorm{\vec{z}_{\overline{S}} - \vec{a}_{\overline{S}}}^2 \geq 1.5 b^2 \\
        \dtan(\ket{\pi_{\vec{z}_S}}, \ket{\pi_{\vec{a}_S}})^2 + \twonorm{\vec{z}_{\overline{S}} - \vec{a}_{\overline{S}}}^2 \geq (1.4 - \infnorm{\vec{a}_{\overline{S}}})b^2
        \implies
        \dtan(\ket{\pi_{\vec{z}}}, \ket{\pi_{\vec{a}}}) \geq b
    \end{gather*}
\end{lemma}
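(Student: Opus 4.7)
My plan is to prove both implications coordinate-by-coordinate on $\overline{S}$, exploiting the decomposition
\[
\dtan(\ket{\pi_{\vec{z}}}, \ket{\pi_{\vec{a}}})^2 = \dtan(\ket{\pi_{\vec{z}_S}}, \ket{\pi_{\vec{a}_S}})^2 + \sum_{i \in \overline{S}} t_i^2, \qquad t_i \coloneqq \abs[\Big]{\tfrac{z_i - a_i}{1 + z_i^* a_i}},
\]
together with the pointwise identity $\abs{z_i - a_i}^2 = t_i^2 \cdot \abs{1 + z_i^* a_i}^2$. In both directions, the discrepancy between tangent distance and $\ell^2$ distance on $\overline{S}$ thus reduces to controlling the factor $\abs{1 + z_i^* a_i}$ on each coordinate, for which I would use the sandwich $\,1 - \mu|a_i| \leq \abs{1 + z_i^* a_i} \leq 1 + \mu|a_i|\,$ that follows from $|z_i| \leq \mu \coloneqq \tfrac{1}{6}\min(b, 1/b)$.

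For the first implication, I would split $\overline{S}$ into coordinates with ``small'' $|a_i|$ (below a threshold $K$ to be chosen) and ``large'' $|a_i|$ (above $K$). In the large regime, $|z_i - a_i| \geq |a_i| - \mu \geq K - \mu$ is already big enough that a single such coordinate contributes $\abs{z_i - a_i}^2 \geq 1.5 b^2$ to the $\ell^2$ sum, making the conclusion immediate. In the small regime, $\mu|a_i| \leq \mu K$ stays strictly below $1$, so $\abs{1 + z_i^* a_i} \geq 1 - \mu K$, which yields the pointwise inequality $\abs{z_i - a_i}^2 \geq (1 - \mu K)^2 \, t_i^2$. Summing over $\overline{S}$ gives $\twonorm{\vec{z}_{\overline{S}} - \vec{a}_{\overline{S}}}^2 \geq (1 - \mu K)^2 \sum_{i \in \overline{S}} t_i^2$, and combining with the hypothesis $\dtan^2 \geq (1.5b)^2 = 2.25\, b^2$ produces $\dtan(\ket{\pi_{\vec{z}_S}}, \ket{\pi_{\vec{a}_S}})^2 + \twonorm{\vec{z}_{\overline{S}} - \vec{a}_{\overline{S}}}^2 \geq (1 - \mu K)^2 \cdot 2.25\, b^2$, which exceeds $1.5 b^2$ for a suitable choice of $K$.

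For the second implication, I would instead invoke the reverse inequality $\abs{1 + z_i^* a_i} \leq 1 + \mu \infnorm{\vec{a}_{\overline{S}}}$ uniformly on $\overline{S}$, which yields $t_i^2 \geq \abs{z_i - a_i}^2 / (1 + \mu \infnorm{\vec{a}_{\overline{S}}})^2$ on each coordinate. Summing gives $\sum_{i \in \overline{S}} t_i^2 \geq \twonorm{\vec{z}_{\overline{S}} - \vec{a}_{\overline{S}}}^2 / (1 + \mu \infnorm{\vec{a}_{\overline{S}}})^2$. Expanding the denominator as $1 + 2\mu\infnorm{\vec{a}_{\overline{S}}} + O(\mu^2 \infnorm{\vec{a}_{\overline{S}}}^2)$ and using $\mu \leq b/6$, the explicit $\infnorm{\vec{a}_{\overline{S}}}$-dependent slack on the hypothesis is calibrated exactly to absorb the correction from this denominator, yielding $\dtan^2 \geq b^2$ as desired.

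The main obstacle will be the first implication: one needs to simultaneously ensure that a single ``large $|a_i|$'' coordinate already provides $\abs{z_i - a_i}^2 \geq 1.5 b^2$ and that the ``small $|a_i|$'' case only loses a factor $(1 - \mu K)^2 \geq 2/3$. These demands pull $K$ in opposite directions, forcing $\mu K$ into a narrow band around $1 - \sqrt{2/3}$, which must fit within the constraint $\mu \leq \tfrac{1}{6}\min(b, 1/b)$. It is precisely this balancing that pins down the numerical constants $\tfrac{1}{6}$, $1.5$, and $1.4$ in the statement, and where I expect the calculations to be most delicate.
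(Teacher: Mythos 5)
Your high-level approach matches the paper's: a case split on whether any $|a_i|$ with $i \in \overline{S}$ exceeds a threshold $K$, using the decomposition $\dtan(\ket{\pi_{\vec{z}}}, \ket{\pi_{\vec{a}}})^2 = \dtan(\ket{\pi_{\vec{z}_S}}, \ket{\pi_{\vec{a}_S}})^2 + \sum_{i\in\overline{S}} t_i^2$ together with $|z_i - a_i| = t_i\,|1 + z_i^*a_i|$. The gap is that the pointwise bound $|z_i - a_i|^2 \geq (1-\mu K)^2 t_i^2$ is too lossy, and the constant $\frac{1}{6}$ in the hypothesis is not small enough to absorb it. You need $(1-\mu K)^2 \geq 2/3$ (so $\mu K \leq 1 - \sqrt{2/3} \approx 0.184$) and simultaneously $(K-\mu)^2 \geq 1.5\, b^2$ (so $\mu K \geq \sqrt{1.5}\, b\mu + \mu^2$). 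With $\mu = \frac{1}{6}\min(b, 1/b)$ and $b \geq 1$ (so $\mu = \frac{1}{6b}$), the latter reads $\mu K \geq \frac{\sqrt{1.5}}{6} + \frac{1}{36 b^2} > 0.204$, which already exceeds $0.184$; concretely at $b=1$, $\mu = 1/6$, the two constraints on $K$ are $K \leq 1.10$ and $K \geq 1.39$, so there is no valid threshold. Moreover your pointwise bound is tight (attained when $z_i^* a_i$ is a negative real), so no finer tuning of $K$ can rescue this line of argument for $b \gtrsim 0.9$.

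The paper's proof sidesteps the loss by comparing the two quantities \emph{additively at the level of first powers}: \cref{lem:dtan-l2} gives $\bigl|\dtan(\ket{\pi_{\vec{z}_{\overline{S}}}}, \ket{\pi_{\vec{a}_{\overline{S}}}}) - \twonorm{\vec{z}_{\overline{S}} - \vec{a}_{\overline{S}}}\bigr| \leq \dtan(\ket{\pi_{\vec{z}_{\overline{S}}}}, \ket{\pi_{\vec{a}_{\overline{S}}}}) \cdot \max_{i\in\overline{S}} |z_i||a_i|$, and only afterwards passes to squared quantities, claiming a loss of at most $\frac{1}{4}\dtan(\ket{\pi_{\vec{z}}}, \ket{\pi_{\vec{a}}})^2$ rather than the $\frac{3}{4}\dtan(\ket{\pi_{\vec{z}}}, \ket{\pi_{\vec{a}}})^2$ your multiplicative squared estimate incurs. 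If you want to make the argument work, redo the small-$|a_i|$ regime starting from \cref{lem:dtan-l2} and carefully track the cross-term $\dtan(\ket{\pi_{\vec{z}_{\overline{S}}}}, \ket{\pi_{\vec{a}_{\overline{S}}}}) + \twonorm{\vec{z}_{\overline{S}} - \vec{a}_{\overline{S}}}$ that appears when converting the first-power estimate to one on the squares; this conversion is more delicate than the $(1-\mu K)^2$ calculation suggests, and it is exactly where the lemma's numerical constants are being spent.
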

\begin{proof}
    The main idea is that we can relate $\dtan(\ket{\pi_{\vec{z}}}, \ket{\pi_{\vec{a}}})^2$ to $\dtan(\ket{\pi_{\vec{z}_S}}, \ket{\pi_{\vec{a}_S}})^2 + \twonorm{\vec{z}_{\overline{S}} - \vec{a}_{\overline{S}}}^2$ up to some small constant multiplicative error.
    Let $\mu = \frac{1}{6}\min(\frac{1}{b}, b)$, so that $\infnorm{\vec{z}_{\overline{S}}} \leq \mu$.

    We consider two cases.
    First, suppose $\abs{a_i} \leq \frac{1}{2\mu}$ for all $i \not\in S$.
    For such constraints, we have the following bound on the difference between the two constraints:
    \begin{align*}
        & \abs[\Big]{\dtan(\ket{\pi_{\vec{z}}}, \ket{\pi_{\vec{a}}})^2
        - (\dtan(\ket{\pi_{\vec{z}_S}}, \ket{\pi_{\vec{a}_S}})^2 + \twonorm{\vec{z}_{\overline{S}} - \vec{a}_{\overline{S}}}^2)} \\
        &= \abs[\Big]{\dtan(\ket{\pi_{\vec{z}_{\overline{S}}}}, \ket{\pi_{\vec{a}_{\overline{S}}}})^2
        - \twonorm{\vec{z}_{\overline{S}} - \vec{a}_{\overline{S}}}^2} \\
        &\leq \dtan(\ket{\pi_{\vec{z}_{\overline{S}}}}, \ket{\pi_{\vec{a}_{\overline{S}}}})^2(\max_{i \in \overline{S}} \abs{z_i}\abs{a_i})^2 \\
        &\leq \frac{1}{4}\dtan(\ket{\pi_{\vec{z}}}, \ket{\pi_{\vec{a}}})^2
    \end{align*}
    where the first inequality follows from \cref{lem:dtan-l2}, the second inequality uses that $\abs{z_i} \leq \mu$ for all $i \not \in S$ and by assumption $\abs{a_i^{(s)}} \leq 1/(2 \mu)$ for all $i \not\in S$.
    So, we can conclude that
    \begin{gather*}
        \dtan(\ket{\pi_{\vec{z}}}, \ket{\pi_{\vec{a}}})^2 \geq (1.5 b)^2
        \implies
        \dtan(\ket{\pi_{\vec{z}_S}}, \ket{\pi_{\vec{a}_S}})^2 + \twonorm{\vec{z}_{\overline{S}} - \vec{a}_{\overline{S}}}^2 \geq \frac{3}{4}(1.5 b)^2 \geq 1.5b^2 \,,\\
        \dtan(\ket{\pi_{\vec{z}_S}}, \ket{\pi_{\vec{a}_S}})^2 + \twonorm{\vec{z}_{\overline{S}} - \vec{a}_{\overline{S}}}^2 \geq (1.4 - \infnorm{\vec{a}_{\overline{S}}})b^2
        \implies
        \dtan(\ket{\pi_{\vec{z}}}, \ket{\pi_{\vec{a}}})^2 \geq \frac{4}{5} \parens[\big]{1.4 - \infnorm{\vec{a}_{\overline{S}}}}b^2,
    \end{gather*}
    where the last line gives the desired $\geq b^2$ bound by our case assumption that $\infnorm{\vec{a}_{\overline{S}}} \leq \frac{1}{2\mu} \leq 0.1$.

    For the other case, suppose $\abs{a_i} > \frac{1}{2\mu}$ for some $i \not\in S$.
    We then argue that all inequalities are true, so the implications hold trivially.
    \begin{align*}
        \dtan(\ket{\pi_{\vec{z}}}, \ket{\pi_{\vec{a}}})
        \geq \abs[\Big]{\frac{z_i - a_i}{1 + z_i^* a_i}}
        \geq \abs[\Big]{\frac{\mu - \frac{1}{2\mu}}{1 + \frac12}}
        \geq \frac{1}{3}\abs[\Big]{\mu - \frac{1}{\mu}}
        \geq 1.5 b.
    \end{align*}
    Similarly, using the same bounds, we have that
    \begin{align*}
        \dtan(\ket{\pi_{\vec{z}_S}}, \ket{\pi_{\vec{a}_S}})^2 + \twonorm{\vec{z}_{\overline{S}} - \vec{a}_{\overline{S}}}^2
        \geq \abs{z_i - a_i}^2
        \geq \parens[\Big]{\frac{1}{2\mu}- \mu}^2
        \geq \frac14 \parens[\Big]{\frac{1}{\mu}- \mu}^2
        \geq 1.5 b^2.
    \end{align*}
    Thus, all of the stated inequalities are true.
\end{proof}


\begin{lemma} \label{claim:subspace-to-function}
    Consider a vector $\vec{z} \in \C^m$ and set $S \subseteq [m]$ such that $\infnorm{\vec{z}_{\overline{S}}} \leq \frac{\sqrt{\appreps}}{\twonorm{\vec{z}}}$.
    Further, for an $\appreps > 0$, let $d \geq 8\twonorm{\vec{z}}^2 + \log\frac{2}{\appreps}$, and let $\varrho_d$ satisfy the guarantees of the output of \cref{alg:inner_loop}: $\opnorm{\varrho_d - \Pi_{\leq d} \Pi_{\leq d}} \leq \appreps$ for a PSD matrix $\varrho_d$ with trace at most 1 and supported only on the image of $\Pi_{\leq d}$.
    Then
    \begin{align*}
        \abs{\bra{\pi_{\vec{z}}} \rho \ket{\pi_{\vec{z}}} &- p_{\vec{z}_S}(\vec{z}_{\overline{S}})} \leq 3\appreps, \\
        &\text{where } p_{\vec{z}_S}(\vec{z}_{\overline{S}}) = \frac{e^{-\twonorm{\vec{z}_{\overline{S}}}^2}}{\prod_{i \in S} (1 + \abs{z_i}^2)} \sum_{x, x' \in \{0,1\}^m} \bra{x} \varrho_{d} \ket{x'} (\vec{z}^*)^{x}(\vec{z})^{x'}.
    \end{align*}
\end{lemma}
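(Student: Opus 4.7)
The plan is to decompose the error $|\bra{\pi_{\vec{z}}}\rho\ket{\pi_{\vec{z}}} - p_{\vec{z}_S}(\vec{z}_{\overline{S}})|$ into three pieces via the triangle inequality, with each piece contributing at most $\appreps$:
(i) a \emph{truncation error} from replacing $\rho$ by $\rho_d = \Pi_{\leq d}\rho\Pi_{\leq d}$;
(ii) a \emph{tomography error} from replacing $\rho_d$ by $\varrho_d$;
(iii) a \emph{normalization error} from replacing the exact product normalization $\prod_{i \in \overline{S}}(1+|z_i|^2)^{-1}$ by $e^{-\twonorm{\vec{z}_{\overline{S}}}^2}$.

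For (i), I will apply \cref{lem:low_weight}. Setting $\mu_0 = \sum_{i=1}^m \frac{|z_i|^2}{1+|z_i|^2} \leq \twonorm{\vec{z}}^2$, the choice $d \geq 8\twonorm{\vec{z}}^2 + \log(2/\appreps) \geq 8\mu_0 + \log(2/\appreps)$ yields $\log(d/\mu_0) \geq \log 8 > 2$, so the exponent in the Bennett bound is at most $-2d + \mu_0 \leq -2\log(2/\appreps)$, giving $\twonorm{\Pi_{\geq d}\ket{\pi_{\vec{z}}}} \leq \appreps/2$. Then, writing $P = \proj{\pi_{\vec{z}}}$ and using that $P - \Pi_{\leq d} P \Pi_{\leq d}$ has rank at most $2$,
\[
|\bra{\pi_{\vec{z}}}\rho\ket{\pi_{\vec{z}}} - \bra{\pi_{\vec{z}}}\rho_d\ket{\pi_{\vec{z}}}| \leq \opnorm{P - \Pi_{\leq d} P \Pi_{\leq d}}\cdot \norm{\rho}_1 \leq 2\twonorm{\Pi_{\geq d}\ket{\pi_{\vec{z}}}} \leq \appreps.
\]
For (ii), the operator-norm guarantee on $\varrho_d$ gives $|\bra{\pi_{\vec{z}}}\rho_d\ket{\pi_{\vec{z}}} - \bra{\pi_{\vec{z}}}\varrho_d\ket{\pi_{\vec{z}}}| \leq \opnorm{\rho_d - \varrho_d} \leq \appreps$.

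For (iii), I expand $\bra{\pi_{\vec{z}}}\varrho_d\ket{\pi_{\vec{z}}}$ using the product-state parametrization to obtain
\[
\bra{\pi_{\vec{z}}}\varrho_d\ket{\pi_{\vec{z}}} = \frac{1}{\prod_{i\in S}(1+|z_i|^2)}\cdot \frac{1}{\prod_{i\in\overline{S}}(1+|z_i|^2)}\cdot \sum_{x,x'}\bra{x}\varrho_d\ket{x'}(\vec{z}^*)^x \vec{z}^{x'},
\]
which differs from $p_{\vec{z}_S}(\vec{z}_{\overline{S}})$ only in the $\overline{S}$-normalization factor. Write $Q = \prod_{i\in\overline{S}}(1+|z_i|^2)^{-1}$ and $R = e^{-\twonorm{\vec{z}_{\overline{S}}}^2}$. \cref{lem:dtan-fidelity-approx} (applied to the qubits in $\overline{S}$) gives $R \leq Q \leq R\cdot\exp(\sum_{i\in\overline{S}}|z_i|^4)$, and the $\ell^\infty$ hypothesis $\infnorm{\vec{z}_{\overline{S}}} \leq \sqrt{\appreps}/\twonorm{\vec{z}}$ bounds the exponent by $\infnorm{\vec{z}_{\overline{S}}}^2\twonorm{\vec{z}_{\overline{S}}}^2 \leq \appreps$. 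Hence $0 \leq 1 - R/Q \leq 1 - e^{-\appreps} \leq \appreps$, and since $\bra{\pi_{\vec{z}}}\varrho_d\ket{\pi_{\vec{z}}} \leq \tr(\varrho_d) \leq 1$,
\[
|\bra{\pi_{\vec{z}}}\varrho_d\ket{\pi_{\vec{z}}} - p_{\vec{z}_S}(\vec{z}_{\overline{S}})| = (1 - R/Q)\cdot \bra{\pi_{\vec{z}}}\varrho_d\ket{\pi_{\vec{z}}} \leq \appreps.
\]
Summing the three bounds via the triangle inequality yields $3\appreps$. The only mildly delicate point is (i): verifying that the choice of $d$ indeed makes the Bennett tail bound in \cref{lem:low_weight} shrink to $O(\appreps^2)$ when $\mu_0$ can be as large as $\twonorm{\vec{z}}^2$; I expect this to work out cleanly with the stated constant $8$ in $d$.
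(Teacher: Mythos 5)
Your proposal is correct and follows essentially the same route as the paper's proof: the identical three-term triangle-inequality decomposition, with the truncation error handled by \cref{lem:low_weight}, the tomography error by the operator-norm guarantee on $\varrho_d$, and the normalization factor compared to $e^{-\twonorm{\vec{z}_{\overline{S}}}^2}$ via \cref{lem:dtan-fidelity-approx}, the bound $\sum_{i\in\overline{S}}\abs{z_i}^4 \le \infnorm{\vec{z}_{\overline{S}}}^2\twonorm{\vec{z}_{\overline{S}}}^2 \le \appreps$, and $\tr(\varrho_d)\le 1$. The one blemish is the quick exponent computation in step (i): from $\log(d/\mu_0)>2$ the Bennett exponent is at most $-d-\mu_0$ (not $-2d+\mu_0$), and pushing it below $-2\log(2/\appreps)$ when $\mu_0$ is comparable to $\twonorm{\vec{z}}^2$ requires retaining $\log\bigl(8+\log(2/\appreps)/\mu_0\bigr)$ rather than just $\log 8$; this does work out with the stated constant $8$, exactly as you anticipated, and your version is in any case no less careful than the paper's own treatment of this step (which silently drops the square root when passing from the squared-norm tail bound to the norm).
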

\begin{proof}
Fix a vector $\Vec{z}$.
Recall that $\rho_d = \Pi_{\leq d} \rho \Pi_{\leq d}$ is the unknown state truncated to strings of Hamming weight at most $d$, and $\varrho_d$ is our estimate of $\rho_d$.
We first show the following:
\begin{equation*}
    \abs{\bra{\pi_{\vec{z}}} \rho \ket{\pi_{\vec{z}}} - \bra{\pi_{\vec{z}}} \rho_{d} \ket{\pi_{\vec{z}}}} \leq \appreps\,. 
\end{equation*}
To see this, we apply \cref{lem:low_weight} to $\Vec{z}$ to get the following:
\begin{align*}
    \abs{\bra{\pi_{\vec{z}}} \rho \ket{\pi_{\vec{z}}} - \bra{\pi_{\vec{z}}} \rho_d \ket{\pi_{\vec{z}}}} 
    &= \abs[\Big]{\Tr\parens[\Big]{\rho(\proj{\pi_{\vec{z}}} - \Pi_{<d} \proj{\pi_{\vec{z}}} \Pi_{<d})}} \\
    &\leq \opnorm{\proj{\pi_{\vec{z}}} - \Pi_{<d} \proj{\pi_{\vec{z}}} \Pi_{<d}} \\
    &\leq \opnorm{\proj{\pi_{\vec{z}}} - \Pi_{<d} \proj{\pi_{\vec{z}}}} + \opnorm{\Pi_{<d} \proj{\pi_{\vec{z}}} - \Pi_{<d} \proj{\pi_{\vec{z}}} \Pi_{<d}} \\
    &\leq 2\twonorm{\ket{\pi_{\vec{z}}} - \Pi_{<d} \ket{\pi_{\vec{z}}}} \\
    &= 2\twonorm{\Pi_{\geq d} \ket{\pi_{\vec{z}}}} \\
    &\leq 2e^{-d(\log(d/\twonorm{\vec{z}}^2) - 1)}
\end{align*}
Since we take $d \geq 8\twonorm{\vec{z}}^2 + \log\frac{2}{\appreps}$, this makes the final quantity smaller than $\appreps$.
Next, because $\opnorm{\varrho_d - \rho_d} \leq \appreps$, by the definition of operator norm we have
\begin{align*}
    \abs{\bra{\pi_{\vec{z}}} \rho_d \ket{\pi_{\vec{z}}} - \bra{\pi_{\vec{z}}} \varrho_d \ket{\pi_{\vec{z}}}} &\leq \appreps\,.
\end{align*}
We can relate this quantity to $p_{\vec{z}_S}(\vec{z}_{\overline{S}})$ as follows:
\begin{align*}
    p_{\vec{z}_S}(\vec{z}_{\overline{S}}) &= \frac{e^{-\twonorm{\vec{z}_{\overline{S}}}^2}}{\prod_{i \in S} (1 + \abs{z_i}^2)} \sum_{x, x' \in \{0,1\}^m} \bra{x} \varrho_d \ket{x'} (\vec{z}^*)^{x}\vec{z}^{x'} \\
    &= \frac{e^{-\twonorm{\vec{z}_{\overline{S}}}^2}\prod_{i \in [m]} (1 + \abs{z_i})^2}{\prod_{i \in S} (1 + \abs{z_i}^2)}\bra{\pi_{\vec{z}}}\varrho_d \ket{\pi_{\vec{z}}} \\
    &= \parens[\big]{e^{-\twonorm{\vec{z}_{\overline{S}}}^2}\prod_{i \not\in S} (1 + \abs{z_i})^2}\bra{\pi_{\vec{z}}}\varrho_d  \ket{\pi_{\vec{z}}}\,.
\end{align*}
By \cref{lem:dtan-fidelity-approx}, we have:
\begin{equation*}
    e^{-\sum_{i \not\in S} \abs{z_i}^4} \leq e^{-\twonorm{\vec{z}_{\overline{S}}}^2}\prod_{i \not\in S} (1 + \abs{z_i})^2 \leq 1\,,
\end{equation*}
where we also know that 
\begin{align*}
    \sum_{i \not\in S} \abs{z_i}^4 \leq \twonorm{\vec{z}_{\overline{S}}}^2 \infnorm{\vec{z}_{\overline{S}}}^2
    \leq \appreps.
\end{align*}  
Since $1 - x \leq e^{-x}$, we have the following multiplicative error bounds on $p_{\vec{z}_{S}}(\vec{z}_{\overline{S}})$:
\begin{align*}
    (1 - \appreps) \bra{\pi_{\vec{z}}}\varrho_d \ket{\pi_{\vec{z}}} \leq p_{\vec{z}_{S}}(\vec{z}_{\overline{S}}) \leq \bra{\pi_{\vec{z}}}\varrho_d \ket{\pi_{\vec{z}}}\,.
\end{align*}
As $\varrho_d$ is a sub-normalized quantum state, $\appreps \bra{\pi_{\vec{z}}}\varrho_d \ket{\pi_{\vec{z}}} \leq \appreps$, so $\abs{\bra{\pi_{\vec{z}}}\varrho_d \ket{\pi_{\vec{z}}} - p_{\vec{z}_{S}}(\vec{z}_{\overline{S}})} \leq \appreps$.
Applying the triangle inequality, we get the desired bound:
\begin{multline*}
    \abs{\bra{\pi_{\vec{z}}}\rho\ket{\pi_{\vec{z}}} - p_{\vec{z}_S}(\vec{z}_{\overline{S}})}
    \leq \abs{\bra{\pi_{\vec{z}}}\rho\ket{\pi_{\vec{z}}} - \bra{\pi_{\vec{z}}}\rho_d\ket{\pi_{\vec{z}}}} \\
    + \abs{\bra{\pi_{\vec{z}}}\rho_d\ket{\pi_{\vec{z}}} - \bra{\pi_{\vec{z}}}\varrho_d\ket{\pi_{\vec{z}}}} + \abs{\bra{\pi_{\vec{z}}}\varrho_d\ket{\pi_{\vec{z}}} - p_{\vec{z}_S}(\vec{z}_{\overline{S}})}
    \leq 3\appreps \qedhere
\end{multline*}
\end{proof}

We now prove correctness.
We split the desired guarantee into two parts: soundness and completeness.
First, we prove soundness.

\begin{claim}[Soundness] \label{claim:sound}
    The output of \cref{alg:inner_loop} satisfies the desired correctness criteria:
    the output is either $\bot$ or a $\vec{z} \in \C^m$ such that
    \begin{enumerate}[label=(\alph*)]
        \item $\bra{\pi_{\vec{z}}}\rho\ket{\pi_{\vec{z}}} \geq \eta - \eps$;
        \item For all $s \in [r]$, $\dtan(\ket{\pi_{\vec{z}}}, \ket{\pi_{\vec{a}^{(s)}}}) \geq b$.
    \end{enumerate}
\end{claim}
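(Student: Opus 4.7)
The claim is vacuous when the algorithm outputs $\bot$, so I focus on the case in which the output is a vector $\vec{u}$ with associated objective value $p_{\vec{u}} \geq \eta - \eps/2$. Unwinding the algorithm, $\vec{u} = (\vec{v}_S, \vec{y}_{\overline{S}})$ arises from some subset $S$ of size at most $B^2/\mu^2$, some $(\vec{v}, \nu) \in \net_S$, and some $\vec{y}_{\overline{S}} \in \mathcal{D}_{2\toleps}$ returned by the polynomial optimization step. The two properties will be verified separately using the two ``approximation'' lemmas: \cref{claim:subspace-to-function} for property (a), and \cref{claim:guess-to-subspace} together with a Pythagoras calculation for property (b).

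For property (a), I would apply \cref{claim:subspace-to-function} to the vector $\vec{u}$ directly. The hypotheses are straightforward to check: membership of $\vec{y}_{\overline{S}}$ in $\mathcal{D}_{2\toleps}$ yields $\twonorm{\vec{u}} \leq B + O(\toleps)$ and $\infnorm{\vec{u}_{\overline{S}}} \leq \mu + 2\toleps$, and the choices $\mu \leq \frac{1}{10}\sqrt{\appreps}/B$ and $d = 10B^2 + \log(2/\appreps)$ ensure that $\infnorm{\vec{u}_{\overline{S}}} \leq \sqrt{\appreps}/\twonorm{\vec{u}}$ and $d \geq 8\twonorm{\vec{u}}^2 + \log(2/\appreps)$. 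The lemma then gives $\abs{\braket{\pi_{\vec{u}}|\rho|\pi_{\vec{u}}} - p_{\vec{u}_S}(\vec{u}_{\overline{S}})} \leq 3\appreps = 0.03\eps$, and combining with $p_{\vec{u}} \geq \eta - \eps/2$ yields $\braket{\pi_{\vec{u}}|\rho|\pi_{\vec{u}}} \geq \eta - \eps$.

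For property (b), I would fix an index $s \in [r]$ and aim to invoke the second implication of \cref{claim:guess-to-subspace}, whose infinity-norm hypothesis follows from $\infnorm{\vec{u}_{\overline{S}}} \leq \mu + 2\toleps \leq \frac{1}{6}\min(b, 1/b)$. Let $P$ denote the orthogonal projection onto $\operatorname{span}\{\vec{a}^{(t)}_{\overline{S}}\}_{t \in [r]}$. Since $\vec{v} \in V^{(S)}$, the vector $\vec{v}_{\overline{S}}$ lies in the image of $P$, so the constraint $\twonorm{\Pi_{V^{(S)}}\vec{u} - \vec{v}} \leq 2\toleps$ restricts to $\twonorm{P\vec{y}_{\overline{S}} - \vec{v}_{\overline{S}}} \leq 2\toleps$. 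Combined with $\abs{\twonorm{\vec{y}_{\overline{S}}} - \nu} \leq 2\toleps$ and the Pythagoras identity
\[
\twonorm{\vec{y}_{\overline{S}} - \vec{a}^{(s)}_{\overline{S}}}^2 - \twonorm{\vec{y}_{\overline{S}}}^2 = \twonorm{P\vec{y}_{\overline{S}} - \vec{a}^{(s)}_{\overline{S}}}^2 - \twonorm{P\vec{y}_{\overline{S}}}^2,
\]
this yields
\[
\twonorm{\vec{u}_{\overline{S}} - \vec{a}^{(s)}_{\overline{S}}}^2 \geq \nu^2 - \twonorm{\vec{v}_{\overline{S}}}^2 + \twonorm{\vec{v}_{\overline{S}} - \vec{a}^{(s)}_{\overline{S}}}^2 - O(\toleps \cdot C),
\]
where $C$ depends polynomially on $B$ and $\twonorm{\vec{a}^{(s)}}$. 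Adding the unchanged quantity $\dtan(\ket{\pi_{\vec{u}_S}}, \ket{\pi_{\vec{a}^{(s)}_S}})^2 = \dtan(\ket{\pi_{\vec{v}_S}}, \ket{\pi_{\vec{a}^{(s)}_S}})^2$ and invoking the net inequality guaranteed by \cref{claim:net} gives a lower bound of $1.49 b^2 - O(\toleps \cdot C) \geq 1.4 b^2 \geq (1.4 - \infnorm{\vec{a}^{(s)}_{\overline{S}}}) b^2$, after which \cref{claim:guess-to-subspace} delivers the desired $\dtan(\ket{\pi_{\vec{u}}}, \ket{\pi_{\vec{a}^{(s)}}}) \geq b$.

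The main obstacle will be controlling the constant $C$ in the Pythagoras error, since the parameters $\vec{a}^{(s)}$ need not have bounded norm after recentering the root. The cleanest way to handle this is to split into cases: if $\twonorm{\vec{a}^{(s)}}$ is very large, the desired inequality already follows trivially (as in the second case of the proof of \cref{claim:guess-to-subspace}), and otherwise $C$ is polynomial in $B$ so the chosen $\toleps \leq \poly(1/m, \mu, \eps)$ absorbs the slack.
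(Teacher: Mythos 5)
Your overall structure matches the paper's proof: use \cref{claim:subspace-to-function} for property (a), and use a Pythagoras-style projection argument followed by \cref{claim:guess-to-subspace} for property (b). Two issues, one small gap and one point of comparison.

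For property (a), you have a gap in the chain of inequalities. You apply \cref{claim:subspace-to-function} to get $\abs{\braket{\pi_{\vec{u}}|\rho|\pi_{\vec{u}}} - p_{\vec{u}_S}(\vec{u}_{\overline{S}})} \leq 3\appreps$, but then immediately combine this with $p_{\vec{u}} \geq \eta - \eps/2$. The problem is that $p_{\vec{u}} = p_{\vec{u}_S, \nu}(\vec{u}_{\overline{S}})$, which carries the normalizing factor $e^{-\nu^2}$, while \cref{claim:subspace-to-function} controls $p_{\vec{u}_S}(\vec{u}_{\overline{S}})$, which carries $e^{-\twonorm{\vec{u}_{\overline{S}}}^2}$. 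These are not the same function; you need the additional bridging step
\[
\abs{p_{\vec{u}_S}(\vec{u}_{\overline{S}}) - p_{\vec{u}_S, \nu}(\vec{u}_{\overline{S}})} \leq \abs{1 - e^{\nu^2 - \twonorm{\vec{u}_{\overline{S}}}^2}} \cdot \abs{p_{\vec{u}_S}(\vec{u}_{\overline{S}})} \,,
\]
which is small because $\abs{\twonorm{\vec{u}_{\overline{S}}} - \nu} \leq 2\toleps$ and $\toleps$ is chosen appropriately. The paper inserts exactly this step; your arithmetic would still close with the extra $O(B\toleps)$ error, but you cannot skip the step because $p_{\vec{u}}$ and $p_{\vec{u}_S}(\vec{u}_{\overline{S}})$ are different objects.

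For property (b), your argument is essentially the paper's, with a cosmetic difference in how the unbounded $\vec{a}^{(s)}$ issue is resolved. You propose a case split on $\twonorm{\vec{a}^{(s)}}$. The paper instead targets the threshold $(1.4 - \infnorm{\vec{a}^{(s)}_{\overline{S}}})b^2$ directly (this is what the second implication of \cref{claim:guess-to-subspace} actually requires), and the $-\infnorm{\vec{a}^{(s)}_{\overline{S}}}b^2$ term automatically absorbs the $\twonorm{\vec{a}^{(s)}_{\overline{S}}} \leq \sqrt{m}\infnorm{\vec{a}^{(s)}_{\overline{S}}}$ slack in the Pythagoras error, so no case split is needed. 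Both routes work; the paper's choice of threshold simply avoids the explicit branching.
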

\begin{proof}
    From inspecting \cref{alg:inner_loop}, we can observe that the output $\vec{u}$ satisfies the following guarantees, for some internal parameters $S$, $\vec{v} \in \vsub$, and $\nu \in [0, B]$.
    First, it satisfies the guarantees from being in the domain $\mathcal{D}^{2\toleps}$:
    \begin{align} \label{eq:soundness-domain}
        \abs{\twonorm{\vec{u}_{\overline{S}}} - \nu} \leq 2\toleps \qquad
        \twonorm{\Pi_\vsub \vec{u} - \vec{v}} \leq 2\toleps \qquad
        \infnorm{\vec{u}_{\overline{S}}} \leq \mu + 2\toleps
    \end{align}
    Since we have that $\toleps \leq 0.01 B$ and $\nu \leq B$, this implies that $\abs{\twonorm{\vec{u}_{\overline{S}}}^2 - \nu^2} \leq 3B \toleps$.
    The output $\vec{u}$ also satisfies the tangent distance guarantees inherited from the $\toleps$-net, the bound $\twonorm{\vec{u}_S}^2 + \nu^2 \leq B^2$ also from the net, and the guarantee from the objective function:
    \begin{gather}
        \nu^2 - \twonorm{\vec{v}_{\overline{S}}}^2 + \twonorm{\vec{v}_{\overline{S}} - \vec{a}_{\overline{S}}^{(s)}}^2 \geq 1.49 b^2 - \dtan(\ket{\pi_{\vec{u}_S}}, \ket{\pi_{\vec{a}^{(s)}_S}})^2 \text{ for all } s \in [r] \label{eq:soundness-tangent-guarantee}\\
        p_{\vec{u}_S, \nu}(\vec{u}_{\overline{S}}) \geq \eta - \eps / 2
        \label{eq:soundness-fidelity}
    \end{gather}
    From \eqref{eq:soundness-fidelity}, we can conclude (a).
    First, we relate $p_{\vec{u}_S, \nu}(\vec{u}_{\overline{S}}) = e^{\nu^2 - \twonorm{\vec{u}_{\overline{S}}}^2} p_{\vec{u}_S}(\vec{u}_{\overline{S}})$ to $\bra{\pi_{\vec{u}}} \rho \ket{\pi_{\vec{u}}}$:
    \begin{align*}
        \abs{\bra{\pi_{\vec{u}}} \rho \ket{\pi_{\vec{u}}} - p_{\vec{u}_S, \nu}(\vec{u}_{\overline{S}})}
        &\leq \abs{\bra{\pi_{\vec{u}}} \rho \ket{\pi_{\vec{u}}} - p_{\vec{u}_S}(\vec{u}_{\overline{S}})} + \abs{p_{\vec{u}_S}(\vec{u}_{\overline{S}}) - p_{\vec{u}_S, \nu}(\vec{u}_{\overline{S}})} \\
        &\leq 3\appreps + \abs{1 - e^{\nu^2 - \twonorm{\vec{u}_{\overline{S}}}^2}}\abs{p_{\vec{u}_S}(\vec{u}_{\overline{S}})} \\
        &\leq 3\appreps + \abs{1 - e^{\nu^2 - \twonorm{\vec{u}_{\overline{S}}}^2}}(1 + 3\appreps)\\
        &\leq 3\appreps + 0.1 \eps.
    \end{align*}
    Above, we use that $\toleps \leq 0.01 \eps / B$ and \cref{claim:subspace-to-function}; we satisfy the assumptions of the lemma since $\twonorm{\vec{u}} \leq B + 2\toleps \leq 1.1 B$ and $\infnorm{\vec{u}_{\overline{S}}} \leq 1.1 \mu \leq \frac{\sqrt{\appreps}}{2B} \leq \frac{\sqrt{\appreps}}{\twonorm{\vec{u}}}$.
    \begin{align*}
        \bra{\pi_{\vec{u}}} \rho \ket{\pi_{\vec{u}}}
        \geq (\eta - \eps / 2) + (\bra{\pi_{\vec{u}}} \rho \ket{\pi_{\vec{u}}} - p_{\vec{u}_S}(\vec{u}_{\overline{S}}))
        \geq \eta - \eps/2 - 3\appreps - 0.1\eps \geq \eta - \eps.
    \end{align*}
    To get (b), we work with the tangent distance constraints \eqref{eq:soundness-tangent-guarantee}.
    Along with the domain constraints \eqref{eq:soundness-domain}, these imply the following:
    \begin{align} \label{eq:soundness-tangent}
        \dtan(\ket{\pi_{\vec{u}_S}}, \ket{\pi_{\vec{a}^{(s)}_S}})^2
        + \twonorm{\vec{u}_{\overline{S}} - \vec{a}_{\overline{S}}^{(s)}}^2
        \geq 1.49 b^2 - 20\toleps(B + \twonorm{\vec{a}_{\overline{S}}^{(s)}}) \text{ for all } s \in [r]
    \end{align}
    This follows from the below argument, where we use triangle inequality, $\toleps \leq 0.01$, and the Pythagorean theorem.
    Since $\vsub$ contains the subspace corresponding to $S$, there is a corresponding projector $\Pi$ such that $\Pi \vec{x}_{\overline{S}} = (\Pi_\vsub \vec{x})_{\overline{S}}$ (so, in particular, $\twonorm{\Pi \vec{u}_{\overline{S}} - \vec{v}_{\overline{S}}} = \twonorm{(\Pi_\vsub \vec{u} - \vec{v})_{\overline{S}}} \leq 2\toleps$ and $\Pi \vec{a}_{\overline{S}}^{(s)} = \vec{a}_{\overline{S}}^{(s)}$ for all $s \in [r]$).
    \begin{align*}
        & \abs{\nu^2 - \twonorm{\vec{v}_{\overline{S}}}^2 + \twonorm{\vec{v}_{\overline{S}} - \vec{a}_{\overline{S}}^{(s)}}^2 - \twonorm{\vec{u}_{\overline{S}} - \vec{a}_{\overline{S}}^{(s)}}^2} \\
        &\leq 3B\toleps + \abs{\twonorm{\vec{u}_{\overline{S}}}^2 - \twonorm{\vec{v}_{\overline{S}}}^2 + \twonorm{\vec{v}_{\overline{S}} - \vec{a}_{\overline{S}}^{(s)}}^2 - \twonorm{\vec{u}_{\overline{S}} - \vec{a}_{\overline{S}}^{(s)}}^2} \\
        &= 3B\toleps + \abs{(\twonorm{\Pi \vec{u}_{\overline{S}}}^2 - \twonorm{\vec{v}_{\overline{S}}}^2) + (\twonorm{\vec{v}_{\overline{S}} - \vec{a}_{\overline{S}}^{(s)}}^2 - \twonorm{\Pi(\vec{u}_{\overline{S}} - \vec{a}_{\overline{S}}^{(s)})}^2)} \\
        &\leq 3B\toleps + \twonorm{\Pi \vec{u}_{\overline{S}} - \vec{v}_{\overline{S}}}(\twonorm{\Pi \vec{u}_{\overline{S}}} + \twonorm{\vec{v}_{\overline{S}}} + \twonorm{\vec{v}_{\overline{S}} - \vec{a}_{\overline{S}}^{(s)}} + \twonorm{\Pi \vec{u}_{\overline{S}} - \vec{a}_{\overline{S}}^{(s)}}) \\
        &\leq 20\toleps(B + \twonorm{\vec{a}_{\overline{S}}^{(s)}})
    \end{align*}
    Notice that we have no bound on the size of $\vec{a}^{(S)}$, so we cannot remove this dependence.
    Next, we use that $\toleps$ is sufficiently small (specifically, smaller than $\frac{0.001}{(1 + B)\sqrt{m}} b^2$) to conclude from \eqref{eq:soundness-tangent} that
    \begin{align*}
        \dtan(\ket{\pi_{\vec{u}_S}}, \ket{\pi_{\vec{a}^{(s)}_S}})^2
        + \twonorm{\vec{u}_{\overline{S}} - \vec{a}_{\overline{S}}^{(s)}}^2
        \geq (1.4 - \infnorm{\vec{a}_{\overline{S}}})b^2 \text{ for all } s \in [r].
    \end{align*}
    Then, we can appeal to \cref{claim:guess-to-subspace}, since $\infnorm{\vec{u}_{\overline{S}}} \leq \mu + 2\toleps \leq \frac{1}{6} \min(b, \frac{1}{b})$ for every $s \in [r]$, to get that (b) is satisfied:
    \begin{equation*}
        \dtan(\ket{\pi_{\vec{u}}}, \ket{\pi_{\vec{a}^{(s)}}})
        \geq b \text{ for all } s \in [r]. \qedhere
    \end{equation*}
\end{proof}

\begin{claim}[Completeness] \label{claim:complete}
    If there is a product state $\ket{\pi}$ such that
    \begin{enumerate}[label=(\alph*')]
        \item $\bra{\pi}\rho\ket{\pi} \geq \eta$;
        \item For all $s \in [r]$, $\dtan(\ket{\pi_{\vec{z}}}, \ket{\pi_{\vec{a}^{(s)}}}) \geq 1.5b$;
        \item $\dtan(\ket{\pi}, \ket{\varphi}) \leq B$;
    \end{enumerate}
    then the output of \cref{alg:inner_loop} is guaranteed to not be $\bot$.
\end{claim}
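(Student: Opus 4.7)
My plan is to exhibit a single pair $(S^*, (\vec{v}, \nu))$ iterated over by \cref{alg:inner_loop} for which the polynomial optimization must return a candidate with objective value at least $\eta - \eps/2$, thereby preventing the algorithm from outputting $\bot$. After recentering, $\ket{\pi}$ is parametrized by some $\vec{z}^* \in \C^m$ with $\twonorm{\vec{z}^*} = \dtan(\ket{\pi},\ket{0^m}) \le B$. Take $S^* = \{i : \abs{z^*_i} > \mu\}$, which has size at most $B^2/\mu^2$ by Markov, so $S^*$ is among the subsets enumerated by the algorithm. By construction, $\infnorm{\vec{z}^*_{\overline{S^*}}} \le \mu$.

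Next I would verify that the ``ideal'' net candidate $(\vec{v}_0, \nu_0) = (\Pi_{V^{(S^*)}} \vec{z}^*,\ \twonorm{\vec{z}^*_{\overline{S^*}}})$ lies in the feasible set $\calF_{S^*}$. Since $V^{(S^*)}$ contains every basis vector of $S^*$, one has $(\vec{v}_0)_{S^*} = \vec{z}^*_{S^*}$, which immediately yields $\twonorm{(\vec{v}_0)_{S^*}}^2 + \nu_0^2 = \twonorm{\vec{z}^*}^2 \le B^2$, and $\twonorm{\vec{v}_0} \le \twonorm{\vec{z}^*} \le B$ because projections are contractive. For each farness constraint $s \in [r]$, let $\Pi$ denote projection onto the span of the $\vec{a}^{(s')}_{\overline{S^*}}$'s; a Pythagorean calculation gives
\[
\nu_0^2 - \twonorm{(\vec{v}_0)_{\overline{S^*}}}^2 + \twonorm{(\vec{v}_0)_{\overline{S^*}} - \vec{a}^{(s)}_{\overline{S^*}}}^2 = \twonorm{\vec{z}^*_{\overline{S^*}} - \vec{a}^{(s)}_{\overline{S^*}}}^2,
\]
so the required inequality reduces to $\dtan(\ket{\pi_{\vec{z}^*_{S^*}}},\ket{\pi_{\vec{a}^{(s)}_{S^*}}})^2 + \twonorm{\vec{z}^*_{\overline{S^*}} - \vec{a}^{(s)}_{\overline{S^*}}}^2 \ge 1.5 b^2$, which follows from the assumption $\dtan(\ket{\pi_{\vec{z}^*}},\ket{\pi_{\vec{a}^{(s)}}}) \ge 1.5 b$ via \cref{claim:guess-to-subspace}. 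Hence by \cref{claim:net}, there exists $(\vec{v},\nu) \in \net_{S^*}$ with $\twonorm{\vec{v}-\vec{v}_0} \le \toleps$ and $\abs{\nu - \nu_0} \le \toleps$.

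Now I would plug the true-parameter vector $\vec{z}^*_{\overline{S^*}}$ into the domain $\mathcal{D}_{\toleps}$ for this $(\vec{v},\nu)$: the three domain constraints are all immediate, since the $S^*$-coordinates of $\Pi_{V^{(S^*)}}(\vec{v}_{S^*},\vec{z}^*_{\overline{S^*}})$ agree with $\vec{v}_{S^*}$ and the $\overline{S^*}$-coordinates equal $\Pi \vec{z}^*_{\overline{S^*}} = (\vec{v}_0)_{\overline{S^*}}$, which is within $\toleps$ of $\vec{v}_{\overline{S^*}}$. So \cref{thm:opt-main} returns $\vec{y}_{\overline{S}}$ with $p_{\vec{v}_{S^*},\nu}(\vec{y}_{\overline{S}}) \ge p_{\vec{v}_{S^*},\nu}(\vec{z}^*_{\overline{S^*}}) - \appreps$.

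Finally, I would lower-bound $p_{\vec{v}_{S^*},\nu}(\vec{z}^*_{\overline{S^*}})$ by $\eta - O(\toleps + \appreps)$. Writing $p_{\vec{v}_{S^*},\nu}(\vec{z}^*_{\overline{S^*}}) = e^{-\nu^2 + \twonorm{\vec{z}^*_{\overline{S^*}}}^2} p_{\vec{v}_{S^*}}(\vec{z}^*_{\overline{S^*}})$, the exponential factor is $1 \pm O(B\toleps)$; then \cref{claim:subspace-to-function}, applied with $\vec{z}=(\vec{v}_{S^*},\vec{z}^*_{\overline{S^*}})$ whose $\infnorm{}$ and $\twonorm{}$ bounds hold by the choice of parameters, gives $\abs{\bra{\pi_{\vec{z}}}\rho\ket{\pi_{\vec{z}}} - p_{\vec{v}_{S^*}}(\vec{z}^*_{\overline{S^*}})} \le 3\appreps$. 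The main obstacle is to then control $\abs{\bra{\pi_{\vec{z}}}\rho\ket{\pi_{\vec{z}}} - \bra{\pi_{\vec{z}^*}}\rho\ket{\pi_{\vec{z}^*}}}$, which equals $\abs{\tr(\rho(\proj{\pi_{\vec{z}}}-\proj{\pi_{\vec{z}^*}}))} \le \opnorm{\proj{\pi_{\vec{z}}}-\proj{\pi_{\vec{z}^*}}} \le \dtan(\ket{\pi_{\vec{z}}},\ket{\pi_{\vec{z}^*}})$ by \cref{cor:dtan_trace_relationship}. Since $\vec{z}$ and $\vec{z}^*$ agree on $\overline{S^*}$, and on $S^*$ satisfy $v_i = z^*_i + O(\toleps)$ (so $1 + v_i^* z^*_i \ge 1/2$ for $\toleps$ sufficiently small in terms of $B$), this tangent distance is at most $2\twonorm{\vec{v}_{S^*} - \vec{z}^*_{S^*}} \le 2\toleps$. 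Combining all error terms and using $\toleps \le 0.01\eps$ and $\appreps = \eps/100$ yields $p_{\vec{v}_{S^*},\nu}(\vec{y}_{\overline{S}}) \ge \eta - \eps/2$, so the corresponding candidate survives the final threshold and the algorithm does not output $\bot$.
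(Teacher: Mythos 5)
Your proposal is correct and follows essentially the same route as the paper's proof: same choice of $S^*$, same ``ideal'' pair $(\Pi_{V^{(S^*)}}\vec z^*, \twonorm{\vec z^*_{\overline{S^*}}})$ certified to lie in $\mathcal F_{S^*}$ via the Pythagorean identity plus \cref{claim:guess-to-subspace}, same verification that $\vec z^*_{\overline{S^*}} \in \mathcal D_{\toleps}$ for the nearby net point, and the same chain of estimates through \cref{claim:subspace-to-function}, the $\opnorm$–$\dtan$ comparison, and \cref{thm:opt-main}. If anything your write-up of the $\dtan(\ket{\pi_{\vec z}},\ket{\pi_{\vec z^*}})$ bound on the $S^*$ coordinates is slightly cleaner than the printed version (which has a minor $\overline S$ vs.\ $S$ slip there), but the content is identical.
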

\begin{proof}
    We work in the basis where $\ket{\varphi}$ is rotated to $\ket{0^m}$ via single-qubit unitaries.
    Let $\ket{\pi_{\vec{u}}}$ be a product state satisfying (a'), (b'), and (c').
    We will show that \cref{alg:inner_loop} will, in its search, find some $\vec{\wt{u}}$ close to $\vec{u}$ and which achieves a similarly large objective value.
    Thus, the algorithm will not output $\bot$.

    First, let $S = \{i \mid \abs{u_i} \geq \mu\}$.
    Using (c'), $\dtan(\ket{\pi_{\vec{u}}}, \ket{0^m}) = \twonorm{\vec{u}} \leq B$, so $\twonorm{u}^2 \leq \sum_{i \in S} u_i^2/\mu^2 \leq B^2/\mu^2$.
    So, we can consider $\calF_S$ for the corresponding choice of $S$.
    Our vector $\vec{u}$ has a corresponding point in the ``feasible set'', $(\Pi_\vsub \vec{u}, \twonorm{\vec{u}_{\overline{S}}}) \in \calF_S$.
    The key constraint to check is the tangent distance constraint in the definition of $\calF_S$ \eqref{eq:final-feasible}, which becomes
    \begin{gather*}
        \twonorm{\vec{u}_{\overline{S}}}^2 - \twonorm{(\Pi_\vsub \vec{u})_{\overline{S}}}^2 + \twonorm{(\Pi_\vsub \vec{u})_{\overline{S}} - \vec{a}_{\overline{S}}^{(s)}}^2 \geq 1.5b^2 - \dtan(\ket{\pi_{\vec{u}_S}}, \ket{\pi_{\vec{a}^{(s)}_S}})^2 \\
        \iff \twonorm{\vec{u}_{\overline{S}}}^2 - \twonorm{\Pi \vec{u}_{\overline{S}}}^2 + \twonorm{\Pi(\vec{u}_{\overline{S}} - \vec{a}_{\overline{S}}^{(s)})}^2 \geq 1.5b^2 - \dtan(\ket{\pi_{\vec{u}_S}}, \ket{\pi_{\vec{a}^{(s)}_S}})^2 \\
        \iff \dtan(\ket{\pi_{\vec{u}_S}}, \ket{\pi_{\vec{a}^{(s)}_S}})^2 + \twonorm{\vec{u}_{\overline{S}} - \vec{a}_{\overline{S}}^{(s)}}^2 \geq 1.5b^2,
    \end{gather*}
    where as in \cref{claim:sound} we define $\Pi$ to be the projector such that $\Pi \vec{z}_{\overline{S}} = (\Pi_\vsub \vec{z})_{\overline{S}}$.
    By \cref{claim:guess-to-subspace}, (b') implies the above condition for all $s \in [r]$.
    So, by \cref{claim:net}, we can find a point $(\vec{v}, \nu) \in \net_S$ such that
    \begin{align} \label{eq:completeness-net}
        \twonorm{\vec{v} - \Pi_\vsub \vec{u}} \leq \toleps\,, \qquad
        \abs{\twonorm{\vec{u}_{\overline{S}}} - \nu} \leq \toleps\,.
    \end{align}
    We now claim that the vector $\vec{u}_{\overline{S}} \in \mathcal{D}_{\toleps}$ as defined in \eqref{program:final-domain}.
    Thus, the vector $\vec{z} = (\vec{v}_S, \vec{u}_{\overline{S}})$ is a feasible solution to the optimization problem in \eqref{program:final}.
    To show this, observe that
    \begin{align*}
        \twonorm{\vec{z} - \vec{u}} = \twonorm{\vec{v}_S - \vec{u}_S} \leq \toleps
    \end{align*}
    Here, we use the definition of the two norm squared as the sum over entries; that off of $S$, $\vec{z}$ and $\vec{u}$ are identical; and \eqref{eq:completeness-net}.
    From this, we can check that for the first constraint of $\mathcal{D}_{\toleps}$, we have that
    \begin{align*}
        \abs{\twonorm{\vec{z}_{\overline{S}}} - \nu} \leq \toleps \,.
    \end{align*}
    The second and third constraints of $\mathcal{D}_{\toleps}$ can also easily be verified:
    \begin{gather*}
        \twonorm{\Pi_\vsub \vec{z} - \vec{v}}
        = \twonorm{\Pi_\vsub (\vec{0}_S, (\vec{u}-\vec{v})_{\overline{S}})}
        \leq \twonorm{\Pi_\vsub (\vec{u}-\vec{v})}
        \leq \toleps \\
        \infnorm{\vec{z}_{\overline{S}}} = \infnorm{\vec{u}_{\overline{S}}} \leq \mu \leq \mu + \toleps.
    \end{gather*}
    In the first inequality in the first line line, we use that the subspace $\vsub$ contains the subspace spanned by $S$, so the projector onto $\overline{S}$ commutes with $\Pi_\vsub$.
    To finish the completeness proof, we now need to show that, for the above choice of $\vec{z}$, $p_{\vec{z}_S, \nu}(\vec{z}_{\overline{S}}) \geq \eta - \eps/2 + \appreps$.
    Then, the best candidate solution $\vec{y}$ found by the algorithm must have an objective value of at least $\eta - \eps/2$, and thus the algorithm will not output $\bot$.
    Because $\vec{z}$ and $\vec{u}$ satisfy $\twonorm{\vec{z} - \vec{u}} \leq \toleps$, we have the following bound:
    \begin{align*}
        \left| \bra{\pi_{\vec{z}}} \rho \ket{\pi_{\vec{z}}} - \bra{\pi_{\vec{u}}} \rho \ket{\pi_{\vec{u}}}\right|
        &\leq \opnorm{\proj{\pi_{\vec{z}}} - \proj{\pi_{\vec{u}}}} \\
        &\leq \dtan(\ket{\pi_{\vec{z}}}, \ket{\pi_{\vec{u}}})\\
        &= \dtan(\ket{\pi_{\vec{z}_{\overline{S}}}}, \ket{\pi_{\vec{u}_{\overline{S}}}})\\
        &\leq \twonorm{\vec{z}_{\overline{S}} - \vec{u}_{\overline{S}}} (1 + \max_{i \in \overline{S}}(\abs{z_i}\abs{u_i}))\\
        &\leq \toleps (1 + \mu(\mu + \toleps))\\
        &\leq 2\toleps
    \end{align*}
    In the first line, we use the definition of the trace distance.  Then we use \cref{cor:dtan_trace_relationship}, and then the definition of tangent distance to restrict to the coordinates where $\vec{z}$ and $\vec{u}$ are not equal.  Then, we use \cref{lem:dtan-l2} and then our bound $\infnorm{u_{\overline{S}}} \leq \mu$.
    We now relate the fidelity to the objective function for $\vec{z}$ through \cref{claim:subspace-to-function}, similarly to as in \cref{claim:sound}.
    Recall that $p_{\vec{u}_S, \nu}(\vec{u}_{\overline{S}}) = e^{\nu^2 - \twonorm{\vec{u}_{\overline{S}}}^2} p_{\vec{u}_S}(\vec{u}_{\overline{S}})$, so
    \begin{align*}
        \abs{\bra{\pi_{\vec{u}}} \rho \ket{\pi_{\vec{u}}} - p_{\vec{u}_S, \nu}(\vec{u}_{\overline{S}})}
        &\leq \abs{\bra{\pi_{\vec{u}}} \rho \ket{\pi_{\vec{u}}} - p_{\vec{u}_S}(\vec{u}_{\overline{S}})} + \abs{p_{\vec{u}_S}(\vec{u}_{\overline{S}}) - p_{\vec{u}_S, \nu}(\vec{u}_{\overline{S}})} \\
        &\leq 3\appreps + \abs{1 - e^{\nu^2 - \twonorm{\vec{u}_{\overline{S}}}^2}}\abs{p_{\vec{u}_S}(\vec{u}_{\overline{S}})} \\
        &\leq 3\appreps + \abs{1 - e^{\nu^2 - \twonorm{\vec{u}_{\overline{S}}}^2}}(1 + 3\appreps)\\
        &\leq 3\appreps + 0.1 \eps.
    \end{align*}
    Above, we use \eqref{eq:completeness-net}, $\toleps \leq 0.01 \eps/B$, and \cref{claim:subspace-to-function}; we satisfy the assumptions since $\twonorm{\vec{u}} \leq B + 2\toleps \leq 1.1 B$ and $\infnorm{\vec{u}_{\overline{S}}} \leq \mu \leq \frac{\sqrt{\appreps}}{2B} \leq \frac{\sqrt{\appreps}}{\twonorm{\vec{u}}}$.
    Altogether, we have that
    \begin{align*}
        p_{\vec{z}_{S}, \nu}(\vec{z}_{\overline{S}})
        &\geq \bra{\pi_{\vec{z}}} \rho \ket{\pi_{\vec{z}}} - 3\appreps - 0.1\eps \\
        &\geq \bra{\pi_{\vec{u}}} \rho \ket{\pi_{\vec{u}}} - 3\appreps - 0.1\eps - 2\toleps \\
        &\geq \eta - \eps/2 + \appreps.
    \end{align*}
    Thus, \cref{alg:inner_loop} does not output $\bot$.
\end{proof}

\section{Polynomial optimization} \label{sec:opt}

In this section, we provide an algorithm to solve the polynomial optimization problem subject to subspace constraints obtained in \cref{sec:reduction-to-poly-opt}. We note that while optimizing worst-case polynomial systems is hard, we are working in the regime where the polynomial when viewed as a tensor has Frobenius norm bounded by $1$. This regime is reminiscent of optimizing dense CSP's, where additive error approximations suffice. In contrast, we are optimizing the polynomial over the sphere, and do not require appealing to regularity-like statements. The key technical contribution in this section is to show that our polynomial optimization problem with subspace constraints admits a small $\eps$-net. 

\begin{definition}[Polynomial notation]
Let $T^{(0)} \in \C, T^{(1)} \in (\C^{n})^{\otimes 2}, \dots , T^{(d)} \in (\C^{n})^{\otimes 2d}$ be tensors.
For $\vec{x} \in \C^n$, we denote
\[
    f_{T^{(0)}, \dots , T^{(d)} }(\vec{x}) = T^{(0)} + \langle T^{(1)}, \vec{x}^* \otimes \vec{x} \rangle + \dots + \langle T^{(d)} , (\vec{x}^*)^{\otimes d} \otimes \vec{x}^{\otimes d} \rangle
\]
\end{definition}

\begin{definition}[Domain with subspace and flatness constraints] \label{def:opt-domain}
For a matrix $A \in \C^{r \times n}$ and vector $\vec{v} \in \R^r$, parameters $\nu, \mu$, and tolerance $\gamma$,  we define the set $\mathcal{D}_{\vec{v}, A, \nu, \mu}^{\gamma} = \{\vec{x} \in \C^n : |\norm{\vec{x}}_2 - \nu| \leq \gamma, \norm{A \vec{x} - \vec{v}}_{2} \leq \gamma , \norm{\vec{x}}_{\infty} \leq \mu + \gamma \}$.
\end{definition}

The main theorem that we will prove is stated below.
\begin{theorem}[Polynomial optimization over a subspace with flat vectors]\label{thm:opt-main}
Let $T^{(k)} \in (\C^n)^{\otimes 2k}$ be tensors for all $k = 0, 1, \dots, d$, and assume $\sum_{k=0}^d \fnorm{T_k} \leq 1$. Let $A \in \C^{r \times n}$ be a matrix with $\norm{A}_{\op} \leq 1$ and let $\vec{v} \in \C^r$ be a specified vector.  Given positive $\nu, \mu, \eps$ such that $\nu \leq 1$ and $\eps \leq 1$, let $\gamma = \poly(1/n, 1/d,  \eps)$. Then, there is an algorithm that runs in time $(1/\gamma)^{\poly(d,r, 1/\eps, 1/\mu)}$ that outputs either an $x \in \mathcal{D}_{\vec{v}, A, \nu, \mu}^{2\gamma}$ (or $\bot$ if $\mathcal{D}_{\vec{v}, A, \nu, \mu}^{2\gamma}$ is empty). The output satisfies
\[
    \abs{f_{T^{(0)}, \dots , T^{(d)}}(\vec{x})} \geq \max_{\vec{y} \in \calD_{\vec{v}, A, \nu, \mu}^\gamma} \abs{f_{T^{(0)}, \dots , T^{(d)}}(\vec{y})} - \eps \,.
\]
\end{theorem}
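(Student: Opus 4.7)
The plan is to reduce the optimization to brute-force search over an $\eps$-net in a constant-dimensional subspace, following the outline in the technical overview. First I would show that for each tensor $T^{(k)}$ and each of its $2k$ modes, unfolding $T^{(k)}$ along that mode and taking the SVD yields a projector $\Pi^{(k,j)}_{>\eps}$ onto at most $O(1/\eps^2)$ top singular directions such that replacing $\vec{x}$ by $\Pi^{(k,j)}_{>\eps}\vec{x}$ in that slot perturbs $\langle T^{(k)}, (\vec{x}^*)^{\otimes k}\otimes \vec{x}^{\otimes k}\rangle$ by at most $\eps \|T^{(k)}\|_F \|\vec{x}\|_2^{2k-1}$. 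Taking the union over all $k \leq d$ and all $2k$ modes, let $U$ be a subspace of dimension $D = O(d^2/\eps^2)$ whose orthogonal projector $\Pi_U$ satisfies
\[
    \Abs{f_{T^{(0)},\dots,T^{(d)}}(\vec{x}) - f_{T^{(0)},\dots,T^{(d)}}(\Pi_U\vec{x})} \leq O(d\eps)
\]
for every $\vec{x}$ with $\|\vec{x}\|_2 \le 2$. (Rescaling $\eps$ by an $O(d)$ factor only affects the polynomial dependencies hidden in $\gamma$.)

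Second, I would enlarge $U$ to contain the row span of $A$, which adds at most $r$ extra dimensions; call the resulting subspace $U'$ of dimension $D' \le D + r$. Any $\vec{x}$ with $\|A\vec{x} - \vec{v}\|_2 \le \gamma$ has $\|A\Pi_{U'} \vec{x} - \vec{v}\|_2 \le \gamma$ as well (since $A$ acts trivially on $U'^\perp$), so this incorporation does not affect feasibility. Third, to handle the $\ell_\infty$ constraint, I observe that any $\vec{x} \in \calD^\gamma_{\vec{v}, A, \nu, \mu}$ has at most $k^* = \lceil (\nu+\gamma)^2/\mu^2\rceil$ coordinates of magnitude larger than $\mu$; I brute-force over the set $S \subseteq [n]$ of these ``large'' coordinates (at most $\binom{n}{k^*} \le n^{O(1/\mu^2)}$ choices) and enlarge $U'$ further to contain the corresponding $|S|$ standard basis vectors, obtaining a subspace $V_S$ of dimension at most $D + r + k^*$.

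Now the problem reduces to searching, for each $S$, for $\vec{x} \in V_S$ satisfying $|\|\vec{x}\|_2 - \nu| \le \gamma$, $\|A\vec{x} - \vec{v}\|_2 \le \gamma$, $|x_i| \le \mu$ for $i \notin S$, that approximately maximizes $|f_{T^{(0)},\dots,T^{(d)}}(\vec{x})|$. Since $V_S$ has constant dimension, I would construct an $\eps'$-net in $V_S$ of size $(1/\eps')^{O(D+r+k^*)}$ for a suitably small $\eps' = \poly(\gamma)$, check the constraints for each candidate, and return the feasible point with the largest objective; Lipschitz continuity of $f$ in $\vec{x}$ (using $\|\vec{x}\|_2 \le \nu + \gamma \le 2$ and $\sum_k \|T^{(k)}\|_F \le 1$) ensures that the net resolution $\eps'$ translates to $O(\eps)$ error in the objective, while the $\gamma$-slack in the output domain $\calD^{2\gamma}$ absorbs the slack between net points and true feasible points. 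If no candidate in the net meets the (slightly loosened) feasibility, output $\bot$. The main obstacle is a careful bookkeeping of error: the low-rank approximation, the $\ell_\infty$-rounding, and the net discretization each contribute slack to the constraints and the objective, and one must set $\gamma$ small enough (specifically $\gamma = \poly(1/n, 1/d, \eps)$) so that the cumulative slack lies within the $2\gamma$ domain relaxation and the additive-$\eps$ objective guarantee.
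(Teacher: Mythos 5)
Your overall architecture matches the paper's: approximate each $T^{(k)}$ by its low-rank flattenings (your step 1 is the paper's \cref{lem:const-dim-subspace}), enlarge the resulting subspace to include the row span of $A$, brute-force over subsets $S$ of potentially large coordinates, and then net the low-dimensional candidate subspace. However, there is a genuine gap in your feasibility argument, and it is precisely the place where the paper needs a nontrivial structural lemma.

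The issue is in justifying that, for the right $S$, the net over $V_S$ contains a point that is both \emph{feasible} (in $\mathcal{D}^{2\gamma}$) and \emph{near-optimal} in objective. You argue that $f$ only depends on $\Pi_U\vec{x}$ up to $O(d\eps)$, and that the $\gamma$-slack ``absorbs the slack between net points and true feasible points.'' But the optimal $\vec{y}^* \in \mathcal{D}^\gamma$ is generically far (distance $\Theta(1)$, not $O(\gamma)$) from any of the candidate subspaces $V_S$: $V_S$ is low-dimensional and $\vec{y}^*$ has no reason to lie near it. Hence $\Pi_{V_S}\vec{y}^*$ can badly violate the norm constraint $|\Norm{\vec{x}}_2 - \nu| \le \gamma$ (projection can shrink the norm by a constant), and can also violate the $\ell^\infty$ constraint on $\overline{S}$-coordinates, because $U$ is not coordinate-aligned and projection onto $V_S$ can create new large entries outside $S$. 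So the ``absorb the slack'' step does not go through: it is not a matter of tuning $\gamma$ smaller, since the distance from $\vec{y}^*$ to $V_S$ does not scale with $\gamma$.

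What the paper does at this point is prove the key \emph{structure-of-the-optimizer} lemma (\cref{lem:check-feasibility}): whenever $\mathcal{D}^\gamma_{\vec{v},A,\nu,\mu}$ is nonempty, it contains a point that is \emph{exactly} a sum of a vector in the row space of $A$ (augmented with a basis for the effective subspace $W$, by stacking $W$ onto $A$) and a $(1/\mu^2{+}1)$-sparse vector. This is a statement about the geometry of the constraint set, not a counting argument, and its proof (via a lexicographic-maximality perturbation) is the crux of the theorem. With it, one gets a genuinely feasible point $\vec{y}'$ living in some $V_S$ whose $W$-projection is within $\gamma$ of $\Pi_W\vec{y}^*$, and then the net and Lipschitz arguments finish as you describe. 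Without such a lemma, your proposal establishes that the algorithm never returns an infeasible point, but does not establish that any near-optimal candidate survives the feasibility filter — which is exactly the completeness half of the guarantee you need. The rest of your proposal (including the extra observation that stacking rows of $A$ into the subspace preserves the $A$-constraint exactly) is correct and aligned with the paper.
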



Now we describe the algorithm for solving the polynomial optimization problem.

\begin{longfbox}[breakable=false, padding=1em, margin-top=1em, margin-bottom=1em]
\begin{algorithm}[Polynomial optimization under product state cover constraints]
\label{alg:poly-opt}
\mbox{}
    \begin{description}
    \item[Input:] Tensors $T^{(0)} \in \C, \dots, T^{(d)} \in (\C^n)^{\otimes 2d}$, matrix $A \in \C^{r \times n}$, accuracy parameter $0<\eps<1$, bound $0 \leq \nu \leq 1$.
    \item[Output:] A vector $\vec{x}$ such that  
    \[
    \abs{f_{T^{(0)}, \dots , T^{(d)}}(\vec{x})} \geq \max_{\vec{y} \in \calD_{\vec{v}, A, \nu, \mu}^\gamma} \abs{f_{T^{(0)}, \dots , T^{(d)}}(\vec{y})} - \eps 
    \]
    where $\mathcal{D}_{\vec{v}, A, \nu, \mu}^{\gamma} $ is defined in Definition~\ref{def:opt-domain} and $\gamma = \poly(1/n, 1/d, \eps )$.  
    \item[Operation:]\mbox{}
    \begin{algorithmic}[1]
        \For{ $j \in [d]$ }
        \For{$i \in [2j]$}
        \State Let $M_{j,i}$ be the $n \times n^{2j-1}$ flattening of $T^{(j)}$ along the $i$th mode. 
        \State Compute the SVD of $M_{j,i}$ and let $W_{j,i}$ be the subspace corresponding to singular values that are at least $\epsilon/(d+1)^2$.  
        \EndFor
        \EndFor
        \State Let $W$ be the subspace corresponding to the combined span of $\{ W_{j,i},  W_{j,i}^*\}_{j \in [d], i\in [2j]}$  \label{step:compute-subspace}
        \State Let $W'$ be the combined span of $W$ and the rows of $A$
        \State Let $k = O(1/\mu^2)$. 
        \For{ $S \in [\binom{n}{k}]$ }
        \State Let $Z_S$ be the subspace corresponding to the $k$ coordinate vectors indexed by $S$. 
        \State Construct the net $\calN_{W', Z_S, \gamma}$ by taking a $\gamma$-net of the ball of radius $1 + \gamma$  in the subspace spanned by $W'$ and $Z_S$ and removing all elements that are not in $\mathcal{D}_{\vec{v}, A, \nu, \mu}^{2\gamma}$ \label{line:nets}
        \EndFor 
    
        \State Construct a larger net by  concatenating all the nets above, i.e. let $\calN_{\gamma} = \cup_{S \in [\binom{n}{k}]} \calN_{W',Z_S, \gamma}$. 
        \State Output $$\vec{x} = \max_{\vec{y} \in  \calN_{\gamma}} \hspace{0.05in} |f_{T^{(0)}, \dots , T^{(d)}} (\vec{y}) | $$ which can be computed by iterating over each $\vec{y} \in  \calN_{\gamma}$. 
    \end{algorithmic}
    \end{description}
\end{algorithm}
\end{longfbox}

We begin by showing that the function $f_{T^{(0)}, \dots , T^{(d)}} (\vec{y}) $ essentially only depends on the projection of $\vec{y}$ onto some constant-dimensional subspace $W$, up to additive error $\eps$.

\begin{lemma}[Effective dimension]\label{lem:const-dim-subspace}
Let $T^{(0)} \in \C, T^{(1)} \in (\C^n)^{\otimes 2}, \dots , T^{(d)} \in (\C^n)^{\otimes 2d} $ be tensors and assume $\norm{T^{(0)}}_F,  \dots , \norm{T^{(d)}}_F \leq 1$.  For any parameter $\eps > 0$, the subspace $W$ computed in line~\ref{step:compute-subspace} of Algorithm~\ref{alg:poly-opt} has dimension at most $8(d+1)^6/\eps^2$ and satisfies that for any $\vec{y} \in \C^n$ with $\norm{\vec{y}}_2 \leq 1$,
\[
|f_{T^{(0)}, \dots , T^{(d)}}(\vec{y}) - f_{T^{(0)}, \dots , T^{(d)}}(\Pi_W \vec{y})| \leq \eps \,, 
\]
where $\Pi_W$ is the orthogonal projection matrix for the subspace $W$. 
\end{lemma}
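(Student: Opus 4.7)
The proof splits naturally into a dimension bound and an approximation bound. For the dimension bound, I would note that each unfolding $M_{j,i}$ satisfies $\fnorm{M_{j,i}} = \fnorm{T^{(j)}} \le 1$, so the count of singular values exceeding $\eps/(d+1)^2$ is at most $(d+1)^4/\eps^2$, and hence $\dim(W_{j,i}) \le (d+1)^4/\eps^2$. Since $W$ is the span of $\{W_{j,i}, W_{j,i}^*\}$ for $j \in [d]$ and $i \in [2j]$, a total of $2\sum_{j=1}^d 2j = 2d(d+1)$ subspaces, I get $\dim(W) \le 2d(d+1)(d+1)^4/\eps^2 \le 8(d+1)^6/\eps^2$.

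For the approximation bound, the plan is a hybrid argument that replaces one tensor factor at a time. The main subtlety is the conjugation: the monomial $(\vec{y}^*)^{\otimes j} \otimes \vec{y}^{\otimes j}$ mixes $\vec{y}$ and $\vec{y}^*$, and a priori $\Pi_W \vec{y}^*$ need not equal $(\Pi_W \vec{y})^*$. This is exactly why the algorithm defines $W$ by explicitly throwing in the conjugate subspaces $W_{j,i}^*$: the resulting $W$ is closed under entrywise complex conjugation, so $W$ admits a real orthonormal basis and $\Pi_W$ is real symmetric. In particular $\Pi_W \vec{y}^* = (\Pi_W \vec{y})^*$ and $\Pi_W^T = \Pi_W$, which lets a single swap on $\vec{y}$ simultaneously realize the corresponding swap on $\vec{y}^*$ without bookkeeping.

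Given this, I would carry out the hybrid by unfolding the degree-$2j$ term along each mode in turn. Flattening $T^{(j)}$ along mode $i$ writes the contraction as $\vec{y}^T M_{j,i} \vec{w}$ (or $\vec{y}^\dagger M_{j,i} \vec{w}$ when mode $i$ is a conjugated factor), where $\vec{w}$ is the tensorized product of the remaining factors. Replacing the $i$-th factor with its $\Pi_W$-projection changes this by $\vec{y}^{(T\,\text{or}\,\dagger)} (I - \Pi_W) M_{j,i} \vec{w}$. Because $W \supseteq W_{j,i}$, I can factor $(I - \Pi_W) M_{j,i} = (I - \Pi_W)(I - \Pi^{(j,i)}_{>\eps}) M_{j,i}$, which has operator norm at most $\eps/(d+1)^2$ by the SVD truncation guarantee. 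Using $\twonorm{\vec{y}} \le 1$ and $\twonorm{\vec{w}} \le 1$ (preserved throughout the hybrid, since orthogonal projections are contractions), each swap contributes at most $\eps/(d+1)^2$ in absolute value.

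Finally, the degree-$j$ term involves $2j$ swaps, and the $T^{(0)}$ term is unchanged, so the total additive error is at most $\sum_{j=1}^d 2j \cdot \eps/(d+1)^2 = d(d+1)\eps/(d+1)^2 = d\eps/(d+1) \le \eps$, as required. The hard part is really just the conjugation bookkeeping at the start; once $W = W^*$ forces $\Pi_W$ to be real symmetric, the rest is a clean hybrid built on the elementary SVD truncation inequality $\opnorm{(I - \Pi^{(j,i)}_{>\eps}) M_{j,i}} \le \eps/(d+1)^2$.
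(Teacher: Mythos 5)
Your proof is correct and follows essentially the same route as the paper: a mode-by-mode hybrid that swaps each factor of $\vec y$ (or $\vec y^*$) for its $\Pi_W$-projection, bounding each swap by the SVD-truncation inequality $\opnorm{(I-\Pi_W)M_{j,i}}\le \eps/(d+1)^2$ and summing $\sum_{j=1}^d 2j\cdot\eps/(d+1)^2\le\eps$. The one place you phrase things a bit more cleanly than the paper is the conjugation bookkeeping: the paper keeps $\Pi_W^*$ and $\Pi_W$ as a priori distinct projections and invokes $W^*\supseteq W_{j,i}$, whereas you observe once and for all that $W=W^*$ forces $\Pi_W=\Pi_W^*=\Pi_W^T$ to be real symmetric, so $\Pi_W\vec y^*=(\Pi_W\vec y)^*$ and a single projection handles both kinds of modes; these are the same fact viewed from two angles. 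Your dimension count ($(d+1)^4/\eps^2$ large singular values per unfolding, $2d(d+1)$ unfoldings including conjugates, total $2d(d+1)^5/\eps^2\le 8(d+1)^6/\eps^2$) matches the algorithm's stated threshold and is in fact slightly tighter than the constant in the paper's write-up, which uses the threshold $\eps/(2(d+1)^2)$ in the proof despite the algorithm stating $\eps/(d+1)^2$; this is a harmless inconsistency in the paper's constants that you did not inherit.
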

\begin{proof}
Consider $j \in [d]$ and a fixed tensor, $T^{(j)} \in (\C^n)^{\otimes 2j}$ such that $\norm{T^{(j)}}_F\leq 1$. Let $M_{j,i} \in \C^{ n \times n^{d-1}}$ be the flattening of $T^{(j)}$ into a $n \times n^{d-1}$ matrix along the $i$th mode and let $\sigma = (\sigma_{1,i},  \ldots, \sigma_{n,i})$ be the vector of singular values of $M_{j,i}$.  Since $\norm{ M_{j,i}}^2_F \leq 1$, it follows that $\norm{\sigma}^2_2 \leq 1$ and therefore there are at most $4(d+1)^4/\eps^2$ singular values that are at least $\eps/(2(d+1)^2)$. Now $W_{j,i}$ is the subspace corresponding to the  large singular values.  We can bound 
\[
\begin{split}
&\langle T^{(j)}, (\vec{y}^*)^{\otimes j} \otimes \vec{y}^{\otimes j} \rangle  - \langle T^{(j)} , (\Pi_{W}^*\vec{y}^*)^{\otimes j} \otimes \vec{y}^{\otimes j} \rangle \\ &= \sum_{i = 1}^j \left( \langle T^{(j)}, (\vec{y}^*)^{\otimes j - i + 1} \otimes (\Pi_{W}^*\vec{y}^*)^{\otimes i-1}  \otimes \vec{y}^{\otimes j} \rangle - \langle T^{(j)}, (\vec{y}^*)^{\otimes j - i } \otimes (\Pi_{W}^*\vec{y}^*)^{\otimes i}  \otimes \vec{y}^{\otimes j} \rangle \right)   \\ &= \sum_{i = 1}^j (\vec{y}^* - \Pi_{W}^*\vec{y}^*)^\top M_{j,i} \textsf{vec}\left((\vec{y}^*)^{\otimes j - i} \otimes (\Pi_{W}^*\vec{y}^*)^{\otimes i-1}  \otimes \vec{y}^{\otimes j} \right)  \\ 
& \leq\frac{\eps}{2(d+1)} \,,
\end{split}
\]
where the last inequality follows from observing that $(\vec{y}^* - \Pi_{W}^*\vec{y}^*)$ is orthogonal to the subspace $W^*$ and since $W^*$ contains  $W_{j,i}$, this vector is also orthogonal to $W_{j,i}$.  Similarly, we have
\[
\langle T^{(j)}, (\Pi_{W}^*\vec{y}^*)^{\otimes j} \otimes \vec{y}^{\otimes j} \rangle  - \langle T^{(j)} , (\Pi_{W}^*\vec{y}^*)^{\otimes j} \otimes (\Pi_{W}\vec{y})^{\otimes j}  \rangle \leq \frac{\eps}{2(d+1)} \,.
\]
Now we can repeat the above argument for all of the tensors $T^{(1)}, \dots , T^{(d)}$ and use triangle inequality to get the desired bound.  Since $W$ is the union of the spans of $W_{j,i}, W_{j,i}^*$, its dimension is at most $8(d+1)^6/\eps^2$, as desired.
\end{proof}

Next, we show a structural statement that for sets of the form $\calD_{\vec{v}, A, \nu, \mu}^\gamma$, if they are nonempty, then they contain a feasible point that is a linear combination of the rows of $A$ and a sparse vector.

\begin{lemma}[Structure of the optimizer]\label{lem:check-feasibility}
Let $A \in \C^{r \times n}$ be a matrix.  Also assume we are given parameters $0 < \nu \leq 1, \mu, \gamma > 0$ and $\vec{v} \in \C^r$.  If $\mathcal{D}_{\vec{v}, A, \nu, \mu}^{\gamma}$ is nonempty, then there exists some $\vec{x} \in \mathcal{D}_{\vec{v}, A, \nu, \mu}^{\gamma}$ that can be written as a sum of two vectors $\vec{u} + \vec{s}$ where $\vec{u}$ is in the row subspace of $A$ and  $\vec{s}$ is $1/\mu^2 + 1$-sparse.   
\end{lemma}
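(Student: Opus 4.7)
The plan is to start from any $\vec{x}_0 \in \calD_{\vec{v}, A, \nu, \mu}^{\gamma}$ guaranteed by the hypothesis and build the desired decomposition directly from it. The guiding intuition is that the $\ell_\infty$ and $\ell_2$ bounds together force $\vec{x}_0$ to have at most $\approx 1/\mu^2$ coordinates of size comparable to $\mu$; those ``large'' coordinates should land in $\vec{s}$, while the remaining ``small'' coordinates should be absorbed into $\vec{u} \in \mathrm{rowsp}(A)$ without violating any of the three constraints defining $\calD$.

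First I would decompose $\vec{x}_0 = \vec{u}_0 + \vec{w}_0$ orthogonally with $\vec{u}_0 = P_{\mathrm{rowsp}(A)} \vec{x}_0$ and $\vec{w}_0 = P_{\ker(A)} \vec{x}_0$. The useful observation is that for any $\vec{w} \in \ker(A)$, the vector $\vec{u}_0 + \vec{w}$ automatically satisfies the $A$-constraint $\norm{A(\vec{u}_0+\vec{w}) - \vec{v}}_2 = \norm{A\vec{x}_0 - \vec{v}}_2 \leq \gamma$, and by orthogonality $\norm{\vec{u}_0 + \vec{w}}_2^2 = \norm{\vec{u}_0}_2^2 + \norm{\vec{w}}_2^2$. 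In particular, taking $\norm{\vec{w}}_2 = \norm{\vec{w}_0}_2$ preserves the annulus constraint $|\norm{\vec{x}}_2 - \nu| \leq \gamma$. So the problem reduces to finding $\vec{w}^* \in \ker(A)$ with $\norm{\vec{w}^*}_2 = \norm{\vec{w}_0}_2$, with $\norm{\vec{u}_0 + \vec{w}^*}_\infty \leq \mu + \gamma$, and such that $\vec{x}^* := \vec{u}_0 + \vec{w}^*$ admits a sparse decomposition.

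For the sparsity count, I would use that $\norm{\vec{x}^*}_2 \leq \nu + \gamma \leq 1 + \gamma$ (by the annulus condition and $\nu \leq 1$), so the set $S := \{i : |x^*_i| \geq \mu\}$ satisfies $|S|\mu^2 \leq \sum_{i \in S} |x^*_i|^2 \leq \norm{\vec{x}^*}_2^2 \leq (1+\gamma)^2$, which for $\gamma$ small enough gives $|S| \leq 1/\mu^2 + 1$. Then I would choose $\vec{u}^* \in \mathrm{rowsp}(A)$ that agrees with $\vec{x}^*$ on the complement $S^c$ (so $\vec{s}^* := \vec{x}^* - \vec{u}^*$ is supported in $S$), pick $\vec{s}^*$ accordingly, and conclude with $\vec{x}^* = \vec{u}^* + \vec{s}^*$.

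The main obstacle is exactly this last step: guaranteeing that $\vec{x}^*|_{S^c}$ extends to some element of $\mathrm{rowsp}(A)$, since generically $\pi_{S^c}(\mathrm{rowsp}(A))$ is a proper $r$-dimensional subspace of $\C^{S^c}$. My plan to handle this is to treat the choice of $\vec{w}^* \in \ker(A)$ and the restriction $\vec{x}^*|_{S^c}$ simultaneously, by running an $\ell_1$-style minimization of $\norm{\vec{x} - \vec{u}}_1$ over $\vec{x}$ in the convex relaxation of $\calD$ (dropping only the lower bound on $\norm{\vec{x}}_2$) and $\vec{u} \in \mathrm{rowsp}(A)$, and invoking a Carath\'eodory-type vertex argument: at an optimum, the combined number of active $\ell_\infty$ constraints and active sign constraints on $\vec{x} - \vec{u}$ is large enough to force support size of $\vec{x} - \vec{u}$ to be bounded by the number of coordinates that saturate $\norm{\vec{x}}_\infty \leq \mu + \gamma$, which is at most $\norm{\vec{x}}_2^2/(\mu+\gamma)^2 \leq 1/\mu^2 + 1$. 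Finally, I would check that the $\ell_2$ lower bound in $\calD$ can be restored by scaling $\vec{w}^*$ in $\ker(A)$ without disturbing the sparse support or the $\ell_\infty$ constraint, completing the construction.
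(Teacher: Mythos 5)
Your first two paragraphs and the sparsity count are fine, and you correctly identify the crux: a feasible $\vec{x}$ must be brought into the span of $\mathrm{rowsp}(A)$ together with roughly $1/\mu^2$ coordinate vectors, since matching $\vec{x}$ off a small set $S$ to a row-space vector is generically impossible. But the mechanism you propose for getting there has a genuine gap, and it is a different mechanism from the paper's. The paper does not relax anything or solve an auxiliary program: it takes $\vec{x}^{(0)}$ to be the lexicographically maximal (in coordinate magnitudes) element of the non-convex set $\calD$ itself, lets $S = \{j : |x^{(0)}_j| = \mu + \gamma\}$ and $V = \mathrm{span}(\mathrm{rowsp}(A), \{\vec e_j\}_{j\in S})$, and shows by contradiction that $\vec{x}^{(0)} \in V$ (or in $V$ plus one more $\vec e_{j_0}$). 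The contradiction is a perturbation of the form
\[
\vec{x}' = \Pi_{V}\vec{x}^{(0)} + \tfrac{\norm{(I-\Pi_V)\vec{x}^{(0)}}_2}{\sqrt{\norm{(I-\Pi_V)\vec{x}^{(0)}}_2^2 + |z|^2}}\bigl((I-\Pi_V)\vec{x}^{(0)} + z\vec{\Delta}\bigr)
\]
with $\vec{\Delta}\perp V$, $\vec{\Delta}\perp\vec{x}^{(0)}$, $\Delta_{j_0}\neq 0$, which leaves $\norm{\vec{x}}_2$, $A\vec{x}$, and all coordinates in $T\supseteq S$ exactly fixed while increasing $|x_{j_0}|$ for small $z$. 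Because the annulus and $A$-ball constraints are preserved identically, the argument works directly inside $\calD$ and never needs a convex relaxation.

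Your $\ell_1$-plus-Carath\'eodory route, by contrast, requires dropping the lower bound $\norm{\vec{x}}_2 \geq \nu - \gamma$ to make the feasible region convex, and that is where it breaks. After obtaining the $\ell_1$-optimal $\hat{\vec{x}} = \hat{\vec{u}} + \hat{\vec{s}}$ on the relaxed region, you propose to ``restore the $\ell_2$ lower bound by scaling $\vec{w}^*$ in $\ker(A)$ without disturbing the sparse support''; but any kernel vector you add to $\hat{\vec{x}}$ gets absorbed into the residual $\hat{\vec{x}} - \hat{\vec{u}}$, and a generic kernel vector has dense support, so the sparsity is lost. Adding a row-space vector instead preserves the decomposition but perturbs $A\vec{x}$ and hence the $A$-constraint, and simply rescaling $\hat{\vec{x}}$ perturbs both $A\vec{x}$ and $\norm{\vec{x}}_\infty$. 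There is no obvious direction that fixes the norm while preserving all three of the $A$-constraint, the $\ell_\infty$ constraint, and the sparse decomposition. A secondary gap is that the active-constraint count ``support of $\vec{x}-\vec{u}$ is at most the number of coordinates saturating $\norm{\vec{x}}_\infty$'' is asserted rather than derived: over $\C$, the $\ell_1$ objective and the per-coordinate bound $|x_i| \leq \mu+\gamma$ are disk-type (SOCP) conditions, not halfspaces, so the LP vertex/Carath\'eodory counting you invoke does not apply as stated, and the count would have to be redone carefully for the second-order-cone geometry (and would still be stuck behind the non-convexity issue above). I would encourage you to look instead for an argument that, like the paper's, perturbs inside $\calD$ itself and exploits that moving in a kernel direction orthogonal to $\vec{x}^{(0)}$ can change coordinates off $T$ without disturbing $\norm{\vec{x}}_2$, $A\vec{x}$, or the saturated coordinates.
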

\begin{proof}
We may assume $r + 1/\mu^2 < n$ as otherwise the statement is trivially true.  

Consider the solution $\vec{x}^{(0)} \in \mathcal{D}_{\vec{v}, A, \nu, \mu}^{\gamma}$  that is lexicographically maximal in coordinate magnitude, i.e. it  lexicographically maximizes  the sequence $|x^{(0)}_1| , \dots , |x^{(0)}_n|$.  Note that the set $\mathcal{D}_{\vec{v}, A, \nu, \mu}^{\gamma}$ is closed so this $\vec{x}^{(0)}$ is well-defined.    

Let $S \subseteq [n]$ be the set of coordinates $j$ such that $|x^{(0)}_j| = \mu + \gamma$.  Let $V$ be the subspace spanned by the rows of $A$ and $\{ \vec{e}_j \}_{j \in S}$ (where $\vec{e}_j$ are the standard basis vectors).

Let $T \subseteq [n]$ be the set of coordinates $j'$ such that $\vec{e}_{j'}$ is in $V$.  If $|T| = n$, then we are trivially done since $|S| \leq 1/\mu^2$ and the rows of $A$ and $\{ \vec{e}_j \}_{j \in S}$ would span all of $\C^n$.  Now assume that $|T| < n$ so there is some index that is not in $T$.  Let $j_0$ be the lexicographically minimum index such that $j_0 \notin T$. We attempt to construct a vector $\vec{\Delta} \in \C^n$ that is orthogonal to $V$ and $\vec{x}^{(0)}$ and has nonzero coordinate on $j_0$.  If such a vector does not exist, then this implies that $\vec{e}_{j_0}$ is in the span of $V$ and $\vec{x}^{(0)}$.  Since by assumption, $\vec{e}_{j_0}$ is not in the span $V$, this also implies that $\vec{x}^{(0)}$ is in the span of $V$ and $\vec{e}_{j_0}$ which then immediately implies the desired statement.  Now it remains to consider the case when such a vector $\vec{\Delta}$ exists.  Now consider replacing $\vec{x}^{(0)}$ with the vector
\[
\vec{x}' = \Pi_{V} \vec{x}^{(0)} +   \frac{\norm{ \Paren{ I - \Pi_{{V} } }\vec{x}^{(0)} }_2} {\sqrt{\norm{\Paren{I - \Pi_{{V}}}\vec{x}^{(0)}}_2^2  + |z|^2}} \cdot ( \Paren{I -  \Pi_{{V}} } \vec{x}^{(0)} + z \vec{\Delta})
\]
for a complex number $z$.  Note that $\norm{\vec{x}'}_2 = \norm{\vec{x}^{(0)}}_2$ since $\langle \vec{\Delta}, \Paren{ I - \Pi_{{V}}} \vec{x}^{(0)} \rangle  = \langle \Delta, \vec{x}^{(0)} \rangle = 0$ so 
\[
\norm{\Paren{I - \Pi_{{V}}} \vec{x}^{(0)} + z \vec{\Delta} }_2 = \sqrt{\norm{ \Paren{I -  \Pi_{{V} }} \vec{x}^{(0)}}_2^2  + |z|^2} \,.
\]
Also, the projection onto the subspace $V$ is unchanged so $A\vec{x}' = A\vec{x}^{(0)}$ and $\vec{x}'$ and $\vec{x}^{(0)}$ match on all coordinates in $S$.  Thus, there is some positive $\delta$ such that the above vector is in $\mathcal{D}_{\vec{v}, A, \nu, \mu}^{\gamma}$ for all complex numbers $z$ with $|z| \leq \delta$.  If $\norm{ \Paren{ I - \Pi_{{V} }} \vec{x}^{(0)}}_2 > 0$, then there would be some choice of $z$ that increases the magnitude of $x^{(0)}_{j_0}$ (without changing any of the coordinates with indices smaller than $j_0$) and this contradicts the maximality of $\vec{x}^{(0)}$.  Thus, we must actually have $\Pi_{{V} } \vec{x}^{(0)} = \vec{x}^{(0)}$ meaning that $\vec{x}^{(0)}$ is in the span of the rows of $A$ and $\{\vec{e}_j \}_{j \in S}$ which immediately gives the desired property.
\end{proof}

Combining the two lemmas above suffices to show that our optimization problem admits a small net. 

\begin{proof}[Proof of Theorem~\ref{thm:opt-main}]
Using \cref{lem:const-dim-subspace}, we can ensure that the net $\calN_{\gamma}$ described in \cref{alg:poly-opt} is of size $\Paren{n /\gamma}^{\poly(d, r, 1/\eps, 1/\mu)}$ and can be constructed efficiently via a greedy procedure. It remains to show that this net must contain a vector with $\vec{x}$ such that $|f_{T^{(0)}, \dots , T^{(d)}}(\vec{x})| \geq \max_{ \vec{y} \in \calD_{\vec{v}, A, \nu, \mu}^\gamma} |f_{T^{(0)}, \dots , T^{(d)}}(\vec{y} )| -\eps$.


Note that for $\vec{x}, \vec{x}' \in \C^n$ with $\norm{\vec{x}}_2 \leq 1 + 2\gamma, \norm{\vec{x} - \vec{x}'}_2 \leq 2\gamma$, we have 
\[
|f_{T^{(0)}, \dots , T^{(d)}}(\vec{x}) - f_{T^{(0)}, \dots , T^{(d)}}( \vec{x}')| \leq \eps \,.
\]

By Lemma~\ref{lem:const-dim-subspace}, it suffices to argue that when we project all points in $\calN_{\gamma}$ onto $W$, the points form a $2\gamma$-net of the projection of $\calD^{\gamma}_{\vec{v}, A, \nu , \mu}$ onto $W$.

Consider any point $\vec{y} \in \calD^{\gamma}_{\vec{v}, A, \nu , \mu}$.  Let $M$ be a matrix whose rows form an orthonormal basis for $W$ and let $B$ be the matrix obtained by stacking $A$ and $M$ and let $\vec{v}'$ be the vector obtained by stacking $\vec{v}$ and $M\vec{y}$.  Then by construction, $\vec{y}$ is an element of the set $\calD^{\gamma}_{\vec{v}', B, \nu , \mu}$ \---- in particular, $\calD^{\gamma}_{\vec{v}', B, \nu , \mu}$ is nonempty.  Lemma~\ref{lem:check-feasibility} then implies that there is some element of $\vec{y}' \in \calD^{\gamma}_{\vec{v}', B, \nu , \mu}$ that can be written as the sum of a vector in the combined span of $A$ and $W$ and a $O(1/\mu^2)$-sparse vector.  The construction of $\calN_{\gamma}$ then implies that there is some $\vec{y}'' \in \calN_{\gamma}$ such that $\norm{\vec{y}' - \vec{y}''}_2 \leq \gamma$ \---- note this is because  $\vec{y'} \in \calD^{\gamma}_{\vec{v}', B, \nu , \mu} \subseteq  \calD^{\gamma}_{\vec{v}, A, \nu , \mu}$ so the entire $\gamma$-radius ball around $\vec{y'}$ is contained in $\calD^{2\gamma}_{\vec{v}, A, \nu , \mu}$ and thus when constructing the net in line~\ref{line:nets} of Algorithm~\ref{alg:poly-opt}, we never remove any relevant points.  Then, 
\[
\norm{\Pi_W \vec{y}'' - \Pi_W \vec{y}}_2 \leq \norm{\Pi_W \vec{y}'' - \Pi_W \vec{y}'}_2 + \norm{\Pi_W \vec{y}' - \Pi_W \vec{y}}_2 \leq 2\gamma 
\]
and this shows that $\calN_{\gamma}$, when projected onto $W$, forms a $2\gamma$-net of the projection of $\calD^{\gamma}_{\vec{v}, A, \nu , \mu}$ onto $W$.  Thus $\calN_{\gamma}$ must contain an $\eps$-approximate maximizer and we are done.



\end{proof}

\section{Hardness}

\begin{definition}
The spectral norm of a tensor $T \in \C^{n \times n \times n \times n}$ is defined as follows.
\[
    \norm{T}_{\op} = \max_{\vec{x},\, \vec{y},\, \vec{u},\,\vec{v} \in \C^n} \frac{|\langle T, \vec{x} \otimes \vec{y} \otimes \vec{u} \otimes \vec{v} \rangle|}{\norm{\vec{x}}_2\norm{\vec{y}}_2\norm{\vec{u}}_2\norm{\vec{v}}_2} \,.
\]
\end{definition}
Note that we define spectral norm to be maximizing over vectors with complex entries.
\begin{theorem}\label{thm:tensor-pca-hardness}
    It is NP-hard to approximate the spectral norm of an $n \times n \times n \times n$ tensor $T$ to within additive error $\frac{\norm{T}_F}{100n^4}$.
\end{theorem}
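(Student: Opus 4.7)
The plan is a polynomial-time reduction from the NP-hardness of approximating the \emph{real} asymmetric spectral norm of a tensor (established by Friedland--Lim~\cite{fl17}) to our target: approximating the complex asymmetric spectral norm of an $n \times n \times n \times n$ tensor to additive error $\norm{T}_F/(100 n^4)$.

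First I would invoke~\cite{fl17} to obtain a hard instance---a real $m \times m \times m \times m$ tensor for which distinguishing $\norm{T}_{\op,\R} \geq A$ from $\norm{T}_{\op,\R} \leq A - \norm{T}_F/m^{c}$ is NP-hard, for some fixed constant $c$. (If their result is stated only for $3$-tensors, I would first pad to a $4$-tensor via $T_{ijkl} := T'_{ijk}\delta_{l,1}$, which preserves both $\norm{T}_{\op}$ and $\norm{T}_F$ exactly.) Then I would further pad with zeros to an $n \times n \times n \times n$ tensor for $n=\poly(m)$ large enough that $\norm{T}_F/m^c \geq \norm{T}_F/(100n^4)$. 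Zero-padding preserves both norms, so the hardness gap survives at the desired additive-error precision.

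The next step bridges from the real to the complex spectral norm via non-negativity. The key observation is that for any entrywise non-negative tensor $T \geq 0$,
\[
\abs{T(\vec{x}_1,\ldots,\vec{x}_d)} \leq \sum_I T_I \prod_k \abs{(\vec{x}_k)_{i_k}} = T(\abs{\vec{x}_1},\ldots,\abs{\vec{x}_d}),
\]
and the right-hand side is maximized over non-negative real unit vectors. Hence $\norm{T}_{\op,\C}=\norm{T}_{\op,\R}$ whenever $T \geq 0$, so approximating the complex norm to additive error $\norm{T}_F/(100 n^4)$ immediately solves the real-norm problem from the previous step, yielding the desired NP-hardness.

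The main obstacle is ensuring non-negativity of the instances produced by~\cite{fl17}. If their construction yields signed tensors, I would shift $T \mapsto T + \alpha \mathbf{1}^{\otimes 4}$ for $\alpha$ slightly larger than the most negative entry magnitude. The rank-one correction contributes $\alpha n^2$ to both $\norm{T}_{\op,\R}$ (attained at $\vec{x}_k = \mathbf{1}/\sqrt{n}$) and $\alpha n^2$ to $\norm{T}_F$, so I must verify that the YES/NO gap of $\norm{T}_F/\poly(n)$ survives relative to the new Frobenius norm---this is straightforward once $\alpha$ is kept comparable to the original entry scale. Alternatively, one could sidestep~\cite{fl17} entirely with a direct reduction from an explicitly non-negative combinatorial NP-hard problem (e.g., Max-Clique on a $4$-uniform hypergraph encoded via its $0/1$-valued adjacency tensor), where non-negativity is automatic and Motzkin--Straus-style analysis provides the gap at the required precision.
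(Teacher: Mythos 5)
Your fallback route at the very end---``a direct reduction from Max-Clique... encoded via its $0/1$-valued adjacency tensor, where non-negativity is automatic and Motzkin--Straus-style analysis provides the gap''---is precisely the paper's proof. The paper constructs $A_G = \sum_{(s,t)\in E} A^{(st)}$ from a graph $G$, invokes \cite[Theorem 8.4]{fl17} to get $\norm{A_G}_{\op} = \tfrac{\kappa(G)-1}{\kappa(G)}$, notes $\fnorm{A_G} \le n^2$, and observes that a $\tfrac{1}{100n^2}$-additive approximation to $\norm{A_G}_{\op}$ determines $\kappa(G)$ exactly. Since $A_G$ is entrywise non-negative, the real and complex spectral norms coincide, so no separate real-to-complex bridge is required. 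Your observation about non-negative tensors is the right one and is implicitly what makes this work.

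Your \emph{primary} route has two gaps that the paper's explicitness sidesteps. First, invoking \cite{fl17} as a black box for ``NP-hardness of real spectral norm approximation to additive error $\fnorm{T}/m^c$'' is not quite available: as the paper notes, the theorem statement in \cite{fl17} (Theorem 8.6) does not carry quantitative bounds, which is exactly why the paper re-derives the gap from the clique reduction directly. Second, your shift $T \mapsto T + \alpha\mathbf{1}^{\otimes 4}$ to enforce non-negativity is unsound as stated: the spectral norm is not additive under such a shift (the optimal vectors for $T$ and for $\mathbf{1}^{\otimes 4}$ need not align, and for signed $T$ there is no monotonicity to rescue you), so the claim that the shift ``contributes $\alpha n^2$ to $\norm{T}_{\op,\R}$'' does not follow, and the YES/NO gap could collapse. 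Starting from the already-non-negative clique tensor avoids this entirely; if you keep the fallback route as the main one, no repair is needed.
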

\begin{proof}
This is essentially shown in \cite[Theorem 8.6]{fl17}, but the theorem statement did not contain quantitative bounds.
We will thus re-prove it here.

For an undirected graph $G = (V, E)$ on $n$ vertices with at least one edge, define the tensor $A_G = \sum_{(s, t) \in E} A^{(st)}$ where $A^{(st)} \in \C^{n \times n \times n \times n}$ is the tensor where the $(i,j,k,l)$th entry is $1/2$ if and only if $i, j, k, l$ is some permutation of two $s$'s and two $t$'s:
\begin{equation*}
    A_{ijkl}^{(s,t)} = \begin{cases}
        1/2 & i = s,\, j = t,\, k = s,\, l = t \\
        1/2 & i = t,\, j = s,\, k = t,\, l = s \\
        1/2 & i = s,\, j = t,\, k = t,\, l = s \\
        1/2 & i = t,\, j = s,\, k = s,\, l = t \\
        0 & \text{otherwise.}
    \end{cases}
\end{equation*}
By \cite[Theorem 8.4]{fl17}, $\norm{A_G}_{\op} = \frac{\kappa(G) - 1}{\kappa(G)}$, where $\kappa(G) \in [n]$ is the clique number of $G$.
The clique number is NP-hard to compute~\cite{karp72}, and if we have an estimate $\nu$ such that $\abs{\nu - \norm{A_G}_{\op}}$ to $\frac{1}{100n^2}$ error, then $\abs{\frac{1}{1 - \nu} - \kappa(G)} = \abs{\frac{1}{1 - \nu} - \frac{1}{1 - \norm{A_G}_{\op}}} < \frac12$, so we can determine $\kappa(G)$ by computing $\frac{1}{1 - \nu}$ and rounding to the nearest integer.
Thus, it is NP-hard to compute $\norm{A_G}_{\op}$ to $\frac{1}{100n^2}$ error.
To conclude, observe that $\fnorm{A_G} \leq n^2$.
\end{proof}

The main theorem that we will prove in this section is the following.

\begin{theorem}\label{thm:hardness}
Given an algorithm for agnostically learning product states such that for any $n$ qubit state and target error $\eps$, the algorithm has sample complexity and running time $f(n,\eps)$ for some function $f$ and succeeds with probability $0.99$, we can give a quantum algorithm for approximating the spectral norm of a $m \times m \times m \times m$ tensor $T$ to additive error $\eps\norm{T}_F$ with running time $f( \poly(m/\eps), \poly(\eps))$ that succeeds with probability $0.9$.
\end{theorem}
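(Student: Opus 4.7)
The plan is a polynomial-time quantum reduction from tensor spectral-norm estimation to agnostic product-state learning, realizing the sketch in the Technical Overview. Given $T\in\C^{m\times m\times m\times m}$ (WLOG $\norm{T}_F=1$), I would embed $T$ in $\C^{M\times M\times M\times M}$ with $M=\poly(m/\eps)$ by zero-padding, then apply independent Haar-random unitaries $U_1,\dots,U_4\in U(M)$ to the four modes to obtain $T'$. Both $\norm{T'}_{\op}=\norm{T}_{\op}$ and $\norm{T'}_F=1$, and by standard concentration of measure on $U(M)$, with probability $\ge 0.99$ every entry of $T'$ satisfies $|T'_{ijkl}|\le\widetilde{O}(M^{-2})$ and the images $U_i\hat{w}^\star$ of any fixed set of optimal unit vectors $\hat{w}^\star$ for $T$ are $\ell^\infty$-flat with max entry $\widetilde{O}(M^{-1/2})$. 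I would then prepare copies of $\ket{\psi}\coloneqq \sum_{i,j,k,l} T'_{ijkl}\ket{e_i}\ket{e_j}\ket{e_k}\ket{e_l}$ on $4M$ qubits (a $\poly(M)$-size circuit suffices since $T'$ is explicitly known) and invoke the hypothesized learner on $\rho=\ketbra{\psi}{\psi}$ with accuracy $\eps'=\poly(\eps)$, obtaining an estimate of $\OPT\coloneqq \max_{\text{product }\ket{\pi}}|\braket{\pi|\psi}|^2$ to additive error $\eps'$.

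The core of the reduction is the correspondence $\OPT = (1\pm o(1))\,e^{-4}\,\norm{T}_{\op}^2 \pm o(\eps^2)$ on the flatness event. Writing a $4M$-qubit product state as $\ket{\pi_{\vec{x}}}\ket{\pi_{\vec{y}}}\ket{\pi_{\vec{u}}}\ket{\pi_{\vec{v}}}$ in the parametrization of \cref{def:product_state_parametrization}, and letting $\chi_w(i) \coloneqq w_i / \prod_j (1+|w_j|^2)^{1/2}$ denote the block-wise Hamming-weight-$1$ amplitude vectors (so $\norm{\chi_w}_2 \le 1$), only the $\chi_w$'s contribute to the overlap:
\[
|\braket{\pi|\psi}|^2 \;=\; \bigl|\langle T',\,\chi_x\otimes\chi_y\otimes\chi_u\otimes\chi_v\rangle\bigr|^2
\]
(up to harmless complex conjugation). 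For the lower bound I would take $\vec{w}=\hat{w}'^{\star}\coloneqq U_i\hat{w}^\star$ scaled to radius $R=1$; flatness forces $\prod_j(1+|\hat{w}'^{\star}_j|^2)\to e$, so $\chi_w \to e^{-1/2}\hat{w}'^{\star}$ and the fidelity is $\ge e^{-4}\norm{T}_{\op}^2 - o(1)$. Inverting yields $\norm{T}_{\op} = e^2\sqrt{\OPT}(1\pm o(1)) \pm O(\eps'/\sqrt{\OPT})$; outputting $0$ whenever the estimated $\OPT$ is below $\poly(\eps)$ handles the degenerate regime and gives additive error $\eps\norm{T}_F$ overall, with total success probability $\ge 0.99 \cdot 0.99 > 0.9$ by combining the flatness event with the learner's success.

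The main technical obstacle is the matching upper bound $\OPT \le (1+o(1))e^{-4}\norm{T}_{\op}^2 + o(\eps^2)$. My plan is a case split by the $\ell^\infty$ norm of each $\chi_w$ at threshold $\tau = \widetilde{\Theta}(M^{-1/2})$. In the all-flat case where every $\infnorm{\chi_w} \le \tau$, expanding $\log \prod_j(1+|w_j|^2) = \norm{\vec{w}}_2^2 - O(\tau^2\norm{\vec{w}}_2^4)$ using $\sum_j|w_j|^4 \le \tau^2 \norm{\vec{w}}_2^2$ yields $\norm{\chi_w}_2^2 \le \max_{R>0} R^2 e^{-R^2} + o(1) = e^{-1}+o(1)$, and the operator-norm bound $|\langle T',\chi_x\otimes\cdots\rangle|^2 \le \norm{T'}_{\op}^2 \prod_w \norm{\chi_w}_2^2$ finishes this case. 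In the concentrated case where some $\infnorm{\chi_w}>\tau$, I would split $\chi_w$ into a spiky part supported on its $O(1/\tau^2)$ largest coordinates plus a flat residual: the spiky part's contribution is dominated by operator norms of correspondingly-restricted slices of $T'$, which by the entrywise $\widetilde{O}(M^{-2})$ bound have Frobenius norm $\widetilde{O}(M^{-1/2}) = o(1)$, while the flat residual falls back into the first case. Making this decomposition quantitative under the nonlinear product-state constraint on $\chi_w$ is the delicate technical step.
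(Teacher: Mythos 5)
Your reduction has the same high-level skeleton as the paper's: randomize $T$ by embedding into a larger space and applying Haar rotations to flatten it, prepare $\ket{\psi_{T'}}$ supported on weight-4 basis states, invoke the learner, and invert the relation $\OPT \approx e^{-4}\norm{T}_{\op}^2$ (the paper states this as $\max|\braket{\sigma|\psi_{T'}}| \approx e^{-2}\norm{T}_{\op}$, which is the square root of your version). Whether you use four independent Haar unitaries or, as the paper does, a single Haar isometry $U\in\C^{n\times m}$ applied to all four modes is immaterial for flatness. Your lower-bound argument (truncate the optimal unit vectors to the small-entry coordinates, lose $o(1)$, gain the $e^{-2}$ from normalization via the identity $\max_R Re^{-R^2/2} = e^{-1/2}$) is essentially the paper's Lemma~\ref{lem:reduction-analysis}.

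The gap is in the concentrated case of your upper bound. Your plan is to split $\chi_w$ into a spiky part on the $O(1/\tau^2)$ largest coordinates and a flat residual, and bound the spiky contribution by the Frobenius norm of the corresponding restricted slices of $T'$, which you claim is $\widetilde{O}(M^{-1/2})=o(1)$. That last claim is off: a \emph{single} slice $T'_{i,\cdot,\cdot,\cdot}$ has Frobenius norm $\widetilde{O}(M^{-2})\cdot M^{3/2}=\widetilde{O}(M^{-1/2})$, but the restriction to $O(1/\tau^2)=\widetilde{O}(M)$ slices has Frobenius norm $\sqrt{\widetilde{O}(M)}\cdot\widetilde{O}(M^{-1/2})=\widetilde{O}(1)$, and the spiky part's $\ell^2$ norm is only bounded by $1$, so the resulting contribution to $\langle T',\chi_x^{\text{spiky}}\otimes\cdots\rangle$ is $\widetilde{O}(1)$, not $o(1)$. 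The nonlinear normalization you flag as ``delicate'' doesn't rescue this; the bound is simply too weak.

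The paper sidesteps this by changing what the case split is on. Instead of thresholding $\infnorm{\chi_w}$, it thresholds the total Bernoulli mean $\sum_a \frac{|x_a|^2}{1+|x_a|^2} + \cdots$ of the four parametrization vectors. If this sum is large (say $> n^{0.1}$), it observes that $\ket{\psi_{T'}}$ lives entirely in Hamming weight exactly $4$, so $|\braket{\psi_{T'}|\pi}| \le \norm{\Pi_{\le 4}\ket{\pi}}_2$, and Bennett's inequality (\cref{lem:low_weight}) makes this doubly exponentially small. If the sum is small, it truncates the $\vec{x},\vec{y},\vec{u},\vec{v}$ at threshold $n^{-0.1}$ (incurring $O(M/n^{0.4})$ error by Cauchy--Schwarz against the entrywise-small $T'$), and then the $\xi \le e^{-\frac12\sum\|\cdot\|_2^2+o(1)}$ estimate combined with $\max_\theta \theta e^{-\theta^2/2} = e^{-1/2}$ closes the case. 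This avoids any decomposition of the normalized amplitude vectors $\chi_w$, replacing it with a tail bound on the measurement distribution of the candidate product state. I'd recommend adopting that argument; your flat case is fine and matches, but the concentrated case needs the Hamming-weight concentration idea rather than the spiky/flat split.
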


\begin{definition}\label{def:reduction}
Assume we are given a tensor $T \in \C^{n \times n \times n \times n}$ with $\norm{T}_F = 1$.
Then we construct a quantum state on $4n$ qubits as follows.
\[
    \ket{\psi_{T}} = \sum_{i,j,k,l} T_{ijkl}' \ket{e_ie_je_ke_l},
\]
where for $i,j,k,l \in [n]$, $\ket{e_ie_je_ke_l} = \ket{e_i} \otimes \ket{e_j} \otimes \ket{e_k} \otimes \ket{e_l}$, where we recall that $\ket{e_i}$ is the product state which is $\ket{1}$ on the $i$th qubit and $\ket{0}$ on the other $n-1$ qubits.
\end{definition}

One can verify that this is a valid quantum state, since $\twonorm{\ket{\psi_{T}}} = \fnorm{T} = 1$.
We will consider the quantum state corresponding to $U^{\otimes 4} T$, where $U \in \C^{n \times m}$ is a Haar-random matrix with orthonormal columns and $T \in \C^{m\times m\times m\times m}$ for some $m \leq n$.
Here, we are randomly embedding the tensor into a larger, $n$-dimensional space.

\begin{claim}\label{claim:bounded-entries}
For $T \in \C^{m\times m\times m\times m}$ and a Haar-random $U \in \C^{n \times m}$, as long as $m \geq 10 \log n$, then with $0.99$ probability over the randomness of $U$, $\max_{i,j,k,l \in [n]} \abs{(U^{\otimes 4}T )_{ijkl}} \leq \frac{(10m)^2 }{n^2}$.
\end{claim}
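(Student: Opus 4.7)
The plan is to write each entry of $U^{\otimes 4} T$ as an inner product against a rank-one tensor, apply Cauchy--Schwarz, and then reduce the statement to a concentration bound on the squared row norms of~$U$. Writing $\vec{u}_i \in \C^m$ for the $i$th row of $U$, the definition of $U^{\otimes 4}$ gives
\[
    (U^{\otimes 4} T)_{ijkl} \;=\; \langle T,\, \vec{u}_i \otimes \vec{u}_j \otimes \vec{u}_k \otimes \vec{u}_l \rangle,
\]
so by Cauchy--Schwarz (viewing both sides as vectors in $\C^{m^4}$) together with $\fnorm{T} = 1$,
\[
    |(U^{\otimes 4} T)_{ijkl}| \;\leq\; \fnorm{T} \cdot \twonorm{\vec{u}_i}\twonorm{\vec{u}_j}\twonorm{\vec{u}_k}\twonorm{\vec{u}_l}.
\]
Hence it suffices to show that, with probability at least $0.99$, every row satisfies $\twonorm{\vec{u}_i}^2 \leq 10m/n$: then the right-hand side above is at most $(10m/n)^2 = (10m)^2/n^2$, which is the desired bound.

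To establish the row-norm bound, I will use the standard realization of a Haar-random isometry via complex Gaussians. Embedding $U$ as the first $m$ columns of a Haar-random unitary in $U(n)$, each row $\vec{u}_i$ is distributed as the first $m$ coordinates of a Haar-random unit vector in $\C^n$, which in turn has the same law as $(g_1, \dots, g_m)/\sqrt{\sum_{j=1}^n |g_j|^2}$ for i.i.d.\ standard complex Gaussians $g_j$. Thus
\[
    \twonorm{\vec{u}_i}^2 \;\stackrel{d}{=}\; \frac{\sum_{j=1}^{m} |g_j|^2}{\sum_{j=1}^{n} |g_j|^2}.
\]
Since $|g_j|^2$ is an exponential random variable with mean $1$, the numerator is a $\mathrm{Gamma}(m,1)$ variable, and standard Chernoff/Bernstein bounds give $\Pr[\sum_{j\leq m}|g_j|^2 > 5m] \leq e^{-cm}$ and $\Pr[\sum_{j\leq n}|g_j|^2 < n/2] \leq e^{-c'n}$ for absolute constants $c, c' > 0$. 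On the complement of both events, the ratio is at most $10m/n$, so $\Pr[\twonorm{\vec{u}_i}^2 > 10m/n] \leq e^{-\Omega(m)}$.

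Finally I would union-bound over the $n$ rows. Since $m \geq 10 \log n$, we get
\[
    \Pr\Bigl[\max_{i \in [n]} \twonorm{\vec{u}_i}^2 > 10m/n\Bigr] \;\leq\; n \cdot e^{-\Omega(m)} \;\leq\; n \cdot n^{-\Omega(10)} \;\leq\; 0.01,
\]
for an appropriate choice of constant inside the Chernoff bound (if needed, replace the threshold $10m/n$ by a slightly larger constant to ensure the numerical $0.01$ bound). The only real obstacle is getting the constants in the concentration estimate tight enough so that the factor of $10$ in the threshold survives the union bound; this is purely a quantitative tuning of the Bernstein/Chernoff tails for Gamma random variables, not a conceptual difficulty.
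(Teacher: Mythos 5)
Your proof is correct and takes essentially the same route as the paper: the paper asserts (without detail) that with probability $0.99$ every unit vector in $\mathrm{col}(U)$ has all entries of magnitude at most $\sqrt{10m/n}$ — which is exactly equivalent to your bound $\twonorm{\vec{u}_i}^2 \leq 10m/n$ on all row norms of $U$ — and then passes to the entries of $U^{\otimes 4}T$ by the same Cauchy--Schwarz reasoning you use. Your contribution is to actually justify the concentration step via the Gaussian realization of a Haar row and Chernoff bounds on Gamma variables, which the paper omits; the quantitative tuning you flag does work out under $m \geq 10\log n$ (with $\log$ the natural log, as the paper uses, and noting $n \geq m$ so the denominator's lower tail is also negligible).
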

\begin{proof}
Note that the columns of $U$ determine a random $m$-dimensional subspace of $\C^n$.  With $0.99$ probability over the randomness of $U$, all unit vectors in the subspace spanned by the columns $U$ have entries with magnitude at most $\sqrt{10m/n}$.  If this happens, then since $T'$ is in the subspace spanned by the columns of $U \otimes U \otimes U \otimes U$ and $\norm{T'}_F = 1$, we must have $\max_{i,j,k,l \in [n]} \abs{T_{ijkl}'}  \leq \frac{(10m)^2 }{n^2}$.
\end{proof}

\begin{lemma}\label{lem:reduction-analysis}
Given a tensor $T \in \C^{n \times n \times n \times n}$ with $\norm{T}_F = 1$ and $M = \max_{i,j,k,l} n^2\abs{T_{ijkl}}$.
Let $\OPT_{T}$ be the spectral norm of $T$ and let $\OPT_{\ket{\psi_{T}}} $ be defined as:
\[
    \OPT_{\ket{\psi_{T}}} = \max_{\sigma = \ket{\sigma_1} \otimes \dots \otimes \ket{\sigma_n}} |\braket{\sigma |\psi_{T}}| \,.
\]
Then, for sufficiently large $n$,
\[
    e^{-2} \OPT_{T} - \frac{10M}{n^{0.2}} \leq \OPT_{\ket{\psi_{T}}} \leq e^{-2}\OPT_{T} + \frac{10}{n^{0.1}} + \frac{10M}{n^{0.2}} \,.
\]
\end{lemma}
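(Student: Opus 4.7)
The idea is to express the product-state fidelity against $\ket{\psi_T}$ in closed form using the parametrization of \cref{def:product_state_parametrization} and then reduce the bounds to optimizing the multilinear form of $T$. Writing any $4n$-qubit product state (up to global phase) as $\ket{\sigma} = \ket{\pi_{\vec{x}}} \otimes \ket{\pi_{\vec{y}}} \otimes \ket{\pi_{\vec{u}}} \otimes \ket{\pi_{\vec{v}}}$ with $\vec{x},\vec{y},\vec{u},\vec{v} \in \C^n$, and using $\braket{\pi_{\vec{x}} | e_i} = x_i^* / \prod_r \sqrt{1+|x_r|^2}$ together with the definition of $\ket{\psi_T}$, a direct expansion yields
\[
|\braket{\sigma | \psi_T}| = \frac{\abs{T(\vec{x}^*, \vec{y}^*, \vec{u}^*, \vec{v}^*)}}{\prod_{r=1}^{n} \sqrt{(1+|x_r|^2)(1+|y_r|^2)(1+|u_r|^2)(1+|v_r|^2)}},
\]
where $T(\vec{a},\vec{b},\vec{c},\vec{d}) = \sum_{ijkl} T_{ijkl} a_i b_j c_k d_l$. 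For the lower bound, I would take complex unit vectors $\vec{a},\vec{b},\vec{c},\vec{d}$ with $|T(\vec{a},\vec{b},\vec{c},\vec{d})| = \OPT_T$ and set $\vec{x}=\vec{a}^*, \vec{y}=\vec{b}^*, \vec{u}=\vec{c}^*, \vec{v}=\vec{d}^*$. The numerator is then $\OPT_T$, and by $1+t \leq e^t$ each factor $\prod_r(1+|a_r|^2) \leq e^{\|\vec{a}\|_2^2} = e$, so the denominator is at most $e^2$ and $\OPT_{\ket{\psi_T}} \geq e^{-2}\OPT_T$. This actually beats the claimed lower bound (the $-10M/n^{0.2}$ correction is unnecessary).

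\paragraph{Upper bound.} The harder direction proceeds by a case analysis based on the $\ell^\infty$ norms of $\vec{x}, \vec{y}, \vec{u}, \vec{v}$ against a threshold $\mu \sim n^{-0.05}$. In the \emph{dense} case (all four have $\|\cdot\|_\infty \leq \mu$), I would use $\log(1+t) \geq t - t^2/2$ to obtain $\prod_r(1+|x_r|^2) \geq e^{\|\vec{x}\|_2^2(1-\mu^2)}$, so the denominator is at least $e^{S(1-\mu^2)/2}$ where $S = \|\vec{x}\|_2^2 + \|\vec{y}\|_2^2 + \|\vec{u}\|_2^2 + \|\vec{v}\|_2^2$. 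The numerator is at most $\OPT_T \|\vec{x}\|_2 \|\vec{y}\|_2 \|\vec{u}\|_2 \|\vec{v}\|_2 \leq \OPT_T (S/4)^2$ by AM-GM. Maximizing the resulting bound over $S$, with optimum at $S = 4/(1-\mu^2)$, gives $|\braket{\sigma | \psi_T}| \leq e^{-2}\OPT_T (1-\mu^2)^{-2} = e^{-2}\OPT_T + O(\mu^2) = e^{-2}\OPT_T + O(1/n^{0.1})$. In the \emph{sparse} case, some vector, say $\vec{x}$, has $|x_{i^*}| > \mu$. Here I invoke the flatness of $T$: each slice $T(\vec{e}_i,\cdot,\cdot,\cdot)$ has at most $n^3$ entries of magnitude at most $M/n^2$, so its Frobenius norm is at most $\sqrt{n^3(M/n^2)^2} = M/\sqrt{n}$, and hence $|T(\vec{e}_i, \vec{y}^*, \vec{u}^*, \vec{v}^*)| \leq (M/\sqrt{n})\|\vec{y}\|_2\|\vec{u}\|_2\|\vec{v}\|_2$. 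Decomposing $\vec{x} = \vec{x}_L + \vec{x}_S$ into large and small coordinates (and iterating on $\vec{y},\vec{u},\vec{v}$ as necessary), each cross-term involving a large part is bounded using this slice bound together with $|L_{\vec{x}}| \leq \|\vec{x}_L\|_2^2/\mu^2$, while the denominator contributes a factor of $\sqrt{1+|x_r|^2} \geq \sqrt{1+\mu^2}$ for each large coordinate. A careful accounting then bounds the total sparse contribution by $O(M/n^{0.2})$.

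\paragraph{Main obstacle.} The main technical challenge is the sparse case. When several of the four vectors simultaneously have large entries, the expansion of $T(\vec{x}^*,\vec{y}^*,\vec{u}^*,\vec{v}^*)$ according to large/small parts has up to $2^4 = 16$ pieces, each of which must be bounded using the appropriate slice Frobenius bound while dividing by the denominator factors picked up from the large coordinates. The delicate part is that in order to combine each piece coherently with the denominator, one needs to keep track of the $\sqrt{1+|x_r|^2}$ contributions for every large coordinate of every vector; the choice $\mu \sim n^{-0.05}$ is exactly what balances the dense-case error $O(\mu^2) = O(1/n^{0.1})$ against the sparse-case error $O(M/n^{0.2})$ (arising from terms of the form $M \cdot |L|/(\mu\sqrt{n})$ multiplied by bounded ratio factors).
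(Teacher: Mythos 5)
Your reduction to the tensor form $|\braket{\sigma|\psi_T}| = |T(\vec{x}^*,\vec{y}^*,\vec{u}^*,\vec{v}^*)| \big/ \prod_r\sqrt{(1+|x_r|^2)(1+|y_r|^2)(1+|u_r|^2)(1+|v_r|^2)}$ is correct and matches the paper's starting point. Your lower bound is also correct and in fact cleaner than the paper's: since $\prod_r(1+|a_r|^2) \leq e^{\twonorm{\vec{a}}^2} = e$ for any unit vector $\vec{a}$, plugging the tensor-form optimizer directly into the product state parametrization gives $\OPT_{\ket{\psi_T}} \geq e^{-2}\OPT_T$ --- the paper's zeroing-out of large entries for this direction, and the resulting $-10M/n^{0.2}$ slack, is unnecessary. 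For the upper bound, your dense case mirrors the paper's Case~1 optimization and is sound. The genuine divergence is in how the sparse regime is handled. You split by $\ell^\infty$-norm and propose to decompose $T(\vec{x}^*,\vec{y}^*,\vec{u}^*,\vec{v}^*)$ into up to sixteen large/small cross-terms, each controlled by a slice Frobenius bound; you rightly flag this as the obstacle, since each piece must be weighed against denominator factors $\prod_{r \in L}(1+|x_r|^2)^{1/2}$ with the worst case near $|L| \sim 1/\mu^2$. The paper circumvents this by splitting on the \emph{total mass} $\sum_a\frac{|x_a|^2}{1+|x_a|^2}+\cdots$ versus $n^{0.1}$. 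When the mass is small, the paper zeroes out (rather than decomposes) the large coordinates, bounds the single resulting perturbation of the numerator via flatness of $T$ in one step, and then the zeroed-out vectors have both small entries and bounded mass so the dense-case optimization applies directly. When the mass is large, flatness plays no role at all: because $\ket{\psi_T}$ is supported on Hamming weight $4$, one has $\braket{\psi_T|\pi} = \bra{\psi_T}\Pi_{\leq 4}\ket{\pi}$, and the Chernoff-type tail bound of \cref{lem:low_weight} makes $\twonorm{\Pi_{\leq 4}\ket{\pi}}$ exponentially small. Your route can be completed (e.g.\ using $\prod_{r\in L}(1+|x_r|^2) \geq \max\{1+\twonorm{\vec{x}_L}^2, (1+\mu^2)^{|L|}\}$ to control both the norm and cardinality of the large set), but the sixteen-piece bookkeeping that the paper's concentration argument replaces in one line is substantially more work.
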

\begin{proof}
First we prove the lower bound.
Let $\vec{x},\vec{y},\vec{u},\vec{v} \in \C^n$ be unit vectors such that $\abs{\langle T, \vec{x} \otimes \vec{y} \otimes \vec{u} \otimes \vec{v} \rangle} = \OPT_T$.
Let $\vec{x}', \vec{y}', \vec{u}'$, and $\vec{v}'$ be obtained by zeroing out all entries have magnitude larger than $1/n^{0.1}$.
\begin{align*}
    \abs{\OPT_T - \langle T, \vec{x}' \otimes \vec{y}' \otimes \vec{u}' \otimes \vec{v}' \rangle}
    &= \abs{\langle T, \vec{x} \otimes \vec{y} \otimes \vec{u} \otimes \vec{v} - \vec{x}' \otimes \vec{y}'
    \otimes \vec{u}' \otimes \vec{v}' \rangle} \\
    &\leq \frac{100M}{n^{0.4}} \,.
\end{align*}
The last inequality follows from observing that the right-hand side of the inner product has Frobenius norm bounded by 2 and is non-zero in at most $4n^{3.2}$ entries.
Thus, we can apply Cauchy--Schwarz on the non-zero entries, and the Frobenius norm on $T$ restricted to such entries is at most $\frac{M}{n^2} \cdot \sqrt{4n^{3.2}}$.
Then the product state $\ket{\pi_{\vec{x}', \vec{y}', \vec{u}', \vec{v}'}}$ satisfies, for $\xi = (\prod_{a=1}^n (1 + \abs{x_i'}^2)(1 + \abs{y_i'}^2)(1 + \abs{u_i'}^2)(1 + \abs{v_i'}^2))^{-1/2}$,
\begin{align*}
    \abs{\braket{\psi_{T} | \pi }}
    = \xi \abs[\Big]{\sum_{i,j,k,l \in [n]} (T_{ijkl})^* x_i'y_j'u_k'v_l' }
    = \xi \abs{\angles{T , \vec{x}' \otimes \vec{y}' \otimes \vec{u}' \otimes \vec{v}'}}
    \geq \xi\parens[\big]{\OPT_T - \frac{10M}{n^{0.4}}}.
\end{align*}
Finally, by \cref{lem:dtan-fidelity-approx}, $\xi \geq e^{-\frac12(\norm{\vec{x}'}_2^2 + \norm{\vec{y}'}_2^2 + \norm{\vec{u}'}_2^2 + \norm{\vec{v}'}_2^2)} \geq e^{-2}$.

Now we prove the upper bound.
Let $\vec{x}, \vec{y}, \vec{u}, \vec{v} \in \C^n$ be vectors attaining the optimum fidelity, $\abs{\braket{\psi_{T} | \pi_{\vec{x},\vec{y},\vec{u},\vec{v}}}} = \OPT_{\ket{\psi_{T}}}$.
Here, we interpret the $4n$-length product state as having 4 parameter vectors of length $n$.
We have
\[
    \braket{\psi_{T} | \pi_{\vec{x},\vec{y},\vec{u},\vec{v}}} = \xi \parens[\Big]{\sum_{i,j,k,l \in [n]} (T_{ijkl})^* x_i y_j u_k v_l} = \xi\angles{T, \vec{x} \otimes \vec{y} \otimes \vec{u} \otimes \vec{v}},
\]
where $\xi = (\prod_{a=1}^n (1 + \abs{x_a}^2)(1 + \abs{y_a}^2)(1 + \abs{u_a}^2)(1 + \abs{v_a}^2))^{-1/2}$.
Let $\vec{x}', \vec{y}', \vec{u}', \vec{v}'$ be obtained by zeroing out entries larger than $1/n^{0.1}$.
Then by the same argument as above,
\begin{align*}
    \abs{\angles{T, \vec{x} \otimes \vec{y} \otimes \vec{u} \otimes \vec{v}}
    - \angles{T, \vec{x}' \otimes \vec{y}' \otimes \vec{u}'
    \otimes \vec{v}'}}
    &= \abs{\angles{T, \vec{x} \otimes \vec{y} \otimes \vec{u} \otimes \vec{v} - \vec{x}' \otimes \vec{y}' \otimes \vec{u}' \otimes \vec{v}'}} \\
    &\leq \frac{10M}{n^{0.4}}\norm{\vec{x}}_2\norm{\vec{y}}_2\norm{\vec{u}}_2\norm{\vec{v}}_2.
\end{align*}
Using this, we have
\begin{align}
    \abs{\braket{\psi_{T} | \pi_{\vec{x},\vec{y},\vec{u},\vec{v}}}}
    &\leq \xi \parens[\Big]{
        \angles{T, \vec{x}' \otimes \vec{y}' \otimes \vec{u}' \otimes \vec{v}'} +
        \frac{10M}{n^{0.4}}\norm{\vec{x}}_2\norm{\vec{y}}_2\norm{\vec{u}}_2\norm{\vec{v}}_2
    } \nonumber \\
    &\leq \xi \norm{\vec{x}'}_2\norm{\vec{y}'}_2\norm{\vec{u}'}_2\norm{\vec{v}'}_2 \OPT_T + \frac{10M}{n^{0.4}} \xi \norm{\vec{x}}_2\norm{\vec{y}}_2\norm{\vec{u}}_2\norm{\vec{v}}_2. \label{eq:hardness-bd}
\end{align}
We split now into two cases.
First, suppose that $\sum_{a=1}^n \parens{\frac{\abs{x_a}^2}{1 + \abs{x_a}^2} + \frac{\abs{y_a}^2}{1 + \abs{y_a}^2} + \frac{\abs{u_a}^2}{1 + \abs{u_a}^2} + \frac{\abs{v_a}^2}{1 + \abs{v_a}^2}} \leq n^{0.1}$.
This implies that
\begin{align*}
    & \sum_{a=1}^n \parens{\abs{x_a'}^2 + \abs{y_a'}^2 + \abs{u_a'}^2 + \abs{v_a'}^2} \\
    &\leq (1 + 1/n^{0.2})\sum_{a=1}^n \parens{\frac{\abs{x_a'}^2}{1 + \abs{x_a'}^2} + \frac{\abs{y_a'}^2}{1 + \abs{y_a'}^2} + \frac{\abs{u_a'}^2}{1 + \abs{u_a'}^2} + \frac{\abs{v_a'}^2}{1 + \abs{v_a'}^2}} \\
    &\leq 2 n^{0.1}.
\end{align*}
In this case, using \cref{lem:dtan-fidelity-approx} and the norm bound and magnitude bound on $\vec{x}'$,
\begin{align*}
    \xi
    &\leq \left(\prod_{a=1}^n (1 + \abs{x_a'}^2)(1 + \abs{y_a'}^2)(1 + \abs{\wt{u}_a}^2)(1 + \abs{\wt{v}_a}^2)\right)^{-1/2} \\
    &\leq e^{-\frac12(\norm{\vec{x}'}_2^2 + \norm{\vec{y}'}_2^2 + \norm{\vec{u}'}_2^2 + \norm{\vec{v}'}_2^2 - \sum_{a=1}^n(\abs{x_a'}^4 + \abs{y_a'}^4 + \abs{u_a'}^4 + \abs{v_a'}^4))} \\
    &\leq e^{-\frac12(\norm{\vec{x}'}_2^2 + \norm{\vec{y}'}_2^2 + \norm{\vec{u}'}_2^2 + \norm{\vec{v}'}_2^2 - 2n^{-0.1})}.
\end{align*}
So, plugging this into \eqref{eq:hardness-bd},
\begin{align*}
    &\abs{\braket{\psi_{T} | \pi_{\vec{x},\vec{y},\vec{u},\vec{v}}}} \\
    &\leq e^{-\frac12(\norm{\vec{x}'}_2^2 + \norm{\vec{y}'}_2^2 + \norm{\vec{u}'}_2^2 + \norm{\vec{v}'}_2^2 - n^{-0.1})} \norm{\vec{x}'}_2\norm{\vec{y}'}_2\norm{\vec{u}'}_2\norm{\vec{v}'}_2 \OPT_T + \frac{10M}{n^{0.4}} \xi \norm{\vec{x}}_2\norm{\vec{y}}_2\norm{\vec{u}}_2\norm{\vec{v}}_2 \\
    &\leq e^{-2}e^{n^{-0.1}} \OPT_T + \frac{10M}{n^{0.4}} \xi \norm{\vec{x}}_2\norm{\vec{y}}_2\norm{\vec{u}}_2\norm{\vec{v}}_2 \\
    &\leq e^{-2}\parens[\Big]{1 + \frac{10}{n^{0.1}}} \OPT_T + \frac{10M}{n^{0.2}} \\
    &\leq e^{-2}\OPT_T + \frac{10}{n^{0.1}} + \frac{10M}{n^{0.2}},
\end{align*}
where we used that the maximum possible value of $e^{-\frac12\theta^2}\theta$ is exactly $e^{-1/2}$.
This gives the desired statement in the first case.
In the second case, suppose that $\sum_{a=1}^n \parens{\frac{\abs{x_a}^2}{1 + \abs{x_a}^2} + \frac{\abs{y_a}^2}{1 + \abs{y_a}^2} + \frac{\abs{u_a}^2}{1 + \abs{u_a}^2} + \frac{\abs{v_a}^2}{1 + \abs{v_a}^2}} > n^{0.1}$.
Then we can argue that the optimum value is very small: by Cauchy--Schwarz and \cref{lem:low_weight}, for a sufficiently large $n$,
\begin{align*}
    \abs{\braket{\psi_{T} | \pi_{\vec{x},\vec{y},\vec{u},\vec{v}}}}
    &= \abs{\bra{\psi_{T}} \Pi_{\leq 4} \ket{\pi_{\vec{x},\vec{y},\vec{u},\vec{v}}}} \\
    &\leq \norm{\ket{\psi_{T}}}_2 \norm{\Pi_{\leq 4}\ket{\pi_{\vec{x},\vec{y},\vec{u},\vec{v}}}}_2 \\
    &\leq e^{-(2n^{0.1} - 4)\log(2 - 4/n^{0.1}) + (n^{0.1} - 4)} \\
    &\leq e^{-0.1 n^{0.1}} \leq  \frac{10}{n^{0.2}} \leq \frac{10M}{n^{0.2}}\,.
\end{align*}
This completes the proof of the upper bound and we are done.
\end{proof}

Now we can complete the proof of Theorem~\ref{thm:hardness}.

\begin{proof}[Proof of Theorem~\ref{thm:hardness}]
Without loss of generality, we may assume that the original $m \times m \times m \times m$ tensor $T$ that we are given is symmetric: otherwise, we can just symmetrize it and this only decreases the Frobenius norm.
We can further normalize such that $\norm{T}_F = 1$.
Then consider $U^{\otimes 4} T$ for $U \in \C^{n \times m}$ a Haar-random isometry, and the corresponding quantum state $\ket{\psi_{U^{\otimes 4}T}}$ as defined in \cref{def:reduction}.
By \cref{lem:reduction-analysis},
\begin{align*}
    e^{-2} \OPT_{U^{\otimes 4}T} - \frac{10M}{n^{0.2}}
    \leq \OPT_{\ket{\psi_{U^{\otimes 4} T}}}
    \leq e^{-2} \OPT_{U^{\otimes 4}T} + \frac{10}{n^{0.1}} + \frac{10M}{n^{0.2}}
\end{align*}
By \cref{claim:bounded-entries}, with $0.99$ probability we can take $M = (10m)^2$, and because $U$ is an isometry, $\OPT_{U^{\otimes 4}T} = \OPT_{T}$, so
\begin{align*}
    e^{-2} \OPT_{T} - \frac{1000m^2}{n^{0.2}}
    \leq \OPT_{\ket{\psi_{U^{\otimes 4} T}}}
    \leq e^{-2} \OPT_{T} + \frac{10}{n^{0.1}} + \frac{1000m^2}{n^{0.2}}.
\end{align*}
Taking $n = \Omega((m/\eps)^{20})$ completes the proof: then, $e^{-2}(\OPT_T - \eps) \leq \OPT_{\ket{\psi_{U^{\otimes 4} T}}} \leq e^{-2} (\OPT_T + \eps)$.
So, finding the optimal product state fidelity to accuracy $0.01\eps^2$ gives the spectral norm of the tensor to $\eps$ additive error.
\end{proof}

\section*{Acknowledgments}
\addcontentsline{toc}{section}{Acknowledgments}

We thank Sitan Chen for helpful discussions at the early stages of this work.

AB is supported by the NSF TRIPODS program (award DMS-2022448).
JB is supported by Henry Yuen's AFOSR (award FA9550-21-1-036) and NSF CAREER (award CCF-2144219).
WK acknowledges support from the U.S.\ Department of Energy, Office of Science, National
Quantum Information Science Research Centers, Quantum Systems Accelerator.
ZL is supported by the U.S. Department of Energy, Office of Science, National Quantum Information Science Research Centers, Quantum Systems Accelerator, and by NSF Grant CCF-2311733.
AL is supported in part by an NSF GRFP and a Hertz Fellowship.
RO is supported by ARO grant W911NF2110001 and by a gift from Google Quantum AI.
ET is supported by the Miller Institute for Basic Research in Science, University of California, Berkeley. 

\printbibliography

\appendix

\section{Agnostic learning of a discrete class of product states}


\begin{theorem}[Agnostic learning of a discrete class of product states] \label{thm:discrete}
    Suppose we are given copies of an $n$-qudit state $\rho$, along with classical descriptions of sets of single-qudit pure states $\{\mathcal{A}_k\}_{k \in [n]}$ such that $\abs{\mathcal{A}_k} \leq s$ and, for some $\gamma \geq 1/e$, $\abs{\braket{\phi|\phi'}}^2 \leq \gamma$ for all distinct $\ket{\phi}, \ket{\phi'} \in \mathcal{A}_k$.
    These implicitly define a class of product states $\mathcal{P}$,
    \begin{align*}
        \mathcal{P} &= \mathcal{A}_1 \otimes \dots \otimes \mathcal{A}_n = \braces{\ket{\pi_1} \otimes \dots \otimes \ket{\pi_n} \mid \ket{\pi_k} \in \mathcal{A}_k}.
    \intertext{Let $\mathcal{P}_{\eta}$ be the states in the class with fidelity at least $\eta$ with $\pi$:}
        \mathcal{P}_\eta &= \braces{\ket{\pi} \in \mathcal{P} \mid \bra{\pi} \rho \ket{\pi} \geq \eta}.
    \end{align*}
    Then given parameters $\eta, \eps, \delta \in (0,1)$ with $\eps \leq \eta/2$, we can output a set $\mathcal{S}$ such that $\mathcal{P}_\eta \subseteq \mathcal{S} \subseteq \mathcal{P}_{\eta - \eps}$ with probability $\geq 1-\delta$ using $O((10ns)^{\log\frac{25}{\eta} / \log\frac{1}{\gamma}}\frac{1}{\eps^2}\log\frac{1}{\delta})$ copies of $\rho$.
\end{theorem}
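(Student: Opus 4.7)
I would construct the set $\mathcal{S}$ iteratively by sweeping across the qudits and maintaining, at each step $k$, the set $\mathcal{C}_k \subseteq \mathcal{A}_1 \otimes \cdots \otimes \mathcal{A}_k$ of every prefix whose estimated fidelity with $\rho_{[k]}$ is at least $\eta - 3\eps/4$. Starting from $\mathcal{C}_0 = \{1\}$, at step $k$ I would form the $s$-fold extension $\mathcal{C}_k^{\mathsf{temp}} = \mathcal{C}_{k-1} \otimes \mathcal{A}_k$, estimate $\braket{\pi_{[k]}|\rho_{[k]}|\pi_{[k]}}$ to additive error $\eps/4$ for each $\ket{\pi_{[k]}} \in \mathcal{C}_k^{\mathsf{temp}}$ by measuring $\rho$ in a product basis containing $\ket{\pi_{[k]}}$ on the first $k$ qudits (and recording the empirical frequency of that prefix outcome), and prune to $\mathcal{C}_k$. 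The output is $\mathcal{S} = \mathcal{C}_n$. Monotonicity of fidelity under partial trace immediately yields $\mathcal{P}_\eta \subseteq \mathcal{S}$ whenever every estimate is $\eps/4$-accurate, and pruning ensures $\mathcal{S} \subseteq \mathcal{P}_{\eta - \eps}$.

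The main technical step is bounding $|\mathcal{C}_k|$ uniformly in $k$. Setting $k^\ast \coloneqq \lceil 2\log(25/\eta)/\log(1/\gamma)\rceil$, any two prefixes in $\mathcal{A}_1 \otimes \cdots \otimes \mathcal{A}_k$ differing on $\geq k^\ast$ coordinates have inner product magnitude at most $\gamma^{k^\ast/2} \leq \eta/25$, since the product structure makes the overlap factor through the differing positions. I would then let $\mathcal{C}'_k \subseteq \mathcal{C}_k$ be a maximal subset whose elements are pairwise at Hamming distance $\geq k^\ast$ (viewing each element as a sequence in $\mathcal{A}_1 \times \cdots \times \mathcal{A}_k$). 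The Gram-matrix/Gershgorin argument of \cref{claim:cover_size}, applied with off-diagonal bound $\eta/25$ and fidelity bound $\eta - \eps/2 \geq \eta/2$, yields $|\mathcal{C}'_k| \leq O(1/\eta)$. By maximality, every $\ket{\pi_{[k]}} \in \mathcal{C}_k$ lies within Hamming radius $k^\ast$ of some representative in $\mathcal{C}'_k$, and the corresponding ball has size at most $\sum_{d < k^\ast}\binom{k}{d}s^d \leq (10ns)^{k^\ast}$. Combining yields the target cover-size bound after absorbing polynomial factors into the base.

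The sample complexity then follows from a union bound: the total number of fidelities estimated is $m = \sum_{k=1}^n s|\mathcal{C}_{k-1}|$, bounded by $(10ns)^{O(\log(25/\eta)/\log(1/\gamma))}$. Taking $O(\log(m/\delta)/\eps^2)$ copies per fidelity for uniform $\eps/4$-accuracy via direct estimation gives the claimed $O((10ns)^{\log(25/\eta)/\log(1/\gamma)}\eps^{-2}\log(1/\delta))$ copy complexity (with sub-exponential factors absorbed into the constant 10 in the base). Each measurement uses only $O(n\log s)$ quantum gates per copy, so the procedure is time-efficient as well.

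The main obstacle is the cover-size bound. Two distinct elements of $\mathcal{C}_k$ can overlap by as much as $\sqrt{\gamma}$ when they differ in only a single coordinate, and this can be close to one in the regime $\eta \leq \sqrt{\gamma}$ that the theorem targets, so a direct application of \cref{claim:cover_size} to $\mathcal{C}_k$ itself would be vacuous. The Hamming-ball decomposition circumvents this by bounding the ``nearby'' pairs combinatorially through the ball-size estimate, and reserving the spectral bound for the well-separated subset $\mathcal{C}'_k$, on which the product structure forces the overlaps all the way down to $\gamma^{k^\ast/2} \leq \eta/25$. Once this is in place, propagating the $\eps/4$ slack through the estimates and pruning is routine.
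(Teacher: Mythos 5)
Your proposal is essentially the paper's own argument: sweep qudit-by-qudit, prune prefixes by estimating $\braket{\pi_{[k]}|\rho_{[k]}|\pi_{[k]}}$ and discarding those below the adjusted threshold, invoke monotonicity of fidelity under partial trace to get $\mathcal{P}_\eta \subseteq \mathcal{S}$, and bound the maintained set via a Hamming-ball decomposition paired with the Gershgorin argument of \cref{claim:cover_size} on a pairwise well-separated representative subset --- this is precisely \cref{alg:discrete} together with \cref{claim:discrete-bd}. One genuine detail worth flagging: you correctly derive that under $\abs{\braket{\phi|\phi'}}^2 \le \gamma$ two prefixes differing on $k^*$ coordinates have overlap magnitude at most $\gamma^{k^*/2}$ (not $\gamma^{k^*}$), which forces $k^* \approx 2\log(25/\eta)/\log(1/\gamma)$ and hence an exponent twice the one in the theorem statement; this extra factor of two cannot actually be ``absorbed into the constant $10$ in the base'' as you suggest, and in fact it exposes a small slip in the paper's \cref{claim:discrete-bd}, which writes $\gamma^{\ell+1}$ where the Gram-matrix off-diagonals are only bounded by $\gamma^{(\ell+1)/2}$.
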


The classical part of the algorithm is linear-time, costing $O(nC)$, where $C$ is the number of copies of $\rho$ used.
The quantum gate complexity is also $O(nC)$, where we consider a gate to be an operation on a qudit; the only circuits we will need are single layers of single-qudit gates, followed by measurement, to estimate quantities like $\bra{\pi} \rho \ket{\pi}$ for $\ket{\pi} \in \mathcal{P}$.

\begin{claim} \label{claim:discrete-bd}
    In the set-up of \cref{thm:discrete}, $\abs{\mathcal{P}_\eta} \le (10ns)^{\log\frac{2}{\eta} / \log\frac{1}{\gamma}}$.
\end{claim}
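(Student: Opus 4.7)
The plan is a packing-and-cover argument on the Hamming metric of the discrete product space $\mathcal{A}_1 \times \cdots \times \mathcal{A}_n$. Set $d = \log(2/\eta)/\log(1/\gamma)$, so that $\gamma^d = \eta/2$. The idea is that a cover of $\mathcal{P}_\eta$ can be chosen to have few ``centers'' (controlled by how much mass $\rho$ can distribute over near-orthogonal product states) and small ``balls'' (controlled by a Stirling-style count).

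First I would construct greedily a maximal subset $\mathcal{R} \subseteq \mathcal{P}_\eta$ whose elements pairwise disagree on at least $d+1$ qudit indices. By maximality every state in $\mathcal{P}_\eta$ lies within Hamming distance $d$ of some element of $\mathcal{R}$, so
\[
|\mathcal{P}_\eta| \;\le\; |\mathcal{R}| \cdot \max_{\pi^*}\bigl|\{\pi \in \mathcal{P}: h(\pi,\pi^*) \le d\}\bigr| \;\le\; |\mathcal{R}| \cdot \sum_{h=0}^{d} \binom{n}{h}(s-1)^h.
\]
Using $\binom{n}{h}(s-1)^h \le (ens/h)^h$ and summing gives a bound of the form $(10ns)^d$ (after absorbing an $e$-factor and the $d+1$ summands into the constant).

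Next I would bound $|\mathcal{R}|$ by the same operator-norm trick used for Claim~\ref{claim:cover_size}. For distinct $\pi,\pi' \in \mathcal{R}$, the product structure together with $|\braket{\phi|\phi'}|^2 \le \gamma$ yields $|\braket{\pi|\pi'}|^2 \le \gamma^{d+1} \le \eta/2$. Arranging $\mathcal{R}$ as the columns of a matrix $M$, on one hand $\|MM^\dagger\|_{\op} \ge \tr(\rho\, MM^\dagger) = \sum_{\pi \in \mathcal{R}}\bra{\pi}\rho\ket{\pi} \ge |\mathcal{R}|\eta$; on the other hand $M^\dagger M$ has unit diagonal and off-diagonal entries of magnitude at most $\sqrt{\eta/2}$, so Gershgorin gives $\|M^\dagger M\|_{\op} \le 1 + |\mathcal{R}|\sqrt{\eta/2}$. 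Equating the two norms yields $|\mathcal{R}|(\eta - \sqrt{\eta/2}) \le 1$, i.e.\ $|\mathcal{R}| = O(1)$ whenever $\eta$ is bounded away from $0$, and multiplying by the ball bound closes the estimate.

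The main obstacle is the small-$\eta$ regime, where $\sqrt{\eta/2} \ge \eta$ and the Gershgorin bound on $|\mathcal{R}|$ becomes vacuous. The fix is to enlarge the minimum Hamming distance of the code to $2d$ (so pairwise overlaps drop to $\le \eta/2$, giving $|\mathcal{R}| \le 2/\eta$ by the same argument), at the cost of a covering ball of radius $2d-1$; one then has to trade off the two bounds carefully and absorb the resulting $\text{poly}(1/\eta)$ prefactor into the base $10$ of $(10ns)^d$, using the hypothesis $\gamma \ge 1/e$ (which forces $d \ge \log(2/\eta)$ and so makes $(10ns)^d$ large enough to swallow the prefactor). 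The delicate bookkeeping in this step is the only non-routine part of the argument.
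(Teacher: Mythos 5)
Your overall strategy is the same as the paper's: greedily build a maximal subset of $\mathcal{P}_\eta$ whose elements pairwise differ in many qudits, bound its size by the Gram-matrix argument of \cref{claim:cover_size} (trace against $\rho$ versus Gershgorin), then multiply by the size of a Hamming ball of the covering radius and absorb constants using $\gamma \ge 1/e$. You have also put your finger on a genuine subtlety: since \cref{thm:discrete} bounds the \emph{squared} per-qudit overlap by $\gamma$, a separation of $d+1$ qudits only gives off-diagonal Gram entries of magnitude $\gamma^{(d+1)/2} \le \sqrt{\eta/2}$, and then $\abs{\mathcal{R}}(\eta - \sqrt{\eta/2}) \le 1$ is vacuous for $\eta < 1/2$. (The paper's proof sidesteps this by writing $\gamma^{\ell+1}$ directly as a bound on $\abs{\braket{\pi|\pi'}}$, i.e.\ it effectively reads $\gamma$ as a bound on the overlap rather than its square, which keeps the covering radius at $\ell$ and the stated exponent.)

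The gap is in your repair. Enlarging the minimum Hamming distance of the code to $2d$ does give $\abs{\braket{\pi|\pi'}} \le \gamma^{d} = \eta/2$ and hence $\abs{\mathcal{R}} \le 2/\eta$, but by maximality the covering radius then becomes $2d-1$, so the ball count is $\sum_{h \le 2d-1}\binom{n}{h}(s-1)^h \approx (ns)^{2d-1}$, and your final bound is roughly $\frac{2}{\eta}\,(ns)^{2d-1}$. Compared with the target $(10ns)^{d}$ this overshoots by a factor of order $(ns)^{d-1}$, which depends on $n$ and $s$ and is \emph{not} a $\mathrm{poly}(1/\eta)$ prefactor; it cannot be swallowed by the base $10$ (the hypothesis $\gamma \ge 1/e$ only buys you $10^{d} \ge (2/\eta)^{\ln 10}$, i.e.\ a $\mathrm{poly}(1/\eta)$ slack). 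So the "delicate bookkeeping" you defer does not close: your argument proves only a bound with exponent about $2\log\tfrac{2}{\eta}/\log\tfrac{1}{\gamma}$, not the exponent in the claim. To get the stated bound along your lines you must keep the separation at $d+1$ and have off-diagonals $\le \gamma^{d+1} \le \eta/2$, which is exactly what the paper does and which requires reading $\gamma$ as a bound on $\abs{\braket{\phi|\phi'}}$ (as in the informal \cref{thm:discrete-informal}); with the squared-overlap hypothesis taken literally, either the exponent doubles or a different argument for bounding $\abs{\mathcal{R}}$ is needed.
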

\begin{proof}
The basic idea is that $\mathcal{P}_\eta$ consists of a small number of balls, where a ball is the set of elements of $\mathcal{P}$ close to a particular $\ket{\pi} \in \mathcal{P}$.
The number of balls is small because $\rho$ must place mass in the direction of every ball, and the mass of $\rho$ is bounded.

We formalize this argument now.
For $\ell = \floor{\log\frac{2}{\eta} / \log\frac{1}{\gamma}}$, construct a net $\net \subseteq \mathcal{P}_\eta$ with the following properties.
\begin{enumerate}
    \item For any $\ket{\pi} \in \mathcal{P}_\eta$, there exists a $\ket{\varpi} \in \net$ such that $\ket{\pi}$ and $\ket{\varpi}$ differ in at most $\ell$ qudits;
    \item Any distinct $\ket{\pi}, \ket{\varpi} \in \net$ differ in at least $\ell + 1$ qudits.
\end{enumerate}
Such a net can be constructed greedily: start with an empty net, and while there is a violation of condition 1, add the corresponding $\ket{\pi}$ to the net.
Let $M$ be the matrix whose columns are the elements of the net $\net = \braces{\ket{\pi^{(i)}}}_i$.
Then following the same argument as \cref{claim:cover_size},
\begin{align*}
    \opnorm{M^\dagger M} &\leq 1 + \abs{\net} \gamma^{\ell + 1} \leq 1 + \abs{\net}(\eta/2) \\
    \opnorm{MM^\dagger} &\geq \tr(MM^\dagger \rho) = \sum_i \braket{\pi^{(i)} | \rho | \pi^{(i)}} \geq \abs{\net} \eta
\end{align*}
Together, we can conclude that $\abs{\net} \leq \frac{2}{\eta}$.

Finally, by property 2 of the net, $\mathcal{P}_\eta$ is contained in the set of product states in $\mathcal{P}$ which differ from an element of $\net$ in at most $\ell$ qudits.
Consider some $\ket{\pi} \in \mathcal{P}_\eta$.
There are $\sum_{i = 0}^{\ell} \binom{n}{i} s^i \leq \sum_{i=0}^\ell (ns)^i \leq 2(ns)^\ell$ elements of $\mathcal{P}$ which differ from $\ket{\pi}$ in at most $\ell$ qudits.
So, $\abs{\mathcal{P}_\eta} \leq \abs{\net} 2(ns)^\ell \leq \frac{4}{\eta}(ns)^\ell$, which gives the desired bound.
\end{proof}

Now, we present the proof of \cref{thm:discrete}.
The algorithm is given below: we start from qudit 1 and iteratively add a qubit, maintaining at iteration $m$ a set $\mathcal{S}_m$ of product states which have good fidelity with $\rho$.

\begin{longfbox}[breakable=false, padding=1em, margin-top=1em, margin-bottom=1em]
\begin{algorithm}[Agnostic learning for discrete product state classes]
\label{alg:discrete} \hfill
\begin{description}
    \item[Input:] Copies of an $n$-qubit state $\rho$, a class of $n$-qubit product states $\mathcal{P} = \mathcal{A}_1 \otimes \dots \otimes \mathcal{A}_n$, with parameters $\eta, \gamma, \eps, \delta$ as in \cref{thm:discrete};
    \item[Output:] A set of product states $\mathcal{S}$ such that $\mathcal{P}_\eta \subseteq \mathcal{S} \subseteq \mathcal{P}_{\eta - \eps}$ with probability $\geq 1-\delta$;
    \item[Procedure:] \mbox{}
    \begin{algorithmic}[1]
        \State Let $\mathcal{S}_0 = \varnothing$;
        \For{$m$ from $1$ to $n$}
            \State Initialize $\mathcal{S}_m = \varnothing$;
            \ForAll{$\ket{\pi} \in \mathcal{S}_{m-1} \otimes \mathcal{A}_{m}$}
                \State Estimate $\bra{\pi} \rho_{[m]} \ket{\pi}$ to $\eps/2$ error with success probability $\geq 1 - \delta / (10ns)^{\log\frac{20}{\eta}/\log\frac{1}{\gamma}}$;
                \If{the estimate is at least $\eta - \eps/2$}
                    \State Add $\ket{\pi}$ to $\mathcal{S}_m$;
                \EndIf
            \EndFor
        \EndFor
        \State \Return $\mathcal{S}_n$ \;
    \end{algorithmic}
\end{description}
\end{algorithm}
\end{longfbox}

\begin{claim}[Correctness] \label{claim:discrete-correctness}
    With probability $\geq 1-\delta$, at the completion of \cref{alg:discrete}, for every $m \in \braces{0,1,\dots,n}$,
    \begin{align*}
        \mathcal{S}_m &\supseteq \braces{\ket{\pi} \in \mathcal{A}_1 \otimes \dots \otimes \mathcal{A}_m \mid \bra{\pi} \rho_{[m]} \ket{\pi} \geq \eta} \\
        \mathcal{S}_m &\subseteq \braces{\ket{\pi} \in \mathcal{A}_1 \otimes \dots \otimes \mathcal{A}_m \mid \bra{\pi} \rho_{[m]} \ket{\pi} \geq \eta - \eps}
    \end{align*}
\end{claim}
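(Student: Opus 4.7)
The plan is to induct on $m$, controlling the total failure probability by a single union bound over all tomography calls made by the algorithm. The base case $m=0$ is trivial since $\mathcal{S}_0 = \{1\}$ corresponds to the trivial (empty) product state with fidelity $1$ against $\rho_{[0]}$, and both set inclusions hold vacuously.

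For the inductive step, I would first condition on the event $E$ that every tomography estimate $\widehat{f}(\pi)$ satisfies $|\widehat{f}(\pi) - \bra{\pi}\rho_{[m]}\ket{\pi}| \leq \eps/2$. On $E$ the upper (soundness) inclusion is immediate: any $\ket{\pi} \in \mathcal{S}_m$ was added only because $\widehat{f}(\pi) \geq \eta - \eps/2$, which forces $\bra{\pi}\rho_{[m]}\ket{\pi} \geq \eta - \eps$. For the lower (completeness) inclusion, fix $\ket{\pi} = \ket{\pi_{[m-1]}}\otimes\ket{\pi_m} \in \mathcal{A}_1\otimes\cdots\otimes\mathcal{A}_m$ with $\bra{\pi}\rho_{[m]}\ket{\pi} \geq \eta$. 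Extending $\ket{\pi_m}$ to an orthonormal basis $\{\ket{\phi_i}\}$ of qudit $m$, the identity
\[
\bra{\pi_{[m-1]}}\rho_{[m-1]}\ket{\pi_{[m-1]}} \;=\; \sum_{i}\bra{\pi_{[m-1]}\otimes\phi_i}\rho_{[m]}\ket{\pi_{[m-1]}\otimes\phi_i} \;\geq\; \bra{\pi}\rho_{[m]}\ket{\pi} \;\geq\; \eta
\]
implies, by the completeness half of the inductive hypothesis, that $\ket{\pi_{[m-1]}} \in \mathcal{S}_{m-1}$. Hence $\ket{\pi} \in \mathcal{S}_{m-1}\otimes\mathcal{A}_m$ is considered in iteration $m$, its estimate satisfies $\widehat{f}(\pi) \geq \eta - \eps/2$, and so $\ket{\pi}$ is added to $\mathcal{S}_m$.

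It remains to verify that the per-call failure parameter $\delta' = \delta/(10ns)^{\log(20/\eta)/\log(1/\gamma)}$ is small enough that a union bound gives $\Pr[E] \geq 1 - \delta$. By the soundness half of the inductive hypothesis, $\mathcal{S}_{m-1}$ is contained in the restriction of $\mathcal{P}_{\eta - \eps}$ to the first $m-1$ qudits; since $\eps \leq \eta/2$, we have $\eta - \eps \geq \eta/2$, and \cref{claim:discrete-bd} yields $|\mathcal{S}_{m-1}| \leq (10(m-1)s)^{\log(4/\eta)/\log(1/\gamma)}$. Summing, the total number of tomography calls is at most
\[
\sum_{m=1}^n s\,|\mathcal{S}_{m-1}| \;\leq\; ns \cdot (10ns)^{\log(4/\eta)/\log(1/\gamma)} \;\leq\; (10ns)^{\log(20/\eta)/\log(1/\gamma)},
\]
where the last step uses $\log(1/\gamma) \leq 1$ (from $\gamma \geq 1/e$) to absorb the extra factor of $ns$ into the exponent. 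A union bound over these calls bounds $\Pr[\neg E]$ by $\delta$, completing the proof.

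The only subtlety is an apparent circularity: the bound on $|\mathcal{S}_{m-1}|$ relies on the soundness half of the inductive hypothesis, which itself holds only on $E$. This is easily avoided by noting that the per-call failure probability $\delta'$ is fixed in advance (not chosen adaptively), so the count of tomography calls is deterministically bounded once we condition on $E$; the union bound argument establishes $E$ and the induction holds pointwise on $E$. Beyond this bookkeeping, the proof is essentially a straightforward ``prefix-extension'' induction enabled by the partial-trace monotonicity inequality above.
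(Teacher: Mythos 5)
Your proof is correct and takes essentially the same route as the paper: induct on $m$, condition on all tomography calls succeeding, use the partial-trace monotonicity $\bra{\pi_{[m-1]}}\rho_{[m-1]}\ket{\pi_{[m-1]}} \geq \bra{\pi}\rho_{[m]}\ket{\pi}$ for the completeness inclusion, use the estimate threshold $\eta - \eps/2$ for soundness, and combine the soundness inclusion with \cref{claim:discrete-bd} to bound the number of tomography calls and hence the total failure probability by a union bound. The only additions are that you spell out the monotonicity inequality via an orthonormal-basis extension and you explicitly flag and resolve the apparent circularity between the union bound and the inductive size bound (which the paper handles implicitly with ``Supposing that the algorithm never fails, \dots''); both are harmless elaborations rather than a different argument.
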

\begin{proof}
First, we consider the algorithm under the event that the algortihm never fails.
We prove by induction on $m$.
The base case $m = 0$ is true trivially.
For the inductive step, consider some $m > 0$, and consider some $\ket{\pi} \in \mathcal{A}_1 \otimes \dots \otimes \mathcal{A}_m$ such that $\bra{\pi} \rho_{[m]} \ket{\pi} \geq \eta$.
Let $\ket{\pi'}$ be the product state $\ket{\pi}$ with the $m$th qubit traced out.
Then
\begin{align*}
    \bra{\pi'} \rho_{[m-1]} \ket{\pi'} \geq \bra{\pi} \rho_{[m]} \ket{\pi} \geq \eta,
\end{align*}
so $\ket{\pi'} \in \mathcal{S}_{m-1}$ by the inductive hypothesis, and $\ket{\pi} \in \mathcal{S}_{m-1} \otimes \mathcal{A}_m$.
Because we are assuming that the estimation procedure succeeds, this means that $\ket{\pi}$ will be added to $\mathcal{S}_m$, proving the first equation of the claim.
The second equation holds because, again assuming that the estimation procedure always succeeds, all elements added to $\mathcal{S}_m$ have fidelity at least $\eta - \eps/2 - \eps/2$ with $\rho_{[m]}$.

To conclude, we account for failure.
Supposing that the algorithm never fails, the second equation in the claim along with \cref{claim:discrete-bd} (applied on $\rho_{[m]}$, the state on the $m$-qudit subsystem) implies that $\abs{\mathcal{S}_m} \leq (10ms)^{\log\frac{2}{\eta - \eps}/\log\frac{1}{\gamma}}$.
So, on a successful run, the estimation procedure is run at most $ns \cdot (10ns)^{\log\frac{2}{\eta - \eps}/\log\frac{1}{\gamma}}$ times, and so the failure probability is chosen such that the probability a failure occurs is at most $\delta$.
\end{proof}

The above claim implies that the output of the algorithm satisfies $\mathcal{P}_\eta \subseteq \mathcal{S} \subseteq \mathcal{P}_{\eta - \eps}$, which is the desired correctness condition.
What remains is to analyze the complexity.
The dominating cost is the estimation step, where the fidelity of (a subsystem of) $\rho$ with a product state is estimated to $\eps/2$ error with failure probability $\delta / (10ns)^{\log\frac{20}{\eta}/\log\frac{1}{\gamma}}$.
This step is run at most $s \cdot \sum_{m=1}^n \abs{\mathcal{S}_m} \leq ns \cdot (10ns)^{\log\frac{2}{\eta - \eps} / \log\frac{1}{\gamma}} = O((10ns)^{\log\frac{20}{\eta}/\log\frac{1}{\gamma}})$ times, by \cref{claim:discrete-correctness} and \cref{claim:discrete-bd}.
Each estimation protocol costs $O(\frac{1}{\eps^2}\log((10ns)^{\log\frac{20}{\eta}/\log\frac{1}{\gamma}}/\delta))$ copies of $\rho$, where we use the naive algorithm of measuring in the appropriate basis and estimating the corresponding probability.

\begin{remark} \label{rmk:discrete-shadows}
    The estimation procedure could also be done using randomized Clifford measurements as in \cite{hkp20}, which reduces the sample complexity to poly-logarithmic in $n$.
    However, computing the resulting estimators takes exponential time, making the resulting algorithm computationally inefficient, except for limited settings, such as in the case of stabilizer product states.
\end{remark}

\section{Agnostic improper learning of matrix product states} \label{sec:mps}

The task of product state learning motivates a more general question: What ensembles of  ``low-entanglement'' states can we perform computationally-efficient agnostic learning for?  One physically-motivated ensemble is the class of low bond-dimension matrix product states, for which the (non-agnostic) learning task was studied in \cite{cpfsgb10}.  

\begin{definition}[Matrix product state with bond dimension $r$, open boundary condition]
\label{def:mps}
     A matrix product state (MPS) is a state over $n$ total $d$-dimensional qudits that can be written as follows
     \begin{equation*}
         \ket{\psi} = \sum_{s_1, \ldots, s_{n} \in [d]^{n}} \sum_{\alpha_{1}, \ldots, \alpha_{n} \in [r]^{n}} \left(A^{(s_1)}_1 \right)_{1, \alpha_1} \left(A^{(s_2)}_2\right)_{\alpha_1, \alpha_2} \ldots \left(A^{(s_{n - 1})}_{n - 1}\right)_{\alpha_{n-2}, \alpha_{n-1}} \left(A^{(s_n)}_n\right)_{\alpha_{n-1}, \alpha_{n}} \ket{s_1, \ldots, s_n}\,,
     \end{equation*}
    where for all $i = 2, \ldots, n -1$ and all $s_i$, we have that $A_i^{(s_i)}$ are $r \times r$ complex matrices, and for $i \in \{1, n\}$ and all $s_i$.
    The dimension $r$ is known as the \emph{bond dimension} of the MPS.
    We let $\mathrm{MPS}_{n, d, r}$ denote the set of all such states.\footnote{Matrix product states with periodic boundary conditions are defined by taking the trace of the product of the $A_{i}^{(s_i)}$, or equivalently enforcing that the first index of $A_1^{(s_1)}$ is equal to $\alpha_n$, but we do not consider these in this paper.}
\end{definition}
\noindent
A MPS has a natural representation in the form of a tensor network as in Figure~\ref{fig:matrix_product_state}, where here each $A_i$ represents the concatenation of all of the $A_i^{(s_i)}$ into a $3$-tensor. 
\begin{figure}[h]
    \centering
    \tikzset{every picture/.style={line width=0.75pt}} 

\begin{tikzpicture}[x=0.75pt,y=0.75pt,yscale=-1,xscale=1]

\draw  [fill={rgb, 255:red, 255; green, 166; blue, 217 }  ,fill opacity=1 ] (101,132) .. controls (101,123.72) and (107.72,117) .. (116,117) -- (143,117) .. controls (151.28,117) and (158,123.72) .. (158,132) -- (158,158) .. controls (158,166.28) and (151.28,173) .. (143,173) -- (116,173) .. controls (107.72,173) and (101,166.28) .. (101,158) -- cycle ;
\draw  [fill={rgb, 255:red, 255; green, 166; blue, 217 }  ,fill opacity=1 ] (205,132) .. controls (205,123.72) and (211.72,117) .. (220,117) -- (247,117) .. controls (255.28,117) and (262,123.72) .. (262,132) -- (262,158) .. controls (262,166.28) and (255.28,173) .. (247,173) -- (220,173) .. controls (211.72,173) and (205,166.28) .. (205,158) -- cycle ;
\draw  [fill={rgb, 255:red, 255; green, 166; blue, 217 }  ,fill opacity=1 ] (357,132) .. controls (357,123.72) and (363.72,117) .. (372,117) -- (399,117) .. controls (407.28,117) and (414,123.72) .. (414,132) -- (414,158) .. controls (414,166.28) and (407.28,173) .. (399,173) -- (372,173) .. controls (363.72,173) and (357,166.28) .. (357,158) -- cycle ;
\draw  [fill={rgb, 255:red, 255; green, 166; blue, 217 }  ,fill opacity=1 ] (460,132) .. controls (460,123.72) and (466.72,117) .. (475,117) -- (502,117) .. controls (510.28,117) and (517,123.72) .. (517,132) -- (517,158) .. controls (517,166.28) and (510.28,173) .. (502,173) -- (475,173) .. controls (466.72,173) and (460,166.28) .. (460,158) -- cycle ;
\draw    (158,146) -- (204,146) ;
\draw    (130,117) -- (130,74) ;
\draw    (233,117) -- (233,74) ;
\draw  [fill={rgb, 255:red, 0; green, 0; blue, 0 }  ,fill opacity=1 ] (285,146) .. controls (285,143.24) and (287.24,141) .. (290,141) .. controls (292.76,141) and (295,143.24) .. (295,146) .. controls (295,148.76) and (292.76,151) .. (290,151) .. controls (287.24,151) and (285,148.76) .. (285,146) -- cycle ;
\draw  [fill={rgb, 255:red, 0; green, 0; blue, 0 }  ,fill opacity=1 ] (302,146) .. controls (302,143.24) and (304.24,141) .. (307,141) .. controls (309.76,141) and (312,143.24) .. (312,146) .. controls (312,148.76) and (309.76,151) .. (307,151) .. controls (304.24,151) and (302,148.76) .. (302,146) -- cycle ;
\draw  [fill={rgb, 255:red, 0; green, 0; blue, 0 }  ,fill opacity=1 ] (319,146) .. controls (319,143.24) and (321.24,141) .. (324,141) .. controls (326.76,141) and (329,143.24) .. (329,146) .. controls (329,148.76) and (326.76,151) .. (324,151) .. controls (321.24,151) and (319,148.76) .. (319,146) -- cycle ;
\draw    (263,146) -- (276,146) ;
\draw    (337,146) -- (357,146) ;
\draw    (415,148) -- (461,148) ;
\draw    (385,118) -- (385,75) ;
\draw    (488,117) -- (488,74) ;

\draw (174,127.4) node [anchor=north west][inner sep=0.75pt]    {$r$};
\draw (114,88.4) node [anchor=north west][inner sep=0.75pt]    {$d$};
\draw (217,88.4) node [anchor=north west][inner sep=0.75pt]    {$d$};
\draw (431,129.4) node [anchor=north west][inner sep=0.75pt]    {$r$};
\draw (367,87.4) node [anchor=north west][inner sep=0.75pt]    {$d$};
\draw (474,88.4) node [anchor=north west][inner sep=0.75pt]    {$d$};
\draw (119,135.4) node [anchor=north west][inner sep=0.75pt]    {$A_{1}$};
\draw (222,136.4) node [anchor=north west][inner sep=0.75pt]    {$A_{2}$};
\draw (369,136.4) node [anchor=north west][inner sep=0.75pt]    {$A_{n-1}$};
\draw (477,137.4) node [anchor=north west][inner sep=0.75pt]    {$A_{n}$};

\end{tikzpicture}
    \caption{Tensor network representation of a matrix product state with bond dimension $r$.}
    \label{fig:matrix_product_state}
\end{figure}
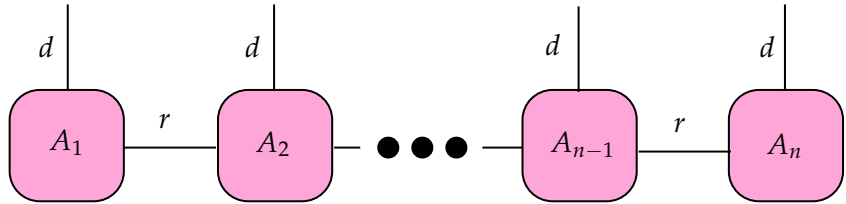

In this section we present an agnostic learning algorithm for matrix product states very closely based on the algorithm from \cite{cpfsgb10}. 
In fact, it is nearly identical to the algorithm presented there, but with different parameters.
For completeness, here we will present a more or less self-contained description of the algorithm, and a proof of its correctness.

Unlike in the product state setting, given a state that is close to a matrix product state, we will not learn a matrix product state with the same bond dimension, but one whose bond dimension is a polynomial-factor higher, hence ``agnostic improper learning''. 
At a very high level, the algorithm of~\cite{cpfsgb10} demonstrates that one can successively learn unitary rotations that act on few qubits at a time that allow you to disentangle qubits successively.
This is because when $\rho$ is a MPS (or very close to one), then it has low Schmidt rank along every cut in the tensor network, and consequently essentially all of the information about $\rho$ is contained in a $r$-dimensional subspace along every cut.
While this is obviously not true when $\rho$ is arbitrary, we show that we can find $\poly(n, d, r, 1/\eps)$-dimensional subspaces which do preserve all of the correlation structure between $\rho$ and any matrix product state.

\begin{theorem}[Agnostic improper learning of matrix product states] \label{thm:mps}
    Suppose we are given copies of an $n$-qudit state $\rho$.
    Then given parameters $\eps, \delta \in (0,1)$ and a bond dimension parameter $r$, we can output a description of a matrix product state $\ket{\hat{\phi}}$ with bond dimension $dn^2 \cdot \poly(r, 1/\eps)$ such that, with probability $\geq 1-\delta$,
    \begin{equation}
        \braket{\hat{\phi} | \rho | \hat{\phi}} \geq \max_{\ket{\phi} \in MPS_{r}} \braket{\phi | \rho | \phi} - \eps.
    \end{equation}
    This algorithm uses $N = \poly(n, d, r, 1/\eps, \log\frac{1}{\delta})$ copies of $\rho$, along with $\poly(N)$ quantum gates and classical overhead.
\end{theorem}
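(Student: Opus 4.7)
We adapt the matrix product state tomography algorithm of Cramer et al.~\cite{cpfsgb10} to the agnostic setting, essentially by inflating the bond dimension of the output from $r$ to $r' = dn^{2}\poly(r,1/\eps)$. The algorithm would sweep through the qudits left to right; at each position $i$, it performs tomography on a window of $O(\log_{d}(r'/\eps))$ consecutive qudits (a subsystem of dimension $\poly(r,1/\eps)$, efficiently tomographed via \Cref{lem:subspace-tomography}), and extracts the $i$-th MPS tensor of the output via a truncated SVD of the appropriate reshaping, keeping $r'$ singular values rather than the exact-MPS value $r$. Total tomographic cost is $\poly(n,d,r,1/\eps)$ copies and time.

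The central technical step is a structural lemma: for any $\ket{\phi}\in\mathrm{MPS}_{n,d,r}$ with $F=\braket{\phi|\rho|\phi}$ and for each cut $i$, the top-$r'$ eigensubspace $V_{i}\subseteq(\C^{d})^{\otimes i}$ of $\rho_{[i]}$ approximately captures $\phi_{[i]}$'s support. Two ingredients enter. First, the data-processing-type inequality $\tr(\phi_{[i]}\rho_{[i]})\geq F$, provable by purifying $\rho$ as $\ket{\rho}_{AB}$ and using that the reduced fidelity is lower bounded by the global fidelity. Second, the tail bound $\sigma_{r'+1}(\rho_{[i]})\leq 1/(r'+1)$ together with the rank-$r$ structure of $\phi_{[i]}$: projecting $\phi_{[i]}$ onto $V_{i}^{\perp}$ contributes at most $\|(I-P_{V_{i}})\rho_{[i]}\|_{\op}\leq 1/(r'+1)$ to $\tr(\phi_{[i]}\rho_{[i]})$. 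Choosing $r'$ of the stated order makes the per-cut loss $\eps/n^{2}$, so that a telescoping argument over the $n$ cuts bounds the total loss by $\eps$.

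The output $\ket{\widehat\phi}$ would be the MPS of bond dimension $r'$ factoring through $\{V_{i}\}$ that maximizes $\braket{\widehat\phi|\rho|\widehat\phi}$; this is a standard MPS variational problem that can be solved (approximately) directly from the tomographic data via sweeping optimization, as in the DMRG algorithm. The structural lemma then guarantees this maximum is at least $F-\eps$, matching the optimum over $\mathrm{MPS}_{n,d,r}$ as required.

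The hard part will be proving the structural lemma uniformly in $F$, rather than just in the near-$1$ regime treated by Cramer et al. When $F$ is small, the naive projection $\prod_{i}(P_{V_{i}}\otimes I)\ket{\phi}$ is \emph{not} close to $\ket{\phi}$, so a direct telescoping on amplitudes fails; instead, we must track fidelity losses at each cut using $\sum_{i}\sigma_{r'+1}(\rho_{[i]})\leq n/r'$, and convert this to the required $\eps$ fidelity loss. The $n^{2}$ factor in $r'$ arises from this conversion via the standard triangle inequality for successive projections: per-cut amplitude loss $\sqrt{\eps}/n$ must telescope to $\sqrt{\eps}$, giving $\eps$ fidelity loss. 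Handling the additional perturbation from using tomographic estimates $\widehat\rho_{[i]}$ in place of $\rho_{[i]}$ is routine via Weyl-type inequalities, provided the tomography accuracy is $\eps/\poly(n)$, which keeps the overall sample complexity polynomial.
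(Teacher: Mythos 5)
The proposal captures the right high-level picture (adapt Cramer et al., inflate the bond dimension, tomograph local windows, and prove a per-cut structural lemma), but the central step is wrong and the algorithmic structure is underspecified in a way that breaks efficiency.

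The ``data-processing-type inequality'' $\tr(\phi_{[i]}\rho_{[i]})\geq F$ is false. What data processing gives is monotonicity of \emph{Uhlmann} fidelity, $\mathrm{F}(\phi_{[i]},\rho_{[i]})\geq\mathrm{F}(\ketbra{\phi}{\phi},\rho)=F$; but for a mixed $\phi_{[i]}$ (rank $\leq r$) the Uhlmann fidelity and the Hilbert--Schmidt overlap $\tr(\phi_{[i]}\rho_{[i]})$ are different quantities. In fact, already for $\rho=\proj{\phi}$ (so $F=1$) one has $\tr(\phi_{[i]}\rho_{[i]})=\tr(\phi_{[i]}^2)=\sum_j\lambda_j^4$, which can be as small as $1/r$. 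So the quantity your per-cut bound is controlling is not lower bounded by $F$, and the telescoping argument does not get off the ground. The correct per-cut statement, which the paper proves as \Cref{lem:partial-trace-eig-bound}, controls $\lvert\bra{\phi}(\Pi_W\otimes I)\rho(\Pi_W\otimes I)\ket{\phi}-\braket{\phi|\rho|\phi}\rvert$ \emph{directly}: writing $\ket{\phi}$ in Schmidt form, the loss per Schmidt term is bounded by $2\lVert\rho\,\ket{c}\ket{b}\rVert$ with $\ket{c}=(I-\Pi_W)\ket{a}$, and then $\lVert\rho\,\ket{c}\ket{b}\rVert^2\leq\braket{c|\tr_B(\rho)|c}\leq\eta$. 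Summing over the $r$ Schmidt terms gives the $O(r\sqrt{\eta})$ loss, with no detour through $\tr(\phi_{[i]}\rho_{[i]})$.

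A second gap is structural. You perform tomography on a $\kappa$-qudit window, but your lemma needs the top eigensubspace $V_i$ of $\rho_{[i]}$, a marginal on $i$ qudits with dimension $d^i$; windows alone do not give you access to this. What makes window tomography sufficient in Cramer et al.\ — and in the paper's \Cref{alg:mps_learning} — is the \emph{adaptive} disentangling: after learning a good $\kappa$-qudit subspace $W_i$ of the current residual state, one applies a unitary rotating $W_i$ into $\ket{0}\otimes(\C^d)^{\otimes(\kappa-1)}$ and postselects on $\ket{0}$, so that the relevant information is always confined to a $\kappa$-qudit register. This adaptive projection, together with the monotonicity $\rho_i\preceq\rho_{i-1}$ (\Cref{lem:mps-psd-ordering}), is also what makes the final telescoping work. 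Finally, the proposed DMRG-style sweep to extract the output is heuristic; the paper instead constructs $\ket{\hat\phi}=U_1^{\dagger}\cdots U_{n-\kappa}^{\dagger}(\ket{0^{n-\kappa}}\otimes\ket{\psi})$ explicitly, which is guaranteed to be an MPS of the claimed bond dimension and inherits the fidelity bound from the per-cut lemma.
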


We will use the following notion of Schmidt rank and decomposition.

\begin{definition}
    Let $\ket{\phi}$ be an $n$-qudit state, i.e. a vector in $\left( \C^d\right)^{\otimes n}$.
    For $i = 1, \ldots, n$ we say that the \emph{Schmidt decomposition of $\ket\phi$ at position $i$} is the Schmidt decomposition of $\ket\phi$ when viewed as an element of the bipartite system $A \otimes B$ where $A = \left( \C^d\right)^{\otimes i}$ and $B = \left( \C^d\right)^{\otimes (n - i)}$.
    We correspondingly say that the Schmidt rank of $\ket\phi$ at position $i$ is the rank of this Schmidt decomposition.
\end{definition}

Note that a matrix product state with bond dimension $r$ has Schmidt rank $r$ across all bipartitions.  We will also need the following two operations:

\begin{definition}[Disentangling unitary]
\label{def:disentangle}
    Let $W$ be a subspace of $(\C^{d})^{\otimes \kappa}$ of dimension $d^{\kappa- 1}$.
    We say a unitary matrix $U: (\C^{d})^{\otimes \kappa} \to (\C^{d})^{\otimes \kappa}$ is a disentangling unitary for $W$ if for all $\ket{\phi} \in W$, we have that $U \ket{\phi} = \ket{0} \otimes \ket{\psi}$ for some state $\ket{\psi}$ on the remaining $\kappa - 1$ qudits, and for all $\ket{\phi}$ orthogonal to $W$, we have $\tr\left[(\proj{0} \otimes \id)U \proj{\phi} U^{\dagger}\right] = 0$.
\end{definition}

\begin{lemma}[\cite{iten2016quantum}]
\label{lem:disentangling_implementation}
    Given a classical description of a $d^{\kappa-1}$-dimensional subspace $W$, there is an implementation of the disentangling unitary in time $\poly(d^\kappa)$, when $d$ is a power of $2$.
\end{lemma}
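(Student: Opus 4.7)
Since this lemma is attributed to prior work, the proof will be a short reduction to known unitary synthesis results. The plan has two stages: first, classically compute an explicit $d^\kappa \times d^\kappa$ unitary matrix implementing the desired map; second, invoke a standard circuit synthesis result to compile that matrix into $\poly(d^\kappa)$ elementary gates.

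For the first stage, I would start from the given classical description of $W$ (say, a list of vectors spanning $W$ in the computational basis of $(\C^d)^{\otimes \kappa}$). Applying Gram--Schmidt produces an orthonormal basis $\ket{w_1}, \ldots, \ket{w_{d^{\kappa-1}}}$ of $W$, then extending and Gram--Schmidting again against any $d^\kappa$ basis vectors yields an orthonormal basis $\ket{w_{d^{\kappa-1}+1}}, \ldots, \ket{w_{d^\kappa}}$ of $W^\perp$. I then define $U$ by $U\ket{w_j} = \ket{0}\ket{j-1}$ for $j \in [d^{\kappa-1}]$ and $U\ket{w_{d^{\kappa-1}+k}} = \ket{k}\ket{\tau(k)}$ for any convenient bijection $\tau$ filling out the remaining $d^\kappa - d^{\kappa-1}$ basis vectors (e.g.\ lexicographic). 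This $U$ satisfies the two conditions of \cref{def:disentangle}: vectors in $W$ are sent into the $\ket{0}\otimes(\C^d)^{\otimes(\kappa-1)}$ subspace, and vectors orthogonal to $W$ are sent into its orthogonal complement, hence have zero probability of measuring $\ket{0}$ on the first qudit. All of this classical linear algebra runs in $\poly(d^\kappa)$ time.

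For the second stage, since $d$ is a power of $2$, each qudit can be identified with $\log_2 d$ qubits, and $U$ is a unitary on $\kappa \log_2 d$ qubits acting on a space of total dimension $d^\kappa$. Standard results on universal quantum circuit synthesis (e.g.\ the QR/Cosine--Sine decomposition into a sequence of two-level Givens rotations, each of which can be realized by $O(\kappa \log d)$ multi-controlled single-qubit gates) compile any explicitly given $N \times N$ unitary into $O(N^2)$ elementary gates in $\poly(N)$ classical time; this is precisely the content of the constructive synthesis algorithm of~\cite{iten2016quantum}. Applying this to $U$ with $N = d^\kappa$ gives a quantum circuit implementing $U$ using $\poly(d^\kappa)$ gates, produced in $\poly(d^\kappa)$ classical time.

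The only step where there is anything nontrivial to check is that the qudit-to-qubit identification is benign, but this is immediate since $d$ is a power of $2$ so no padding qudits are needed. There are no real obstacles beyond citing the right synthesis theorem; the statement is essentially a repackaging of generic unitary compilation applied to the concrete matrix $U$ described above.
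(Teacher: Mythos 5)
Your proposal is correct and matches what the paper intends: the paper gives no proof of its own, simply citing the isometry/unitary synthesis results of \cite{iten2016quantum}, and your argument (Gram--Schmidt to write down the $d^\kappa \times d^\kappa$ unitary explicitly, then generic circuit synthesis on $\kappa \log_2 d$ qubits with $\poly(d^\kappa)$ gates and classical time) is exactly the standard reduction that citation packages. The only check worth making explicit—that your $U$ sends $W^\perp$ into the span of basis states with first qudit $\neq \ket{0}$, so the second condition of \cref{def:disentangle} holds—is one you already handle.
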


We also need a result concerning the tomography of sub-normalized states in trace distance, inspired by the sub-normalized fidelity-squared tomography algorithm of \cite{fo24}.

\begin{lemma}[Sub-normalized tomography]
\label{lem:subnormalized_tomography}
    Let $\rho$ be a quantum state on $r$ registers of dimension $d$.  Let $\Pi = \proj{0^i} \otimes \id^{\otimes r - i}$ and $\mu = \tr[\Pi \rho]$, and $\delta, \epsilon > 0$.  Then there exists an algorithm that performs tomography of $\sigma = \Pi \rho \Pi$ to within trace distance error $\epsilon$ with failure probability $\delta$ using $O(\mu \cdot \frac{d^{3(r-i)}}{\epsilon^2} \log(1/\delta))$ copies of $\rho$ and in time $\poly(d^{r-i}, 1/\epsilon, \log(1/\delta))$.
\end{lemma}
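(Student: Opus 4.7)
The approach will be rejection sampling on the projective measurement $\{\Pi, I - \Pi\}$ followed by ordinary trace-distance tomography on the post-measurement state, combined with an accurate estimate of the acceptance probability $\mu$. Measuring $\Pi$ on a copy of $\rho$ succeeds with probability $\mu$; conditioned on success, the post-measurement state is $\proj{0^i} \otimes \sigma_{\text{norm}}$, where $\sigma_{\text{norm}}$ is a bona fide density matrix on the last $r-i$ registers (of dimension $D \coloneqq d^{r-i}$) satisfying $\sigma = \mu \cdot \proj{0^i} \otimes \sigma_{\text{norm}}$. The algorithm will thus produce separate estimates $\hat\mu$ of $\mu$ and $\hat\sigma_{\text{norm}}$ of $\sigma_{\text{norm}}$ and output $\hat\sigma \coloneqq \hat\mu \cdot \proj{0^i} \otimes \hat\sigma_{\text{norm}}$.

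Concretely, I would first use $O(\log(1/\delta)/\epsilon^2)$ copies of $\rho$ to estimate $\mu$ via Hoeffding's inequality to additive error $\epsilon/4$ with failure probability $\delta/3$. If the resulting $\hat\mu$ is at most $\epsilon/2$, then certainly $\mu \leq 3\epsilon/4$, and outputting the zero matrix already achieves $\norm{\sigma}_1 = \mu \leq \epsilon$. Otherwise, $\mu \geq \epsilon/4$, and I would continue to acquire $M = \Theta(\mu^2 D^3 \epsilon^{-2} \log(1/\delta))$ post-measurement copies of $\sigma_{\text{norm}}$ by repeatedly measuring $\Pi$ on fresh copies of $\rho$. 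Since $M = \Omega(\log(1/\delta)/\mu)$ whenever $\mu \geq \epsilon/4$, a negative-binomial concentration bound shows the number of trials needed is $O(M/\mu) = O(\mu D^3/\epsilon^2 \cdot \log(1/\delta))$ with failure probability $\delta/3$. Finally, I would run a computationally efficient tomography procedure on the $M$ post-measurement copies to produce $\hat\sigma_{\text{norm}}$ with $\norm{\hat\sigma_{\text{norm}} - \sigma_{\text{norm}}}_1 \leq \epsilon/(2\mu)$ with failure probability $\delta/3$; any algorithm achieving Frobenius accuracy $\eps_F$ in $O(D^2/\eps_F^2 \cdot \log(1/\delta))$ samples (e.g., the random-Clifford shadow-based procedure used in \cref{lem:subspace-tomography}) suffices after converting via $\norm{\cdot}_1 \leq \sqrt{D}\norm{\cdot}_F$, yielding the claimed sample count $M = O(D^3/(\epsilon/\mu)^2 \cdot \log(1/\delta))$ and runtime $\poly(D, 1/\epsilon, \log(1/\delta))$. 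Triangle inequality and a union bound then give
\[
\norm{\hat\sigma - \sigma}_1 \leq \abs{\hat\mu - \mu}\,\norm{\hat\sigma_{\text{norm}}}_1 + \mu\,\norm{\hat\sigma_{\text{norm}} - \sigma_{\text{norm}}}_1 \leq \tfrac{\epsilon}{4} + \tfrac{\epsilon}{2} < \epsilon
\]
with total failure probability at most $\delta$.

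The main accounting subtlety, and the reason the sample complexity scales with $\mu$ rather than $1/\mu$, is that the multiplicative factor of $\mu$ in the assembled estimator amplifies any error in $\hat\sigma_{\text{norm}}$; tomographing the normalized post-measurement state must therefore be done to accuracy $\epsilon/\mu$, which requires $\mu^2 D^3/\epsilon^2$ successful post-measurement samples, while each such sample costs $1/\mu$ copies of $\rho$ in expectation. These two factors of $\mu$ combine to give the net $\mu \cdot D^3/\epsilon^2$ dependence. The preliminary branching on whether $\hat\mu \leq \epsilon/2$ is essential to prevent the negative-binomial overhead and lower-order Chernoff terms (which scale as $1/\mu^2$) from dominating when $\mu$ is very small; once $\mu \geq \epsilon/4$ is enforced, those terms are absorbed into the main bound.
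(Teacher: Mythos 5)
Your approach is essentially the same as the paper's: rejection-sample on the projective measurement $\{\Pi, I-\Pi\}$, run trace-distance tomography on the post-selected state to precision $\epsilon/(2\mu)$, and rescale by an estimate of $\mu$. The paper's version starts from $m = O(\mu d^{3(r-i)}\epsilon^{-2}\log(1/\delta))$ copies (implicitly taking $\mu$ as known), measures $\Pi$ on all of them, and reuses the empirical success fraction $m'/m$ as the scaling estimate of $\mu$, so the two roles you split into separate phases are handled in one pass. Both then combine the tomography error and the $|\,\hat\mu-\mu\,|$ error by a triangle inequality in exactly the way you do. So the substance is correct.

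Two small points worth noting. First, your separate $\mu$-estimation phase costs $O(\log(1/\delta)/\epsilon^2)$ copies unconditionally. When $\mu < d^{-3(r-i)}$ this exceeds the claimed budget $O(\mu\, d^{3(r-i)}\epsilon^{-2}\log(1/\delta))$, so strictly speaking your version proves a slightly weaker bound, $O\bigl((\mu\,d^{3(r-i)} + 1)\epsilon^{-2}\log(1/\delta)\bigr)$. This is harmless for the paper's downstream application (which just bounds $\mu \le 1$), but the paper avoids the issue by fixing $m$ up front in terms of $\mu$ and never needing an additive $\mu$-independent estimation cost. Second, your claim that $M = \Omega(\log(1/\delta)/\mu)$ whenever $\mu \ge \epsilon/4$ is not quite right; it requires roughly $\epsilon\, d^{3(r-i)} \gtrsim 1$. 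What the negative-binomial (Chernoff) concentration actually needs is only $M = \Omega(\log(1/\delta))$, and that does follow from $\mu \ge \epsilon/4$ and $d \ge 2$, so the conclusion you draw still stands — just rephrase the justification.
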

\begin{proof}
    Similar to \cite{fo24}, the algorithm first takes $m = O(\mu \cdot \frac{d^{3(r-i)}}{\epsilon^2} \log(1/\delta))$ copies of $\rho$, and measures the PVM $\{\Pi, \id - \Pi\}$ on all of them.  It keeps all of the copies that had measurement outcome $\Pi$, and is thus left with $(\rho|_{\Pi})^{\otimes m'}$, where $\rho|_{\Pi} = \Pi \rho \Pi / \mu$ and $m' \sim \mathsf{Binomial}(m, \mu)$.  Then the algorithm then traces out the first $i$ registers (which are in the state $\proj{0^i}$), and performs tomography on the remaining $m'$ states with error $\epsilon / 2\mu$ and failure probability $\delta / 2$.  to produce an estimate $\hat{\rho}|_{\Pi}$.  Since we want a (runtime) efficient algorithm, we use the random Clifford tomography technique described by \cite{KUENG201788}, which uses sample-complexity $O((d')^3 / (\epsilon')^2 \log(1/\delta'))$ (where $d'$ is the dimension of the input to this subroutine, $\epsilon'$ is the error parameter given to this subroutine and $\delta'$ is the failure probability of this subroutine), and runs in polynomial time in all of those parameters.  Substituting $d' = d^{r-i}$ and $\epsilon' = \epsilon / \mu$, this step requires  $m' = O(\mu^2 (d^{r-i})^3 / \epsilon^2)$ copies of the post-selected state. Finally, the algorithm outputs $\hat{\sigma} = (m' / m)\hat{\rho}|_{\Pi}$.

    In order to show that the algorithm produces an estimate with the correct trace norm bound, we can write the following:
    \begin{align*}
        \norm{\sigma - \hat{\sigma}}_1 &= \norm{\mu \rho|_{\Pi} - (m' / m)\hat{\rho}|_{\Pi}}_1\\
        &= \mu \norm{\rho|_{\Pi} - (m' / \mu m) \hat{\rho}_{\Pi}}_1\,.
    \end{align*}
    We can apply the triangle inequality and linearity of expectation to get the following:
    \begin{align*}
        \norm{\rho|_{\Pi} - (m' / \mu m) \hat{\rho}_{\Pi}}_1 &= \norm{\rho|_{\Pi} - \hat{\rho}|_{\Pi}}_1 + \norm{(1 - (m' / (\mu m))) \hat{\rho}_{\Pi}}_1\,.
    \end{align*}
    We want to bound the probability that this value is greater than $\epsilon$.  From the guarantee of the tomography algorithm, the first term is bounded by $\epsilon / 2\mu$ except with probability $\delta / 2$.  From Hoeffding's inequality, as long as $m \geq 2\log(2/\delta) \mu / \epsilon$, the second term is also bounded by $\epsilon / (2\mu)$ with except with probability $\delta / 2$.  Plugging these back into the difference in norm of $\sigma$ and applying a union bound, we get that the trace-norm error is at most $\epsilon$ except with probability $\delta$, as desired.  Since we needed $O(\mu^2 \frac{d^{3(r-i)}}{\epsilon^2} \log(1/\delta))$ copies of the state to perform the tomography, and in expectation $\mu m$ copies survive the measurement of $\Pi$, we need to start with $O(\mu \cdot \frac{d^3}{\epsilon^2} \log(1/\delta))$ copies of $\rho$, as desired. 
\end{proof}

Note that in the remainder of this section, $r-i$ will be roughly $\log_d(\poly)$, where $\poly$ is some polynomial in all parameters.  Thus, the scaling in $d^{r-i}$ will be polynomial in the inputs. 

Our overall algorithm proceeds now as follows, quite similarly to~\cite{cpfsgb10}.
Let $\rho$ be the overall, unknown state.
Let $\tau = \tfrac{\eps^2}{9 n^2 r^4}$, and let $\kappa = \lceil \log_d (1 / \tau) \rceil + 1$.
We will produce a sequence of disentangling unitaries $U_0, U_1, \ldots, U_{n - \kappa + 1}$, where the $j$-th unitary will act only on the $(j)$-th through $(j + \kappa)$-th sites.\footnote{Here we slightly abuse notation and also let $U_i$ denote the extension of the disentangling unitary to the entire space which acts on the identity outside of the aforementioned sites.}
We will also maintain a sequence of intermediate unnormalized states $\rho'_0, \ldots, \rho'_{n-\kappa + 1}$ which we can efficiently prepare given $\rho$ and the $U_i$.
Each state $\rho'_i$ will act on the last $n - i$ qudits.
Initially set $U_0 = I$ and $\rho'_0 = \rho_0 = \rho$.

Then, for all $i \geq 1$, given $\rho'_{i - 1}$, form $\sigma_{i} = \tr_{\geq \kappa} (\rho'_{i - 1})$, and perform state tomography to obtain the classical description of a state $\widehat{\sigma}_i$ satisfying 
\begin{equation}
\label{eq:mps-tomography}
\norm{\widehat{\sigma}_i - \sigma_i}_1 \leq \tau
\end{equation}
with probability $1 - \delta / n$.
Let $W_i$ denote the subspace spanned by the singular values of $\widehat{\sigma}_i$ that exceed $\tau$.
Note that since $\widehat{\sigma}$ has trace $1$, the subspace $W$ has dimension at most $1 / \tau < d^{\kappa  - 1}$.
Extend this subspace arbitrarily to have dimension $d^{\kappa - 1}$, and let $U_i$ be a disentangling unitary for this subspace.
Finally, let $\rho_i'$ be the result of projecting the first qudit of $U_i \rho'_{i - 1} U_i^\dagger$ onto $\proj{0}$.  

After we have produced these disentangling unitaries $U_1, \ldots, U_{n - \kappa + 1}$, form $\rho'_{n - \kappa + 1}$ and $\widehat{\sigma}_{n - \kappa + 1}$ as before, and let $\ket{\psi}$ denote the top eigenvector of $\widehat{\sigma}_{n - \kappa + 1}$.
Our final estimate of $\rho$ is the state $\ket{\hat{\phi}}$ given by
\[
\ket{\hat{\phi}} = (U_1 \ldots U_{n - \kappa + 1}) \left( \ket{0^{n - \kappa}} \otimes \ket{\psi} \right) \; .
\]
The formal pseudocode for this algorithm is given in \cref{alg:mps_learning}.

\begin{lemma}
    $\ket{\hat{\phi}}$ is a matrix product state with bond dimension $d n^2 \poly(r, 1/\epsilon)$. 
\end{lemma}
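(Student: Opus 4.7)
The plan is to bound the Schmidt rank of $\ket{\hat\phi}$ across every bipartite cut by $dn^2 \poly(r, 1/\epsilon)$; this bound on Schmidt rank across all cuts immediately implies the existence of an MPS representation with bond dimension equal to the maximum such Schmidt rank, by the standard left-to-right singular value decomposition construction.

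Fix a cut position $i \in \{1,\ldots,n-1\}$ separating qudits $1,\ldots,i$ from qudits $i+1,\ldots,n$. Each unitary $U_j$ is supported on qudits $j, j+1, \ldots, j+\kappa-1$. Partition the unitaries into three groups by their support relative to the cut: $V_L$ is the product of those with $j+\kappa-1 \leq i$ (acting entirely left of the cut), $V_R$ is the product of those with $j \geq i+1$ (acting entirely right), and $W$ is the product of those with $i-\kappa+2 \leq j \leq i$ (straddling the cut). There are at most $\kappa-1$ straddling unitaries, whose combined support is the $2(\kappa-1)$ qudits in the interval $[i-\kappa+2, i+\kappa-1]$, with $\kappa-1$ qudits on each side of the cut. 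Since $V_L \otimes V_R$ is a local unitary across the cut, it leaves Schmidt rank invariant, so it suffices to bound the Schmidt rank of $W(\ket{0^{n-\kappa}} \otimes \ket{\psi})$ at position $i$.

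Split the qudits into four regions $L_1 = [1, i-\kappa+1]$, $L_2 = [i-\kappa+2, i]$, $R_1 = [i+1, i+\kappa-1]$, $R_2 = [i+\kappa, n]$. The key observation is that $L_1$ is always contained in the $\ket{0^{n-\kappa}}$ prefix, because $i - \kappa + 1 \leq n - \kappa$ for every valid $i$ (i.e., $i \leq n-1$). So $\ket{\psi_0} \coloneqq \ket{0^{n-\kappa}} \otimes \ket{\psi}$ factors as $\ket{0^{i-\kappa+1}}_{L_1} \otimes \ket{\chi}_{L_2 R_1 R_2}$ for some state $\ket{\chi}$, and since $W$ acts trivially on $L_1$ we get
\[
W\ket{\psi_0} = \ket{0^{i-\kappa+1}}_{L_1} \otimes (W\ket{\chi})_{L_2 R_1 R_2}.
\]
Thus the Schmidt rank at the $L | R$ cut equals the Schmidt rank of $W\ket{\chi}$ at the $L_2 | R_1 R_2$ cut, which is trivially bounded by $\dim(L_2) = d^{\kappa - 1}$.

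Finally, substituting $\tau = \epsilon^2/(9n^2 r^4)$ and $\kappa = \lceil \log_d(1/\tau) \rceil + 1$ gives $d^{\kappa - 1} \leq d \cdot d^{\log_d(1/\tau)} = d/\tau = 9dn^2 r^4/\epsilon^2$, matching the claimed bond dimension $dn^2\poly(r,1/\epsilon)$. The only nontrivial step in this plan is the factorization observation in the third paragraph: it crucially exploits that the $L_1$ region remains in the all-zero prefix for every choice of cut $i$, which is where the bound $\kappa - 1$ (rather than, say, $2(\kappa-1)$ arising from the operator Schmidt rank of $W$) comes from. Everything else reduces to bookkeeping and substitution.
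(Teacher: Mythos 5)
Your approach is genuinely different from the paper's and in some ways cleaner. The paper proves the lemma constructively: it writes out the MPS tensors $A_i$ explicitly in terms of the entries of the unitaries $U_j$ and of $\ket{\psi}$, and reads off the bond dimension $d^{\kappa-1}$ from the size of the contracted indices. You instead bound the Schmidt rank of $\ket{\hat\phi}$ across every cut and invoke the standard SVD construction. Both give the same bound, but the paper's route has the side benefit of producing an explicit tensor-network description (which the statement of \cref{thm:mps} requires the algorithm to output), whereas your argument establishes existence and would need one more (routine) step to extract the tensors. Conceptually, yours is the argument one would remember; the paper's is the argument one would implement.

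There is one genuine slip in the middle of your argument. You write that since $V_L \otimes V_R$ is local to the cut, ``it suffices to bound the Schmidt rank of $W(\ket{0^{n-\kappa}}\otimes\ket\psi)$.'' But the operator actually applied is $V_L\, W\, V_R$, with $W$ sandwiched \emph{between} $V_L$ and $V_R$, and $W$ does not commute with $V_R$ (their supports overlap on $R_1$). So you cannot pull $V_R$ out and drop it: the correct reduction is that the Schmidt rank of $V_L W V_R\ket{\psi_0}$ equals that of $W V_R \ket{\psi_0}$, because only the outermost local unitary $V_L$ can be stripped off for free. The reason your proof nevertheless survives is that the very next paragraph's factorization argument applies verbatim to $W V_R\ket{\psi_0}$: since $V_R$ is supported on $R_1 \cup R_2$ and $W$ on $L_2 \cup R_1$, neither touches $L_1$, so
\[
    W V_R \ket{\psi_0} = \ket{0^{i-\kappa+1}}_{L_1} \otimes \bigl(W V_R \ket{\chi}\bigr)_{L_2 R_1 R_2},
\]
and the Schmidt rank at the $L|R$ cut is still at most $\dim(L_2) = d^{\kappa-1}$. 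So this is a misstatement rather than a missing idea, but as written the sentence asserts something false, and you should reroute it through ``Schmidt rank of $V_L W V_R \ket{\psi_0}$ equals Schmidt rank of $W V_R \ket{\psi_0}$'' before invoking the $L_1$-factorization. With that fix, the argument is complete and the final substitution $d^{\kappa-1} \le d/\tau = 9dn^2r^4/\epsilon^2$ matches the paper.
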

\begin{proof}
    We can proceed by writing out the entries of the state $\ket{\hat{\phi}}$.  Letting $\kappa$ be the number of qudits each unitary (and $\ket{\psi}$ acts on), we can write $\ket{\psi} = \sum_{t_{n-\kappa+1}, \ldots, t_{n-1}, s_{n}} \alpha_{t_{n-\kappa+1}, \ldots, t_{n-1}, s_{n}} \ket{t_{n-\kappa+1}, \ldots, t_{n-1}, s_{n}}$. We can also write the unitary $U_{n-\kappa +1}$ as follows
    \begin{equation*}
        U_{n - \kappa +1} = \sum_{\substack{s_{n-\kappa}, \ldots s_{n-1} \\ t_{n-\kappa}, \ldots, t_{n-1}}} \left(U_{n-\kappa+1}\right)^{(s_{n-\kappa}, \ldots, s_{n-1})}_{(t_{n-\kappa}, \ldots, t_{n-1})} \ket{s_{n-\kappa}, \ldots, s_{n-1}}\!\!\bra{t_{n-\kappa}, \ldots, t_{n-1}}\,.
    \end{equation*}
    Writing the product of these, we have the following state after applying a single unitary
    \begin{equation*}
        U_{n-\kappa+1} \left(\ket{0}\ket{\psi}\right) = \sum_{s_{n-\kappa}, \ldots, s_{n}} \sum_{t_{n-\kappa - 1}, \ldots, t_{n-1}} \left(\left(U_{n-\kappa +1}\right)^{(s_{n-\kappa}, \ldots, s_{n-1})}_{(0, t_{n-\kappa+1}, \ldots, t_{n-1})}\alpha_{t_{n-\kappa + 1}, \ldots, t_{n-1}, s_n}\right)\ket{s_{n-\kappa}, \ldots, s_{n}}
    \end{equation*}
    Then, writing down the $3$-tensor with entries 
    \begin{align*}
        \left(A_1^{(s_n)}\right)_{1, (t_{n-\kappa}, \ldots, t_{n-1})} &= \alpha_{t_{n-\kappa +1}, \ldots, t_{n-1}, s_n}\\
        \left(A_{2}^{(s_{n-\kappa}, \ldots, s_{n-1})}\right)_{(t_{n-\kappa+1}, \ldots, t_{n-1})} &= \left(U_{n-\kappa +1}\right)^{(s_{n-\kappa}, \ldots, s_{n-1})}_{(0, t_{n-\kappa+1}, \ldots, t_{n-1})}\,,
        \end{align*}
        we find that this state is exactly a matrix product state, where the bond dimension is equal to the number of entries in the inner sum, or $d^{\kappa - 1} \leq \frac{9d r^4n^2}{\epsilon^2}$ (note that taking the ceiling in the definition of $\kappa$ causes us to incur an additional factor of $d$ here).  Also note that the two tensors are indexed by disjoint registers, as desired.  Applying this idea recursively to get the entries of every tensor $A_{i}$, we find that we can express every tensor for, $i \geq 2$, as follows
        \begin{equation*}
            \left(A_{i}^{(s_{n+1-i})}\right)^{(a_{n-\kappa+2-i}, \ldots, a_{n-i})}_{(b_{n-\kappa+3-i}, \ldots, b_{n+1-i})} = \left(U_{n-\kappa+3-i}\right)^{(a_{n-\kappa+2-i}, \ldots, a_{n-i}, s_{n+1-i})}_{(0, b_{n-\kappa+3-i}, \ldots, b_{n+1-i})}\,.
        \end{equation*}
        Writing out the matrix product state that results from these tensors, we will get $\ket{\hat{\phi}}$, and this is a matrix product state with open boundary condition, with bond dimension $d n^2 \poly(r, 1/\epsilon)$ since the dimensions of both the $a$'s and $b$'s is $d^{\kappa - 1}$.  Note that we can reverse the order of the $s_i$ to write this in the canonical form from \Cref{def:mps}.  This completes the proof that we have produced a matrix product state with the desired bond dimension.
\end{proof}

\begin{longfbox}[breakable=false, padding=1em, margin-top=1em, margin-bottom=1em]
\begin{algorithm}[Agnostic learning of matrix product states]
\label{alg:mps_learning}\hfill
\begin{description}
    \item[Input:] Copies of an unknown quantum state $\rho \in \reg{R}_1 \otimes \ldots \otimes \reg{R}_n$ where each $\reg{R}_i$ is dimension $d$, such that there exists matrix product state with bond dimension at most $r$ with fidelity $\eta$ with $\rho$, and error parameter $\epsilon$ and failure probability $\delta$.  
    \item[Output:] A description of a MPS
    \item[Procedure:] \mbox{}
    \begin{algorithmic}[1]
        \State Let $\tau = \frac{\epsilon^2}{9n^2r^4}$;
        \State Let $\kappa = \lceil \log_d(1/\tau)\rceil + 1$;
        \State Let $\rho'_0 = \rho$;
        \For{$i$ from $1$ to $n-\kappa$}
            \State Let $\sigma_i = \tr_{\geq \kappa}\left(\rho_{i-1}'\right)$;
            \State Let $\hat{\sigma}$ be the output of $\tomography$ with error $\tau$ and failure probability $\delta / n$ on $O\left(\frac{d^3}{\tau^5} \log(n/\delta)\right)$ copies of $\sigma_i$;
            \State Let $U_{i}$ be the disentangling unitary for the extension of the $\geq \tau$ subspace of $\hat{\sigma}_i$;
            \State Apply $U_{i}$ to all copies of $\rho'_{i-1}$ and project onto $\proj{0}$ to get copies of
            \begin{equation*}
                \rho'_i = \tr_{\reg{R}_{i-1}}\left((\proj{0} \otimes \id) U_{i} \rho'_{i-1} U_{i}^{\dagger}(\proj{0} \otimes \id)\right);
                \vspace{1em}
            \end{equation*}
        \EndFor
        \State Let $\rho^*$ be the remaining state on the final $\kappa$ qubits after applying all of the disentangling unitaries and projecting onto $\ket{0^{n-\kappa}}$:
        \begin{equation*}
            \rho^* = \Tr_{\reg{R}_{< n - \kappa}}((\proj{0^{n-\kappa}} \otimes \id) U_{n - \kappa}\ldots U_{1} \rho U_{1}^{\dagger}\ldots U_{n - \kappa}^{\dagger} (\proj{0^{n-\kappa}} \otimes \id));
            \vspace{1em}
        \end{equation*}
        \State Let $\hat{\rho}^*$ be the output of $\tomography$ on with error parameter $\tau$ and failure probability $\delta / n$ on $O(\frac{d^3}{\tau^5} \log(n/\delta))$ copies of $\rho^*$.  
        \State Let $\ket{\psi}$ be the top eigenvalue of $\hat{\rho}^*$;
        \State \Return the MPS $U_{1}^{\dagger}\ldots U_{n - \kappa}^{\dagger} (\ket{0^{n - \kappa}} \otimes \ket{\psi})$.
    \end{algorithmic}
\end{description}
\end{algorithm}
\end{longfbox}

We now proceed with the proof of correctness for this algorithm. 

\begin{lemma}
\label{lem:spectral-bound}
    Let $\rho, \sigma$ be unnormalized mixed states satisfying $\norm{\rho - \sigma}_1 \leq \eta$.
    Let $W$ be the span of all eigenvectors of $\sigma$ with eigenvalues exceeding $\eta$.
    Then, letting $\Pi_W$ denote orthogonal projection onto $W$, we have that $\norm{(I - \Pi_W)^\dagger \rho (I - \Pi_W)}_\infty \leq 2 \eta$.
\end{lemma}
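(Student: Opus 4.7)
The plan is to decompose $(I-\Pi_W)\rho(I-\Pi_W)$ additively using $\sigma$ as a reference point, then bound each piece by elementary properties of Schatten norms. Concretely, write
\[
(I-\Pi_W)\rho(I-\Pi_W) = (I-\Pi_W)\sigma(I-\Pi_W) + (I-\Pi_W)(\rho-\sigma)(I-\Pi_W),
\]
and apply the triangle inequality in operator norm to each summand separately. Note $\Pi_W$ is an orthogonal projection, so $(I-\Pi_W)^\dagger = I-\Pi_W$.

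For the first summand, I would use that $\sigma$ is Hermitian (as an unnormalized mixed state) and hence diagonalizable in an orthonormal eigenbasis, and that $W$ is by definition spanned by those eigenvectors whose eigenvalues exceed $\eta$. Since $W$ is a span of $\sigma$-eigenvectors, the orthogonal complement of $W$ is also spanned by $\sigma$-eigenvectors, all of whose eigenvalues lie in $[0,\eta]$ (or more generally $(-\infty,\eta]$ if we do not assume PSD). Hence $(I-\Pi_W)\sigma(I-\Pi_W)$ is $\sigma$ restricted to a subspace on which all eigenvalues are $\le \eta$, and its operator norm is therefore at most $\eta$.

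For the second summand, the key facts are that projections do not increase the operator norm and that operator norm is dominated by trace norm: for any Hermitian $X$, $\|(I-\Pi_W) X (I-\Pi_W)\|_\infty \le \|X\|_\infty \le \|X\|_1$. Applying this with $X = \rho-\sigma$ gives a bound of $\|\rho-\sigma\|_1 \le \eta$ by hypothesis.

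Summing the two bounds via the triangle inequality yields $\|(I-\Pi_W)\rho(I-\Pi_W)\|_\infty \le 2\eta$, as claimed. The only subtlety worth double-checking is the characterization of $(I-\Pi_W)\sigma(I-\Pi_W)$; this is straightforward provided we know $\sigma$ is Hermitian, which is built into the definition of a (possibly unnormalized) mixed state. No estimate here is expected to be tight or delicate, so there is no serious obstacle — the proof is essentially two lines once the decomposition is written down.
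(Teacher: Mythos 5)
Your proof is correct and uses essentially the same ingredients as the paper's: that $\sigma$ has operator norm at most $\eta$ on $W^\perp$, and that $\|\rho-\sigma\|_\infty \le \|\rho-\sigma\|_1 \le \eta$. The paper phrases it as a proof by contradiction (supposing a unit vector $\ket{\phi}\perp W$ with $\braket{\phi|\rho|\phi}>2\eta$ and deriving $\|\rho-\sigma\|_1 > \eta$ by testing against $\ketbra{\phi}{\phi}$), while you give the equivalent direct argument via the decomposition $(I-\Pi_W)\rho(I-\Pi_W) = (I-\Pi_W)\sigma(I-\Pi_W) + (I-\Pi_W)(\rho-\sigma)(I-\Pi_W)$ and the triangle inequality; the two are just contrapositive repackagings of the same estimate.
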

\begin{proof}
Suppose not, i.e. there exists some state $\ket{\phi}$ so that $\ket{\phi}$ is orthogonal to $W$ and so that $\braket{\phi | \rho | \phi} > 2 \eta$.
But then by duality, we have
\[
\norm{\rho - \sigma}_1 \geq \tr ((\rho - \sigma) \ketbra{\phi}{\phi}) > \eta \; ,
\]
which is a contradiction.
\end{proof}

We will prove the correctness of this algorithm inductively.
For all $i = 1, \ldots, n - \kappa + 1$, define the matrices
\begin{equation}
\label{eq:induction}
E_i = U_i \cdot \ldots \cdot U_1 \; \mbox{, and } \; \rho_i = E_i^\dagger \left( \proj{0^i} \otimes \rho'_i \right) E_i \; .
\end{equation}
Note that by construction, $U_i$ only touches qudits $i$ through $i + \kappa$.
As an immediate consequence of this, $E_i$ acts nontrivially only on the first $i + \kappa$ qudits.

We will say that a call to $\tomography$ \emph{succeeds} if it outputs a $\widehat{\sigma}_i$ satisfying Equation~\eqref{eq:mps-tomography}.
By a union bound, since we do at most $n$ calls to $\tomography$, all calls succeed simultaneously with probability at least $1 - \delta$.
For the rest of the section, condition on the event that this occurs.
We first observe the following:
\begin{lemma}
\label{lem:mps-psd-ordering}
    For all $i = 1, \ldots, n - \kappa + 1$, we have that $\rho_{i} \preceq \rho_{i - 1}$.
\end{lemma}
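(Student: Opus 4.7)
The plan is to unfold $\rho_i$ recursively to express it as a nested projection of $\rho$, from which the PSD ordering should follow from containment of the projector ranges. The key quantity to introduce is the projector $Q^{(j)} \coloneqq E_j^\dagger \Pi^{(j)} E_j$, where $\Pi^{(j)} \coloneqq \proj{0^j} \otimes I$ is the projector onto the first $j$ qudits being in the state $\ket{0^j}$.

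First, I would prove by induction on $j$ that $\rho_j = Q^{(j)} \rho Q^{(j)}$. The base case $j=0$ is immediate since $E_0 = I$ and $\Pi^{(0)} = I$. For the inductive step, starting from the recursive definition $\proj{0}_{R_j} \otimes \rho'_j = (\proj{0}_{R_j} \otimes I)\, U_j\, \rho'_{j-1}\, U_j^\dagger\, (\proj{0}_{R_j} \otimes I)$ and using that $U_j$ acts trivially on qudits $1,\ldots,j-1$ (hence commutes with $\proj{0^{j-1}} \otimes I$), I would derive
\[
\proj{0^j} \otimes \rho'_j \;=\; \Pi^{(j)}\, U_j\, (\proj{0^{j-1}} \otimes \rho'_{j-1})\, U_j^\dagger\, \Pi^{(j)},
\]
after noting that $\Pi^{(j)} \preceq \Pi^{(j-1)}$ so $\Pi^{(j)} \Pi^{(j-1)} = \Pi^{(j)}$. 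Substituting the inductive hypothesis via $E_j = U_j E_{j-1}$ collapses the expression to $Q^{(j)} \rho Q^{(j)}$.

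Second, I would show that the projectors are nested, $Q^{(i)} \preceq Q^{(i-1)}$. Using the commutation of $U_i$ with operations on qudits $<i$, I get $U_i^\dagger \Pi^{(i)} U_i = \proj{0^{i-1}} \otimes U_i^\dagger(\proj{0}_{R_i} \otimes I) U_i \preceq \proj{0^{i-1}} \otimes I = \Pi^{(i-1)}$. Conjugating by $E_{i-1}^\dagger$ preserves the ordering. Because both $Q^{(i)}$ and $Q^{(i-1)}$ are projectors, $Q^{(i)} \preceq Q^{(i-1)}$ gives $Q^{(i-1)} Q^{(i)} = Q^{(i)} Q^{(i-1)} = Q^{(i)}$, and therefore
\[
\rho_i \;=\; Q^{(i)} \rho Q^{(i)} \;=\; Q^{(i)} Q^{(i-1)} \rho Q^{(i-1)} Q^{(i)} \;=\; Q^{(i)} \rho_{i-1} Q^{(i)}.
\]

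The main obstacle is the final step: deducing $\rho_i \preceq \rho_{i-1}$ from $\rho_i = Q^{(i)} \rho_{i-1} Q^{(i)}$ with $Q^{(i)} \preceq Q^{(i-1)}$. In general, $P \sigma P \preceq \sigma$ fails for a projector $P$ and PSD $\sigma$, because the difference $\sigma - P\sigma P$ picks up off-diagonal blocks in the decomposition $(\mathrm{range}\, P) \oplus (\mathrm{range}\,(I-P))$. The block-diagonality we need must come from the specific structure of the disentangling unitary: in the $E_{i-1}$ basis, $\rho_{i-1}$ takes the form $\proj{0^{i-1}} \otimes \rho'_{i-1}$, and the ``off-diagonal piece'' we must kill is that of $\rho'_{i-1}$ between the disentangled subspace $\tilde W_i$ and its orthogonal complement on qudits $i,\ldots, i+\kappa-1$. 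I expect this to be argued by using that $U_i$ is constructed from the top-eigenvalue subspace of $\widehat\sigma_i$, which (via the tomography success event $\norm{\widehat\sigma_i - \sigma_i}_1 \le \tau$) captures the large-eigenvalue structure of the marginal $\sigma_i$ of $\rho'_{i-1}$. Closing this argument cleanly --- whether by a direct block calculation on $\rho'_{i-1}$ in the $\tilde W_i \oplus \tilde W_i^\perp$ decomposition, or by recognizing $\rho_{i-1} - \rho_i$ as the image of a completely positive map applied to $\rho_{i-1}$ --- is the crux of the proof.
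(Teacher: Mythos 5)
Your concern about the final step is well-founded, and it points to a real gap in the paper's own argument. The paper's proof of this lemma is the one-liner ``by properties of post-selection, $\ketbra{0}{0} \otimes \rho_i' \preceq U_{i}\rho_{i - 1}'U^{\dagger}_{i}$,'' after which ``the claim follows from unraveling the definitions.'' Your first three steps carry out exactly this unraveling, correctly. But the quoted inequality, once you note that $\ketbra{0}{0}\otimes\rho'_i = P\,(U_i\rho'_{i-1}U_i^\dagger)\,P$ with $P = \ketbra{0}{0}\otimes I$, is precisely the assertion $P\sigma P \preceq \sigma$ that you identified as the obstruction, and it fails for a general PSD $\sigma$: writing $\sigma$ in block form with respect to $P$, one has $\sigma - P\sigma P \succeq 0$ if and only if the off-diagonal block vanishes, i.e.\ $\sigma$ commutes with $P$. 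Nothing here forces $U_i\rho'_{i-1}U_i^\dagger$ to be exactly block-diagonal with respect to $\ketbra{0}{0}\otimes I$ --- the disentangling unitary is built from a tomographic estimate of a \emph{marginal} of $\rho'_{i-1}$, so at best the block structure is approximate --- and indeed there is no hidden structural fact that will close the step as stated. A concrete counterexample: with $U_i = I$ and $\rho'_{i-1} = \ketbra{++}{++}$, one has $\rho'_{i-1} - \ketbra{0}{0}\otimes\rho'_i = \tfrac12\bigl(\ketbra{1}{1} + \ketbra{0}{1} + \ketbra{1}{0}\bigr)\otimes\ketbra{+}{+}$, and the $2\times 2$ factor $\ketbra{1}{1} + \ketbra{0}{1} + \ketbra{1}{0}$ has a negative eigenvalue.

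The good news is that your analysis also shows the lemma's only application (in the proof of \cref{thm:mps}) needs a much weaker statement that \emph{is} true. There the lemma is invoked only to deduce $\braket{\hat{\phi} | \rho_0 | \hat{\phi}} \geq \braket{\hat{\phi} | \rho_{n - \kappa + 1} | \hat{\phi}}$ for the specific output vector $\ket{\hat{\phi}}$, which by construction lies in the range of $Q^{(n-\kappa+1)}$. Your nesting $Q^{(j)}Q^{(n-\kappa+1)} = Q^{(n-\kappa+1)}$ gives $Q^{(j)}\ket{\hat\phi} = \ket{\hat\phi}$ for all $j$, and then your identity $\rho_j = Q^{(j)}\rho\,Q^{(j)}$ yields $\braket{\hat\phi|\rho_j|\hat\phi} = \braket{\hat\phi|\rho|\hat\phi}$ for every $j$ --- an equality, not merely an inequality. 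So the right move is not to chase the operator ordering at all: replace \cref{lem:mps-psd-ordering} with your identity $\rho_j = Q^{(j)}\rho\,Q^{(j)}$ together with the nestedness of the $Q^{(j)}$, and use that directly in the final chain.
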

\begin{proof}
By properties of post-selection, we have that $\ketbra{0}{0} \otimes \rho_i' \preceq U_{i}\rho_{i - 1}'U^{\dagger}_{i}$.
The claim then follows from unraveling the definitions.
\end{proof}

Our main claim is the following:
\begin{lemma}
\label{lem:mps-induction}
    Fix $i \in \{1, \ldots, n - \kappa + 1\}$.
    Let $E_i$ and $\rho_i$ be defined as above, and suppose that every step of $\tomography$ succeeds.
    Suppose that $\ket{\phi}$ has Schmidt rank at most $r$ at position $i + \kappa$.
    Then,
    \[
    \left| \braket{\phi | \rho_{i - 1} | \phi} - \braket{\phi | \rho_i | \phi} \right| \leq \frac{\eps}{2 n} \; .
    \]
\end{lemma}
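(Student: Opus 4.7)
The plan is to reduce the claim to a post-selection and subspace analysis. First I would unravel the recursive definitions in \eqref{eq:induction} to rewrite both fidelities in terms of the sub-normalized post-selected state $\ket{\tilde\phi_0} \coloneqq (\bra{0^{i-1}}\otimes I)\,E_{i-1}\ket{\phi}$ on qudits $i,\ldots,n$. Letting $P \coloneqq U_i^\dagger(\ketbra{0}{0}\otimes I)U_i$ denote the projector onto the extended subspace $\tilde W_i \supseteq W_i$ that $U_i$ disentangles, a direct computation gives
\[
\braket{\phi|\rho_{i-1}|\phi} = \braket{\tilde\phi_0|\rho'_{i-1}|\tilde\phi_0}, \qquad \braket{\phi|\rho_i|\phi} = \braket{\tilde\phi_0|P\rho'_{i-1}P|\tilde\phi_0}.
\]
Since $E_{i-1}$ acts only on the first $i+\kappa-1$ qudits and $\ket{\phi}$ has Schmidt rank at most $r$ at position $i+\kappa$, the vector $\ket{\tilde\phi_0}$ has Schmidt rank at most $r$ across the bipartition $A \coloneqq \{i,\ldots,i+\kappa\}$ versus $B \coloneqq \{i+\kappa+1,\ldots,n\}$, which is precisely the bipartition on which $P$ acts nontrivially.

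Next I would leverage \cref{lem:spectral-bound} applied to the tomography guarantee $\norm{\sigma_i - \widehat\sigma_i}_1 \leq \tau$: the top-eigenspace projector $\Pi_{W_i}$ of $\widehat\sigma_i$ satisfies $\norm{(I-\Pi_{W_i})\sigma_i(I-\Pi_{W_i})}_\op \leq 2\tau$, and since $\tilde W_i \supseteq W_i$ this also implies $\norm{(I-P)\sigma_i(I-P)}_\op \leq 2\tau$. The main technical step, which I expect to be the crux of the proof, is to combine this spectral bound with the bounded Schmidt rank to control the ``lost mass'' when we project with $P$. Concretely I would prove the following auxiliary lemma: for any $\ket{v} \in \calH_A \otimes \calH_B$ whose Schmidt rank across $A|B$ is at most $R$,
\[
\bra{v}((I-P)\otimes I)\,\rho'_{i-1}\,((I-P)\otimes I)\ket{v} \leq 2\tau R \norm{v}^2.
\]
The argument writes the Schmidt decomposition $((I-P)\otimes I)\ket{v} = \sum_k \mu_k \ket{a_k}\ket{b_k}$ with $\ket{a_k} \in \mathrm{Range}(I-P)$, bounds each $\bra{a_k}\bra{b_k}\rho'_{i-1}\ket{a_l}\ket{b_l}$ by Cauchy--Schwarz in the PSD form induced by $\rho'_{i-1}$, and bounds each diagonal via $\bra{a_k}\bra{b_k}\rho'_{i-1}\ket{a_k}\ket{b_k} \leq \bra{a_k}\sigma_i\ket{a_k} \leq 2\tau$ using the partial-trace ordering and the spectral bound above. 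A final Cauchy--Schwarz on the scalars then yields $(\sum_k \mu_k)^2 \leq R \sum_k \mu_k^2 \leq R\norm{v}^2$.

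With the auxiliary lemma in hand, I orthogonally decompose $\ket{\tilde\phi_0} = P\ket{\tilde\phi_0} + \ket{\tilde\phi_0^\perp}$ and expand
\[
\braket{\phi|\rho_{i-1}|\phi} - \braket{\phi|\rho_i|\phi} = 2\,\Re\braket{P\tilde\phi_0|\rho'_{i-1}|\tilde\phi_0^\perp} + \braket{\tilde\phi_0^\perp|\rho'_{i-1}|\tilde\phi_0^\perp}.
\]
Applying the auxiliary lemma with $R = r$ bounds the second term by $2\tau r$. For the cross term, Cauchy--Schwarz in the $\rho'_{i-1}$-form together with \cref{lem:mps-psd-ordering} (which gives $\braket{P\tilde\phi_0|\rho'_{i-1}|P\tilde\phi_0} \leq \braket{\tilde\phi_0|\rho'_{i-1}|\tilde\phi_0} \leq 1$) yields $\abs{\braket{P\tilde\phi_0|\rho'_{i-1}|\tilde\phi_0^\perp}} \leq \sqrt{2\tau r}$. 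Combining, the total difference is at most $2\sqrt{2\tau r} + 2\tau r$; substituting $\tau = \eps^2/(9n^2 r^4)$ then delivers the desired bound $\eps/(2n)$.
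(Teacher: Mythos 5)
Your argument is correct and follows the paper's route nearly step for step: both reduce to $\braket{\phi|\rho_i|\phi} = \bra{\psi}(\Pi_W\otimes I)\rho'_{i-1}(\Pi_W\otimes I)\ket{\psi}$ via the disentangling identity $U_i^\dagger(\proj{0}\otimes I)U_i = \Pi_W\otimes I$, then combine the spectral bound from \Cref{lem:spectral-bound} with the bounded Schmidt rank. The only real difference is how you organize the final estimate: the paper's \Cref{lem:partial-trace-eig-bound} expands $\ket{\psi}$ in its own Schmidt basis and bounds every matrix element of the operator $(\Pi_W\otimes I)\rho(\Pi_W\otimes I) - \rho$, giving $2r\sqrt{2\tau}$; you instead split $\ket{\tilde\phi_0}$ into its $P$- and $(I-P)$-parts, control $\braket{\tilde\phi_0^\perp|\rho'_{i-1}|\tilde\phi_0^\perp}$ via your auxiliary lemma, and apply one Cauchy--Schwarz for the cross term, giving $2\sqrt{2\tau r} + 2\tau r$, which is in fact slightly sharper in its $r$-dependence. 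One small slip to flag: for the diagonal bound $\bra{a_k}\sigma_i\ket{a_k} \leq 2\tau$ to type-check, the $A$-side of your bipartition must be $\{i,\ldots,i+\kappa-1\}$ (the qudits on which $P$ and $\sigma_i$ actually live), not $\{i,\ldots,i+\kappa\}$; correspondingly, the Schmidt-rank hypothesis you genuinely use is the one at position $i+\kappa-1$. The paper's statement carries the matching off-by-one (its own proof invokes $\ket{\psi}$'s Schmidt rank ``at position $\kappa$'', which is position $i+\kappa-1$ of the full system), and since the lemma is only ever applied to matrix product states, whose Schmidt rank is $\leq r$ at every cut, the discrepancy is harmless.
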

\begin{proof}
    Let $\ket{\phi'} = E_{i - 1} \ket{\phi}$.
    Since $E_{i - 1}$ only acts non-trivially on the first $i + \kappa - 1$ qudits, $\ket{\phi'}$ has the same Schmidt rank as $\ket\phi$ at position $i + \kappa$, so in particular it has Schmidt rank at most $r$ at position $i + \kappa$.
    Write $\ket{\phi'} = \ket{0^{k - 1}}\ket{\psi} + \ket{\bot}$, where $\ket{\bot}$ is orthogonal to all states beginning with $\ket{0^{k - 1}}$.
    By definition, $\ket{\psi}$ has Schmidt rank at most $r$ at position $\kappa$.
    By definition, we have that
    \[
    \braket{\phi | \rho_{i - 1} | \phi} = \braket{\psi | \rho'_{i - 1} |\psi} \; ,
    \]
    and additionally, we have that
    \begin{align*}
        \braket{\phi | \rho_i | \phi} &= \bra{\phi'} U_i^\dagger \left( \ketbra{0^i}{0^i} \otimes  \rho'_i \right) U_i \ket{\phi'} \\
        &= \bra{\psi} U_i^\dagger \left( \ketbra{0}{0} \otimes \rho'_i \right) U_i \ket{\psi} \\
        &= \bra{\psi} (\Pi_W \otimes I)^\dagger \rho'_{i - 1} (\Pi_W \otimes I) \ket{\psi} \; ,
    \end{align*}
    where in the second line we use the fact that the extension of $U_i$ acts as the identity outside of qudits $i$ through $i + \kappa$, the third line follows because $\rho'_i$ is obtained by postselecting on outcome $\ket{0}$, and the last line follows since $\tomography$ succeeds, and Lemmas~\ref{lem:spectral-bound} and~\ref{lem:partial-trace-eig-bound}, the latter of which is proven below.
\end{proof}

\begin{lemma}
\label{lem:partial-trace-eig-bound}
    Let $A, B$ be two Hilbert spaces, and let $\rho$ be a density matrix over $A \otimes B$.
    Let $\ket{\phi}$ be a pure state in $A \otimes B$ with Schmidt rank at most $r$.
    Let $W \subset A$ be a subspace, with $\Pi_W$ denoting the projection onto $W$, such that $\norm{(I - \Pi_W)^\dagger \tr_B (\rho) (I - \Pi_W)}_\infty \leq \eta$.
    Then
    \[
    \left| \bra{\phi} \left(\Pi_W \otimes I \right)^\dagger \rho \left( \Pi_W \otimes I \right) \ket{\phi} - \braket{\phi | \rho | \phi} \right| \leq2  r \sqrt{\eta} \; .
    \]
\end{lemma}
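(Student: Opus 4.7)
\textbf{Proof plan for Lemma \ref{lem:partial-trace-eig-bound}.}
The plan is to decompose $\ket\phi = \ket{\psi_1} + \ket{\psi_2}$, where $\ket{\psi_1} = (\Pi_W \otimes I)\ket\phi$ and $\ket{\psi_2} = ((I - \Pi_W) \otimes I)\ket\phi$. Expanding the quadratic form
\[
\bra\phi\rho\ket\phi = \bra{\psi_1}\rho\ket{\psi_1} + 2\Re\bra{\psi_1}\rho\ket{\psi_2} + \bra{\psi_2}\rho\ket{\psi_2},
\]
the quantity to bound is exactly $|2\Re\bra{\psi_1}\rho\ket{\psi_2} + \bra{\psi_2}\rho\ket{\psi_2}|$. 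So the task reduces to (a) bounding $\bra{\psi_2}\rho\ket{\psi_2}$, and (b) bounding the cross term by the first bound combined with $\bra{\psi_1}\rho\ket{\psi_1} \le 1$.

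The workhorse subclaim will be: for any $\ket a \in A$ orthogonal to $W$ and any unit $\ket b \in B$,
\[
\bra{a}\bra{b}\rho\ket{a}\ket{b} \le \bra a \tr_B(\rho)\ket a = \bra a (I-\Pi_W)\tr_B(\rho)(I-\Pi_W)\ket a \le \eta\|\ket a\|^2,
\]
where the first inequality is the standard observation that restricting to one term in the orthonormal sum defining $\tr_B$ is a lower bound, and the last step uses the hypothesis. This lets us control $\rho$ against states that are in $(I-\Pi_W)(A)\otimes B$ and have ``simple'' form on the $A$ factor.

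The crucial step — and the place where the Schmidt-rank hypothesis is used — is bounding $\bra{\psi_2}\rho\ket{\psi_2}$. The key observation is that because $I-\Pi_W$ acts only on the $A$ factor, $\ket{\psi_2}$ inherits Schmidt rank at most $r$ from $\ket\phi$. Write the Schmidt decomposition
\[
\ket{\psi_2} = \sum_{i=1}^{r} \sqrt{\mu_i}\,\ket{\alpha_i}\ket{\beta_i},
\]
where $\ket{\alpha_i} \in (I-\Pi_W)(A)$ are orthonormal (hence each orthogonal to $W$) and $\ket{\beta_i}$ are orthonormal in $B$, with $\sum_i \mu_i = \|\ket{\psi_2}\|^2 \le 1$. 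Expanding,
\[
\bra{\psi_2}\rho\ket{\psi_2} = \sum_{i,j}\sqrt{\mu_i\mu_j}\,\bra{\alpha_i\beta_i}\rho\ket{\alpha_j\beta_j}.
\]
The Gram-type matrix $R_{ij} = \bra{\alpha_i\beta_i}\rho\ket{\alpha_j\beta_j}$ is PSD (it is the Gram matrix of the vectors $\sqrt\rho\,\ket{\alpha_i\beta_i}$), so the quadratic form is bounded by $\|R\|_\op \le \tr(R) \le r\eta$ using the subclaim on each diagonal entry. This yields $\bra{\psi_2}\rho\ket{\psi_2} \le r\eta$.

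Finally, applying Cauchy--Schwarz in the PSD inner product induced by $\rho$ gives
\[
|\bra{\psi_1}\rho\ket{\psi_2}| \le \sqrt{\bra{\psi_1}\rho\ket{\psi_1}}\sqrt{\bra{\psi_2}\rho\ket{\psi_2}} \le \sqrt{r\eta},
\]
since $\bra{\psi_1}\rho\ket{\psi_1} \le \|\ket{\psi_1}\|^2\|\rho\|_\op \le 1$. Combining, the total error is at most $r\eta + 2\sqrt{r\eta}$, which, using $\eta \le 1$ and $r \ge 1$, is at most $2r\sqrt\eta$ (up to a harmless small constant that can be absorbed into a slight loosening). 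The main obstacle — or at least the only genuine insight — is recognizing that Schmidt rank of $\ket{\psi_2}$ is still $\le r$ and using it via the PSD/trace bound on $R$; everything else is Cauchy--Schwarz combined with the partial-trace subclaim.
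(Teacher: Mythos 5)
Your proposal is correct in substance, and it takes a genuinely different route from the paper. The paper first proves the rank-one case $\ket{\phi} = \ket{a}\ket{b}$ by a direct expansion: writing $\ket{c} = \ket{a - \Pi_W a}$, it bounds the difference by $2\norm{\rho \ket{c}\ket{b}}_2$ and then uses $\rho^2 \preceq \rho$ together with the partial-trace inequality to get $2\sqrt{\eta}$, extending to rank $r$ by expanding the Schmidt decomposition term by term (the ``follows from linearity'' step, which is where the factor $r$ enters). You instead work globally: split $\ket{\phi} = \ket{\psi_1} + \ket{\psi_2}$ with $\ket{\psi_2} = ((I - \Pi_W)\otimes I)\ket{\phi}$, observe that $\ket{\psi_2}$ lives in $W^\perp \otimes B$ and inherits Schmidt rank at most $r$, and bound $\braket{\psi_2|\rho|\psi_2} \le \tr(R) \le r\eta$ via the PSD Gram matrix $R_{ij} = \bra{\alpha_i \beta_i}\rho\ket{\alpha_j\beta_j}$, finishing with Cauchy--Schwarz in the $\rho$-inner product. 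All of these steps check out (the Schmidt vectors of $\ket{\psi_2}$ on the $A$ side are indeed orthogonal to $W$, and the diagonal bound $R_{ii} \le \bra{\alpha_i}\tr_B(\rho)\ket{\alpha_i} \le \eta$ is the same partial-trace observation the paper uses). What your route buys is a quantitatively better $r$-dependence, $2\sqrt{r\eta} + r\eta$ instead of $2r\sqrt{\eta}$; what the paper's route buys is hitting the stated constant exactly in the rank-one case and a shorter write-up.

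One small point you should tighten rather than wave away: your final bound $2\sqrt{r\eta} + r\eta$ does \emph{not} literally imply $2r\sqrt{\eta}$ for all $r \ge 1$, $\eta \le 1$ (e.g.\ $r = 1$ gives $2\sqrt{\eta} + \eta > 2\sqrt{\eta}$; one needs roughly $r \ge 4$ for $(2\sqrt{r} + r)\sqrt{\eta} \le 2r\sqrt{\eta}$). So either state the lemma with your stronger bound $2\sqrt{r\eta} + r\eta$ (or the weaker but uniform $3r\sqrt{\eta}$), or note explicitly that the downstream use in the MPS induction only needs the bound up to a constant factor, which is absorbed by adjusting $\tau$. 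As written, ``a harmless small constant that can be absorbed'' is the only imprecise step in an otherwise sound argument.
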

\begin{proof}
    First, assume that $\ket{\phi}$ has Schmidt rank $1$, i.e. $\ket\phi = \ket{a} \ket{b}$.
    Let $\ket{c} = \ket{a - \Pi_W a}$. 
    Then
    \begin{align*}
    \left| \bra{\phi} \left( \Pi_W \otimes I \right)^\dagger \rho \left( \Pi_W \otimes I \right) \ket{\phi} - \braket{\phi | \rho | \phi} \right| &= \left| \bra{b} \bra{\Pi_W a} \rho \ket{\Pi_W a} \ket{b} - \bra{b}\bra{a} \rho \ket{a} \ket{b} \right| \\
    &\leq  \left| \bra{b} \bra{c} \rho \ket{\Pi_W a} \ket{b} \right| +  \left| \bra{b} \bra{a} \rho \ket{c} \ket{b} \right| \\
    &\leq 2 \norm{\rho \ket{c} \ket{b}} \; ,
    \end{align*}
    where the last line follows from Cauchy-Schwarz.
    To finish, we observe that
    \begin{align*}
        \norm{\rho \ket{c} \ket{b}}^2 &= \tr (\rho^2 \ketbra{c}{c} \otimes \ketbra{b}{b}) \\
        &\leq \tr (\rho \ketbra{c}{c} \otimes \ketbra{b}{b}) \\
        &\leq \braket{c | \tr_B (\rho) | c} \\
        &\leq \eta \; ,
    \end{align*}
    since $c$ is orthogonal to $W$.
    The case of general Schmidt rank then follows from linearity.
\end{proof}

\begin{proof}[Proof of Theorem~\ref{thm:mps}]
    We first concern ourselves with the sample complexity and runtime of \cref{alg:mps_learning}. From \Cref{lem:subnormalized_tomography}, taking the dimension to be $d^{\kappa} \leq d/\tau$, error parameter to be $\tau$ and failure probability $\delta / n$, we can perform tomography on the sub-normalized state $\sigma_i$ with sample complexity $O(\frac{d^3}{\tau^5}\log(n/\delta)) = O\left(\frac{d^{3}n^{10}r^{10}}{\epsilon^{10}} \log(n / \delta)\right) = \poly(n, d, r, 1/\epsilon, \log(1/\delta))$ (note that the algorithm takes samples of a unitary applied to $\rho$, the original state), and runtime that is polynomial in the same parameters.  The overall algorithm performs this $n-\kappa$ times. In addition to tomography, the algorithm also applies disentangling unitaries to the state.  From \Cref{lem:disentangling_implementation}, given a description of a subspace $W$ of dimension $d^{\kappa - 1}$, there is an implementation of the disentangling unitary in time $\poly(d^{\kappa}) = \poly(d, r, n, \epsilon)$, when $d$ is a power of $2$.  Thus, the whole algorithm run in polynomial time and uses a polynomial number of samples of $\rho$.

    We now turn our attention to correctness.
    Let $\ket{\phi}$ be a matrix product state with bond dimension $r$.
    Since such a MPS has Schmidt rank at most $r$ at every position, by iteratively applying Lemma~\ref{lem:mps-induction} with our chosen parameters, we obtain that $|\braket{\phi | \rho | \phi} - \braket{\phi | \rho_{i - \kappa + 1} | \phi}| \leq \eps / 2$.
    Together with the fact that the last call to $\tomography$ succeeds, this implies that the vector $\ket{\psi}$ satisfies $\braket{\psi | \rho'_{n -\kappa + 1} | \psi} \geq \braket{\phi | \rho | \phi} - \eps$.
    To conclude, we apply Lemma~\ref{lem:mps-psd-ordering} and the previous bound to obtain that
    \begin{align*}
        \braket{\hat{\phi} | \rho | \hat{\phi}} &= \braket{\hat{\phi} | \rho_0 | \hat{\phi}} \\
        &\geq \braket{\hat{\phi} | \rho_{n -\kappa + 1} | \hat{\phi}} \\
        &\geq \braket{\phi | \rho | \phi} - \eps.
    \end{align*}
    The result follows by taking a supremum over all matrix product states. 
\end{proof}

\end{document}